\documentclass[11pt,letterpaper]{article}

\usepackage{graphicx}
\usepackage{geometry}
\usepackage{setspace}
\usepackage{amsfonts,amsmath,amssymb,amsthm} 
\usepackage{mathtools}
\usepackage{bm}

\usepackage[dvipsnames]{xcolor}

\usepackage{booktabs}
\usepackage{multicol}
\usepackage{multirow}
\usepackage{makecell}
\usepackage{hhline}

\usepackage[ruled,linesnumbered,nofillcomment]{algorithm2e}
\setlength{\algomargin}{2em}

\usepackage[hyphens]{url}

\usepackage{hyperref}
\usepackage{cleveref}
\hypersetup{
    colorlinks=true,
    linkcolor=Blue,
    citecolor=Mahogany,
    urlcolor=PineGreen
}
\usepackage{crossreftools}
\pdfstringdefDisableCommands{% eliminate warning
    \let\Cref\crtCref
    \let\cref\crtcref
}

\usepackage{tikz}
\usepackage{pgfplots}
\pgfplotsset{compat=newest}
\usetikzlibrary{shapes, shapes.symbols, backgrounds, matrix, calc, arrows, math, arrows.meta, decorations.pathreplacing, decorations.markings, patterns}

% choose only one of the following two packages
% \usepackage{cite}
\usepackage{natbib}

\usepackage{extarrows}

\geometry{margin=1in}
 % to comment if required

\newcommand\emphasizeColor{Mahogany}

\theoremstyle{plain}
\newtheorem{theorem}{Theorem}[section]
\newtheorem{lemma}[theorem]{Lemma}
\newtheorem{proposition}[theorem]{Proposition}
\newtheorem{corollary}[theorem]{Corollary}

\theoremstyle{definition}
\newtheorem{definition}{Definition}[section]
\newtheorem{example}{Example}[section]

\theoremstyle{remark}
\newtheorem{remark}{Remark}[section]

\newcommand\argmax{\arg\max}
\newcommand\argmin{\arg\min}
\newcommand\diffe{{\,{\rm d}}}

\title{Budget-Constrained Auctions with Unassured Priors: \\
	Strategic Equivalence and Structural Properties}

\author{
	Zhaohua Chen\thanks{CFCS, School of Computer Science, Peking University. Email: \texttt{chenzhaohua@pku.edu.cn}}
	\and
	Mingwei Yang\thanks{Stanford University. Email: \texttt{mwyang@stanford.edu}}
	\and
	Chang Wang\thanks{Northwestern University. Email: \texttt{wc@u.northwestern.edu}}
	\and
	Jicheng Li\thanks{Columbia University. Email: \texttt{jl6599@columbia.edu}}
	\and 
	Zheng Cai\thanks{Tencent Technology (Shenzhen) Co., Ltd. Email: \texttt{\{zhengcai, rickyren, zhihuazhu\}@tencent.com}}
	\and
	Yukun Ren\footnotemark[5]
	\and 
	Zhihua Zhu\footnotemark[5]
	\and
	Xiaotie Deng\thanks{CFCS, School of Computer Science, Peking University and CMAR, Institute for Artificial Intelligence, Peking University. Email: \texttt{xiaotie@pku.edu.cn}}
}

\begin{document}

\maketitle

\thispagestyle{empty}
\begin{abstract}
In today's online advertising markets, it is common for advertisers to set long-term budgets. 
Correspondingly, advertising platforms adopt budget control methods to ensure that advertisers' payments lie within their budgets. 
Most budget control methods rely on the value distributions of advertisers. 
However, due to the complex advertising landscape and potential privacy concerns, the platform hardly learns advertisers' true priors. 
Thus, it is crucial to understand how budget control auction mechanisms perform under unassured priors. 

This work answers this problem from multiple aspects. 
Specifically, we examine five budget-constrained parameterized mechanisms: bid-discount/pacing first-price/second-price auctions and the Bayesian revenue-optimal auction. 
We consider the unassured prior game among the seller and all buyers induced by these five mechanisms in the stochastic model.
We restrict the parameterized mechanisms to satisfy the budget-extracting condition, which maximizes the seller's revenue by extracting buyers' budgets as effectively as possible.
Our main result shows that the Bayesian revenue-optimal mechanism and the budget-extracting bid-discount first-price mechanism yield the same set of Nash equilibrium outcomes in the unassured prior game. 
This implies that simple mechanisms can be as robust as the optimal mechanism under unassured priors in the budget-constrained setting. 
In the symmetric case, we further show that all these five (budget-extracting) mechanisms share the same set of possible outcomes. 
We further dig into the structural properties of these mechanisms. 
We characterize sufficient and necessary conditions on the budget-extracting parameter tuple for bid-discount/pacing first-price auctions. 
Meanwhile, when buyers do not take strategic behaviors, we exploit the dominance relationships of these mechanisms by revealing their intrinsic structures. 
In summary, our results establish vast connections among budget-constrained auctions with unassured priors and explore their structural properties, particularly highlighting the advantages of first-price mechanisms. 
\end{abstract}

\newpage
\setcounter{page}{1}
\section{Introduction}\label{sec:introduction}

We have witnessed substantial growth in the online advertising market in recent years. 
Billions of advertising positions are sold every day on various kinds of platforms, including major search engines (e.g., Google~\citep{google-ad}) and social media (e.g., Meta~\citep{facebook-ad}). 
According to statistics, the volume of the global online advertising market is hopeful of reaching 626 billion dollars in 2023~\citep{digital-advertising-spending}. 
From a macro perspective, the contents of ads exhibit immense heterogeneity according to different types of ad queries. 
For example, a new parent is more likely to receive ads promoting baby products, while an older individual may be targeted with advertisements for hearing aids. 

To address such heterogeneity, advertising platforms employ auctions to allocate ad spaces. 
Each advertiser submits a bid she wants to pay for each ad query satisfying certain conditions (e.g., the ad query is from a new parent or an older individual). 
When a real-time query is received, the platform conducts an auction among all advertisers who have proposed positive bids on the query. 
As such a process occurs at a significant scale every day, an advertiser's payment can vary drastically. 
Consequently, major platforms now request advertisers to provide a long-term budget (e.g., for a day, a week, or a month) to mitigate this uncertainty. 
Correspondingly, the platform's auction mechanisms ensure that each advertiser's payment does not exceed her budget. 
Such an approach can help advertisers control advertising costs and make long-term plans. 

Many works have studied different budget control methods, from either a dynamic view~\citep{DBLP:conf/sigecom/DengMZ21,DBLP:journals/corr/abs-2106-09503,DBLP:journals/corr/abs-2107-07725} or an equilibrium view~\citep{DBLP:journals/mansci/BalseiroG19,DBLP:journals/ior/BalseiroKMM21,DBLP:conf/ec/ConitzerKPSSMW19,DBLP:conf/wine/ConitzerKSM18,DBLP:conf/sigecom/ChenKK21}. 
One crucial assumption adopted in these works is that the platform knows the prior value distributions or even the actual values of advertisers. 
Nevertheless, such an assumption can be unattainable in practice. 
From an information accessing standpoint, the platform can only obtain an advertiser's \emph{historical bids} rather than her historical values. 
Consequently, the platform lacks information on her values or priors. 
Furthermore, the classic methodology of incentive compatibility (IC) embraced by existing works hardly fits with today's advertisers due to two main reasons:
(1) The traditional definition of IC does not capture the various constraints faced by advertisers, including budget constraint~\citep{DBLP:journals/ior/BalseiroKMM21} and return-on-investment (ROI) constraint~\citep{DBLP:conf/sigecom/BalseiroDMMZ21}.
Therefore, we must carefully refine the concept to accommodate more complex circumstances, and such trials always lead to intricate outcomes~\citep{balseiro2022optimal}. 
The concept becomes even more inadequate when considering that advertisers often cooperate simultaneously with multiple platforms~\citep{aggarwal2023multi,deng2023multi}. 
(2) Advertisers have inherent incentives to hide their true values to cope with the learning behavior of the platform and protect their data privacy. 
Once the platform has complete knowledge of an advertiser's actual value distribution, price discrimination would inevitably occur, which could be a curse for the advertiser. 

With the emergence of the above two phenomena, market designers must face the fact that they may never be able to get advertisers' true values/value priors. Thus, an important problem naturally arises: 

\vspace{1em}
\emph{How do unassured priors affect budget control methods in auctions? Specifically, when priors are unassured, how are budget control methods related? }
\vspace{1em}

This paper answers the above problem comprehensively. 
We study a range of five kinds of budget-constrained auctions, respectively Bayesian revenue-optimal auction (BROA)~\citep{DBLP:journals/ior/BalseiroKMM21}, as well as bid-discount/pacing first-price/second-price auctions (BDFPA, PFPA, BDSPA, PSPA) (See \Cref{tab:mechanisms}). 
In these auction forms, the seller adopts diverse methods to help buyers control the expenditure within their budgets. 
It is worth noting that pacing is one of the most extensively studied strategies for controlling advertisers' payments~\citep{DBLP:conf/wine/ConitzerKSM18,DBLP:conf/ec/ConitzerKPSSMW19,DBLP:conf/sigecom/ChenKK21,balseiro2022online,balseiro2023best}. 
Moreover, bid-discount is a strategy that has been adopted in sponsored search auctions~\citep{DBLP:conf/sigecom/AggarwalGM06,DBLP:conf/sigecom/LahaieP07,DBLP:journals/informs/FengBP07} and second-price auctions~\citep{DBLP:conf/kdd/GolrezaeiLMN21,DBLP:conf/icml/NedelecKP19}. 
While it is natural to incorporate such a strategy into first-price auctions, to the best of our knowledge, this combination has not been explored in previous literature. 
Meanwhile, the power of bid-discount as a means of budget management remains largely unexplored.
We comprehensively compare these five mechanisms from a game-theoretic view and study the structural properties of these mechanisms, particularly focusing on variants of first-price auctions. 

\subsection{Main Contributions}

This work presents three main contributions. 

\begin{figure}[!t]
    \centering 
    \begin{tikzpicture}[box/.style={shape=rectangle, text width=50, minimum height=25, draw, thick, inner sep=0pt, outer sep=0pt, above right, align=center, text=black}]
    \newcommand{\mup}[1]{($(#1)+(0,0.3)$)}
    \newcommand{\mdown}[1]{($(#1)-(0,0.3)$)}
    \newcommand{\mleft}[1]{($(#1)-(0.3,0)$)}
    \newcommand{\mright}[1]{($(#1)+(0.3,0)$)}
    \newcommand{\mhup}[1]{($(#1)+(0,0.15)$)}
    \newcommand{\mhdown}[1]{($(#1)-(0,0.15)$)}
    \newcommand{\mhleft}[1]{($(#1)-(0.15,0)$)}
    \newcommand{\mhright}[1]{($(#1)+(0.15,0)$)}
    \node[box] (broa) at (-2.5,0) {\small BROA};
    \node[box] (pfpa) at (0,1.4) {\small ePFPA};
    \node[box] (bdfpa) at (-5,1.4) {\small eBDFPA};
    \node[box] (bdspa) at (-5,-1.4) {\small eBDSPA};
    \node[box] (pspa) at (0,-1.4) {\small ePSPA};

    \draw[very thick, {Stealth[angle'=45]}-{Stealth[angle'=45]}] (bdfpa.south) |- \mhdown{broa.west};
    \draw[very thick, densely dotted, {Stealth[angle'=45]}-{Stealth[angle'=45]}] \mright{bdfpa.south} |- \mhup{broa.west};
    
    \draw[very thick, dashed, {Stealth[angle'=45]}-{Stealth[angle'=45]}] (bdfpa.east) -- (pfpa.west);
    \draw[very thick, dashed, {Stealth[angle'=45]}-{Stealth[angle'=45]}] \mleft{bdfpa.south} -- \mleft{bdspa.north};
    \draw[very thick, dashed, {Stealth[angle'=45]}-{Stealth[angle'=45]}] (pfpa.south) -- (pspa.north);
    \draw[very thick, dashed, {Stealth[angle'=45]}-{Stealth[angle'=45]}] (bdspa.east) -- (pspa.west);

    \matrix[column sep=8pt] (legend) at (-1.5,-2) {
        \draw[thick, {Stealth[angle'=45]}-{Stealth[angle'=45]}] (0,0) -- (1,0) node[right={2pt}] {\small $\text{ER}, \text{S}$};
        &
        \draw[thick, densely dotted, {Stealth[angle'=45]}-{Stealth[angle'=45]}] (0,0) -- (1,0) node[right={2pt}] {\small $\text{ER}\,+\,\text{IL}^2, \text{S}$}; 
        &
        \draw[thick, dashed, {Stealth[angle'=45]}-{Stealth[angle'=45]}] (0,0) -- (1,0) node[right={2pt}] {\small $\text{Sym}\,+\,\text{IL}, \text{W}$}; \\
    };
\end{tikzpicture}
    \caption{
    Summary of the results in \Cref{sec:equivalence} on the strategic equivalence among different auction types. Two auction forms are strategic-equivalent if they are connected by a bidirectional arrow. Different line types indicate the restrictions. ER: Each buyer's virtual bidding quantile function is strictly increasing and differentiable. IL: Each buyer's bidding quantile function is inverse Lipschitz continuous. $\text{IL}^2$: Each buyer's bidding quantile function and virtual bidding quantile function are both inverse Lipschitz continuous. Sym: Buyers and budget-extracting parameters are both symmetric. Strong (S) and weak (W) strategic equivalence are defined in \Cref{def:strategic-equivalence}. The ``e'' at the front of mechanisms stands for budget-extracting, which is defined in \Cref{def:budget-extracting}. }
    \label{fig:summary_equivalence}
\end{figure}

\paragraph{Strategic equivalence among budget-constrained auctions in the unassured prior game.} 
We examine an \emph{unassured prior game with budget constraints} (abbreviated as unassured prior game) among the seller and buyers within a stochastic setting~\citep{gummadi2013optimal,DBLP:journals/mansci/BalseiroBW15,DBLP:journals/ior/BalseiroKMM21,DBLP:conf/wine/ConitzerKSM18}.
Technically, this game is an extension of the private data manipulation (PDM) model~\citep{DBLP:conf/sigecom/TangZ18,DBLP:conf/www/DengLX20} to the budget-constrained scenario. 
In our unassured prior game, the seller first commits to a parameterized auction mechanism, after which buyers report their \emph{bid distributions} and real budgets while keeping their value distributions private. 
At last, a parameter tuple is calculated based on a predefined rule that considers buyers' bid distributions and budgets, ensuring that each buyer's budget is not exceeded in expectation. 
This model captures the scenario in budget-constrained auctions where the seller can only access buyers' historical bids rather than their true values. 

Within the unassured prior game, for variants of first-price/second-price auctions, we introduce the concept of \emph{budget-extracting}, which guarantees that under the budget-extracting parameter tuple, the platform adequately consumes each advertiser's budget without violating the individual rationality (IR) constraint.  
This concept is similar to the notion of system equilibrium defined in~\citet{DBLP:journals/ior/BalseiroKMM21}. 
We show that under minor assumptions, for bid-discount/pacing first-price auctions and symmetric pacing second-price auction, the budget-extracting condition leads to the seller's revenue maximization (\Cref{thm:existence-budget-extracting}). 
Thus, we restrict the seller's parameter choice to budget-extracting ones, which are generally dominating. 

With these game-theoretic preparations, we prove that the budget-extracting bid-discount first-price auction is \emph{strongly strategic-equivalent} to the Bayesian revenue-optimal auction (\Cref{thm:equivalence-BROA-eBDFPA}) under minor restrictions. 
In simpler terms, this is to say that there is a mapping from a buyer's strategy in the Bayesian revenue-optimal auction to a strategy in the budget-extracting bid-discount first-price auction, such that the outcome profile is kept when the mapping acts on each buyer's strategy. 
Vice versa, from the budget-extracting bid-discount first-price auction to the Bayesian revenue-optimal auction. 
Combined with a reduction in buyers' strategies in the Bayesian revenue-optimal auction, we show that these two auction formats yield the same set of Nash equilibrium outcomes (\Cref{thm:same-NE-BROA-eBDFPA}). 
This theorem can be interpreted as a simple-versus-optimal result in budget-constrained mechanisms, showing that simple mechanisms (budget-extracting bid-discount first-price auction) can be as robust as the optimal auction facing uncertain priors. 
Further, in the symmetric case, we establish a broad \emph{weak strategic equivalence} result among first-price/second-price auctions (\Cref{thm:symmetry-equivalence-FPA-SPA}). 
In short, this result indicates that these auctions have the same set of possible symmetric outcomes when buyers are symmetric (\Cref{coro:symmetry-equivalence-four}). 
We summarize these results in \Cref{fig:summary_equivalence}. 

\paragraph{Properties on variants of first-price auctions.}
We delve into the properties of bid-discount and pacing first-price auctions. 
In particular, we study the sufficient and necessary conditions of budget-extracting. 
As revealed in \Cref{thm:properties-eBDFPA,thm:properties-ePFPA}, with minor restrictions, there exists a maximum budget-feasible parameter tuple for bid-discount/pacing first-price auctions, and this parameter tuple is budget-extracting. 
Interestingly, for the pacing first-price auction, the budget-extracting parameter tuple is unique, while this is not necessarily the case for bid-discount first-price auction. 
Subsequently, we further exploit the behavior of budget-extracting tuples in the latter and derive an equivalent condition for the uniqueness of the budget-extracting tuple. 
Meanwhile, we show in \Cref{thm:computation-efficiency-eBDFPA} that it is computationally efficient to derive a budget-extracting tuple for bid-discount first-price auction. 

\begin{figure}[!t]
    \centering 
    \begin{tikzpicture}[box/.style={shape=rectangle, text width=50, minimum height=25, draw, thick, inner sep=0pt, outer sep=0pt, above right, align=center, text=black}]
    \newcommand{\mup}[1]{($(#1)+(0,0.3)$)}
    \newcommand{\mdown}[1]{($(#1)-(0,0.3)$)}
    \newcommand{\mleft}[1]{($(#1)-(0.3,0)$)}
    \newcommand{\mright}[1]{($(#1)+(0.3,0)$)}
    \newcommand{\mhup}[1]{($(#1)+(0,0.15)$)}
    \newcommand{\mhdown}[1]{($(#1)-(0,0.15)$)}
    \newcommand{\mhleft}[1]{($(#1)-(0.15,0)$)}
    \newcommand{\mhright}[1]{($(#1)+(0.15,0)$)}
    \node[box] (broa) at (-2.5,0) {\small BROA};
    \node[box] (pfpa) at (0,1.4) {\small ePFPA};
    \node[box] (bdfpa) at (-5,1.4) {\small eBDFPA};
    \node[box] (bdspa) at (-5,-1.4) {\small eBDSPA};
    \node[box] (pspa) at (0,-1.4) {\small ePSPA};

    \draw[very thick, -{Stealth[angle'=45]}] (bdfpa.east) -- (pfpa.west);
    \draw[very thick, dashed, -{Stealth[angle'=45]}] (bdfpa.south) |- \mhup{broa.west};

    \draw[very thick, densely dotted, -{Stealth[angle'=45]}] \mhdown{broa.west} -| (bdspa.north);
    \draw[very thick, densely dotted, -{Stealth[angle'=45]}] (broa.east) -| (pspa.north);

    \matrix[column sep=8pt] (legend) at (-1.5,-2) {
        \draw[thick, -{Stealth[angle'=45]}] (0,0) -- (1.25,0) node[right={3pt}] {\small $\text{SI}$};
        &
        \draw[thick, densely dotted, -{Stealth[angle'=45]}] (0,0) -- (1.25,0) node[right={3pt}] {\small $\text{SR}\,+\,\text{IL}^2$}; 
        &
        \draw[thick, dashed, -{Stealth[angle'=45]}] (0,0) -- (1.25,0) node[right={3pt}] {\small $\text{SI}\,+\,\text{SR}$}; \\
    };
\end{tikzpicture}
    \caption{
    Summary of the results in \Cref{thm:dominance-relationships-seller-revenue} on the dominance relationships among different auction types when buyers truthful bid. A dominates B if there is an arrow from A to B. Different line types indicate the assumptions. SI: Each buyer's bidding quantile function is strictly increasing. SR: Each buyer's virtual bidding quantile function is strictly increasing. $\text{IL}^2$: Each buyer's bidding quantile function and virtual bidding quantile function are both inverse Lipschitz continuous. }
    \label{fig:summary_dominance}
\end{figure}

\paragraph{Dominance relationships on the seller's revenue without strategic bidding.}
At last, we suppose that buyers do not take strategic behaviors to exploit the intrinsic properties of auction mechanisms further. 
Specifically, we compare the seller's revenue in these mechanisms under the budget-extracting condition. 
For this part, we prove that under weak assumptions, bid-discount first-price auction dominates Bayesian revenue-optimal auction and pacing first-price auction. 
Meanwhile, Bayesian revenue-optimal auction outperforms two variants of second-price auctions. 
These results are illustrated in \Cref{fig:summary_dominance}.

\subsection{Related Work}

\paragraph{Prior manipulation model.} In classic solutions, e.g., the seminal work of Myerson~\citep{DBLP:journals/mor/Myerson81}, a critical assumption is that the seller knows the distribution of buyers' values. In real life, however, from a buyer's view, when she takes some strategic behavior other than truthful bidding (e.g., when she wants to protect her real data), the seller can never get the true distribution. A line of work captures such inconsistency between the ideal and real worlds and focuses on how the auction market is affected when the seller wrongly estimates the buyers' value distributions. 
\citet{DBLP:conf/sigecom/TangZ18} studies the problem in general, and a surprising result is that Myerson auction, which is well-known for being revenue-optimal, is revenue-equivalent to first-price auction under such a model. 
Concurrently,~\citet{DBLP:journals/tcc/DengXZ19,DBLP:journals/tcc/DengZ21} consider specific distribution families from a statistical optimization view in this setting.~\citet{DBLP:conf/www/DengLX20} further studies the scenario in sponsored search auctions and shows the general equivalence of different auction types under such a setting. 
Our paper follows this research line by modeling the above intuition as the unassured prior game. 
Nevertheless, compared to these prior works, our work considers buyers' budgets and explores deep relationships among different auction forms. 

\paragraph{Market equilibrium with budget-constrained buyers.} In real life, it is always the case that a buyer's affordability is small compared with the massive amount of auctions happening every day. Therefore, it is reasonable for a buyer to set a budget constraint. Much research considers the market equilibrium in this scenario.

The works most related to ours are~\citet{DBLP:journals/ior/BalseiroKMM21,DBLP:journals/corr/abs-2102-10476,feng2023strategic}.
\citet{DBLP:journals/ior/BalseiroKMM21} surveys on various budget control methods in second-price auctions and compares these methods from the aspects of seller's revenue and social welfare in equilibrium. 
Nevertheless, our work is not limited to second-price auctions. 
We also consider the optimal auction and variants of first-price auction, and further build the connection among these three genres of auctions when buyers are budget-constrained with unassured priors. 
\citet{DBLP:journals/corr/abs-2102-10476} focuses on the contextual scenario in standard auctions where a buyer's value is decided collaboratively by a public item type and a private buyer type. 
The paper shows a revenue equivalence result across all standard auctions under symmetric Bayes-Nash equilibrium, which seems similar to one of our results.
However, a crucial difference is that our paper considers the setting with prior manipulation and without any contextual information. 
Furthermore, our result is not limited to symmetric cases, which implies that our strategic equivalence theorems differ from classical revenue equivalence results.
At last, \citet{feng2023strategic} focuses on the scenario when buyers can misreport the budget constraint and the maximum bid in the pacing first-price equilibrium (PFPA)~\citep{DBLP:conf/ec/ConitzerKPSSMW19}. In comparison, our work studies five mechanisms, including the pacing first-price auction. Further, a buyer's strategy in our work is her bid distribution, rather than her budget and the maximum bid. 

The bid-discount method has been adopted to generalized second-price auctions for sponsored search in early years~\citep{DBLP:conf/sigecom/AggarwalGM06,DBLP:journals/informs/FengBP07,DBLP:conf/sigecom/LahaieP07}, with the multipliers closely related to the click-through rates. 
In recent years, such a method was applied to second-price auctions~\citep{DBLP:conf/icml/NedelecKP19,DBLP:conf/kdd/GolrezaeiLMN21}, known as boosted second-price auction (BDSPA in our work). 
Experimental results show that such an auction form earns the seller more revenue than the second-price and empirical Myerson auction without budget constraints. 
On the other hand, our work considers the scenario when buyers are budget-constrained and introduce the bid-discount method into first-price auctions. 

% Other works study specific strategies buyers and sellers adopt to maintain the budget. 
% Pacing (a.k.a. bid-shading) is perhaps the most well-studied one among all those methods. 
Pacing (a.k.a. bid-shading) is perhaps the most well-studied budget control method of all. 
In pacing, the seller would assign a multiplier to each buyer, and the multiplier would shade the bid of any buyer before being sorted. 
The payment of a buyer is also correspondingly scaled to control the budget. 
\citet{DBLP:conf/wine/ConitzerKSM18,DBLP:conf/ec/ConitzerKPSSMW19,DBLP:conf/sigecom/ChenKK21} respectively consider the pacing equilibrium in first-price and second-price auctions. 
% On top of~\citep{DBLP:conf/wine/ConitzerKSM18},~\citep{DBLP:conf/sigecom/ChenKK21} shows that finding the (approximate) pacing equilibrium in the second-price market is PPAD-hard. 
However, these papers focus on a discrete setting where buyers' values are known \textit{a priori}. Instead, our work focuses on the stochastic setting where the value of each buyer is unsure \textit{ex-ante}. 
% Some works concentrate on the ``throttling'' strategy. 
% The main idea of the strategy is to maintain a probability for each buyer to participate in an auction. 
%~\citep{DBLP:conf/wsdm/KarandeMS13,DBLP:conf/sigecom/CharlesCCDW13} consider the throttling strategy generally and propose algorithms satisfying good allocation properties. More in detail,~\citep{DBLP:conf/wsdm/KarandeMS13} proposes a meta algorithm for buyers under various constraints while~\citep{DBLP:conf/sigecom/CharlesCCDW13} focuses on having the buyers ``regret-free'' on their ROI under the allocation. 
%~\citep{DBLP:journals/ior/CiocanI21} proves that in sufficiently large markets, there is a simple approximate equilibrium where the seller applies greedy throttling mechanisms with high probability. 
% On the other hand,~\citep{DBLP:conf/wine/ChenKK21} studies the complexity issue of finding throttling equilibrium in first-price and second-price auctions. 
Some works~\citep{gummadi2013optimal,DBLP:journals/mansci/BalseiroBW15} consider the behavior of many revenue-maximizing buyers with budget constraints in second-price markets. They take the mean-field equilibrium as the central solution concept and prove that pacing is an approximately optimal strategy for these buyers. 
In comparison, the main goal of this paper is to provide general relationships among different auction types. 
Besides the above, other works~\citep{DBLP:journals/mansci/BalseiroG19,DBLP:journals/corr/abs-2106-09503,gaitonde2022budget} consider the dynamic environment in which each budget-constrained buyer takes the pacing strategy. 
Our work, nevertheless, does not explicitly involve any learning behavior. 

\section{Model}\label{sec:model}

This work considers the scenario where $n$ budget-constrained buyers participate in an auction. Each buyer $i \in [n]$ receives a value $v_{i}$ drawn i.i.d. from the distribution $F_i$, which captures her valuation of the item. We assume that $(F_i)_{i \in [n]}$ are independent of each other. Meanwhile, we use $b_{i}$ to denote buyer $i$'s bid. Both $v_{i}$ and $b_{i}$ are restricted to $[0, 1]$ for all $i$. Each buyer $i$ has a budget $\rho_i \in (0, 1]$ for the auction, which is known to the seller. 
The seller has a fixed opportunity cost $\lambda \in (0, 1)$ for the item. Opportunity cost reflects the seller's unwillingness to sell the item. The seller's revenue from selling an item is all buyers' total payment minus the opportunity cost.

In this work, we consider the stochastic setting, which is known to be a good approximation of the dynamic model and has been adopted by lines of research works~\citep{gummadi2013optimal,DBLP:journals/mansci/BalseiroBW15,DBLP:journals/ior/BalseiroKMM21,DBLP:conf/wine/ConitzerKSM18}. 
In this setting, we suppose that buyers take fixed bidding strategies. 
Meanwhile, the model requires that each buyer's \emph{expected} payment does not exceed her budget. 
As a remark, we argue that such a model realistically captures the stationary behavior when buyers participate in a large number of auctions and the platform inquires for an ``average budget constraint''~\citep{average-daily-budget}. 
To show how the stochastic setting works, we now dig into how the seller sets a budget-constrained auction mechanism and how buyers participate in the mechanism. 

\paragraph{Parameterized auction mechanisms and monotonicity.} We first formalize the parameterized auction mechanism adopted by the seller. Specifically, we use $\mathcal{M}({\bm \theta}) = (X({\bm \theta}), P({\bm \theta}))$ to denote a (direct) parameterized auction mechanism, where $\bm \theta \geq \bm 0$ is the parameter vector. Here, given $\bm \theta$, $X({\bm \theta}, \cdot): [0, 1]^n \to \Delta^n$ is the allocation function and $P({\bm \theta}, \cdot): [0, 1]^n \to \mathbb{R}^n$ is the payment function. 
We should notice that the opportunity cost $\lambda$ could implicitly occur in the formulae of $X({\bm \theta}, \cdot)$ and $P({\bm \theta}, \cdot)$. The utilization of the parameter vector $\bm \theta$ guarantees each buyer's budget constraint and will be discussed in detail later. 

With the above notation, given a value profile $\bm v = (v_{i})_{i \in [n]}$ and a bid profile $\bm b = (b_{i})_{i \in n}$, buyer $i$'s \emph{utility} in the auction is:
\[U_{i}(\bm \theta, \bm b, v_i) \coloneqq X_i(\bm \theta, \bm b)\cdot v_{i} - P_i(\bm \theta, \bm b). \]
Correspondingly, the seller's \emph{revenue} is 
\[W(\bm \theta, \bm b) \coloneqq \sum_{i \in [n]} P_i(\bm \theta, \bm b) - \lambda\cdot \sum_{i \in [n]} X_i(\bm \theta, \bm b). \]

In this work, we concentrate on \emph{monotone} parameterized auction mechanisms, with the following definition: 
\begin{definition}[Monotonicity]\label{def:monotonicity}
    We say a parameterized auction mechanism $\mathcal{M}({\bm \theta}) = (X({\bm \theta}), P({\bm \theta}))$ is monotone, if for any $\bm \theta \geq \bm 0$ and $i \in [n]$, buyer $i$'s allocation function $X_i({\bm \theta}, \cdot)$ is increasing of her bid $b_i$ regardless of other buyers' bids $\bm b_{-i}$. 
\end{definition}

\paragraph{Bidding functions.} Given a monotone parameterized auction mechanism $\mathcal M(\bm \theta)$, we now formally characterize buyers' bidding strategies. As mentioned earlier, under the stochastic setting, we suppose each buyer takes a \emph{fixed bidding strategy}. In other words, with a slight abuse of notation, for each buyer $i \in [n]$, there is a bidding function $b_i: [0, 1] \to [0, 1]$, such that $b_{i} = b_i(v_{i})$ for any $v_i$. Moreover, we prove the following lemma, which states that when facing a monotone auction mechanism, any buyer's best strategy is an increasing bidding function. 
\begin{lemma}\label{lem:monotone-bidding}
    For any buyer and bidding function, there exists an increasing bidding function that yields at least the same utility for the buyer under any monotone parameterized auction mechanism and other bidders' strategies. 
\end{lemma}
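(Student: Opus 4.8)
The plan is to fix the mechanism, reduce buyer $i$'s optimization to a one-dimensional ``interim'' problem, and then replace her bidding function by its monotone rearrangement, invoking the monotonicity of the interim allocation together with the rearrangement inequality.

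Concretely, fix a monotone parameterized mechanism $\mathcal{M}(\bm\theta)$, the parameter vector $\bm\theta$, and the (fixed) bidding strategies of the buyers other than $i$. Since the rivals' values are drawn independently of $v_i$ and their strategies are fixed, the rival bid profile $\bm b_{-i}$ has a distribution not depending on $b_i$. Define the interim allocation $x(b) \coloneqq \mathbb{E}_{\bm b_{-i}}[X_i(\bm\theta, (b, \bm b_{-i}))]$ and interim payment $p(b) \coloneqq \mathbb{E}_{\bm b_{-i}}[P_i(\bm\theta, (b, \bm b_{-i}))]$ (both measurable in $b$ by Fubini). By \Cref{def:monotonicity}, $X_i(\bm\theta, (b, \bm b_{-i}))$ is increasing in $b$ for every fixed $\bm b_{-i}$, hence $x(\cdot)$ is increasing. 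Under a bidding function $b_i$, buyer $i$'s expected utility equals $\mathbb{E}_{v \sim F_i}[x(b_i(v))\, v - p(b_i(v))]$ and her expected payment equals $\mathbb{E}_{v \sim F_i}[p(b_i(v))]$.

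Next, let $G$ be the distribution of $b_i(v)$ when $v \sim F_i$ (buyer $i$'s induced bid distribution), and define $\tilde b_i(v) \coloneqq G^{-1}(F_i(v))$ via the generalized inverse of $G$; this is the monotone (increasing) rearrangement of $b_i$, so $\tilde b_i$ is increasing, and $\tilde b_i(v) \sim G$ when $v \sim F_i$ (assuming $F_i$ atomless for now; see below). Two consequences follow. First, since $\tilde b_i$ and $b_i$ induce the same bid distribution $G$, we have $\mathbb{E}_{v \sim F_i}[p(\tilde b_i(v))] = \mathbb{E}_{b \sim G}[p(b)] = \mathbb{E}_{v \sim F_i}[p(b_i(v))]$, so the replacement leaves the expected payment --- and hence budget feasibility, and the induced bid distribution (and thus the parameter tuple in the full game) --- untouched. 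Second, $x \circ \tilde b_i$ is an increasing function of $v$ with the same distribution as $x \circ b_i$ (the pushforward of $G$ under the increasing map $x$), so it is an increasing rearrangement of $x \circ b_i$; since the reference function $v \mapsto v$ is increasing, the Hardy--Littlewood rearrangement inequality gives $\mathbb{E}_{v \sim F_i}[v\, x(\tilde b_i(v))] \ge \mathbb{E}_{v \sim F_i}[v\, x(b_i(v))]$. Subtracting the (equal) expected payments yields $\mathbb{E}_{v}[x(\tilde b_i(v)) v - p(\tilde b_i(v))] \ge \mathbb{E}_{v}[x(b_i(v)) v - p(b_i(v))]$, i.e., the increasing bidding function $\tilde b_i$ yields at least the same utility, which is the claim.

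I expect the only delicate point to be the rearrangement step: verifying that $x \circ \tilde b_i$ really is an increasing rearrangement of $x \circ b_i$ so that the inequality points in the right direction, and, if one does not wish to assume $F_i$ atomless, checking that $G^{-1}\circ F_i$ remains a legitimate increasing bidding function with the same distribution as $b_i$ --- the atom case is handled by a routine modification that argues directly with the bid distribution $G$ (or by a standard smoothing/limiting argument). The interim reduction and the passage between value and quantile descriptions are otherwise standard.
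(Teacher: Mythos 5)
Your proposal is correct and follows essentially the same route as the paper's proof: replace the bidding function by its increasing rearrangement, observe that the payment term depends only on the induced bid distribution and is therefore unchanged, and apply the Hardy--Littlewood rearrangement inequality to the allocation-times-value term using monotonicity of the interim allocation in the bid. The paper simply carries this out directly in quantile space (where the measure is uniform, sidestepping the atom issue you flag), whereas you phrase it via $G^{-1}\circ F_i$ in value space; the substance is the same.
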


% The proof of \Cref{lem:monotone-bidding} is given in \Cref{sec:proof:sec-model}. 
With the result, we introduce the terminology of the quantile function (abbreviated as qf)~\citep{DBLP:conf/sigecom/TangZ18,DBLP:conf/www/DengLX20,DBLP:journals/tcc/DengXZ19,DBLP:journals/tcc/DengZ21}\footnote{Our model is essentially equivalent to the prior manipulation model as proposed in these works. However, unlike their terminology, we employ the left quantile function in this work, which is a more accessible way to define a quantile function. } to equivalently represent buyers' private information and strategies. We consider a specific buyer $i \in [n]$ and a quantile $q_i$ drawn \textit{uniformly} from $[0, 1]$. With a slight abuse of notation, the value function $v_i(q_i) \coloneqq \inf\{v_i: F_i(v_i) \geq q_i\}$ is an increasing function that follows the distribution $F_i$ and represents buyer $i$'s private information. We denote the value \emph{function} profile as $\bm v \coloneqq (v_i)_{i \in [n]}$. Further, according to \Cref{lem:monotone-bidding}, it is without loss of generality to assume that buyer $i$'s bid $b_i$ is also an increasing function of $v_i$. Hence, $b_{i}$ can be expressed as an increasing function of $q_{i}$, denoted as $b_i \coloneqq \widetilde{v}_i(q_{i})$. Consequently, buyer $i$'s strategy can be represented by an increasing bidding qf $\widetilde{v}_i$. Here, it is important to notice that since $\widetilde{v}_i$ operates on the \emph{quantile space} and only reflects buyer $i$'s bidding distribution, this function is public to the seller. We denote the bidding qf profile as $\widetilde{\bm v} \coloneqq (\widetilde{v}_i)_{i \in [n]}$. 

With the above notations, we further define the \emph{interim} allocation $x_i(\bm \theta, q_i, \widetilde{\bm v})$ and payment $p_i(\bm \theta, q_i, \widetilde{\bm v})$ of any buyer $i$ in each auction, given the strategies of other buyers. Denoting $\widetilde{\bm v}(\bm q) \coloneqq (\widetilde{v}_i(q_{i}))_{i \in [n]}$, we define 
\begin{gather*}
    x_i(\bm \theta, q_i, \widetilde{\bm v}) \coloneqq \mathbb{E}_{\bm q_{-i}}\left[X_i(\bm \theta, \widetilde{\bm v}(\bm q))\right], \\
    p_i(\bm \theta, q_i, \widetilde{\bm v}) \coloneqq \mathbb{E}_{\bm q_{-i}}\left[P_i(\bm \theta, \widetilde{\bm v}(\bm q))\right]. 
\end{gather*}
Therefore, buyer $i$'s interim utility in the auction is: 
\[u_i(\bm \theta, q_i, \widetilde{\bm v}, v_i) \coloneqq x_i(\bm \theta, q_i, \widetilde{\bm v})\cdot v_i(q_i) - p_i(\bm \theta, q_i, \widetilde{\bm v}), \]
and with a slight abuse of notation, her expected utility is: 
\begin{align}
   u_i(\bm \theta, \widetilde{\bm v}, v_i) \coloneqq \int_0^1 u_i(\bm \theta, q_i, \widetilde{\bm v}, v_i) \diffe q_i. \label{eq:buyer-utility}
\end{align}
Finally, the seller's expected revenue in the auction is given by 
\begin{align}
    w(\bm \theta, \widetilde{\bm v}) &\coloneqq \mathbb{E}_{\bm q}\left[\sum_{i \in [n]} P_i(\bm \theta, \widetilde{\bm v}(\bm q)) - \lambda\cdot \sum_{i \in [n]} X_i(\bm \theta, \widetilde{\bm v}(\bm q))\right] \notag \\
    &= \sum_{i \in [n]} \left(\int_0^1 \left(p_i(\bm \theta, q_i, \widetilde{\bm v}) - \lambda\cdot x_i(\bm \theta, q_i, \widetilde{\bm v})\right) \diffe q_i\right). \label{eq:seller-revenue}
\end{align}
% Here, the second equation follows a standard argument on expectation. 

\paragraph{Budget-constrained auction mechanisms.} We now examine how the parameter vector $\bm \theta$ acts in a budget-constrained auction mechanism. Here, a crucial observation is that in real-life scenarios, buyers would react to a parameterized auction mechanism by devising a bidding strategy, which would subsequently influence the evolution of the parameter vector. From an information-accessing standpoint, the seller only sees buyers' bidding distributions (or, equivalently, bidding qfs) throughout the process. 

Under the stochastic model~\citep{DBLP:journals/mansci/BalseiroBW15,DBLP:journals/ior/BalseiroKMM21,DBLP:conf/wine/ConitzerKSM18}, which serves as a good simplification of the complicated dynamic process, we assume that buyers report their bidding quantile functions $(\widetilde{v}_i)_{i \in [n]}$ to the seller, and the parameter vector $\bm \theta$ is a \emph{pre-known public} function of $(\widetilde{v}_i)_{i \in [n]}$ and buyers' budgets $(\rho_i)_{i \in [n]}$. Under such modeling, it is essential for the parameter vector choice to adhere to the budget constraints as a fundamental requirement. For this part, we make the following definition under the stochastic setting: 
\begin{definition}[Budget feasibility]
    A parameterized auction mechanism $\mathcal M(\bm \theta)$ is budget-feasible, if for any bidding qf profile $(\widetilde{v}_i)_{i \in [n]}$ and budget profile $(\rho_i)_{i \in [n]}$, the mechanism and the corresponding parameter vector $\bm \theta$ satisfy that:  
    \begin{equation}
        \int_0^1 p_i(\bm \theta, q_i, \widetilde{\bm v})\diffe q_i \leq \rho_i, \quad\forall 1\leq i\leq n. \tag{BF} \label{eq:BF}
    \end{equation}
\end{definition}

Meanwhile, \emph{individual rationality} is also a crucial requirement for budget-constrained mechanisms, ensuring that any buyer's utility is non-negative as long as she truthfully bids: 
\begin{definition}[Individual rationality]\label{def:IR}
    A parameterized auction mechanism $\mathcal M(\bm \theta)$ is individually rational, if for any value qf profile $(v_i)_{i \in [n]}$ and budget profile $(\rho_i)_{i \in [n]}$, the mechanism and the corresponding parameter vector $\bm \theta$ satisfies that: 
    \begin{equation}
        \int_0^1 u_i(\bm \theta, q_i, {\bm v}, v_i)\diffe q_i \geq 0, \quad\forall 1\leq i\leq n. \tag{IR} \label{eq:IR}
    \end{equation}
\end{definition}

\paragraph{Unassured prior game with budget constraints among the seller and buyers.} As a conclusion of the above, we present the game-theoretic interaction between the seller and buyers for a better understanding. We call this game the \emph{unassured prior game (with budget constraints)}: 

\renewcommand\labelenumi{\textbf{Step \theenumi.}}
\begin{enumerate}
    \item The seller commits to a parameterized auction mechanism $\mathcal{M}(\bm \theta)$ (with a monotone allocation rule), along with a public decision rule for $\bm \theta$. We assume that the auction mechanism is budget-feasible \eqref{eq:BF} and individually rational \eqref{eq:IR}. 
    \item Buyers' value qf's $\{v_i\}_{i \in [n]}$ are private. Each buyer $i \in [n]$ chooses an \emph{increasing} bidding qf $\widetilde{v}_i$ and reports it to the seller. Additionally, buyers truthfully provide the budget profile $(\rho_i)_{i \in [n]}$ to the seller.  
    \item Given $\widetilde{\bm v} = (\widetilde{v}_i)_{i \in [n]}$ and $(\rho_i)_{i \in [n]}$, the parameter vector $\bm \theta$ is computed, and $\mathcal M(\bm \theta)$ is run. Buyer $i \in [n]$'s utility is $u_i(\bm \theta, \widetilde{\bm v}, v_i)$ given in \eqref{eq:buyer-utility}. The seller's revenue is $w(\bm \theta, \widetilde{\bm v})$ given in \eqref{eq:seller-revenue}. 
\end{enumerate}
\renewcommand\labelenumi{\theenumi.}

\paragraph{Other terminologies.} With the bidding qf, we now derive an expression of the virtual bidding qf. Specifically, for a \emph{strictly increasing} and {differentiable} bidding qf $\widetilde{v}$ with cumulative distribution function (CDF) $\widetilde{F}$ and density $\widetilde{f}$, the virtual valuation is given by 
$v - (1 - \widetilde{F}(v))/\widetilde{f}(v)$. Therefore, for any quantile $q \in [0, 1]$, the virtual bidding qf is
\[\widetilde{\psi}(q) \coloneqq \widetilde{v}(q) - \frac{1 - \widetilde{F}(\widetilde{v}(q))}{\widetilde{f}(\widetilde{v}(q))} = \widetilde{v}(q) - (1 - q)\widetilde{v}'(q). \]
We use $(\widetilde{\psi}_i)_{i \in [n]}$ to represent buyers' virtual bidding qfs. We say a bidding qf $\widetilde{v}$ is \emph{(strictly) regular} if the corresponding virtual bidding qf is (strictly) increasing. 

Further, an important assumption that we repeatedly make in this work is that each buyer's (virtual) bidding qf is \emph{inverse Lipschitz continuous}, with the following definition: 
\begin{definition}[Inverse Lipschitz continuity]
    We say a function $g: [0, 1] \to [0, 1]$  is inverse Lipschitz continuous, if there is a constant $L > 0$, such that for any $0 \leq q_1 < q_2 \leq 1$: 
    \[|g(q_2) - g(q_1)| \geq L(q_2 - q_1). \]
\end{definition}

As an example, inverse Lipschitz continuity of the bidding qf implies Lipschitz continuity of the \emph{bidding distribution CDF}. We note that since the seller typically learns the bidding distribution using parameterized continuous models, the above assumption is natural, considering that the seller will adopt a relatively simple model, e.g., truncated power-law distribution or Gaussian distribution for the density. 
Additionally, we need to emphasize that an inverse Lipschitz continuous function need not be continuous itself. 

\paragraph{Discussions on the model.} In this work, a buyer's budget is not explicitly incorporated into her utility function as long as her expected payment is within her budget. In practice, the budget for advertisement is usually set as a fixed and sunk cost within the company. On this side, the advertiser's goal is to maximize her quasi-linear utility while operating within the budget. If the budget is exceeded, we implicitly assume that the advertiser's utility becomes negative infinity. Such a model has been adopted by various works in literature~\citep{DBLP:journals/ior/BalseiroKMM21,DBLP:journals/corr/abs-2102-10476,DBLP:conf/ec/ConitzerKPSSMW19,DBLP:conf/wine/ConitzerKSM18,DBLP:conf/wine/ChenKK21,feng2023strategic}. 
\section{Mechanisms}\label{sec:mechanisms}

\begin{table*}[t]
    \centering
    \caption{Definitions of parameterized mechanisms considered in this work. Buyers bid their quantiles which are uniformly drawn in $[0, 1]$. }
    \small 

\vspace{1em}

\begin{tabular}{lll}
    \toprule
    Mech. (Abbrev.) & Parameters & Allocation / Payment Rules \\
    \midrule
    \makecell*[l]{Bid-Discount FPA \\ (BDFPA)} & $\bm \alpha \in [0, 1]^n$ & \makecell*[l]{Buyer $i \in [n]$: \\ wins if $\alpha_i\widetilde{v}_i(q_i) \geq \max_{i' \neq i} \{\alpha_{i'}\widetilde{v}_{i'}(q_{i'}), \lambda\}$, \\ and 
    pays $\widetilde{v}_i(q_i)$; \\
    pays $0$ otherwise. } \\
    \hline 
    \makecell*[l]{Pacing FPA \\ (PFPA)} & $\bm \beta \in [0, 1]^n$ & \makecell*[l]{Buyer $i \in [n]$: \\ wins if $\beta_i\widetilde{v}_i(q_i) \geq \max_{i' \neq i} \{\beta_{i'}\widetilde{v}_{i'}(q_{i'}), \lambda\}$, \\ and 
    pays $\beta_i\widetilde{v}_i(q_i)$; \\
    pays $0$ otherwise. } \\
    \hline 
    \makecell*[l]{Bayesian \\ Revenue-Optimal \\ Auction \\ (BROA)} & \makecell*[l]{$\bm \gamma^* \in [0, 1]^n$ \\ given by \eqref{eq:programming-BROA}} & \makecell*[l]{Buyer $i \in [n]$: \\ wins if $\gamma_i^*\widetilde{\psi}_i(q_i) \geq \max_{i' \neq i} \{\gamma_{i'}^*\widetilde{\psi}_{i'}(q_{i'}), \lambda\}$, \\ and 
    pays $\min\{\widetilde{v}_i(z): \gamma_i^*\widetilde{\psi}_i(z) \geq \max_{i'\neq i}\{\gamma_{i'}^*\widetilde{\psi}_i(q_{i'}), \lambda\}\}$; \\
    pays $0$ otherwise. } \\
    \hline
    \makecell*[l]{Bid-Discount SPA \\ (BDSPA)} & $\bm \mu \in [0, 1]^n$ & \makecell*[l]{Buyer $i \in [n]$: \\ wins if $\mu_i\widetilde{v}_i(q_i) \geq \max_{i' \neq i} \{\mu_{i'}\widetilde{v}_{i'}(q_{i'}), \lambda\}$, \\ and 
    pays $\max_{i' \neq i}\{\mu_{i'}\widetilde{v}_{i'}(q_{i'}), \lambda\} / \mu_i$; \\
    pays $0$ otherwise. } \\
    \hline 
    \makecell*[l]{Pacing SPA \\ (PSPA)} & $\bm \xi \in [0, 1]^n$ & \makecell*[l]{Buyer $i \in [n]$: \\ wins if $\xi_i\widetilde{v}_i(q_i) \geq \max_{i' \neq i} \{\xi_{i'}\widetilde{v}_{i'}(q_{i'}), \lambda\}$, \\ and 
    pays $\max_{i' \neq i} \{\xi_{i'}\widetilde{v}_{i'}(q_{i'}), \lambda\}$; \\
    pays $0$ otherwise. } \\
    \bottomrule
\end{tabular}

\bigskip
FPA: First-Price Auction \quad SPA: Second-Price Auction

\bigskip 
    \label{tab:mechanisms}
\end{table*}

This work examines five budget-constrained auction forms, all of which take the opportunity cost $\lambda$ as a reserve price. We list these five mechanisms in \Cref{tab:mechanisms}. They include two variants of first-price auctions, two variants of second-price auctions, and the optimal auction under budget constraints. These methods effectively control a buyer's expenditure in two ways: (1) by reducing the likelihood of winning through shading the effective bid and incorporating the reserve price $\lambda > 0$, and (2) by reducing a buyer's payment when she wins. 
We should notice that all these five mechanisms are monotone (\Cref{def:monotonicity}) and satisfy individual rationality (\Cref{def:IR}). Meanwhile, we assume that these mechanisms break ties arbitrarily. We now discuss these mechanisms in more detail, starting with the well-studied second-price auctions, followed by the first-price auctions, and concluding with the optimal auction.

The bid-discount method~\citep{DBLP:conf/icml/NedelecKP19,DBLP:conf/kdd/GolrezaeiLMN21} and the pacing method~\citep{DBLP:journals/mansci/BalseiroBW15,DBLP:journals/mansci/BalseiroG19} have been widely applied to second-price auctions in literature. Under both mechanisms, the seller assigns a multiplier in $[0, 1]$ to each buyer, and buyers are ranked based on the bids shaded by the multiplier. The difference between these two mechanisms is that for the pacing method, the winner pays the second-highest paced bid, while for the bid-discount method, the winner's payment is the lowest winning bid. 

Pacing has also been studied in first-price auctions~\citep{DBLP:conf/ec/ConitzerKPSSMW19}, where the winner pays her shaded bid. Meanwhile, we naturally extend the bid-discount method to first-price auctions. In this mechanism, the winner pays her original bid rather than the shaded bid. Combined with the reserve price, the bid-discount first-price auction controls a buyer's expenditure by reducing the likelihood of winning. We will delve into more structural properties of the mechanism in \Cref{sec:properties}. 

Finally, we introduce the Bayesian revenue-optimal auction proposed by~\citet{DBLP:journals/ior/BalseiroKMM21}. This mechanism maximizes the seller's revenue among all budget-constrained incentive-compatible auctions when all buyers' bidding qfs are \emph{strictly regular}. In the mechanism, buyers are ranked according to the shaded virtual bids, and the winner's payment is the lowest winning bid. Here, the optimal shading parameter $\bm \gamma^*$ is given by the following optimization problem: 
\begin{equation}
    \bm \gamma^* \coloneqq \argmin_{\bm \gamma \in [0, 1]^n}\left\{\mathbb{E}_q\left[\max_i \left\{\gamma_i\widetilde{\psi}_i(q_i) - \lambda\right\}^+\right] + \sum_{i = 1}^n (1 - \gamma_i) \rho_i\right\}.  \label{eq:programming-BROA}
\end{equation}

\subsection{The Budget-Extracting Concept}

From the seller's perspective, a crucial objective is to maximize his revenue, and one direct approach to achieving this is to fully utilize buyers' budgets. To settle this idea, we now define a budget-extracting concept that resembles the system equilibrium concept given in~\citet{DBLP:journals/ior/BalseiroKMM21}. This concept applies to variants of first-price and second-price auctions. Specifically, the seller can carefully set the parameter vector such that either the budget feasibility constraint or the IR constraint is binding for each buyer. Formally, we give the following definition: 
\begin{definition}[Budget-extracting]\label{def:budget-extracting}
	We say a parameterized auction mechanism $\mathcal M(\bm \theta)$ is budget-extracting, if for any bidding qf profile $(\widetilde{v}_i)_{i \in [n]}$ and budget profile $(\rho_i)_{i \in [n]}$, the mechanism and the corresponding parameter vector $\bm \theta$ satisfies that: 
	\[\left(\int_0^1 p_i(\bm \theta, q_i, \widetilde{\bm v})\diffe q_i \leq \rho_i\right) \perp \left(\bm \theta_i \leq \bar{\bm \theta}_i\right). \]
	Here, $\bar{\bm \theta}_i$ is the upper limit of $\bm \theta_i$, and the $\perp$ notation means that at least one of the two constraints is binding. 
\end{definition}

We now demonstrate that the budget-extracting concept can be realized for two variants of first-price auctions and is well-defined for second-price auctions when buyers are symmetric. Further, we show that the budget-extracting mechanism is the optimal choice for the seller in the case of BDFPA, PFPA, and symmetric PSPA auctions. 
\begin{theorem}\label{thm:existence-budget-extracting}
    We have the following: 
    \begin{enumerate} 
        \item When each buyer's bidding qf is inverse Lipschitz continuous, both BDFPA and PFPA support a budget-extracting mechanism. 
        \item When all buyers are symmetric, and their common bidding qf is inverse Lipschitz continuous, both BDSPA and PSPA support a symmetric budget-extracting mechanism. 
    \end{enumerate}
    Further, for BDFPA, PFPA, and symmetric PSPA, under the above conditions, respectively, committing to a budget-extracting mechanism maximizes the seller's revenue among all mechanisms. 
\end{theorem}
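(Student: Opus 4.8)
The plan is to handle the two existence claims and the final revenue‑optimality claim by different techniques. For existence I set up a monotone ``budget best‑response'' operator on the parameter space and apply Tarski's fixed‑point theorem for the first‑price variants, and the intermediate value theorem restricted to the symmetric diagonal for the second‑price variants. For revenue‑optimality I show that, among all budget‑feasible parameter tuples in the given family, the seller's revenue is attained at a budget‑extracting tuple; this is a short monotonicity argument for PFPA and symmetric PSPA, but needs a genuine exchange argument for BDFPA.

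\textbf{Existence for BDFPA and PFPA.} Fix $\widetilde{\bm v}$ and $(\rho_i)_{i\in[n]}$ and write $\Pi_i(\bm\theta)\coloneqq\int_0^1 p_i(\bm\theta,q_i,\widetilde{\bm v})\diffe q_i$. For BDFPA ($\bm\theta=\bm\alpha$) and PFPA ($\bm\theta=\bm\beta$), $\Pi_i$ is nondecreasing in $\theta_i$ (a larger own multiplier makes buyer $i$ win, and for PFPA also pay, more) and nonincreasing in each $\theta_j$, $j\ne i$ (more competition); and since inverse Lipschitz continuity of each $\widetilde v_j$ forbids atoms in the bidding distributions, $\Pi_i$ is continuous. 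Define $T:[0,1]^n\to[0,1]^n$ by $T_i(\bm\theta)\coloneqq\sup\{t\in[0,1]:\Pi_i((t,\bm\theta_{-i}))\le\rho_i\}$, which is well defined because $t=0$ is feasible. Monotonicity in $\theta_i$ makes the feasible set an interval $[0,T_i(\bm\theta)]$, and continuity forces $\Pi_i((T_i(\bm\theta),\bm\theta_{-i}))=\rho_i$ whenever $T_i(\bm\theta)<1$; monotonicity in $\bm\theta_{-i}$ makes $T$ order‑preserving on the complete lattice $[0,1]^n$. Tarski's theorem yields a fixed point $\bm\theta^*$, and at $\bm\theta^*$ each coordinate $\theta_i^*$ is either $1$ or satisfies $\Pi_i(\bm\theta^*)=\rho_i$, while $\Pi_i(\bm\theta^*)\le\rho_i$ for every $i$ --- i.e. the tuple $\bm\theta^*$ is budget‑feasible (\eqref{eq:BF}) and budget‑extracting (\Cref{def:budget-extracting}).

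\textbf{Existence for symmetric BDSPA and PSPA.} For the second‑price variants the same operator fails to be monotone, because raising $\theta_j$ ($j\ne i$) lowers buyer $i$'s winning probability but raises the second price she pays when she wins, so $\Pi_i$ need not be monotone in $\bm\theta_{-i}$; this is exactly why symmetry is assumed. With a common bidding qf, restrict to diagonal tuples $\theta\bm1$; the common expected payment $p(\theta)$ is continuous (again, no atoms), and $p(0)=0<\rho$ since a zero multiplier puts the effective bid below $\lambda$. If $p(1)\le\rho$ set $\theta^*=1$; otherwise the intermediate value theorem gives $\theta^*\in(0,1)$ with $p(\theta^*)=\rho$. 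Either way $\theta^*\bm1$ is a symmetric budget‑extracting tuple.

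\textbf{Revenue‑optimality and the main obstacle.} For PFPA the seller's per‑realization revenue equals $(\max_i\beta_i\widetilde v_i(q_i)-\lambda)^+$, so $w(\bm\beta,\widetilde{\bm v})$ is coordinate‑wise nondecreasing; every budget‑feasible $\bm\beta$ satisfies $\bm\beta\le T(\bm\beta)$, hence $\bm\beta\le\bm\beta^{**}$ where $\bm\beta^{**}=\operatorname{gfp}(T)$ by Knaster--Tarski, and since $\bm\beta^{**}$ is budget‑extracting and $w$ is monotone, $w(\bm\beta^{**},\widetilde{\bm v})\ge w(\bm\beta,\widetilde{\bm v})$. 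For symmetric PSPA the diagonal version of this works, since along the diagonal both $p(\xi)$ and the per‑realization revenue (a scaled second price, truncated at $\lambda$) are nondecreasing in $\xi$, so the budget‑extracting diagonal tuple is simultaneously the largest budget‑feasible diagonal tuple and the revenue‑maximal one. BDFPA is the delicate case: the winner pays her \emph{unshaded} bid, so $w(\bm\alpha,\widetilde{\bm v})=\mathbb E_{\bm q}[(\widetilde v_{i^*}(q_{i^*})-\lambda)^+]$ with $i^*=\argmax_i\alpha_i\widetilde v_i(q_i)$, and this is \emph{not} monotone in $\bm\alpha$ --- raising a large multiplier can let a low bidder outbid a high bidder and cut revenue. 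The fact I would exploit is that raising the \emph{smallest} multipliers (the whole lowest tier together) is revenue‑safe: if buyer $i$ with new multiplier $\alpha_i'\le\alpha_j$ newly wins a realization over $j$, then $\alpha_i'\widetilde v_i(q_i)\ge\alpha_j\widetilde v_j(q_j)$ forces $\widetilde v_i(q_i)\ge\widetilde v_j(q_j)$, so the seller's revenue on that realization does not fall (and realizations that newly clear $\lambda$ only help); moreover raising one buyer's multiplier slackens every other buyer's budget, so feasibility is preserved. A ``water‑filling from below'' of this kind, together with compactness of the feasible region, should drive any budget‑feasible tuple to a budget‑extracting one without losing revenue. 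Turning this sketch into a proof --- in particular controlling how raising one multiplier interacts with the other, possibly already tight, budget constraints (buyers can momentarily become unsaturated) --- is where I expect the bulk of the effort to go, and is the principal obstacle of the theorem.
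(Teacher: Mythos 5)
Your existence arguments are sound: the Tarski fixed-point route for BDFPA/PFPA is a legitimate alternative to the paper's construction (which instead shows the budget-feasible set is closed under componentwise maxima and compact, takes the maximum feasible tuple, and shows it is budget-extracting), and your diagonal intermediate-value argument for symmetric BDSPA/PSPA, as well as the monotonicity arguments for PFPA and symmetric PSPA revenue, essentially coincide with the paper's (\Cref{lem:symmetric-budget-extracting-tuple}, \Cref{lem:revenue-PFPA}). The genuine gap is exactly where you place it: the claim that a budget-extracting BDFPA maximizes the seller's revenue. Your proposal offers only a ``water-filling from below'' sketch and concedes the main difficulty is unresolved, so the theorem's hardest assertion is not proved. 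Worse, the sketch cannot be completed with the lemma you state: your revenue-safety argument only applies when the buyer being raised remains weakly below the displaced winner's multiplier. After the lowest tier is raised, higher-tier buyers' budgets become slack while their multipliers are still below $1$, so restoring the budget-extracting condition forces you to raise \emph{their} multipliers --- and for that move no revenue protection is available (a buyer with a large multiplier can displace a buyer with a smaller multiplier but a strictly larger unshaded bid, strictly cutting revenue). In addition, even a completed exchange argument would only show that \emph{some} budget-extracting tuple is revenue-maximizing, whereas the theorem (and its later use, where an arbitrary budget-extracting tuple $\bm\alpha^{\rm e}$ is identified with an optimal solution in the proof of \Cref{thm:equivalence-BROA-eBDFPA}) requires that \emph{every} budget-extracting tuple maximizes revenue.

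The paper's route is entirely different and global rather than local: it writes the seller's problem as the program \eqref{eq:programming-BDFPA}, passes to the Lagrangian dual $\chi^{\rm BDF}(\bm\tau)$, shows $\chi^{\rm BDF}$ is convex and differentiable with gradient $(\rho_i - p_i)_i$, characterizes dual optimality by the complementary-slackness-type conditions of \Cref{lem:optimality-conditions}, establishes strong duality by exhibiting $\bm\alpha = \bm 1 - \bm\tau^*$ as primal feasible with matching value, and finally observes that the budget-extracting conditions \eqref{eq:budget-extracting+budget-feasibility-BDFPA} are precisely these optimality conditions under $\bm\tau = \bm 1 - \bm\alpha^{\rm e}$; hence any budget-extracting tuple is revenue-maximizing (\Cref{thm:budget-extracting-revenue-maximizing-BDFPA}), and only strict monotonicity of the bidding qfs is needed. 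Some argument of this duality/KKT type (or another device that certifies global optimality of an arbitrary budget-extracting tuple) is what your proposal is missing.
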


% The proof of \Cref{thm:existence-budget-extracting} is deferred to \Cref{sec:proof:sec-mechanism}. 
We should notice that the revenue-maximizing part of \Cref{thm:existence-budget-extracting} works for all buyers' bidding profiles under natural conditions. Consequently, under the respective constraints, we only need to consider budget-extracting mechanisms as they represent the seller's optimal choice regardless of the buyers' strategies. For brevity, we take eBDFPA as an abbreviation of ``budget-extracting BDFPA'' in the rest of this work. The same abbreviation also holds for PFPA, BDSPA, and PSPA. 

\section{Strategic Equivalence Results}\label{sec:equivalence}

In the previous section, we have established that for two variations of first-price auctions and the symmetric pacing second-price auction, the optimal parameter choice for the seller is to satisfy the budget-extracting requirement by adequately utilizing the buyers' budgets. 
In this section, we focus on the buyers' perspective and explore their bidding strategies when facing different budget-extracting mechanisms or the optimal mechanism BROA. Specifically, we show broad strategic equivalence results among these five parameterized mechanisms in the unassured prior game. First, we introduce two notions of strategic equivalence with varying levels of guarantees.
\begin{definition}[Strategic equivalence]\label{def:strategic-equivalence}
    We say two parameterized auction mechanisms $\mathcal M_1(\bm \theta)$ and $\mathcal M_2(\bm \theta)$ are \emph{weakly strategic-equivalent} (in the unassured prior game), if there are two mappings $G, H: ([0, 1] \to [0, 1])^n \to ([0, 1] \to [0, 1])^n$ such that for any strategic bidding profiles $\widetilde{\bm v}$, $G(\widetilde{\bm v})$ under $\mathcal M_2(\bm \theta)$ brings the same utility-revenue profile with $\widetilde{\bm v}$ under $\mathcal M_1(\bm \theta)$; and $H(\widetilde{\bm v})$ under $\mathcal M_1(\bm \theta)$ brings the same revenue-utility profile with $\widetilde{\bm v}$ under $\mathcal M_2(\bm \theta)$. Further, if $G$ and $H$ operate independently and identically as $g$ and $h$ on each bidding function, we say $\mathcal M_1(\bm \theta)$ and $\mathcal M_2(\bm \theta)$ are \emph{strongly strategic-equivalent} (in the unassured prior game). 
\end{definition}

% \begin{definition}[Strong Strategic Equivalence]
%     We say two parameterized auction mechanisms $\mathcal M_1(\bm \theta)$ and $\mathcal M_2(\bm \theta)$ are strongly strategic-equivalent, if there are two mappings $g, h: ([0, 1] \to [0, 1]) \to ([0, 1] \to [0, 1])$ for any strategic bidding profiles $\widetilde{\bm v} = (\widetilde{v}_i)_{i \in [n]}$, $(g(\widetilde{v}_i))_{i \in [n]}$ under $\mathcal M_2(\bm \theta)$ brings the same utility-revenue profile with $\widetilde{\bm v}$ under $\mathcal M_1(\bm \theta)$; and $(h(\widetilde{v}_i))_{i \in [n]}$ under $\mathcal M_1(\bm \theta)$ brings the same revenue-utility profile with $\widetilde{\bm v}$ under $\mathcal M_2(\bm \theta)$. 
% \end{definition}

In general, weak strategic equivalence, as defined above, indicates that the sets of utility-revenue profiles under two parameterized mechanisms are identical. Additionally, strong strategic equivalence requires that each buyer's strategy profile mapping be independent and anonymous. An important observation is that under strong strategic equivalence, if the two mappings $g$ and $h$ are further inverse functions of each other, then for any bidder $i$, if $\widetilde{v}_i$ is a best-response for other bidders' strategy $\widetilde{\bm v}_{-i}$ under $\mathcal M_1(\bm \theta)$, $g(\widetilde{v}_i)$ would also be a best-response to $g(\widetilde{\bm v}_{-i})$ under $\mathcal M_2(\bm \theta)$ since $g$ keeps the outcome. The same applies vice versa for $h = g^{-1}$. As a result, $g$ gives a one-to-one mapping between Nash equilibria of these two parameterized mechanisms while preserving the utility-revenue profile. This observation is formalized in the following lemma. All missing proofs in this section can be found in \Cref{sec:proof:sec-equivalence}. 
\begin{lemma}\label{lem:strong-se-ne}
     If two parameterized mechanisms $\mathcal M_1(\bm \theta)$ and $\mathcal M_2(\bm \theta)$ are strongly strategic-equivalent and the corresponding mappings $G$ and $H$ ($g$ and $h$) are inverse of each other, then the two sets comprised of all Nash equilibrium utility-revenue profiles respectively for these two mechanisms are the same in the unassured prior game. 
\end{lemma}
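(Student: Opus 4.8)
The plan is to use the componentwise maps $G=(g,\dots,g)$ and $H=(h,\dots,h)$ themselves as the correspondence between the two equilibrium sets: I will show that $G$ sends every Nash equilibrium of $\mathcal M_1(\bm\theta)$ to a Nash equilibrium of $\mathcal M_2(\bm\theta)$, that $H$ does the reverse, and that each map preserves the induced utility--revenue profile. Since $g$ and $h$ (hence $G$ and $H$) are inverse of each other, these facts together produce an outcome-preserving bijection between the two equilibrium sets, which is exactly equality of the two sets of Nash equilibrium utility--revenue profiles.

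To set things up, I would fix a true value qf profile $\bm v=(v_i)_{i\in[n]}$ and recall that, once the mechanism and its public parameter rule are committed in Step~1, a Nash equilibrium of the unassured prior game is a profile of increasing bidding qfs among the $n$ buyers, buyer $i$'s payoff from a profile $\widetilde{\bm v}$ being $u_i(\bm\theta,\widetilde{\bm v},v_i)$. Two elementary observations do most of the work. First, because $g$ and $h$ are mutually inverse and (being part of strong strategic equivalence) both carry increasing bidding qfs to increasing bidding qfs, $G$ and $H$ are mutually inverse bijections of the buyers' joint strategy space, and, acting identically on each coordinate, they commute with unilateral replacements of a single buyer's strategy. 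Second, by the defining property in \Cref{def:strategic-equivalence}, for every strategy profile $G$ preserves each buyer's utility and the seller's revenue (and so does $H$); in particular an equilibrium and its image always carry the same utility--revenue profile.

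The core step is ``equilibrium maps to equilibrium.'' Let $\widetilde{\bm v}^\star$ be a Nash equilibrium under $\mathcal M_1(\bm\theta)$ and set $\widetilde{\bm w}^\star\coloneqq G(\widetilde{\bm v}^\star)$. Suppose, for a contradiction, that some buyer $i$ has a profitable deviation $\widetilde w_i'$ under $\mathcal M_2(\bm\theta)$ against $\widetilde{\bm w}^\star_{-i}$. Pull it back along $h$ and consider the $\mathcal M_1$-profile obtained from $\widetilde{\bm v}^\star$ by replacing buyer $i$'s strategy with $h(\widetilde w_i')$. Applying $G$ to that profile and using $g\circ h=\mathrm{id}$ returns precisely $(\widetilde w_i',\widetilde{\bm w}^\star_{-i})$; hence, by strong strategic equivalence, buyer $i$'s utility under the pulled-back $\mathcal M_1$-profile equals her utility under the deviation in $\mathcal M_2$, while her utility at $\widetilde{\bm v}^\star$ equals her utility at $\widetilde{\bm w}^\star$. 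Chaining these equalities with the assumed profitability of $\widetilde w_i'$ shows that $h(\widetilde w_i')$ is a profitable deviation for buyer $i$ under $\mathcal M_1(\bm\theta)$, contradicting that $\widetilde{\bm v}^\star$ is a Nash equilibrium. Therefore $\widetilde{\bm w}^\star$ is a Nash equilibrium of $\mathcal M_2(\bm\theta)$ with the same utility--revenue profile as $\widetilde{\bm v}^\star$; running the symmetric argument with the roles of $\mathcal M_1,\mathcal M_2$ and of $g,h$ interchanged yields the reverse inclusion, so the two sets coincide.

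I expect the only delicate point to be the bookkeeping around direction: each use of \Cref{def:strategic-equivalence} must be applied to the right mechanism and the right profile, and one must confirm that the pulled-back candidate $h(\widetilde w_i')$ is itself an admissible strategy---an increasing bidding qf---so that it genuinely counts as a deviation in the buyers' game. This is exactly where the preservation of the class of increasing bidding qfs by $g,h$ is used, together with the standing fact that the seller's parameter rule keeps every reported profile budget-feasible, so that no deviation is ruled out on budget grounds. Once these are checked, the whole argument is a short diagram chase requiring no computation.
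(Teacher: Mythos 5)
Your proposal is correct and follows essentially the same route as the paper's proof: map an equilibrium forward via $g$, pull any hypothetical profitable deviation back via $h=g^{-1}$, and use outcome preservation to contradict the original equilibrium, then argue symmetrically. Your extra care in checking that the pulled-back deviation is an admissible increasing bidding qf is a detail the paper leaves implicit, but the core argument is identical.
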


The rest of this section discusses the weak/strong strategic equivalence relationships among the five mechanisms. We begin by considering the general case where buyers can be asymmetric and then proceed to the symmetric case.

\subsection{General Case}

We present our main results in the general case when buyers can be asymmetric. Specifically, we establish a strong strategic equivalence between BROA, the optimal budget-constrained mechanism, and eBDFPA, the budget-extracting bid-discount first-price auction, under minor conditions on the buyers' strategic bidding functions. The formal statement of this result is as follows:
\begin{theorem}\label{thm:equivalence-BROA-eBDFPA}
    When each buyer's virtual bidding qf is restricted to be strictly increasing and differentiable, BROA and eBDFPA are strongly strategic-equivalent in the unassured prior game. This result also holds when each buyer's bidding qf and virtual bidding qf are both inverse Lipschitz continuous.
\end{theorem}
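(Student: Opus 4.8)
The plan is to construct the explicit per-buyer mapping $g$ (and its inverse $h$) at the level of bidding quantile functions, and then check that it preserves the interim allocation, payment, and hence the utility-revenue profile. The starting observation is structural: in BROA, buyer $i$ is ranked by the shaded virtual bid $\gamma_i^* \widetilde\psi_i(q_i)$, whereas in eBDFPA buyer $i$ is ranked by the shaded raw bid $\alpha_i \widetilde v_i(q_i)$. So the natural candidate for $g$ is the map that sends a bidding qf $\widetilde v_i$ to the bidding qf whose raw bid equals the virtual bid of $\widetilde v_i$, i.e. $g(\widetilde v_i) \coloneqq \widetilde\psi_i$, where $\widetilde\psi_i(q) = \widetilde v_i(q) - (1-q)\widetilde v_i'(q)$ is the virtual bidding qf defined in the excerpt. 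Under the restriction that $\widetilde\psi_i$ is strictly increasing and differentiable (so it is itself a legitimate strictly increasing bidding qf), this is well-defined; the inverse map $h$ takes a bidding qf $\widetilde w$ and returns the qf $\widetilde v$ satisfying $\widetilde w = \widetilde v - (1-q)\widetilde v'$, which is recovered by solving that first-order linear ODE, giving $\widetilde v(q) = \int_0^q \widetilde w(t)\,\mathrm{d}t/(1-q)$-type formula (the standard "ironing-free" integral recovery), and one checks $h = g^{-1}$ on the relevant domain.

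Next I would pin down the correspondence between the mechanism parameters. In BROA the shading vector $\bm\gamma^*$ solves the program \eqref{eq:programming-BROA}; in eBDFPA the shading vector $\bm\alpha$ is the budget-extracting one guaranteed by \Cref{thm:existence-budget-extracting}. The key claim is that running eBDFPA on the transformed profile $(\widetilde\psi_i)_i$ with its budget-extracting parameter $\bm\alpha$ produces exactly the same winner (for every quantile profile $\bm q$) and the same interim payments as BROA on $(\widetilde v_i)_i$ with $\bm\gamma^*$, when we identify $\alpha_i = \gamma_i^*$. The allocation identification is immediate once $\alpha_i = \gamma_i^*$, since both rank by $\gamma_i^*\widetilde\psi_i(q_i)$ against $\max\{\cdots,\lambda\}$. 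For payments: BROA's winner pays $\min\{\widetilde v_i(z): \gamma_i^*\widetilde\psi_i(z)\geq \max_{i'\neq i}\{\cdots,\lambda\}\}$ — the lowest winning raw bid — while eBDFPA's winner on the transformed profile pays her raw bid $\widetilde\psi_i(q_i)$. These are \emph{not} literally equal, so the content of the theorem is that they are equal \emph{in interim expectation}, i.e. $\int_0^1 p_i\,\mathrm{d}q_i$ agrees. This is exactly the classical Myersonian fact that a monotone allocation's expected payment is determined by the allocation rule (revenue equivalence / the envelope formula), combined with the observation that the virtual-surplus of the BROA allocation equals the expected revenue; I would invoke the integration-by-parts identity $\int_0^1 \widetilde\psi_i(q_i) x_i(q_i)\,\mathrm{d}q_i = \int_0^1 p_i(q_i)\,\mathrm{d}q_i$ wherever the allocation is that of the common ranking rule, which forces the interim payments (and the seller's revenue term $p_i - \lambda x_i$) to coincide. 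Then I must separately verify that the parameter identification is consistent with budget-extraction, i.e. that $\bm\gamma^*$ from \eqref{eq:programming-BROA} is precisely budget-extracting for eBDFPA on the transformed profile — this should follow by recognizing the first-order/complementary-slackness conditions of the convex program \eqref{eq:programming-BROA} as exactly the $\perp$ condition of \Cref{def:budget-extracting}, using that $\partial/\partial\gamma_i$ of the objective relates the marginal expected payment to $\rho_i$.

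The same argument run backwards handles $h$: given eBDFPA with budget-extracting $\bm\alpha$ on profile $(\widetilde v_i)_i$, apply $h$ to each $\widetilde v_i$ to get $(\widetilde u_i)_i$ with $\widetilde\psi$-of-$\widetilde u_i$ equal to $\widetilde v_i$, set $\gamma_i^* = \alpha_i$ (checking this $\bm\alpha$ indeed solves \eqref{eq:programming-BROA} for the profile $(\widetilde u_i)_i$, again via the complementary-slackness matchup), and conclude the outcome is preserved. For the second sentence of the theorem (the inverse-Lipschitz version), the point is that strict monotonicity/differentiability of $\widetilde\psi_i$ can be dropped: inverse Lipschitz continuity of both $\widetilde v_i$ and $\widetilde\psi_i$ is enough to make the integral recovery $h$ well-defined and to keep all the interim integrals finite and the envelope identity valid (possibly passing through an approximation/limiting argument, or re-deriving the payment identity directly from monotonicity of the allocation without assuming a density). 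I expect the \textbf{main obstacle} to be exactly this payment-equality step: showing that the two genuinely different payment rules (lowest-winning-bid threshold pricing in BROA versus pay-your-transformed-bid in eBDFPA) yield identical \emph{interim-expected} payments, and doing so cleanly under only the inverse-Lipschitz assumptions rather than a smooth-density assumption — this is where the Myersonian envelope/revenue-equivalence machinery has to be invoked with care, and where the consistency of the budget-extracting parameter with the optimization program \eqref{eq:programming-BROA} must be nailed down.
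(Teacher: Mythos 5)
Your overall route is the same as the paper's: map each bidding qf to its virtual bidding qf, identify $\bm\alpha$ with $\bm\gamma^*$, match the allocation indicator pointwise, get the interim payments via the Myersonian identity $\int_0^1 p_i\,\diffe q_i=\int_0^1\widetilde\psi_i(q_i)x_i\,\diffe q_i$, and verify that budget-extraction for eBDFPA and optimality in \eqref{eq:programming-BROA} are the same complementary-slackness conditions. However, there is a genuine gap in the very first step: $g(\widetilde v_i)\coloneqq\widetilde\psi_i$ is generally \emph{not} a valid bidding qf, because a virtual bidding qf is typically negative on low quantiles (e.g.\ the uniform case gives $2q-1$), while the strategy space in the unassured prior game consists of increasing functions into $[0,1]$. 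Strict monotonicity and differentiability of $\widetilde\psi_i$ do not rescue this. The paper spends a substantial part of its proof on exactly this point: it ``lifts'' the portion of $\widetilde\psi_i$ below $\lambda/2$, replacing it by an exponential (or, when the derivative at the joint vanishes, trigonometric) piece so that the image is non-negative, strictly increasing, differentiable, and inverse Lipschitz when the input is, and then argues that this surgery changes nothing because, with the reserve $\lambda$ and multipliers at most $1$, a buyer never wins on quantiles where her (virtual) bid is below $\lambda/2$, so the allocation indicators and hence payments coincide. Without this step your $g$ does not land in the required codomain, and (as the paper notes before \Cref{thm:same-NE-BROA-eBDFPA}) the lifted $g$ and the ODE-recovery $h$ are no longer exact inverses on such profiles, which is why \Cref{lem:strong-se-ne} cannot be invoked directly and extra care is needed later.

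Two smaller points. First, the recovery map should be $h(\widetilde w)(q)=\bigl(\int_q^1\widetilde w(z)\,\diffe z\bigr)/(1-q)$ (integrating the tail, not $\int_0^q$); with that formula the paper verifies non-negativity, boundedness by $1$, strict monotonicity, differentiability, and preservation of inverse Lipschitz continuity. Second, you read the theorem's second sentence as dropping strict monotonicity/differentiability and working only under inverse Lipschitz continuity, which would indeed force the approximation/limiting work you flag as the main obstacle; in the paper the $\text{IL}^2$ case is an \emph{additional} restriction on top of the first (see \Cref{fig:equivalence-BROA-eBDFPA}), and all that needs checking is that both the lifting and the recovery map preserve inverse Lipschitz continuity, which is comparatively routine. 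With the lifting construction added and these corrections, your outline matches the paper's argument.
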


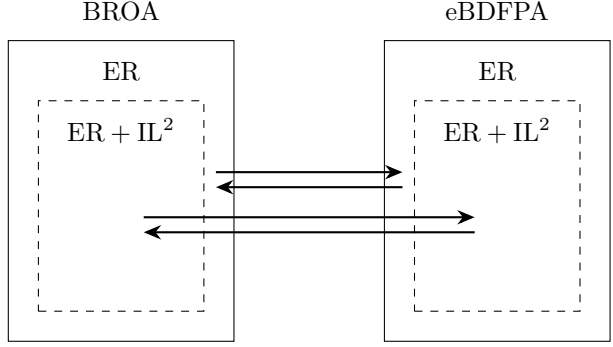
\begin{figure}[t]
    \centering
    \begin{tikzpicture}[scale=1]
    \draw (0,0) rectangle ++(3,4);
    \draw[dashed] (0.4,0.4) rectangle ++(2.2,2.8);

    \node at (1.5,4.4) {\small BROA};
    \node at (1.5,3.6) {\small ER};
    \node at (1.5,2.8) {\small $\text{ER} + \text{IL}^2$};

    \begin{scope}[shift={(5,0)}]
        \draw (0,0) rectangle ++(3,4);
        \draw[dashed] (0.4,0.4) rectangle ++(2.2,2.8);
    
        \node at (1.5,4.4) {\small eBDFPA};
        \node at (1.5,3.6) {\small ER};
        \node at (1.5,2.8) {\small $\text{ER} + \text{IL}^2$};
    \end{scope}
    
    \draw[thick, {-{Stealth[angle'=45]}}] (2.76,2.25) -- (5.24,2.25); 
    \draw[thick, {{Stealth[angle'=45]}-}] (2.76,2.05) -- (5.24,2.05);
    
    \draw[thick, {-{Stealth[angle'=45]}}] (1.8,1.65) -- (6.2,1.65); 
    \draw[thick, {{Stealth[angle'=45]}-}] (1.8,1.45) -- (6.2,1.45);
    % \draw[implies-implies,double equal sign distance] (2.75,2.4) -- (5.25,2.4);
    % \draw[implies-implies,double equal sign distance] (2.2,1.8) -- (5.8,1.8);
\end{tikzpicture}
    \caption{An illustration of \Cref{thm:equivalence-BROA-eBDFPA}. ER: Each buyer's virtual bidding qf is strictly increasing and differentiable; $\text{IL}^2$: each buyer's bidding qf and virtual bidding qf are both inverse Lipschitz continuous. }
    \label{fig:equivalence-BROA-eBDFPA}
\end{figure}

To provide a visual representation of \Cref{thm:equivalence-BROA-eBDFPA}, we include an illustration in \Cref{fig:equivalence-BROA-eBDFPA}. 
The solid rectangles represent the subsets of BROA and eBDFPA with strictly increasing and differentiable virtual bidding qfs, which are equivalent under independent mappings. Furthermore, the dashed rectangles, which are subsets of the solid rectangles, indicate that when we additionally require the inverse Lipschitz continuity of both bidding qfs and virtual bidding qfs, these restricted subsets remain equivalent.
The proof of \Cref{thm:equivalence-BROA-eBDFPA} is based on the study of the properties of budget-extracting BDFPA. The intuition is to notice the intrinsic similarity between the programming of BROA and eBDFPA. Nevertheless, the rigorous proof is far more complex. In particular, it involves the construction of a mapping between the bidding strategies under these two mechanisms that result in the same utility-revenue profile outcome. For this part, we adopt a technique we call ``lifting'' to adjust those intractable bidding functions. 

It is important to notice that \Cref{lem:strong-se-ne} cannot be directly applied to \Cref{thm:equivalence-BROA-eBDFPA} since the mappings we construct are not inverses of each other on those ``bad-behaved'' strategy profiles. However, concerning buyers' utilities and the seller's revenue, we can employ the ``lifting'' technique to filter out these profiles by showing their equivalence with well-behaved ones. Led by the observation, we can further derive the following important theorem.

\begin{theorem}\label{thm:same-NE-BROA-eBDFPA}
    When each buyer's virtual bidding qf is restricted to be strictly increasing and differentiable, BROA and eBDFPA have the same set of Nash equilibrium utility-revenue profiles in the unassured prior game. This result also holds when each buyer's bidding qf and virtual bidding qf are both inverse Lipschitz continuous.
\end{theorem}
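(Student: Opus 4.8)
The plan is to bootstrap \Cref{thm:same-NE-BROA-eBDFPA} from the strong strategic equivalence of \Cref{thm:equivalence-BROA-eBDFPA} together with the best-response-preservation idea behind \Cref{lem:strong-se-ne}, the only extra work being to neutralize the ``bad-behaved'' strategy profiles on which the two mappings $g$ (from BROA to eBDFPA) and $h$ (from eBDFPA to BROA) fail to be inverses. Concretely, I would first isolate from the proof of \Cref{thm:equivalence-BROA-eBDFPA} a ``good'' subclass $\mathcal G$ of (restricted) bidding profiles on which $g$ and $h$ are mutually inverse bijections preserving the utility-revenue profile, and record the two properties of the lifting operator that its construction yields: (i) lifting sends any admissible bidding qf (resp.\ profile) into $\mathcal G$ while keeping the interim allocation and payment curves, hence the utility-revenue profile, intact; and (ii) lifting a single coordinate leaves the competitive environment faced by every other buyer unchanged, so it neither creates nor destroys profitable deviations. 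These are precisely the hooks that \Cref{lem:strong-se-ne} lacks in the non-inverse regime.

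Next I would prove the two inclusions between the NE-outcome sets. For the forward direction, take a Nash equilibrium $\widetilde{\bm v}$ of BROA in the restricted strategy space. By (i)--(ii), lifting coordinatewise I may assume $\widetilde{\bm v}\in\mathcal G$: the utility-revenue profile is preserved, and by (ii) each buyer's best-response problem is unchanged, so the lifted profile is still an equilibrium. Set $\widetilde{\bm w}:=g(\widetilde{\bm v})\in\mathcal G$; by \Cref{thm:equivalence-BROA-eBDFPA} the utility-revenue profile is preserved and $h(\widetilde{\bm w})=\widetilde{\bm v}$. To see $\widetilde{\bm w}$ is an equilibrium of eBDFPA, suppose buyer $i$ had a profitable deviation $\widetilde{w}_i'$ against $\widetilde{\bm w}_{-i}$; lift it to $\widetilde{w}_i''\in\mathcal G$ giving $i$ at least as much utility (by (i)), so that $(\widetilde{w}_i'',\widetilde{\bm w}_{-i})\in\mathcal G$. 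Applying $h$ to this profile, and using $h=g^{-1}$ on $\mathcal G$ together with the fact that $h$ preserves the revenue-utility profile, buyer $i$'s utility under $(h(\widetilde{w}_i''),\widetilde{\bm v}_{-i})$ in BROA equals her (strictly improved) utility under $(\widetilde{w}_i'',\widetilde{\bm w}_{-i})$ in eBDFPA, contradicting that $\widetilde{\bm v}$ is an equilibrium. Hence $\widetilde{\bm w}$ is a Nash equilibrium of eBDFPA with the same utility-revenue profile as $\widetilde{\bm v}$. The reverse direction is entirely symmetric: start from a Nash equilibrium of eBDFPA, lift it into $\mathcal G$, apply $h$, and rule out BROA deviations by the same lift-then-$g$ argument. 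The two inclusions together yield that BROA and eBDFPA share the same set of Nash equilibrium utility-revenue profiles, under either the ER assumption or the $\text{IL}^2$ assumption, since these are exactly the hypotheses under which \Cref{thm:equivalence-BROA-eBDFPA} supplies the mappings $g$ and $h$.

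The main obstacle I expect is establishing properties (i) and (ii) of the lifting operator with enough precision, since everything downstream rests on them. Property (i) is essentially inherited from the proof of \Cref{thm:equivalence-BROA-eBDFPA}, but (ii) --- that replacing one coordinate by its lift does not alter any buyer's set of profitable deviations --- needs the lift to modify a bidding qf only in ways invisible to the allocation and (first-price) payment rules, e.g., only on measure-zero sets or on quantile regions that never realize the relevant maximum or virtual-maximum order statistic; this must be checked against the explicit eBDFPA and BROA formulas in \Cref{tab:mechanisms}. A secondary point to keep track of is that, in the unassured prior game, a unilateral deviation for a buyer ranges over the same restricted admissible strategy set under both mechanisms, so the lift of a deviation is itself an admissible deviation on which $g$ and $h$ are defined.
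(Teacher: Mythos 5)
Your proposal is correct and follows essentially the same route as the paper: the paper's proof reduces to ``well-behaved'' BROA strategies via the lifting equivalence (its \Cref{lem:BROA-non-negative}, which is exactly your properties (i)--(ii), holding because the lifted and original virtual bidding qfs coincide above $\lambda/2$ and the lifted part never wins or affects the parameter computation), and then uses that $g(\widetilde v)(x)=\widetilde v(x)-(1-x)\widetilde v'(x)$ and $h(\widetilde v)(x)=\bigl(\int_x^1 \widetilde v(z)\diffe z\bigr)/(1-x)$ are mutual inverses on this class together with \Cref{lem:strong-se-ne}. Your version merely unrolls the best-response-preservation argument and lifts deviations explicitly, which is the same mechanism the paper invokes implicitly.
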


At a high level, \Cref{thm:equivalence-BROA-eBDFPA,thm:same-NE-BROA-eBDFPA} extend the results in~\citet{DBLP:conf/www/DengLX20,DBLP:conf/sigecom/TangZ18} to budget-constrained stochastic auctions. These two theorems are significant results indicating that when buyers' strategic bidding behaviors affect the learning behavior of the seller and, therefore, the parameter vector, the optimal mechanism and budget-extracting BDFPA are strongly strategic-equivalent, and they yield the identical set of Nash equilibrium outcomes. It is worth noting that while BROA may be a complex auction form in practice, budget-extracting BDFPA is easier to comprehend and implement in ad platforms. Therefore, it is reasonable for platforms to favor accessible mechanisms, as they perform just as robust in the face of buyer uncertainty. Moreover, these results provide further justification for major platforms to transition to first-price auctions in the current auto-bidding environment~\citep{turning-to-fpa}, as they can behave as satisfying as the optimal auction. 

\subsection{Symmetric Case}

We also consider the symmetric case, where all buyers' budgets, value qfs, and bidding qfs are correspondingly identical. In this case, we naturally examine the budget-extracting case when the parameter vector is symmetric. Under such circumstances, we provide broad weak strategic equivalence results, which bridge two variants of the first-price auction and two variants of the second-price auction.
\begin{theorem}\label{thm:symmetry-equivalence-FPA-SPA}
    In the symmetric case, when all buyers' identical bidding qf is inverse Lipschitz continuous, under the symmetric budget-extracting parameter vector, eBDFPA, ePFPA, eBDSPA, and ePSPA are all weakly strategic-equivalent in the unassured prior game. 
\end{theorem}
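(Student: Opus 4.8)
**Proof proposal for Theorem~\ref{thm:symmetry-equivalence-FPA-SPA} (symmetric weak strategic equivalence among eBDFPA, ePFPA, eBDSPA, ePSPA).**

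The plan is to exploit the rigidity forced by symmetry: in the symmetric case, the budget-extracting parameter vector must be of the form $(\kappa,\dots,\kappa)$ for a common $\kappa \in [0,1]$ (or $\kappa = \bar\theta$ if the $\bar\theta$-side of the $\perp$ binds), so all four mechanisms effectively degenerate into single-parameter auctions with a common reserve-inflation structure. First I would observe that when all multipliers equal a common $\kappa$, the \emph{allocation rule} of every one of the four mechanisms becomes ``buyer $i$ wins iff $\widetilde v_i(q_i) \geq \max\{\max_{i'\neq i}\widetilde v_{i'}(q_{i'}), \lambda/\kappa\}$'', i.e. a symmetric auction with common reserve $r \coloneqq \lambda/\kappa$ (clamped at $1$); only the payment rule differs — pay-your-bid (eBDFPA), pay-your-paced-bid $\kappa\widetilde v_i$ (ePFPA), pay-second-highest-over-$\kappa$ (eBDSPA), pay-second-highest-paced-bid (ePSPA). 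By symmetry the interim allocation $x_i(q_i)$ is the same function across all four as long as the reserve $\lambda/\kappa$ coincides; so the whole task reduces to (a) matching the reserves, and (b) pushing strategies around so that the \emph{payment identities} are reconciled while preserving the utility-revenue profile.

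The key steps, in order. \textbf{Step 1:} For each of the four mechanisms, invoke \Cref{thm:existence-budget-extracting} (and the maximum-budget-feasible-parameter characterization in \Cref{thm:properties-eBDFPA,thm:properties-ePFPA} for the first-price variants, plus the symmetric analogue for ePSPA) to show the symmetric budget-extracting parameter exists and to extract the induced common reserve $r$; pin down how $r$ depends on $(\widetilde v, \rho)$ in each case by writing the binding budget equation $\int_0^1 p(q)\diffe q = \rho$. \textbf{Step 2:} Construct the maps $G, H$ as symmetric (coordinate-wise identical) reparametrizations of the bidding qf. The natural guess: to go from ePFPA to eBDFPA, set $g(\widetilde v) = \kappa_{\mathrm{PFPA}}\cdot\widetilde v$ so that the pay-your-paced-bid under $\widetilde v$ becomes pay-your-bid under the scaled qf; one must then check the scaled qf is still inverse Lipschitz (it is, with constant $\kappa L$) and that the eBDFPA budget-extracting parameter on the scaled qf reproduces the same reserve $r$ and the same per-buyer payment, hence the same interim allocation, utility, and revenue. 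Analogous scalings handle ePSPA $\leftrightarrow$ eBDSPA and the first-price/second-price crossings. \textbf{Step 3:} For the first-price-to-second-price direction (e.g. eBDFPA $\to$ ePSPA), the payment rules genuinely differ even at equal reserve, so here I would use a ``revenue-equivalence-in-quantile-space'' argument: with the common symmetric allocation rule fixed, the symmetric interim payment is determined up to the payment at the reserve threshold, so choose $g$ to adjust the bidding qf on the low-quantile region $\{q : \widetilde v(q) < r\}$ — where allocation is $0$ anyway, so utility and revenue are untouched — while on the winning region the two symmetric payment rules are forced to integrate to the same value by the budget-extracting (binding BF/IR) condition and symmetry. \textbf{Step 4:} Assemble the four pairwise equivalences into the claimed web, using that weak strategic equivalence is transitive (compose the $G$'s and $H$'s).

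The main obstacle I anticipate is \textbf{Step 3}, reconciling pay-your-bid with pay-second-price under a common symmetric allocation. Unlike the classical revenue-equivalence theorem, here we are not free to rescale reserves independently and we must preserve not just aggregate revenue but the full utility-revenue \emph{profile}; moreover the budget-extracting parameter is itself an endogenous function of the reported qf, so changing $g$ changes $\kappa$ changes $r$, creating a fixed-point coupling. I expect to resolve this by showing the map ``reported qf $\mapsto$ (reserve $r$, interim allocation, interim payment)'' is, after the symmetric scaling, a surjection between the relevant qf classes — the inverse Lipschitz assumption is exactly what guarantees the budget equation $\int p(q)\diffe q = \rho$ has a solution in $\kappa$ and that the resulting reserve varies continuously and controllably — and then verifying that the ``lifting''-style adjustment on the non-winning quantile region gives enough freedom to absorb the discrepancy between the two payment functionals without disturbing any observable quantity. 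A secondary subtlety is the clamping $\lambda/\kappa \wedge 1$ and the boundary case where the $\bar\theta_i \le 1$ side of the $\perp$ in \Cref{def:budget-extracting} binds rather than the budget side; I would handle this by a short case split, noting that when the multiplier saturates at $1$ the first-price and second-price variants of each type already coincide trivially on the winning region.
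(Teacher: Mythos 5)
Your overall plan (use symmetry to force a common multiplier, match the induced reserve/threshold quantile, then reconcile payments pairwise and compose) is in the spirit of the paper, and your scaling map for the direction ePFPA $\to$ eBDFPA is actually sound: reporting $\beta_0\widetilde v$ makes $\bm\alpha=\bm 1$ budget-extracting for BDFPA with identical allocation and pointwise identical payments. However, the proposal has a genuine gap exactly where you flag the difficulty and then assume it away. First, the reverse direction eBDFPA $\to$ ePFPA cannot be a scaling: dividing by the multiplier can push bids above the cap $1$, and multiplying by it changes the expected payment (with $\widehat v=\alpha_0\widetilde v$ the pacing mechanism ends up at $\beta=1$ and the buyer pays only $\alpha_0\rho_0$ instead of $\rho_0$, so the utility--revenue profile is not preserved). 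This is precisely why the paper does not keep $\beta_0=\alpha_0$ but instead constructs a genuinely new qf on the winning region (a one-parameter family of power/linear functions tuned by the intermediate value theorem so that $\int_{q_0}^1\beta_0\widetilde v^{(3)}(x)x^{n-1}\diffe x=\rho_0$ with the threshold $\widetilde v^{(3)}(q_0)=\lambda/\beta_0$ fixed), plus the exponential ``lifting'' below $q_0$ to retain monotonicity and inverse Lipschitz continuity. Your proposal contains no substitute for this construction.

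Second, your Step 3 argument for the first-price/second-price crossings does not go through. The appeal to a ``revenue-equivalence-in-quantile-space'' identity is invalid in this model: the reported bidding qfs are arbitrary strategies, not best responses, so for a \emph{fixed} reported qf and common reserve the first-price interim payment $\int_{q_0}^1\widetilde v(x)x^{n-1}\diffe x$ and the second-price interim payment $\int_{q_0}^1\bigl(\tfrac{\lambda}{\mu_0}q_0^{n-1}+\int_{q_0}^x\widetilde v(z)(n-1)z^{n-2}\diffe z\bigr)\diffe x$ are generically different, and no payment identity forces them to agree. Nor does budget-extracting force both to equal $\rho_0$: in eBDSPA the multiplier can saturate at $1$ with the budget slack (the paper notes exactly this, replacing $\rho_0$ by $p^{\rm eBDS}$ as the matching target). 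Consequently, adjusting the qf only on the non-winning region $\{q:\widetilde v(q)<r\}$ gives no freedom to absorb the discrepancy --- the second-price payment depends on the qf \emph{on} the winning region --- so one must modify the qf there while simultaneously keeping the threshold quantile and the endogenous budget-extracting multiplier consistent, which is the fixed-point coupling you identify but do not resolve. The paper resolves it constructively (the $t_2,y_2,z_2$ families and the payment bounds $\lambda(1-q_0^n)/n<p<(1-q_0^n)/n$ derived from strict monotonicity and inverse Lipschitz continuity); without such a construction, or a genuine surjectivity argument for the map qf $\mapsto$ (threshold, payment), your Steps 2--3 do not establish the equivalences involving ePFPA, eBDSPA, and ePSPA.
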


Therefore, these four mechanisms have the same outcome space in symmetry. Combining with \Cref{thm:equivalence-BROA-eBDFPA}, we further have the following two corollaries: 
\begin{corollary}\label{coro:symmetry-equivalence-four}
    Under the conditions of \Cref{thm:symmetry-equivalence-FPA-SPA}, eBDFPA, ePFPA, eBDSPA, and ePSPA have the same set of utility-revenue profiles in the unassured prior game. 
\end{corollary}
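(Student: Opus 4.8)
The plan is to obtain \Cref{coro:symmetry-equivalence-four} as an immediate consequence of \Cref{thm:symmetry-equivalence-FPA-SPA} together with the definition of weak strategic equivalence (\Cref{def:strategic-equivalence}); no new construction is needed, only a double-inclusion bookkeeping argument. For a mechanism $\mathcal M$ among eBDFPA, ePFPA, eBDSPA, ePSPA, let $\mathcal U(\mathcal M)$ denote the set of all utility-revenue profiles realized in the unassured prior game as the strategy profile $\widetilde{\bm v}$ ranges over the admissible (inverse Lipschitz continuous) bidding qfs of \Cref{thm:symmetry-equivalence-FPA-SPA}, with the parameter vector taken to be the induced symmetric budget-extracting one. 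The target is the chain $\mathcal U(\text{eBDFPA}) = \mathcal U(\text{ePFPA}) = \mathcal U(\text{eBDSPA}) = \mathcal U(\text{ePSPA})$.

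First I would fix an ordered pair $(\mathcal M_1, \mathcal M_2)$ of these four mechanisms. By \Cref{thm:symmetry-equivalence-FPA-SPA} they are weakly strategic-equivalent, so \Cref{def:strategic-equivalence} supplies maps $G, H$ on strategy profiles such that, for every admissible $\widetilde{\bm v}$, running $G(\widetilde{\bm v})$ under $\mathcal M_2$ produces exactly the same utility-revenue profile as running $\widetilde{\bm v}$ under $\mathcal M_1$, and symmetrically $H$ from $\mathcal M_2$ back to $\mathcal M_1$. Pushing $\widetilde{\bm v}$ through $G$ shows every point of $\mathcal U(\mathcal M_1)$ lies in $\mathcal U(\mathcal M_2)$; pushing through $H$ gives the reverse inclusion; hence $\mathcal U(\mathcal M_1) = \mathcal U(\mathcal M_2)$. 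Because set equality is transitive and \Cref{thm:symmetry-equivalence-FPA-SPA} already asserts pairwise weak strategic equivalence among all four mechanisms, the four sets coincide, which is the statement of the corollary.

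I expect the corollary to carry essentially no difficulty of its own: all of the technical weight --- the ``lifting''-style constructions of $G$ and $H$ and the verification that the parameter vector induced on the image side is again symmetric budget-extracting --- is already borne by \Cref{thm:symmetry-equivalence-FPA-SPA}. The only point I would flag in the write-up is hygiene about the common object being compared: a utility-revenue profile here is the pair $\bigl((u_i(\bm \theta, \widetilde{\bm v}, v_i))_{i \in [n]}, w(\bm \theta, \widetilde{\bm v})\bigr)$ evaluated at the budget-extracting $\bm \theta$, and the four mechanisms must be regarded over the same admissible domain of bidding qfs so that the inclusions above compare sets indexed over the same parameter space. This is automatic, since the inverse Lipschitz hypothesis is imposed uniformly on the (symmetric) bidding qf for all four mechanisms; thus the ``hard part'' is nothing more than correctly invoking \Cref{def:strategic-equivalence}.
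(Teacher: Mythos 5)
Your proposal is correct and matches the paper's reasoning: the paper treats \Cref{coro:symmetry-equivalence-four} as an immediate consequence of \Cref{thm:symmetry-equivalence-FPA-SPA}, since (as noted right after \Cref{def:strategic-equivalence}) weak strategic equivalence by definition means the mappings $G$ and $H$ give the two inclusions between the sets of utility-revenue profiles, and transitivity then identifies all four sets. Your double-inclusion bookkeeping, including the remark that the admissible (symmetric, inverse Lipschitz) domain is common to all four mechanisms, is exactly the intended argument.
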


\begin{corollary}\label{coro:symmetry-equivalence-five}
    Under the conditions of \Cref{thm:equivalence-BROA-eBDFPA,thm:symmetry-equivalence-FPA-SPA}, BROA, eBDSPA, ePFPA, eBDSPA, and ePSPA are all weakly strategical-equivalent and have the same set of utility-revenue profiles in the unassured prior game. 
\end{corollary}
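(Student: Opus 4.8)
The plan is to derive \Cref{coro:symmetry-equivalence-five} formally from the two cited theorems, using only that weak strategic equivalence is an equivalence relation on the relevant class of strategy profiles. First I would record the trivial observation that strong strategic equivalence implies weak strategic equivalence: by \Cref{def:strategic-equivalence}, a strongly strategic-equivalent pair is already a weakly strategic-equivalent pair, with $G,H$ the coordinatewise products of $g,h$; and when the $g,h$ act identically on each coordinate, a symmetric bidding profile is sent to a symmetric bidding profile. Hence \Cref{thm:equivalence-BROA-eBDFPA} yields weak strategic equivalence between BROA and the (symmetric) eBDFPA on the class of symmetric bidding profiles whose bidding qf and virtual bidding qf are both inverse Lipschitz continuous — this class is precisely the intersection of the hypotheses of \Cref{thm:equivalence-BROA-eBDFPA,thm:symmetry-equivalence-FPA-SPA}, so it is the natural domain on which to state the combined conclusion.

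Next I would check symmetry and transitivity of weak strategic equivalence. Symmetry is immediate: interchanging $G$ and $H$ shows $\mathcal M_2(\bm\theta)$ is weakly strategic-equivalent to $\mathcal M_1(\bm\theta)$ whenever $\mathcal M_1(\bm\theta)$ is weakly strategic-equivalent to $\mathcal M_2(\bm\theta)$. For transitivity, given $(G_{12},H_{12})$ witnessing $\mathcal M_1(\bm\theta)\sim\mathcal M_2(\bm\theta)$ and $(G_{23},H_{23})$ witnessing $\mathcal M_2(\bm\theta)\sim\mathcal M_3(\bm\theta)$, set $G_{13}\coloneqq G_{23}\circ G_{12}$ and $H_{13}\coloneqq H_{12}\circ H_{23}$: for any admissible $\widetilde{\bm v}$, the profile $G_{12}(\widetilde{\bm v})$ under $\mathcal M_2(\bm\theta)$ reproduces the utility-revenue profile of $\widetilde{\bm v}$ under $\mathcal M_1(\bm\theta)$, and applying $G_{23}$ reproduces that of $G_{12}(\widetilde{\bm v})$ under $\mathcal M_2(\bm\theta)$, so $G_{13}(\widetilde{\bm v})$ under $\mathcal M_3(\bm\theta)$ reproduces that of $\widetilde{\bm v}$ under $\mathcal M_1(\bm\theta)$; the argument for $H_{13}$ is identical. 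Invoking \Cref{thm:symmetry-equivalence-FPA-SPA}, eBDFPA is weakly strategic-equivalent to each of ePFPA, eBDSPA, and ePSPA, and by the previous paragraph it is weakly strategic-equivalent to BROA; chaining these relations through eBDFPA shows all five mechanisms are pairwise weakly strategic-equivalent.

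Finally, the assertion about utility-revenue profiles is an immediate consequence: if $\mathcal M_1(\bm\theta)\sim\mathcal M_2(\bm\theta)$ via $(G,H)$, then $G$ sends every utility-revenue profile realizable under $\mathcal M_1(\bm\theta)$ to the same profile realized under $\mathcal M_2(\bm\theta)$, giving one inclusion of the realizable sets, and $H$ gives the reverse; so the five mechanisms have identical realizable utility-revenue sets. (Alternatively, this part follows directly from \Cref{coro:symmetry-equivalence-four} together with the BROA--eBDFPA equivalence of \Cref{thm:equivalence-BROA-eBDFPA}.)

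The main obstacle I anticipate is domain bookkeeping rather than any substantive new idea: \Cref{thm:symmetry-equivalence-FPA-SPA} only assumes the common bidding qf is inverse Lipschitz, whereas \Cref{thm:equivalence-BROA-eBDFPA} additionally constrains the virtual bidding qf, so the composed equivalence must be phrased on the common (more restrictive) symmetric class, and one must confirm that the explicitly constructed mappings in both theorems keep a profile inside that common class — i.e., that the image of a symmetric profile with inverse-Lipschitz bidding and virtual-bidding qfs under each of these maps again enjoys both properties, so that the composition in the transitivity step is legitimate. Once this closure property is in hand (it can be read off from the explicit form of the maps used to prove \Cref{thm:equivalence-BROA-eBDFPA,thm:symmetry-equivalence-FPA-SPA}), the argument above goes through verbatim.
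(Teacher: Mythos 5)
Your proposal is correct and follows essentially the same route as the paper, which states this corollary as an immediate consequence of chaining the BROA--eBDFPA equivalence of \Cref{thm:equivalence-BROA-eBDFPA} (strong, hence weak) with the pairwise weak equivalences among eBDFPA, ePFPA, eBDSPA, and ePSPA from \Cref{thm:symmetry-equivalence-FPA-SPA}, exactly as you do via transitivity of weak strategic equivalence. Your added care about the domain bookkeeping (restricting to the common symmetric class and checking that the explicit maps preserve inverse Lipschitz continuity, which \Cref{lem:injective-mapping,lem:true-virtual-true} and the symmetric constructions indeed guarantee) is a legitimate point the paper leaves implicit, but it does not change the argument.
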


The proof of \Cref{thm:symmetry-equivalence-FPA-SPA} primarily involves constructing mappings between the bidding functions under different budget-constrained auction mechanisms. A crucial point in the proof is that eBDFPA and ePFPA exhibit a symmetric parameter vector in the symmetric setting. This observation is a corollary of their properties described in \Cref{sec:properties-FPA}, and greatly aids the mapping construction. 

Together, \Cref{thm:symmetry-equivalence-FPA-SPA} and \Cref{coro:symmetry-equivalence-five} demonstrate the extensive strategic equivalence of various budget-constrained mechanisms in the symmetric sense. We should mention the distinction between these results and the celebrated revenue equivalence theorem for a better understanding. First, the five mechanisms we discuss do not always lead to the same allocation with a fixed quantile profile due to the existence of the opportunity cost as a reserve price. Second, revenue equivalence results assume that the common value prior is known advance to the seller, whereas we do not make such an assumption in this work. At last, the revenue equivalence theorem focuses on the symmetric equilibrium outcome, while our result is not limited to symmetric equilibria but gives a broad strategic equivalence result regardless of the specific strategy. 
\section{Structural Properties of Mechanisms}\label{sec:properties}

With the strategic equivalence results we have already given in \Cref{sec:equivalence}, we proceed to analyze the structural properties of these mechanisms. The analysis consists of two parts. To start with, we will exploit the computational properties of BDFPA and PFPA. Further, we will reveal the revenue dominance relationships among these five mechanisms on the seller's side when buyers do not adopt strategic bidding. 

\subsection{Properties of Variants of First-Price Auctions}\label{sec:properties-FPA}

\subsubsection{Properties of BDFPA}

We first present properties of (budget-extracting) BDFPA, which we aggregate in the following theorem. 

\begin{theorem}\label{thm:properties-eBDFPA}
	Given buyers' bidding qf profile $(\widetilde{v}_i)_{i \in [n]}$ and budget profile $(\rho_i)_{i \in [n]}$, if each buyer $1\leq i\leq n$'s bidding qf $\widetilde{v}_i(\cdot)$ is inverse Lipschitz continuous, then the following statements hold:
	\begin{enumerate}
		\item There exists a \emph{maximum} tuple of bid-discount multipliers $\alpha^{\max}$, i.e., for any feasible tuple of bid-discount multipliers $\alpha$, $\alpha_i^{\max} \geq \alpha_i$ for any $1\leq i\leq n$. 
		\item $\alpha^{\max}$ is a budget-extracting tuple of bid-discount multipliers. 
		\item For any budget-extracting bid-discount multiplier tuple $\alpha^{\rm e}$, the following two conditions establish: 
		\begin{enumerate}
			\item There exists some $\nu \leq 1$, such that for any $1\leq i\leq n$ satisfying $p_i^{\max}$ (buyer $i$'s expected payment in $\mathcal{M}^{\rm BDFPA}(\alpha^{\max})$) is positive, $\alpha_i^{\rm e} / \alpha_i^{\max} = \nu$; 
			\item For any $1\leq i\leq n$ satisfying $p_i^{\max} = 0$, $p_i^{\rm e} = 0$ (buyer $i$ never wins in $\mathcal{M}^{\rm BDFPA}(\alpha^{\rm e})$) and $\alpha_i^{\rm e} = \alpha_i^{\max} = 1$. 
		\end{enumerate}
		\item All budget-extracting BDFPAs bring the \emph{same} payment for each buyer. 
		\item $\alpha^{\max}$ is the unique budget-extracting tuple of bid-discount multipliers if and only if either one of the following two conditions is satisfied: 
		\begin{enumerate}
			\item $\max_{i \in \mathcal{I}_1}\alpha_i^{\max} \widetilde{v}_i(0) \leq \max_{i \in \mathcal{I}_2} \{\alpha_i^{\max} \widetilde{v}_i(1), \lambda\}$, where $\mathcal{I}_1 = \{i\mid p_i^{\max} > 0\}$ and $\mathcal{I}_2 = [n] \setminus \mathcal{I}_1$, or
			\item there exists $i \in \mathcal{I}_1$ such that $p_i^{\max} < \rho_i$.
		\end{enumerate}
	\end{enumerate}
\end{theorem}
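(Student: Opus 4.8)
The plan is to work throughout with buyer $i$'s expected payment $P_i(\bm\alpha)\coloneqq\int_0^1 p_i(\bm\alpha,q_i,\widetilde{\bm v})\diffe q_i$ under $\mathcal M^{\rm BDFPA}(\bm\alpha)$. Since in BDFPA a winner pays her \emph{own} bid, $P_i(\bm\alpha)=\int_0^1\widetilde v_i(q_i)\,\Pr_{\bm q_{-i}}\!\big[\alpha_i\widetilde v_i(q_i)\ge\max_{i'\ne i}\{\alpha_{i'}\widetilde v_{i'}(q_{i'}),\lambda\}\big]\diffe q_i$, and two monotonicity observations will carry most of the weight: (i) $P_i$ is nondecreasing in $\alpha_i$ and nonincreasing in every $\alpha_{i'}$ with $i'\ne i$ (raising a competitor's multiplier only shrinks buyer $i$'s winning event pointwise, and raising $\alpha_i$ only enlarges it); (ii) if $\bm\alpha'$ is obtained from $\bm\alpha$ by multiplying a set of coordinates containing every $i$ with $\alpha_i>0$ by a common $\nu\in[0,1]$, then all pairwise effective-bid comparisons are preserved and only the reserve $\lambda$ gets harder to clear, so $P_j(\bm\alpha')\le P_j(\bm\alpha)$ for every $j$. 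Inverse Lipschitz continuity of each $\widetilde v_i$ enters exactly here: it makes the bid CDF $\widetilde F_i$ Lipschitz, hence atomless, so each $\widetilde v_i$ is strictly increasing, ties occur with probability zero, and $P_i$ is continuous on $[0,1]^n$.

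Parts 1 and 2 are then short. The feasible set $S=\{\bm\alpha\in[0,1]^n:P_i(\bm\alpha)\le\rho_i\ \forall i\}$ is nonempty ($\bm0\in S$), compact (continuity of the $P_i$), and closed under coordinatewise maximum: if $\bm\alpha,\bm\alpha'\in S$ and $\bm\alpha''=\bm\alpha\vee\bm\alpha'$, then for each $i$, picking (say) $\alpha''_i=\alpha_i$ we get $\bm\alpha''_{-i}\ge\bm\alpha_{-i}$, hence $P_i(\bm\alpha'')\le P_i(\bm\alpha)\le\rho_i$ by (i). A nonempty compact join-subsemilattice of $\mathbb R^n$ has a greatest element $\alpha^{\max}$ (maximize $\sum_i\alpha_i$ over $S$ and use closure under $\vee$), which is part 1. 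For part 2, if some $i$ had $\alpha_i^{\max}<1$ and $P_i(\alpha^{\max})<\rho_i$, continuity would let us raise $\alpha_i^{\max}$ slightly while keeping $P_i\le\rho_i$, and by (i) this preserves every other constraint, contradicting maximality; so for each $i$ either the budget binds or $\alpha_i^{\max}=1$, i.e.\ $\alpha^{\max}$ is budget-extracting.

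Parts 3 and 4 are the core. Put $\mathcal I_1=\{i:p_i^{\max}>0\}$, $\mathcal I_2=[n]\setminus\mathcal I_1$; budget-extraction of $\alpha^{\max}$ gives $\alpha_i^{\max}=1$ for $i\in\mathcal I_2$, and ``$i\in\mathcal I_2$ wins with probability zero under $\alpha^{\max}$'' forces $\widetilde v_i(1)\le\max_{i'\ne i}\{\alpha^{\max}_{i'}\widetilde v_{i'}(0),\lambda\}$ (the most favorable profile for $i$ is $q_i\to1$ against opponents at their floors, and atomlessness makes boundary ties negligible). Now let $\alpha^{\rm e}$ be any budget-extracting tuple; being budget-feasible, $\alpha^{\rm e}\le\alpha^{\max}$ by part 1, and every coordinate $j$ with $\alpha_j^{\rm e}<\alpha_j^{\max}$ satisfies $\alpha_j^{\rm e}<1$, so by budget-extraction its budget binds under $\alpha^{\rm e}$. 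The plan is to show $\alpha^{\rm e}$ must equal the tuple $\tilde\alpha(\nu)$ given by $\tilde\alpha(\nu)_i=\nu\alpha_i^{\max}$ on $\mathcal I_1$ and $\tilde\alpha(\nu)_i=1$ on $\mathcal I_2$ for a single $\nu\le1$, and that it induces the same allocation --- hence, since winners pay their own bids, the same payments $p_i^{\rm e}$ --- as $\alpha^{\max}$; this yields parts 3 and 4 together. The argument sets $\nu\coloneqq\max_{i\in\mathcal I_1}\alpha^{\rm e}_i/\alpha^{\max}_i$ and compares $\alpha^{\rm e}$ with $\tilde\alpha(\nu)$ via (ii) and the binding-budget constraints: any deviation (a ratio strictly below $\nu$, an $\mathcal I_2$-coordinate below $1$, or an $\mathcal I_2$-buyer winning) would move some binding-budget buyer's payment strictly downward off her budget, a contradiction. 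I expect this rigidity claim to be the main obstacle, since it requires pinning down how the winner set --- and therefore, in a first-price format, the payments --- reacts to simultaneous changes of many multipliers, with the behaviour near the reserve controlled by atomlessness.

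Part 5 then follows from the rigidity of part 3. Every budget-extracting tuple is some $\tilde\alpha(\nu)$, which by (ii) is always budget-feasible and which (again by part 3) is budget-extracting only when its allocation coincides with that of $\alpha^{\max}$. That happens either when $\nu=1$, or when every $\mathcal I_1$-budget is binding (so $P_i(\tilde\alpha(\nu))=p_i^{\max}=\rho_i$ is attainable) and scaling by $\nu$ costs no $\mathcal I_1$-buyer a win: whenever $i\in\mathcal I_1$ wins under $\alpha^{\max}$, her effective bid dominates that of the buyer attaining $M\coloneqq\max_{i\in\mathcal I_1}\alpha_i^{\max}\widetilde v_i(0)$, hence is $\ge M$, so for $\nu$ close to $1$ it stays, after scaling, above $\lambda$ and above every $\mathcal I_2$ effective bid precisely when $M>\max_{i\in\mathcal I_2}\{\alpha_i^{\max}\widetilde v_i(1),\lambda\}$, and one checks conversely that no $\mathcal I_2$-buyer can then steal a win. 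Thus a budget-extracting $\tilde\alpha(\nu)$ with $\nu<1$ exists iff both $\neg$(a) and $\neg$(b) hold, i.e.\ $\alpha^{\max}$ is the unique budget-extracting tuple iff (a) or (b), as claimed; when (b) holds one also reads off directly, via $\alpha_{i_0}^{\rm e}=\nu\alpha_{i_0}^{\max}\le\nu\le1$ for an $i_0\in\mathcal I_1$ with $p_{i_0}^{\rm e}=p_{i_0}^{\max}<\rho_{i_0}$, that $\nu=1$ and $\alpha^{\rm e}=\alpha^{\max}$.
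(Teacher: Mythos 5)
Your setup for parts 1 and 2 is essentially the paper's: inverse Lipschitz continuity makes each bid CDF Lipschitz (the paper proves a quantitative version, that a $\delta$-increase of $\alpha_i$ raises buyer $i$'s payment by at most $C\delta$), the feasible set is compact and closed under coordinatewise maxima by the monotonicity of payments in own and others' multipliers, and $\alpha^{\max}$ is budget-extracting by the same perturbation argument. Those parts of your proposal are fine and match the paper's route.

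The genuine gap is exactly where you flag it: the rigidity claim behind parts 3 and 4 is asserted, not proved, and it is the technical heart of the theorem. Your monotonicity facts (i) and (ii) only give \emph{weak} inequalities (payments do not increase), whereas the argument needs strictness: if the ratios $\alpha^{\rm e}_i/\alpha^{\max}_i$ are not all equal on the relevant set of buyers (or if an $\mathcal I_2$-coordinate drops below $1$ while that buyer still never wins), you must show that some buyer with a binding budget and multiplier strictly below $1$ loses a \emph{positive-measure} set of profiles she previously won, so her payment falls strictly below her budget. That step is not a consequence of pointwise shrinkage of winning events; it requires showing the supports of the competing effective bids actually overlap on a region of positive probability. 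The paper's Lemma C.6 does this by taking $l\in\argmin_i\alpha^{\rm e}_i/\alpha^{\max}_i$ and some $k$ outside the argmin with positive payment, using positivity of both payments plus continuity/strict monotonicity of the qfs and the intermediate value theorem to locate quantiles $q_l^{(3)},q_k^{(3)}$ where the two effective bids coincide and $l$ wins with positive probability, and then exhibiting a small $\delta$-box on which $k$ strictly overtakes $l$ after the ratio change. Without an argument of this type, parts 3 and 4 (and hence your derivation of part 5 from them) are unsupported. The same issue recurs in part 5: your ``precisely when'' only establishes that $\neg$(a) suffices to preserve all $\mathcal I_1$-wins for $\nu$ close to $1$; the converse direction — that under condition (a) \emph{every} scaled tuple with $\nu<1$ loses a positive measure of allocations to the reserve or to $\mathcal I_2$, forcing a strict payment drop — again needs an overlapping-support construction (the paper finds an $r\in(0,1]$ with $\max_{i\in\mathcal I_1}\alpha^{\rm e}_i\widetilde v_i(r)<\max_{i\in\mathcal I_2}\{\alpha^{\max}_i\widetilde v_i(1),\lambda\}<\max_{i\in\mathcal I_1}\alpha^{\max}_i\widetilde v_i(r)$), and the lower bound ``winning bids are $\ge M$'' does not deliver it. Your ``only if'' construction for part 5 (scale $\mathcal I_1$ by $\nu$ near $1$, leave $\mathcal I_2$ alone) does match the paper's and is fine modulo these missing strictness arguments; as a minor point, your inference that an $\mathcal I_2$-buyer never winning forces $\widetilde v_i(1)\le\max_{i'\ne i}\{\alpha^{\max}_{i'}\widetilde v_{i'}(0),\lambda\}$ is not exact when the qfs jump at the endpoints, though nothing essential hinges on it.
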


We would like to emphasize once again that the inverse Lipschitz continuity assumption on $\widetilde{v}_i(\cdot)$ is a relatively weak one and is commonly adopted in previous works~\citep{DBLP:journals/ior/BalseiroKMM21,DBLP:journals/corr/abs-2110-11855,DBLP:conf/aaai/0004GLMS21}. In the proof of \Cref{thm:properties-eBDFPA}, this assumption guarantees that a small increase in a buyer's bid-discount multiplier does not significantly change her payment. This continuity property indicates that the budget-feasible tuples form a closed set, and serves as a key lemma in proving the first two statements of the theorem. As for characterizing the set of budget-extracting multiplier tuples, we observe that the budget-extracting condition imposes a much tight restriction. Specifically, for any buyer, her payment should be the same across all budget-extracting BDFPA mechanisms. This essential observation helps with the remaining three statements. 

We highlight that \Cref{thm:properties-eBDFPA} provides a comprehensive depiction of the behavior of budget-extracting BDFPA(s) under minor restrictions. As revealed, all budget-extracting BDFPAs share similar structures. We further demonstrate that a budget-extracting BDFPA can be efficiently computed with convex optimization techniques. 

\begin{theorem}\label{thm:computation-efficiency-eBDFPA}
	A budget-extracting BDFPA can be computed by solving the global minimum of a convex function. 
\end{theorem}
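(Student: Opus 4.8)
The plan is to exhibit an explicit convex program whose unique (or, in the non-unique case, canonical) global minimizer yields a budget-extracting bid-discount multiplier tuple, thereby reducing the computation to convex optimization. The natural candidate, mirroring the BROA program \eqref{eq:programming-BROA}, is
\begin{equation*}
\min_{\bm \alpha \in [0, 1]^n} \left\{ \mathbb{E}_{\bm q}\left[ \max_{i \in [n]} \left\{ \alpha_i \widetilde{v}_i(q_i) - \lambda \right\}^+ \right] + \sum_{i=1}^n (1 - \alpha_i)\rho_i \right\}.
\end{equation*}
The objective is convex in $\bm \alpha$: the inner term is a pointwise expectation (hence an integral) of a maximum of affine-in-$\bm \alpha$ functions composed with $(\cdot)^+$, and the penalty term $\sum_i (1-\alpha_i)\rho_i$ is linear. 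First I would verify convexity carefully and argue that a global minimum exists over the compact set $[0,1]^n$.

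The core step is to show that any minimizer of this program is a budget-extracting tuple. I would compute a subgradient (or one-sided directional derivative) of the objective with respect to $\alpha_i$. The derivative of the expectation term with respect to $\alpha_i$ is $\mathbb{E}_{\bm q}\left[ \widetilde{v}_i(q_i) \cdot \mathbb{1}\{i \text{ wins and } \alpha_i\widetilde{v}_i(q_i) > \lambda\} \right]$ (up to tie-breaking subtleties on measure-zero events), which is exactly buyer $i$'s expected payment $p_i(\bm \alpha, \cdot)$ in $\mathcal M^{\mathrm{BDFPA}}(\bm \alpha)$; the derivative of the penalty term is $-\rho_i$. Hence the optimality condition reads: if $\alpha_i < 1$ is interior then $p_i = \rho_i$ (budget binding), and if $\alpha_i = 1$ then $p_i \leq \rho_i$. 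This is precisely the budget-extracting condition $\left(\int_0^1 p_i \diffe q_i \leq \rho_i\right) \perp (\alpha_i \leq 1)$ from \Cref{def:budget-extracting}, with $\bar{\bm \theta}_i = 1$. I would invoke the inverse Lipschitz continuity of each $\widetilde v_i$ to guarantee that $p_i(\bm \alpha, \cdot)$ is continuous in $\bm \alpha$ (so the objective is differentiable, or at least the one-sided derivatives behave as claimed), reusing the key continuity lemma already established for \Cref{thm:properties-eBDFPA}.

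The main obstacle I anticipate is the non-differentiability of the objective at points where ties occur with positive probability — i.e., where $\alpha_i \widetilde{v}_i(q_i) = \alpha_{i'}\widetilde{v}_{i'}(q_{i'})$ or $= \lambda$ on a set of positive measure — and relatedly the arbitrary tie-breaking in the mechanism definition. Under the inverse Lipschitz assumption each $\widetilde v_i$ is strictly increasing, so the event $\{\alpha_i\widetilde v_i(q_i) = \lambda\}$ has measure zero for each fixed $\bm\alpha$, and pairwise ties $\{\alpha_i\widetilde v_i(q_i)=\alpha_{i'}\widetilde v_{i'}(q_{i'})\}$ also have measure zero since the two sides are independent strictly-increasing transforms of independent uniforms; this kills the tie-breaking ambiguity in the payment formula on the relevant events and makes the directional-derivative computation clean. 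The remaining care is handling the boundary $\alpha_i \in \{0,1\}$ and the fact that buyers with $p_i^{\max} = 0$ force $\alpha_i = 1$ (consistent with statement~3(b) of \Cref{thm:properties-eBDFPA}); here I would note that when $p_i = 0$ the penalty gradient $-\rho_i < 0$ pushes $\alpha_i$ up to its cap $1$, matching the earlier characterization. Finally I would conclude that solving this convex program — e.g., by projected subgradient descent or any standard convex solver, with the expectation evaluated as a one-dimensional integral — produces a budget-extracting BDFPA, which is the claim of \Cref{thm:computation-efficiency-eBDFPA}.
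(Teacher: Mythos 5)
Your proposal is correct and is essentially the paper's own argument: under the change of variables $\bm\tau = 1^n - \bm\alpha$, your program is exactly the dual objective $\chi^{\rm BDF}(\bm\tau)$ from the proof of \Cref{thm:budget-extracting-revenue-maximizing-BDFPA}, and your gradient/KKT computation (payment minus budget, complementarity with the cap $\alpha_i \le 1$) is precisely \Cref{lem:optimality-conditions}, which the paper invokes to conclude that the minimizer yields a budget-extracting tuple.
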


% For interested readers, we mention that the underlying convex function mentioned in \Cref{thm:computation-efficiency-eBDFPA} is given by $\chi^{\rm BDF}(\bm \tau)$ as given in \eqref{eq:def-chi} for $\bm \tau \in [0, 1]^n$, and a budget-extracting tuple is given by $\bm \alpha = 1^n - \bm \tau^*$ with $\bm \tau^*$ the global minimum of $\chi^{\rm BDF}(\bm \tau)$. 

\subsubsection{Properties of PFPA}

For PFPA, we also derive a counterpart of \Cref{thm:properties-eBDFPA}, as in the following theorem. 

\begin{theorem}\label{thm:properties-ePFPA}
	Given buyers' bidding qf profile $(\widetilde{v}_i)_{i \in [n]}$ and budget profile $(\rho_i)_{i \in [n]}$, if each buyer $1\leq i\leq n$'s bidding qf $\widetilde{v}_i(\cdot)$ is inverse Lipschitz continuous, then: 
	\begin{enumerate}
		\item There exists a \emph{maximum} tuple of pacing multipliers $\beta^{\max}$, i.e., for any feasible tuple of pacing multipliers $\beta$, $\beta_i^{\max} \geq \beta_i$ for any $1\leq i\leq n$. 
		\item $\beta^{\max}$ is the \emph{unique} budget-extracting tuple of pacing multipliers. 
		\item $\beta^{\max}$ maximizes seller's revenue among all feasible tuples of pacing multipliers, i.e., $\beta^{\max}$ is the optimal solution of the following programming: 
		\begin{gather}\label{eq:programming-PFPA}
            \begin{gathered}
                \max_{\beta \in [0, 1]^n} \int_{\bm q} \max_i\left\{\beta_i\widetilde{v}_i(q_i) - \lambda\right\}^+ \diffe \bm q, \\
			{\rm s.t.}\quad \int_0^1\beta_i \widetilde{v}_i(q_i) \cdot \left(\int_{\bm q_{-i}} I\left[\beta_i\widetilde{v}_{i}(q_{i}) \geq \max_{i'\neq i}\left\{\beta_{i'}\widetilde{v}_{i'}(q_{i'}), \lambda\right\}\right]\diffe \bm q_{-i}\right)\diffe q_{i} \leq \rho_i, \\
            \quad \forall 1\leq i\leq n. 
            \end{gathered}
		\end{gather}
	\end{enumerate}
\end{theorem}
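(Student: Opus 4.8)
\textbf{Proof proposal for \Cref{thm:properties-ePFPA}.} The plan is to mirror the structure already used for \Cref{thm:properties-eBDFPA}, exploiting the fact that in PFPA the payment coincides with the shaded bid, which makes the payment of buyer $i$ monotone in her own pacing multiplier $\beta_i$ and—crucially—monotone \emph{downward} in every other buyer's multiplier $\beta_{i'}$. First I would set up the expected-payment function $p_i(\beta) \coloneqq \int_0^1 \beta_i\widetilde{v}_i(q_i)\big(\int_{\bm q_{-i}} I[\beta_i\widetilde{v}_i(q_i)\geq\max_{i'\neq i}\{\beta_{i'}\widetilde{v}_{i'}(q_{i'}),\lambda\}]\diffe\bm q_{-i}\big)\diffe q_i$ and record two monotonicity facts: $p_i$ is nondecreasing in $\beta_i$ (both the shaded bid and the winning region grow) and nonincreasing in each $\beta_{i'}$, $i'\neq i$. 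Using inverse Lipschitz continuity of each $\widetilde{v}_i$, I would argue that $p_i$ is continuous in $\beta$: the only discontinuity risk comes from ties in the $\arg\max$, but the winning region's measure changes continuously because the CDF of each competing shaded bid is Lipschitz (the inverse Lipschitz qf gives a Lipschitz distribution). This continuity, together with compactness of $[0,1]^n$, shows the budget-feasible set $\mathcal{B} \coloneqq \{\beta : p_i(\beta)\leq\rho_i\ \forall i\}$ is closed and nonempty (it contains $\bm 0$).

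For statement 1 (existence of a coordinatewise maximum $\beta^{\max}$), I would show $\mathcal B$ is closed under coordinatewise maximum: if $\beta,\beta'\in\mathcal B$, let $\beta'' = \max(\beta,\beta')$ coordinatewise and fix $i$; say $\beta_i'' = \beta_i$ (the argument is symmetric). Then $\beta_{i'}'' \geq \beta_{i'}$ for all $i'\neq i$, so by the two monotonicity facts $p_i(\beta'') \leq p_i(\beta_i,\beta''_{-i}) \leq p_i(\beta_i,\beta_{-i}) = p_i(\beta)\leq\rho_i$. Hence $\beta''\in\mathcal B$, and since $\mathcal B$ is closed the supremum over $\mathcal B$ (taken coordinatewise) is attained and lies in $\mathcal B$; this is $\beta^{\max}$. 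For statement 2, budget-extracting means for each $i$ either $p_i(\beta^{\max}) = \rho_i$ or $\beta_i^{\max} = 1$. If neither held for some $i$, I could raise $\beta_i^{\max}$ slightly: by continuity of $p_i$ the inequality $p_i<\rho_i$ is preserved for a small bump, and raising $\beta_i$ only \emph{lowers} $p_{i'}$ for $i'\neq i$, so the bumped vector is still in $\mathcal B$ and strictly dominates $\beta^{\max}$ in coordinate $i$—contradiction. So $\beta^{\max}$ is budget-extracting. Uniqueness: any budget-extracting $\beta^{\rm e}$ has $\beta^{\rm e}\leq\beta^{\max}$ coordinatewise; if $\beta^{\rm e}\neq\beta^{\max}$, pick $i$ with $\beta_i^{\rm e}<\beta_i^{\max}$. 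Then $\beta_i^{\rm e}<1$, so budget-extracting forces $p_i(\beta^{\rm e}) = \rho_i$. But $\beta^{\rm e}_{i'}\leq\beta^{\max}_{i'}$ for all $i'\neq i$ and $\beta^{\rm e}_i<\beta^{\max}_i$, so $p_i(\beta^{\rm e}) \leq p_i(\beta^{\max}_i,\beta^{\rm e}_{-i})$; comparing against $p_i(\beta^{\max})\leq\rho_i$ requires more care since lowering the $\beta_{i'}$'s \emph{raises} $p_i$. Here I would instead argue that strictly increasing $\beta_i^{\rm e}$ toward $\beta_i^{\max}$ strictly increases $p_i$ whenever buyer $i$ wins with positive probability (which she does, since $p_i = \rho_i > 0$), giving $p_i(\beta^{\max}_i,\beta^{\rm e}_{-i}) > \rho_i$; but this contradicts $p_i$ being nonincreasing in the remaining coordinates combined with $\beta^{\rm e}_{-i} \leq \beta^{\max}_{-i}$ only if I know $p_i(\beta^{\max})\le\rho_i$ dominates, so the clean route is: monotonicity in $\beta_i$ is strict on the positive-payment regime, and I derive the contradiction by a continuity/strict-monotonicity sandwiching identical in spirit to the eBDFPA uniqueness argument.

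For statement 3 (revenue optimality), I would note the objective $\int_{\bm q}\max_i\{\beta_i\widetilde v_i(q_i)-\lambda\}^+\diffe\bm q$ is coordinatewise nondecreasing in $\beta$ — raising any $\beta_i$ can only enlarge each pointwise maximum. Combined with statement 1, the coordinatewise-largest feasible point $\beta^{\max}$ simultaneously maximizes a coordinatewise-nondecreasing objective over $\mathcal B$, hence solves the program \eqref{eq:programming-PFPA}. (One subtlety: the constraint set in \eqref{eq:programming-PFPA} is exactly $\mathcal B$, so feasibility matches.)

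The main obstacle I anticipate is the continuity of $p_i$ across the tie set and the strictness needed in the uniqueness argument. Inverse Lipschitz continuity of $\widetilde v_i$ must be leveraged carefully to rule out atoms in the distribution of $\max_{i'\neq i}\{\beta_{i'}\widetilde v_{i'}(q_{i'}),\lambda\}$—a jump there could make $p_i$ discontinuous in $\beta_i$—and to guarantee that on the regime where buyer $i$ wins with positive probability, pushing $\beta_i$ up \emph{strictly} raises $p_i$ rather than merely weakly. I expect this is handled by the same lemma that underlies \Cref{thm:properties-eBDFPA} (a small multiplier increase changes payment by a controlled amount, and the change is bounded below when the winning region has positive measure), so I would state and reuse that lemma here rather than reprove it.
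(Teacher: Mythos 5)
Your treatment of statement 1 (closure of the feasible set under coordinatewise maximum via the two monotonicity facts, plus closedness from the Lipschitz-type payment bound), of the budget-extracting property of $\beta^{\max}$ (the small-bump contradiction), and of statement 3 (coordinatewise-nondecreasing objective maximized at the coordinatewise-largest feasible point) matches the paper's proof essentially step for step, including the reuse of the payment-continuity lemma from the BDFPA analysis.

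The uniqueness half of statement 2, however, has a genuine gap, and you in fact name the obstruction yourself without resolving it: raising $\beta_i^{\rm e}$ to $\beta_i^{\max}$ strictly raises $p_i$, but raising the \emph{other} coordinates from $\beta_{-i}^{\rm e}$ to $\beta_{-i}^{\max}$ lowers $p_i$, so $p_i(\beta_i^{\max},\beta_{-i}^{\rm e}) > \rho_i$ yields no contradiction with $p_i(\beta^{\max}) \leq \rho_i$. No single-coordinate ``sandwiching'' closes this, and the appeal to the eBDFPA uniqueness argument does not help: for BDFPA the budget-extracting tuple is generally \emph{not} unique (that is exactly statement 5 of \Cref{thm:properties-eBDFPA}), and the ratio-based lemma used there is not the mechanism needed here. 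The paper's resolution is an aggregate argument: let $\mathcal{I} = \{i : \beta_i^{\rm e} < \beta_i^{\max}\}$ and compare the \emph{total} payment of $\mathcal{I}$ under the two tuples. Since buyers outside $\mathcal{I}$ have identical multipliers, their paced bids are unchanged, so every quantile profile won by some member of $\mathcal{I}$ under $\bm\beta^{\rm e}$ is still won by some member of $\mathcal{I}$ under $\bm\beta^{\max}$, and the winning paced bid (hence the payment collected from the group) is strictly higher there. Because every $i \in \mathcal{I}$ has $\beta_i^{\rm e} < 1$ and hence a binding budget under $\bm\beta^{\rm e}$, the group's total payment already equals its total budget, so under $\bm\beta^{\max}$ some member must exceed her budget, contradicting feasibility of $\bm\beta^{\max}$. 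This group-level comparison, exploiting that outsiders' bids are invariant, is the missing idea in your proposal; without it the strict-monotonicity observation alone does not yield uniqueness.
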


An interesting point here is that, unlike in the case of BDFPA, there is only one budget-extracting PFPA under minor restrictions. The main reason here is that in PFPA, a buyer's payment is correlated with her multiplier, while this is not the case for BDFPA as long as she wins. 

\subsection{Dominance Relationships on the Seller's Revenue}\label{sec:properties-dominance}

Now let us compare these five mechanisms in terms of the seller's revenue when all buyers bid truthfully. In other words, we do not consider buyers' strategic behaviors and write the value/bidding qf profile as $(\widetilde{v}_i)_{i \in [n]}$. This assumption allows for a better understanding of the intrinsic allocation/payment properties of budget-constrained mechanisms. Here, we should notice that such comparisons are not straightforward since the mechanism should ensure each buyer's budget constraint is satisfied, leading to potentially different parameter tuples for different mechanisms. This complicates the analysis of the seller's revenue. 

In particular, we consider these mechanisms under the budget-extracting condition. The dominance relationships are presented in the following theorem. Here, we say $A \succeq B$ if the seller's revenue in A is higher than his revenue in B when all buyers bid truthfully. 

\begin{theorem}\label{thm:dominance-relationships-seller-revenue}
	The following dominance relationships hold with respect to the seller's revenue: 
	\begin{enumerate}
		\item When each buyer's bidding qf is strictly increasing and strictly regular, eBDFPA $\succeq$ BROA.
		\item When each buyer's bidding qf is strictly increasing, eBDFPA $\succeq$ ePSPA. 
		\item When each buyer's bidding qf is strictly regular, BROA $\succeq$ eBDSPA, and BROA $\succeq$ ePSPA. 
	\end{enumerate}
\end{theorem}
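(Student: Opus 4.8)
\textbf{Proof proposal for \Cref{thm:dominance-relationships-seller-revenue}.}

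The plan is to treat the three dominance relationships separately, but in each case to reduce the comparison to a pointwise (per-quantile-profile) inequality between the two mechanisms' contributions to the seller's revenue, exploiting that all mechanisms share the same reserve $\lambda$ and, under budget-extracting, that each buyer's expected payment is pinned down. First I would record the common structure: for a budget-extracting mechanism the seller's revenue is $\sum_i \int_0^1 p_i \diffe q_i - \lambda \sum_i \int_0^1 x_i \diffe q_i$, and by \Cref{def:budget-extracting} either $\int_0^1 p_i \diffe q_i = \rho_i$ or the relevant multiplier is at its cap (so the buyer is, effectively, ``unconstrained'' and behaves like the underlying unconstrained auction). This dichotomy lets me split the buyer set into a ``budget-binding'' group, where the payment term is simply $\rho_i$ in all the budget-extracting mechanisms being compared and therefore cancels, and a ``cap-binding'' group, where the mechanism degenerates to its ordinary (unpaced/undiscounted) form restricted by the reserve.

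For statement (1), eBDFPA $\succeq$ BROA under strict monotonicity and strict regularity of the bidding qfs: here I would use \Cref{thm:equivalence-BROA-eBDFPA}'s underlying correspondence — the optimization \eqref{eq:programming-BROA} for $\bm\gamma^*$ and the characterization of $\alpha^{\max}$ in \Cref{thm:properties-eBDFPA} share the same structural form — to align the two mechanisms on a common quantile profile. The key point is that BROA allocates according to \emph{virtual} shaded bids while paying the lowest winning \emph{true} bid, whereas eBDFPA allocates by shaded true bids and pays the true bid; since in the budget-binding group both raise exactly $\rho_i$, the comparison collapses to the allocation (opportunity-cost) term, and I would argue eBDFPA's allocation rule wins the item to whoever it can most profitably while still extracting $\rho_i$, so the $-\lambda\sum x_i$ term is no worse. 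In the cap-binding group eBDFPA with $\alpha_i=1$ is just FPA-with-reserve, which is revenue-optimal pointwise against any IC mechanism (hence BROA) when values equal bids. For statement (2), eBDFPA $\succeq$ ePSPA: I would compare pointwise on each quantile profile, noting that a first-price payment (the true winning bid) dominates a second-price payment (the runner-up's paced bid or $\lambda$) whenever the allocations coincide, and then handle the budget-binding buyers via the cancellation of $\rho_i$ as above; inverse-Lipschitz/strict-monotonicity ensures the budget-extracting tuples exist and the payment comparison is well-defined. For statement (3), BROA $\succeq$ eBDSPA and BROA $\succeq$ ePSPA: BROA is, by \citet{DBLP:journals/ior/BalseiroKMM21}, revenue-optimal among all budget-constrained IC auctions when bidding qfs are strictly regular; since eBDSPA and ePSPA are budget-feasible, IR, and IC (truthful bidding is assumed, and second-price variants are IC in the relevant sense), they are feasible points of the optimization BROA solves, so BROA's revenue is at least theirs.

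The hard part, I expect, is statement (1): unlike (3), which is essentially an ``optimal-beats-feasible'' argument, and (2), which is a clean pointwise payment comparison, the eBDFPA-vs-BROA comparison pits two mechanisms with \emph{different allocation rules} (true bids vs.\ virtual bids) and \emph{the same payment target} $\rho_i$ on the constrained buyers, so the revenue difference lives entirely in the opportunity-cost term $\lambda\sum_i \int x_i$, and showing eBDFPA does not allocate ``too much'' relative to BROA requires carefully relating $\alpha^{\max}$ to $\bm\gamma^*$ — essentially re-deriving that the two programs \eqref{eq:programming-BROA} and the implicit program behind \Cref{thm:properties-eBDFPA}(2) have aligned solutions on a shared quantile profile, and then tracking which buyer wins under each. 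I would isolate this as a lemma: on any fixed quantile profile, among all budget-feasible allocation/payment pairs that extract $\rho_i$ from each constrained buyer, the FPA-style rule minimizes total expected allocation, so it minimizes the opportunity-cost penalty; combined with the cap-binding buyers reducing to reserve-price FPA (which is pointwise revenue-optimal against IC mechanisms), this yields eBDFPA $\succeq$ BROA. The strict-regularity hypothesis is what makes BROA well-defined and its virtual-value machinery valid, and strict monotonicity rules out ties that would otherwise complicate the per-profile winner determination.
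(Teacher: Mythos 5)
There is a genuine gap, concentrated in parts (1) and (2). The paper proves part (1) by Lagrangian duality, not by the decomposition you propose: the seller's revenue in eBDFPA equals $\min_{\bm\tau\in[0,1]^n}\chi^{\rm BDF}(\bm\tau)$ by the strong-duality argument of \Cref{thm:budget-extracting-revenue-maximizing-BDFPA}, the revenue of BROA equals the value of \eqref{eq:programming-BROA} (by \citet{DBLP:journals/ior/BalseiroKMM21}), and since $\widetilde v_i(q_i)\ge\widetilde\psi_i(q_i)$ pointwise one gets $\chi^{\rm BDF}(\bm\tau)\ge\chi^{\rm BRO}(\bm 1-\bm\tau)$ for every $\bm\tau$, so the minima compare. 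Your split into ``budget-binding'' and ``cap-binding'' buyers does not go through: the binding sets need not coincide across the two mechanisms; in BROA the expected payment is pinned to the expected allocated \emph{virtual} surplus (\Cref{prop:expected-payment-BROA}), not forced to $\rho_i$ in a way that cancels against eBDFPA; and your isolated ``key lemma'' (that the FPA-style rule minimizes total expected allocation among rules extracting $\rho_i$) is both unproven and not the right quantity to control, since the revenue gap is payments minus $\lambda$ times allocation and you have bounded neither. The claim that reserve-price FPA is ``pointwise revenue-optimal against any IC mechanism'' for the cap-binding group is also unsubstantiated, and the mechanism does not factor across your two groups of buyers in the first place.

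For part (2), note that the quantity the paper actually proves (see the appendix proof and \Cref{fig:summary_dominance}) is eBDFPA $\succeq$ ePFPA --- the ``ePSPA'' in the printed statement is a typo --- and the proof is again a weak-duality argument: the optimal feasible PFPA revenue is bounded by its Lagrangian, which, using $\beta_i\le 1$, is bounded by $\chi^{\rm BDF}$, whose minimum equals the eBDFPA revenue by strong duality. Your pointwise comparison (``the winner's bid dominates the runner-up's paced bid whenever the allocations coincide'') fails exactly at the difficulty the paper flags in \Cref{sec:properties-dominance}: the budget-extracting parameter tuples, and hence the allocations, differ across mechanisms, so there is no common quantile-by-quantile allocation to compare on, and the $\rho_i$-cancellation again presupposes identical binding sets; moreover the paper only establishes a budget-extracting ePSPA in the symmetric case, so the object your comparison targets need not even be well-defined for asymmetric buyers. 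Part (3) of your proposal does follow the paper's route (BROA is optimal among BCIC mechanisms under strict regularity, and the second-price variants are BCIC and budget-feasible), except that the incentive-compatibility of eBDSPA --- whose winner pays the runner-up's discounted bid divided by her own multiplier $\mu_i$ --- requires the short case analysis the paper supplies rather than a bare assertion.
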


We derive this theorem through two key observations. For the first and second results, we identify the inherent relationships among the programming of eBDFPA, BROA, and ePFPA. For the third part, we employed the budget-constrained incentive compatibility methodology introduced in~\citet{DBLP:journals/ior/BalseiroKMM21}. A corollary of \Cref{thm:dominance-relationships-seller-revenue} is that under mild assumptions, eBDFPA dominates the other four mechanisms when buyers truthfully bid. 
\begin{corollary}\label{coro:eBDFPA-dominating}
    When each buyer's bidding qf is strictly increasing and strictly regular, eBDFPA $\succeq$ $\{$BROA, ePSPA, eBDSPA, ePSPA$\}$. 
\end{corollary}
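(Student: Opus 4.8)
The plan is to obtain \Cref{coro:eBDFPA-dominating} as a direct consequence of \Cref{thm:dominance-relationships-seller-revenue}, combining its three parts and closing the chain by transitivity of the revenue-dominance relation $\succeq$. The first thing to check is that the hypothesis of the corollary is the strongest among those used in the theorem: requiring every buyer's bidding qf to be both strictly increasing \emph{and} strictly regular implies in particular ``strictly increasing'' (needed for part~2) and ``strictly regular'' (needed for part~3), so all three parts of \Cref{thm:dominance-relationships-seller-revenue} apply simultaneously to the fixed truthful bidding qf profile $(\widetilde v_i)_{i\in[n]}$ under consideration. This matches the picture in \Cref{fig:summary_dominance}, where the three arrow types carry exactly these three hypotheses and eBDFPA sits at the top.

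Next I would record that $\succeq$ is nothing more than a pointwise comparison of real numbers: for the fixed profile $(\widetilde v_i)_{i\in[n]}$, each mechanism run under its budget-extracting parameter tuple yields a well-defined seller revenue, so ``$A\succeq B$'' is the inequality $w_A\ge w_B$ between two reals, and transitivity is immediate. The one point needing a word of care is well-definedness of ``the seller's revenue under eBDFPA'': \Cref{thm:properties-eBDFPA}(5) shows the budget-extracting multiplier tuple $\alpha^{\rm e}$ need not be unique, but \Cref{thm:properties-eBDFPA}(4) guarantees that every budget-extracting BDFPA gives each buyer the same payment, hence the seller the same revenue, so the value $w_{\rm eBDFPA}$ does not depend on which tuple is chosen. (One should likewise note that eBDSPA and ePSPA are being considered in the generality in which \Cref{thm:dominance-relationships-seller-revenue} — and, via \Cref{thm:existence-budget-extracting}, the symmetric regime — makes their budget-extracting tuples and hence revenues meaningful.)

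With these preliminaries the argument is a short bookkeeping step. Part~1 gives eBDFPA $\succeq$ BROA; part~2 gives eBDFPA $\succeq$ ePFPA (and hence also eBDFPA $\succeq$ ePSPA if one prefers to route through it directly); part~3 gives BROA $\succeq$ eBDSPA and BROA $\succeq$ ePSPA. Chaining eBDFPA $\succeq$ BROA with BROA $\succeq$ eBDSPA yields eBDFPA $\succeq$ eBDSPA, and chaining with BROA $\succeq$ ePSPA yields eBDFPA $\succeq$ ePSPA. Collecting the four resulting relations establishes eBDFPA $\succeq \{\text{BROA},\ \text{ePFPA},\ \text{eBDSPA},\ \text{ePSPA}\}$, which is the assertion of the corollary.

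Honestly, there is no genuinely hard step here: the corollary is purely a composition of the theorem's parts. The only mild obstacle is the well-definedness check above — making sure that, despite the non-uniqueness of the budget-extracting BDFPA tuple, ``$w_{\rm eBDFPA}$'' is a single number so that the transitive chain of inequalities is meaningful — and this is exactly what \Cref{thm:properties-eBDFPA}(4) supplies.
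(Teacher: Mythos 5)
Your proposal is correct and matches the paper's (implicit) argument exactly: the corollary is stated as an immediate consequence of \Cref{thm:dominance-relationships-seller-revenue}, obtained by noting that the hypothesis subsumes those of all three parts and chaining eBDFPA $\succeq$ BROA with BROA $\succeq$ eBDSPA and BROA $\succeq$ ePSPA by transitivity, together with the direct first-price comparison. Your extra well-definedness remark (and your reading of the theorem's second part as concerning ePFPA, consistent with its proof and \Cref{fig:summary_dominance}) is sound and does not change the argument.
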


We now use an example further to illustrate \Cref{thm:dominance-relationships-seller-revenue,coro:eBDFPA-dominating}. 

\begin{table}[!t]
	\centering
    \renewcommand{\arraystretch}{1.25}
    \caption{Summary of \Cref{exa:dominance-relationships-seller-revenue}. 
 % \mingwei{Should we mention some implications of these results? For example, (1) why even though we call BROA "optimal", it still occurs that the revenues of eBDFPA and ePFPA are larger than BROA.}
	}
	\small

% 2 buyers, 1 round of auction. 

% For both buyers: bidding pdf = $U[0, 1]$, budget = 0.312.

% For the seller: opportunity cost = 0.1.
\vspace{1em}

\newcolumntype{P}[1]{>{\centering\arraybackslash}p{#1}}

\begin{tabular}{|P{10.5em}|*{5}{P{4.5em}|}}
    \hline
    & eBDFPA & ePFPA & BROA & eBDSPA & ePSPA \\
    \hline
    Each buyer's payment & 0.312 & 0.312 & 0.207 & 0.171 & 0.171 \\
    \hline
    Seller's revenue & 0.54 & 0.525 & 0.344 & 0.243 & 0.243 \\ 
    \hline
    Budget exhausted? & Yes & Yes & No & No & No \\
    \hline
\end{tabular}
\bigskip
	\label{tab:summary-example}
\end{table}

\begin{example}\label{exa:dominance-relationships-seller-revenue}
	Now consider a symmetric scenario with $n = 2$ buyers. Either buyer's value/bidding pdf is a uniform distribution on $[0, 1]$, and either buyer's budget is $\rho_0 = 39/125 = 0.312$. Let the opportunity cost of the seller be $\lambda = 0.1$. Then, the value/bidding qf of each buyer is $\widetilde{v}_0(\cdot)$ with $\widetilde{v}_0(x) = x$ on $[0, 1]$, and the virtual value/bidding qf $\widetilde{\psi}_0(\cdot)$ satisfies $\widetilde{\psi}_0(x) = 2x - 1$ on $[0, 1]$. Consequently, we have the following for eBDFPA, ePFPA, BROA, eBDSPA, and ePSPA, respectively: 
	\begin{itemize}
		\item For eBDFPA, the maximum budget-extracting multiplier tuple $\bm \alpha^{\max} = (1/4, 1/4)$. Both buyers exhaust their budgets in expectation, and the seller's expected revenue equals $0.54$. 
		\item For ePFPA, the maximum budget-extracting multiplier $\bm \beta^{\max} = (\beta_0, \beta_0)$, where $\beta_0 \approx 0.937$ is the solution to $1000\beta^3 - 936\beta^2 - 1 = 0$. Both buyers also exhaust their budget in expectation, and the seller's expected revenue is approximately $0.525$. 
		\item For BROA, the solution to programming \eqref{eq:programming-BROA} is $\bm \gamma^* = (0, 0)$. Either buyer's expected payment is $0.207$, and the seller's expected revenue equals $1377/4000 \approx 0.344$. 
		\item For eBDSPA, the budget-extracting multiplier tuple is $\bm \mu^* = (1, 1)$. Either buyer's payment is $0.171$, and the seller's expected revenue equals $0.243$. 
		\item For ePSPA, the budget-extracting multiplier tuple is also $\bm \xi^* = (1, 1)$, and therefore, either buyer's payment is $0.171$, and the seller's expected revenue equals $0.243$ as well. 
	\end{itemize}
\end{example}

For a better view, we list the above numerical results in 
\Cref{tab:summary-example}.

\section{Concluding Remarks}\label{sec:conclusion}

This work considers the scenario where the seller lacks knowledge of the value priors of budget-constrained buyers. 
We investigate five mechanisms in this context: the Bayesian revenue-optimal auction, as well as the bid-discount and pacing variations of the first-price and second-price auctions. 
We characterize the unassured prior game between the seller and buyers under these auction forms and focus on budget-extracting mechanisms, which maximizes the seller's revenue. 
We give a strong strategic equivalence result between the bid-discount first-price auction and Bayesian revenue-optimal auction from the view of Nash equilibria, indicating that simple mechanisms can be as robust as optimal ones in the presence of unassured priors. 
This result sheds light on the valuation of first-price auctions in the auto-bidding world. 
We further establish vast outcome equivalence results among first-price/second-price auctions with budget constraints. 
In terms of structural properties, we explore the characteristics of bid-discount/pacing first-price auctions under the budget-extracting condition. 
Moreover, we compare the seller's revenue under these mechanisms when there is no strategic behavior. 
Overall, our work contributes to a comprehensive understanding of budget-constrained auction mechanisms, particularly first-price ones, from a stochastic perspective. 

This work leaves several directions open for further exploration.
Firstly, it is important to discuss whether buyers have incentives to misreport their budgets under different budget-constrained mechanisms. This question is partially answered by~\citet{feng2023strategic} for pacing mechanisms.
It is also an interesting question to ask whether our strategic equivalence results still hold when buyers can manipulate their budgets. Our preliminary thoughts lead to a positive answer.
Secondly, it is open to extend our results beyond fully Bayesian assumptions, e.g., in the contextual environment as discussed by~\citet{DBLP:journals/corr/abs-2102-10476}.
The third future direction is to require the budget constraints to hold ex-post, which could be more realistic. However, the main issue here is that under this setting, many auction mechanisms (even when parameterized) may not be anonymous on the bids~\citep{che1998standard,che2000optimal,pai2014optimal,feng2018end,balseiro2022optimal}. Therefore, the budget-extracting condition, which is a limitation only on the parameter tuple, is no longer a suitable choice. This leads to the need for building new solution concepts.

\section*{Acknowledgments}

This work is supported by Tencent Marketing Solution RBFR202104. The authors also thank Hu Fu, Yuhao Li, Tao Lin, Xiang Yan, and anonymous reviewers for their kind and useful suggestions.

\bibliographystyle{plainnat}
\bibliography{reference}

\newpage
\appendix

\section{Proof of Lemma \ref{lem:monotone-bidding}}\label{sec:proof:sec-model}

In accordance with the notation, we let $b_i = \widetilde v_i(q_i)$, and it suffices for us to prove that the best choice of $\widetilde v_i$ is an increasing function since $v_i$ is also increasing of $q_i$. With the other buyers' bidding strategies fixed, write out the buyer $i$'s expected utility as
\[ \int_0^1  \left(x_i(\bm \theta, q_i, \widetilde{\bm v})\cdot v_i(q_i) - p_i(\bm \theta, q_i, \widetilde{\bm v})\right)\diffe q_i. \]

Now, for a non-negative function $f$, as given in~\citet{bennett1988interpolation}, we define its distribution function as
\[\mu_f(s) \coloneqq \mu\{x: f(x) \geq s\}, \]
and therefore, the decreasing rearrangement of $f$ is given as 
\[f^*(t) \coloneqq \inf\{s: \mu_f(s) \leq t\}. \]
For the function on $[0, 1]$, we further define its increasing rearrangement as 
\[(f)^+(t) \coloneqq (f)^*(1 - t). \]

Apparently, $(\widetilde v_i)^+$ is an increasing function. Notice that $p_i(\bm \theta, q_i, \widetilde{\bm v})$ is in fact a function of $\widetilde v_i(q_i)$. Now, since the increasing rearrangement $\widetilde v_i \to (\widetilde v_i)^+$ is a uniform function preserving function, by~\citet{porubsky1988transformations}, we have
\[\int_0^1 p_i(\bm \theta, q_i, \widetilde{\bm v})\diffe q_i = \int_0^1 p_i(\bm \theta, q_i, ((\widetilde v_i)^+, \widetilde{\bm v}_{-i}))\diffe q_i. \]

Therefore, we only need to consider $x_i(\bm \theta, q_i, \widetilde{\bm v})\cdot v_i(q_i)$. Since the allocation function is monotone, $x_i(\bm \theta, q_i, \widetilde{\bm v})$ is an increasing function of $\widetilde v_i(q_i)$. Meanwhile, notice that $v_i$ is increasing as well. By~\citet{lieb2001analysis}\footnote{Although~\citet{lieb2001analysis} considers the symmetric decreasing rearrangement, these results can be naturally extended to our scenario.}, we have
\[(x_i)^+(\bm \theta, q_i, \widetilde{\bm v}) = x_i(\bm \theta, q_i, ((\widetilde v_i)^+, \widetilde{\bm v}_{-i})), \quad (v_i)^+(q_i) = v_i(q_i). \]

Therefore, by the Hardy--Littlewood inequality with the version given in~\citet{bennett1988interpolation}, we have
\begin{align*}
    \int_0^1  x_i(\bm \theta, q_i, \widetilde{\bm v})\cdot v_i(q_i) \diffe q_i 
    &\leq \int_0^1 (x_i)^*(\bm \theta, q_i, \widetilde{\bm v})\cdot (v_i)^*(q_i) \diffe q_i \\
    &= \int_0^1 (x_i)^+(\bm \theta, q_i, \widetilde{\bm v})\cdot (v_i)^+(q_i) \diffe q_i \\
    &= \int_0^1 x_i(\bm \theta, q_i, ((\widetilde v_i)^+, \widetilde{\bm v}_{-i}))\cdot v_i(q_i) \diffe q_i. 
\end{align*}

Consequently, concerning her expected utility, a buyer's bidding qf $\widetilde v_i$ is dominated by its increasing rearrangement $(\widetilde v_i)^+$. This finishes the proof of the lemma. 

\section{Proof of Theorem \ref{thm:existence-budget-extracting}}\label{sec:proof:sec-mechanism}

In the theorem, the first result for BDFPA is presented in \Cref{thm:properties-eBDFPA}, the results for PFPA are given in \Cref{thm:properties-ePFPA}, and the results for symmetric BDSPA and symmetric PSPA are deferred to \Cref{lem:symmetric-budget-extracting-tuple}, in the context when we concentrate on symmetric cases. We now prove the second result for BDFPA, which states that the seller's revenue is maximized under the budget-extracting condition. In fact, we can relax the inverse Lipschitz continuity condition to strict monotonicity. Specifically, we formally prove the following theorem. 
\begin{theorem}\label{thm:budget-extracting-revenue-maximizing-BDFPA}
    When each buyer's bidding qf is strictly increasing, the budget-extracting condition implies the seller's revenue maximization for BDFPA, if it is feasible. 
\end{theorem}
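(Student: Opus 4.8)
The plan is to show that any budget-feasible BDFPA multiplier tuple $\bm\alpha$ is dominated, in terms of the seller's revenue, by a budget-extracting one. First I would write out the seller's revenue in $\mathcal{M}^{\rm BDFPA}(\bm\alpha)$ explicitly: since the winner pays her \emph{unshaded} bid $\widetilde v_i(q_i)$, the revenue is
\[
w^{\rm BDFPA}(\bm\alpha,\widetilde{\bm v})=\sum_{i\in[n]}\int_0^1\bigl(\widetilde v_i(q_i)-\lambda\bigr)\cdot I\!\left[\alpha_i\widetilde v_i(q_i)\ge\max_{i'\neq i}\{\alpha_{i'}\widetilde v_{i'}(q_{i'}),\lambda\}\right]\diffe q_i .
\]
The key structural observation is that $\widetilde v_i(q_i)-\lambda>0$ exactly on the part of the quantile space where buyer $i$ could conceivably be allocated the item against the reserve, so increasing any $\alpha_i$ (weakly) enlarges buyer $i$'s winning region, weakly shrinks the others', and — because each transferred quantile point contributes a nonnegative marginal $\widetilde v_i(q_i)-\lambda$ to the winner and the item was either unsold or sold to someone with a smaller such margin (ties are broken arbitrarily but the boundary has measure zero under strict monotonicity) — the total revenue is \emph{monotone nondecreasing} in each coordinate of $\bm\alpha$ on the feasible region. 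Strict monotonicity of the bidding qfs is what guarantees the indicator boundaries $\{\alpha_i\widetilde v_i(q_i)=\alpha_{i'}\widetilde v_{i'}(q_{i'})\}$ and $\{\alpha_i\widetilde v_i(q_i)=\lambda\}$ are measure-zero sets, so the integrals behave continuously and the ``arbitrary tie-breaking'' does not affect $w$.

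Next, starting from an arbitrary feasible $\bm\alpha$, I would raise the coordinates toward budget-extraction one buyer at a time. For a buyer $i$ whose IR/upper-limit constraint is not yet binding (i.e. $\alpha_i<1=\bar\alpha_i$) and whose budget constraint is slack, I increase $\alpha_i$; buyer $i$'s expected payment $\int_0^1\widetilde v_i(q_i)I[\cdots]\diffe q_i$ is nondecreasing in $\alpha_i$ (her winning region only grows) and, by the inverse-Lipschitz-free argument using just strict monotonicity plus dominated convergence, varies continuously in $\alpha_i$, so I can push $\alpha_i$ up until either the budget constraint \eqref{eq:BF} becomes binding or $\alpha_i$ reaches $1$. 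Crucially, raising $\alpha_i$ only \emph{shrinks} every other buyer's winning region, hence weakly \emph{relaxes} every other buyer's budget constraint, so this process does not destroy feasibility for anyone else and can be iterated; by the monotonicity of $w$ established above, each step weakly increases the seller's revenue. The limit of this (bounded, coordinatewise monotone) procedure is a tuple $\bm\alpha^{\rm e}$ that is budget-extracting — for each $i$, either $\alpha_i^{\rm e}=1$ or buyer $i$'s budget binds — and has $w^{\rm BDFPA}(\bm\alpha^{\rm e},\widetilde{\bm v})\ge w^{\rm BDFPA}(\bm\alpha,\widetilde{\bm v})$. Since $\bm\alpha$ was an arbitrary feasible tuple, every budget-extracting BDFPA attains at least this value; combined with statement~4 of \Cref{thm:properties-eBDFPA} (all budget-extracting BDFPAs give every buyer the same payment, hence the same revenue), all budget-extracting tuples are revenue-maximizing, which is the claim.

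The main obstacle I anticipate is making the ``raise one coordinate, iterate'' argument fully rigorous: I need a clean continuity-and-monotonicity lemma for the maps $\alpha_i\mapsto p_i(\bm\alpha,\cdot)$ and $\alpha_i\mapsto p_j(\bm\alpha,\cdot)$ that holds under strict monotonicity alone (not inverse Lipschitz continuity, which the earlier existence proof used), and I must argue the iteration converges — e.g. by taking the coordinatewise supremum over the directed set of feasible tuples reachable by such increases and checking, via Fatou/monotone convergence plus the measure-zero-boundary fact, that the supremum is itself feasible and budget-extracting. A secondary subtlety is the case where some buyer has $p_i^{\max}=0$: such a buyer never wins regardless, contributes nothing to revenue, and her multiplier can be set to $1$ harmlessly, so she should be separated out at the start. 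Finally I would double-check the interaction with the reserve $\lambda$: because $\lambda<1$ and values lie in $[0,1]$, a buyer with $\alpha_i\widetilde v_i(q_i)\ge\lambda$ necessarily has $\widetilde v_i(q_i)\ge\lambda$, so every allocated unit yields nonnegative marginal revenue $\widetilde v_i(q_i)-\lambda\ge0$, which is exactly what powers the monotonicity of $w$.
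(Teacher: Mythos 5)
There is a genuine gap, and it is the central step: the claim that the seller's revenue in BDFPA is coordinatewise nondecreasing in $\bm\alpha$ is false. In BDFPA the multiplier $\alpha_i$ enters only the \emph{ranking} ($\alpha_i\widetilde v_i(q_i)$ vs.\ $\alpha_{i'}\widetilde v_{i'}(q_{i'})$ and $\lambda$), while the payment is the \emph{unshaded} bid $\widetilde v_i(q_i)$. Hence when you raise $\alpha_i$, a quantile profile can be transferred from a buyer $j$ whose discounted bid was higher but whose actual bid $\widetilde v_j(q_j)$ is much larger than $\widetilde v_i(q_i)$; the revenue on that profile drops from $\widetilde v_j(q_j)-\lambda$ to $\widetilde v_i(q_i)-\lambda$. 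Concretely, with $n=2$, $\lambda=0.1$, $\widetilde v_1(q)=0.5+0.5q$, $\widetilde v_2(q)=0.2+0.2q$, and budgets $\rho_1=\rho_2=1$ (so every tuple is feasible), moving from $\bm\alpha=(0.3,0.5)$ to $(0.3,1)$ strictly shifts allocation from buyer $1$ (bids in $[0.5,1]$) to buyer $2$ (bids in $[0.2,0.4]$) and strictly lowers revenue. So the ``raise one coordinate at a time, revenue weakly increases at each step'' iteration breaks down; your intuition is the right one for PFPA (where the winner pays the \emph{shaded} bid, so payment and ranking move together --- this is exactly how the paper proves the analogous statement for ePFPA), but not for BDFPA. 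A secondary issue: you invoke statement~4 of \Cref{thm:properties-eBDFPA} (equal payments across all budget-extracting tuples), which the paper proves under inverse Lipschitz continuity, whereas the present theorem assumes only strict monotonicity; and even granting it, your iteration would only compare the budget-extracting tuple reached from a given $\bm\alpha$ with that $\bm\alpha$, so the monotonicity lemma is doing all the work --- and it is the part that fails.

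For contrast, the paper handles BDFPA by Lagrangian duality rather than any monotonicity of $w$ in $\bm\alpha$: it writes the revenue maximization over feasible $\bm\alpha$, forms the dual $\chi^{\rm BDF}(\bm\tau)$, shows $\chi^{\rm BDF}(\bm\tau)\le \mathbb{E}_{\bm q}[\max_i\{(1-\tau_i)\widetilde v_i(q_i)-\lambda\}^+]+\sum_i\tau_i\rho_i$ with equality at $\bm\alpha=1^n-\bm\tau$, characterizes dual optimality by exactly ``budget feasibility of $1^n-\bm\tau$ plus complementary slackness'' (\Cref{lem:optimality-conditions}), verifies strong duality, and then observes that any budget-extracting tuple $\bm\alpha^{\rm e}$ satisfies these conditions with $\bm\tau=1^n-\bm\alpha^{\rm e}$, hence is revenue-maximizing. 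If you want to salvage your outline you would need an argument of this global/duality type (or some other certificate of optimality), since a local coordinatewise-improvement argument cannot work for BDFPA.
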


We devote the remaining to prove \Cref{thm:budget-extracting-revenue-maximizing-BDFPA}, which follows three steps. First, we give an upper bound on the Lagrangian dual of the revenue-maximizing problem, which happens to be similar to programming \eqref{eq:programming-BROA}. Next, we characterize the optimality of the dual problem via a pair of equivalent conditions. With these conditions, we show that strong duality holds. Finally, we relate the above optimality conditions 
to the budget-extracting condition of BDFPA to prove the theorem. 

For briefness, we write $\Phi_i(\bm \alpha, \widetilde{\bm v}, \bm q) \coloneqq I\left[\alpha_i\widetilde{v}_i(q_i) \geq \max_{i'\neq i}\left\{\alpha_i\widetilde{v}_i(q_{i'}), \lambda\right\}\right]$ for any $1\leq i\leq n$ through the whole appendix, which describes whether $i$ has the largest value $\alpha_i\widetilde{v}_i(q_i)$ no less than $\lambda$ among all buyers. This function is seen as a choice function and fits all five mechanisms' allocation functions with the suitable parameter notation and quantile function (bidding qf or virtual bidding qf). 

We now formalize the seller's revenue-maximizing problem in BDFPA as follows and denote the optimal objective as ${\rm OPT}^{\rm BDF}$: 
\begin{gather}
    \max_{\bm \alpha\in [0, 1]^n} \sum_{i = 1}^n \int_0^1\left(\widetilde{v}_{i} (q_{i}) - \lambda\right) \cdot \left(\int_{\bm q_{-i}} \Phi_i(\bm \alpha, \widetilde{\bm v}, \bm q)\diffe \bm q_{-i}\right)\diffe q_{i}, \notag \\
    {\rm s.t.}\quad \int_0^1 \widetilde{v}_{i} (q_{i}) \cdot \left(\int_{\bm q_{-i}} \Phi_i(\bm \alpha, \widetilde{\bm v}, \bm q)\diffe \bm q_{-i}\right)\diffe q_{i} \leq \rho_i, \quad \forall 1\leq i\leq n. \label{eq:programming-BDFPA}
\end{gather}

Now we consider the Lagrangian dual problem of \eqref{eq:programming-BDFPA}, which is as follows: 
\begin{equation}
    \chi^{\rm BDF}(\bm \tau) \coloneqq \max_{\bm \alpha\in [0, 1]^n} \sum_{i = 1}^n \int_0^1\left((1 - \tau_i)\widetilde{v}_{i} (q_{i}) - \lambda\right) \cdot \left(\int_{\bm q_{-i}} \Phi_i(\bm \alpha, \widetilde{\bm v}, \bm q)\diffe \bm q_{-i}\right)\diffe q_{i} + \sum_{i = 1}^n \tau_i \rho_i.  \label{eq:def-chi}
\end{equation}

Here, $\tau_i \geq 0$ is the dual variable for the restriction on buyer $i$'s payment. For brevity, we reorganize the integral part of $\chi^{\rm BDF}(\bm \tau)$ as the following:
\begin{equation*}
    \sum_{i = 1}^n \int_0^1\left((1 - \tau_i)\widetilde{v}_{i} (q_{i}) - \lambda\right) \cdot \left(\int_{\bm q_{-i}} \Phi_i(\bm \alpha, \widetilde{\bm v}, \bm q)\diffe \bm q_{-i}\right)\diffe q_{i} = \mathbb{E}_{\bm q} \left[(1 - \tau_{i^*})\widetilde{v}_{i^*}(q_{i^*}) - \lambda\right],  
\end{equation*}
where for any quantile profile $\bm q$, $i^*$ is defined as the buyer $0\leq i\leq n$ with the highest value $\alpha_i\widetilde{v}_{i} (q_{i})$. Here, we involve a phantom buyer $0$ with $\alpha_0 = 1$, $\tau_0 = 0$ and $\widetilde{v}_{0} (q_{0}) = \lambda$ for all $q_0\in [0, 1]$.\footnote{Involving the phantom buyer is only for the succinctness of writing and does not matter with all the 
restrictions on buyers 1 to $n$. } Intuitively, the equation is based on the fact that when $\bm q$ is fixed, functions $\{{\Phi}_{1\leq i\leq n}\}$ collaboratively act as a choice function to pick a buyer $0\leq i^*\leq n$ with the highest discounted bid. Rigorously, the above equation establishes due to the strict monotonicity condition, as the Lebesgue measure that at least two buyers share the same highest discounted bid is zero. With the equation, we can rewrite $\chi^{\rm BDF}(\bm \tau)$ as: 
\begin{equation}
    \chi^{\rm BDF}(\bm \tau) = \max_{\alpha\in [0, 1]^n} \mathbb{E}_{\bm q} \left[(1 - \tau_{i^*})\widetilde{v}_{i^*}(q_{i^*}) - \lambda\right] + \sum_{i = 1}^n \tau_i \rho_i. \label{eq:alter-chi}
\end{equation}

Now, for fixed $\bm \tau \in [0, 1]^n$, since $i^*$ represents a specific buyer from $0$ to $n$, we have
\begin{equation*}
    \chi^{\rm BDF}(\bm \tau) \leq \mathbb{E}_{\bm q} \left[\max_{1\leq i\leq n}\left\{(1 - \tau_{i})\widetilde{v}_{i}(q_{i}) - \lambda\right\}^+\right] + \sum_{i = 1}^n \tau_i \rho_i,
\end{equation*}
and further, we can see that the equality holds if $\bm \tau \in [0, 1]^n$ and we take $\bm \alpha = 1^n - \bm \tau$ in \eqref{eq:alter-chi}, where $i^*$ is subsequently $\argmax_{0\leq i\leq n}\left\{(1 - \tau_{i})\widetilde{v}_{i}(q_{i}) - \lambda\right\}$. Now by weak duality, we have
\begin{equation}
    {\rm OPT}^{\rm BDF} \leq \min_{\bm \tau \geq 0} \chi^{\rm BDF}(\bm \tau) \leq \min_{\bm \tau \in [0, 1]^n} \chi^{\rm BDF}(\bm \tau) = \min_{\bm \tau \in [0, 1]^n} \mathbb{E}_{\bm q} \left[\max_{1\leq i\leq n}\left\{(1 - \tau_{i})\widetilde{v}_{i}(q_{i}) - \lambda\right\}^+\right] + \sum_{i = 1}^n \tau_i \rho_i. \label{eq:weak-duality-BDFPA}
\end{equation}

We now characterize the optimal solution of the program via the following lemma. 
\begin{lemma}\label{lem:optimality-conditions}
    When each buyer $1 \leq i \leq n$'s bidding qf $\widetilde v_i$ is strictly increasing, $\bm \tau$ is a solution of $\min_{\bm \tau\in [0, 1]^n} \chi^{\rm BDF}(\bm \tau)$ if and only if for each $1\leq i\leq n$:
    \begin{enumerate}
        \item $\int_0^1 \widetilde{v}_{i} (q_{i}) \cdot \left(\int_{\bm q_{-i}} \Phi_i(1^n - \bm \tau, \widetilde{\bm v}, \bm q)\diffe \bm q_{-i}\right)\diffe q_{i} \leq \rho_i$, and 
        \item $\tau_i \cdot \left(\int_0^1 \widetilde{v}_{i} (q_{i}) \cdot \left(\int_{\bm q_{-i}} \Phi_i(1^n - \bm \tau, \widetilde{\bm v}, \bm q)\diffe \bm q_{-i}\right)\diffe q_{i} - \rho_i\right) = 0$. 
    \end{enumerate}
\end{lemma}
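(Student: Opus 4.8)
The statement to prove is \Cref{lem:optimality-conditions}, characterizing optimal solutions of $\min_{\bm \tau\in[0,1]^n}\chi^{\rm BDF}(\bm \tau)$ via complementary-slackness-like conditions. Here is my plan.

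\paragraph{Overall approach.} The plan is to treat $\Psi(\bm\tau) \coloneqq \mathbb{E}_{\bm q}\left[\max_{1\leq i\leq n}\{(1-\tau_i)\widetilde v_i(q_i) - \lambda\}^+\right] + \sum_{i=1}^n \tau_i\rho_i$ as a convex function of $\bm\tau$ on the box $[0,1]^n$ (the pointwise maximum of affine functions of $\bm\tau$ is convex, the positive part preserves convexity, expectation preserves it, and the linear term is harmless), and to identify its minimizers through a first-order/subgradient optimality condition. The key computational step is to differentiate $\Psi$ in the direction $\tau_i$. Because each $\widetilde v_i$ is strictly increasing, for Lebesgue-almost every $\bm q$ the maximizing index in $\max_i\{(1-\tau_i)\widetilde v_i(q_i)-\lambda\}^+$ is unique, so the integrand is differentiable a.e.\ in $\bm q$; dominated convergence then lets me pull the derivative inside the expectation. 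Concretely, $\frac{\partial}{\partial\tau_i}\Psi(\bm\tau) = -\int_0^1 \widetilde v_i(q_i)\left(\int_{\bm q_{-i}}\Phi_i(1^n-\bm\tau,\widetilde{\bm v},\bm q)\diffe\bm q_{-i}\right)\diffe q_i + \rho_i$, since raising $\tau_i$ decreases the $i$-th affine term at rate $\widetilde v_i(q_i)$ exactly on the event that $i$ is the (positive) argmax, which is precisely the event $\Phi_i(1^n-\bm\tau,\widetilde{\bm v},\bm q)=1$. Write $g_i(\bm\tau)$ for this expected-payment integral, so $\partial_{\tau_i}\Psi = \rho_i - g_i(\bm\tau)$.

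\paragraph{Key steps in order.} First I would establish convexity of $\Psi$ and the a.e.\ uniqueness of the argmax, which is where strict monotonicity enters. Second, I would justify differentiability of $\Psi$ in each coordinate and derive the partial-derivative formula above (using dominated convergence; the relevant functions are bounded since all quantities lie in $[0,1]$ and $\lambda\in(0,1)$). Third, I would invoke the standard first-order condition for minimizing a differentiable convex function over the box $[0,1]^n$: $\bm\tau^\star$ is optimal iff for every $i$, $\partial_{\tau_i}\Psi(\bm\tau^\star)\geq 0$ when $\tau_i^\star = 0$, $\partial_{\tau_i}\Psi(\bm\tau^\star)\leq 0$ when $\tau_i^\star = 1$, and $\partial_{\tau_i}\Psi(\bm\tau^\star) = 0$ when $0<\tau_i^\star<1$. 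Substituting $\partial_{\tau_i}\Psi = \rho_i - g_i$, the first bullet of the lemma ($g_i(\bm\tau)\leq\rho_i$) follows because optimality forces $\partial_{\tau_i}\Psi\geq 0$ whenever $\tau_i<1$, while at $\tau_i=1$ one checks directly that $g_i(\bm\tau)\leq\rho_i$ — here I'd want the observation that pushing $\tau_i$ to $1$ makes buyer $i$'s effective bid $(1-\tau_i)\widetilde v_i(q_i)$ vanish, so $\Phi_i = 0$ a.e.\ and $g_i = 0 \le \rho_i$, so the constraint is automatically slack at that boundary; this resolves the only case where the sign of the derivative could otherwise conflict with feasibility. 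The second bullet (complementary slackness $\tau_i(g_i(\bm\tau)-\rho_i)=0$) then follows: if $\tau_i>0$ then either $\tau_i\in(0,1)$ and $\partial_{\tau_i}\Psi=0$, i.e.\ $g_i=\rho_i$; or $\tau_i=1$ and $g_i=0$, which combined with $\rho_i>0$ looks like a problem — so I need to argue $\tau_i=1$ cannot happen at an optimum unless $\rho_i=0$, or handle it by noting the product $\tau_i(g_i-\rho_i) = 1\cdot(0-\rho_i)\neq 0$, contradicting what we want. Let me reconsider: I think the resolution is that if $\tau_i = 1$ then $\partial_{\tau_i}\Psi = \rho_i - 0 = \rho_i > 0$, which violates the boundary optimality condition at $\tau_i = 1$ (which requires $\partial_{\tau_i}\Psi \le 0$); hence $\tau_i^\star = 1$ never occurs at a minimizer, so effectively $\tau_i^\star \in [0,1)$ always and the clean complementary slackness holds. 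For the converse direction, given $\bm\tau$ satisfying the two bullets, I reconstruct the coordinatewise optimality inequalities and invoke convexity to conclude global optimality.

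\paragraph{Anticipated main obstacle.} The main technical obstacle is rigorously justifying the differentiation-under-the-integral and the exact form of the partial derivative, including correctly handling the kink introduced by the $\{\cdot\}^+$ operator and the boundary behavior at $\tau_i\in\{0,1\}$. The positive part is nonsmooth precisely on the event $(1-\tau_i)\widetilde v_i(q_i) = \lambda$ at the argmax; I need strict monotonicity of $\widetilde v_i$ again to argue this event has measure zero in $\bm q$, so it does not contribute to the derivative. A secondary subtlety is the argument that $\tau_i^\star = 1$ cannot occur at a minimizer, which I sketched above and which is what makes the clean complementary-slackness statement of the second bullet correct rather than needing a boundary caveat; I would verify this carefully, as it is the one place where the interaction between the box constraint and feasibility of the primal constraint is delicate. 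Everything else — convexity, Hardy–Littlewood-type bounds needed for dominated convergence, and the translation between $\Phi_i(1^n-\bm\tau,\widetilde{\bm v},\bm q)$ and the argmax event — is routine given the setup already in the excerpt.
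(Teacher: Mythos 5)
Your proposal is correct and follows essentially the same route as the paper: convexity and a.e.\ differentiability of $\chi^{\rm BDF}$ (via the a.e.-unique argmax guaranteed by strict monotonicity), the gradient formula $\partial_{\tau_i}\chi^{\rm BDF}(\bm \tau) = \rho_i - p_i$, and the first-order condition over the box $[0,1]^n$, including the key resolution that $\tau_i = 1$ can never be optimal because the derivative there equals $\rho_i > 0$ — exactly the paper's preliminary claim that every entry of an optimal $\bm \tau^*$ is strictly below $1$. The only cosmetic difference is that the paper invokes Shapiro et al.'s interchange/differentiability theorems and the variational-inequality form of convex optimality, where you use dominated convergence and coordinatewise KKT conditions.
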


\begin{proof}[Proof of \Cref{lem:optimality-conditions}]
We first give some temporary notations to ease the description. We let $p_i \coloneqq \int_0^1 \widetilde{v}_{i} (q_{i}) \cdot \left(\int_{\bm q_{-i}} \Phi_i(1^n - \bm \tau, \widetilde{\bm v}, \bm q)\diffe \bm q_{-i}\right)\diffe q_{i}$ be buyer $i$'s expected payment. We further write $y_0(\bm \tau, \bm q) \coloneqq 0$ and $y_i(\bm \tau, \bm q) \coloneqq (1 - \tau_i)\widetilde{v}_i(q_i) - \lambda$ for $1\leq i\leq n$. 
At last, we let $y(\bm \tau, \bm q) = \max_{0\leq i\leq n} {y_i(\bm \tau, \bm q)}$. We now have
\[\chi^{\rm BDF}(\bm \tau) = \mathbb{E}_{\bm q} \left[y(\bm \tau, \bm q)\right] + \sum_{i = 1}^n \tau_i \rho_i. \]

Note that for any $0 \leq i\leq n$, $y_i(\bm \tau, \bm q)$ is convex on $\bm \tau$. As a result, $y(\bm \tau, \bm q)$ is also convex on $\bm \tau$. Meanwhile, let $\mathcal{J}(\bm \tau, \bm q) = \argmax_{0\leq i\leq n} y_i(\bm \tau, \bm q)$, then with probability 1, $|\mathcal{J}(\bm \tau, \bm q)| = 1$ when $\bm q$ is chosen uniformly from $[0, 1]^n$, and therefore, $y(\bm \tau, \bm q)$ is differentiable with probability 1. By Theorem 7.46 from~\citet{shapiro2021lectures}, $\chi^{\rm BDF}(\bm \tau)$ is convex and differentiable. Further by Theorem 7.44 from~\citet{shapiro2021lectures}, 
\begin{align*}
   \frac{\partial}{\partial \tau_i} \chi^{\rm BDF}(\bm \tau) &= \mathbb{E}_q\left[\frac{\partial}{\partial \tau_i} y(\bm \tau, \bm q)\right] + \rho_i \\
   &= \mathbb{E}_q\left[-\widetilde{v}_i(q_i) I\left[i \in \mathcal{J}(\bm \tau, \bm q)\right]\right] + \rho_i \\
   &= - p_i + \rho_i. 
\end{align*}

Therefore, $\nabla \chi^{\rm BDF}(\bm \tau) = (-p_i + \rho_i)_{1\leq i\leq n}$. As $\chi^{\rm BDF}(\bm \tau)$ is convex, 
$\bm \tau$ is optimal if and only if for any $\tau' \in [0, 1]^n$, 
\begin{equation}
    \nabla \chi^{\rm BDF}(\bm \tau)\cdot (\bm \tau' - \bm \tau) \geq 0. \label{eq:convex-optimality}
\end{equation}

We now finish the proof of the lemma. 

\paragraph{``If'' side.} Suppose the given two conditions are satisfied. Let $\mathcal{K}(\bm \tau) = \{i \mid  \tau_i = 0\}$. For any $\bm \tau' \in [0, 1]^n$, if $i \notin \mathcal{K}(\bm \tau)$, then $-p_i + \rho_i = 0$, and $\nabla_i \chi^{\rm BDF}(\bm \tau) \cdot (\tau_i' - \tau_i) = 0$ holds. Otherwise, $-p_i + \rho_i \leq 0$ and $\tau_i' \geq \tau_i$, which leads to $\nabla_i \chi^{\rm BDF}(\bm \tau) \cdot (\tau_i' - \tau_i) \geq 0$. As a result, $\nabla \chi^{\rm BDF}(\bm \tau)\cdot (\bm \tau' - \bm \tau) \geq 0$ and $\bm \tau$ is optimal. 

\paragraph{``Only if'' side.} To start with, we claim that all entries of any optimal solution $\bm \tau^*$ of the programming are strictly smaller than 1. To see this, counterfactually suppose $\tau_i^* = 1$, then $(1 - \tau_i^*)\widetilde{v}_i(q_i) - \lambda < 0$ holds for any $q_i$ as $\lambda > 0$. Therefore, since $\widetilde{v}_i(\cdot)$ is bounded (by 1), subtracting $\tau_i^*$ by a small amount does not affect the expectation part of $\chi^{\rm BDF}(\bm \tau^*)$, but will strictly lessen the latter sum part $\sum_{i = 1}^n \tau_i\rho_i$, contradicting the optimality.  

Now, for some optimal $\bm \tau^*$, let $\mathcal{K}(\bm \tau^*) = \{i \mid  \tau^*_i = 0\}$. For $i \in \mathcal{K}(\bm \tau^*)$, let $\bm \tau' = \bm \tau^* + \delta \bm e_i$ for some $\delta\in (0, 1]$, where $\bm e_i$ is the vector with the $i$-th entry one and all other entries zero. Plugging in \eqref{eq:convex-optimality}, we derive that $p_i^* \leq \rho_i$. For $i \notin \mathcal{K}(\bm \tau^*)$, we take $\bm \tau' = \bm \tau^* \pm \delta \bm e_i$ in order for some small $\delta > 0$ satisfying $\tau_i^* \pm \delta \in [0, 1]$. Such $\delta$ exists since $0 < \tau_i^* < 1$. Taking into \eqref{eq:convex-optimality} respectively, we derive that $p_i^* - \rho_i = 0$. \Cref{lem:optimality-conditions} is proved. 
\end{proof}

We show that strong duality holds with \Cref{lem:optimality-conditions}. For the solution $\bm \tau^*$ which minimizes $\chi^{\rm BDF}(\bm \tau)$ in $[0, 1]^n$, we have
\begin{align*}
    \chi^{\rm BDF}(\bm \tau^*) &= \mathbb{E}_{\bm q} \left[\max_{1\leq i\leq n}\left\{(1 - \tau_{i}^*)\widetilde{v}_{i}(q_{i}) - \lambda\right\}^+\right] + \sum_{i = 1}^n \tau_i^* \rho_i \\
    &= \int_0^1 \left((1 - \tau_{i}^*)\widetilde{v}_{i} (q_{i}) - \lambda\right) \cdot \left(\int_{\bm q_{-i}} \Phi_i(1^n - \bm \tau^*, \widetilde{\bm v}, \bm q)\diffe \bm q_{-i}\right)\diffe q_{i} + \sum_{i = 1}^n \tau_i^* \rho_i \\
    &= \int_0^1 \left(\widetilde{v}_{i} (q_{i}) - \lambda\right) \cdot \left(\int_{\bm q_{-i}} \Phi_i(1^n - \bm \tau^*, \widetilde{\bm v}, \bm q)\diffe \bm q_{-i}\right)\diffe q_{i}. 
\end{align*}
Here, the first equation is by definition and $\bm \tau^* \in [0, 1]^n$, and the last equation is by the first condition in \Cref{lem:optimality-conditions}. Let $\bm \alpha^* = 1^n - \bm \tau^*$. Now by the second condition in \Cref{lem:optimality-conditions}, $\bm \alpha^*$ satisfies all budget constraints in \eqref{eq:programming-BDFPA}, with the objective value $\chi^{\rm BDF}(\bm \tau^*)$. As a result, strong duality holds, and $\bm \alpha^*$ is the revenue-maximizing tuple of bid-discount multipliers. 

Finally, we come to prove the theorem. For a budget-extracting bid-discount multiplier tuple $\bm \alpha^{\rm e}$, by definition, the following two groups of constraints hold: 
\begin{gather}
    \int_0^1 \widetilde{v}_{i} (q_{i}) \cdot \left(\int_{\bm q_{-i}} \Phi_i(\bm \alpha^{\rm e}, \widetilde{\bm v}, \bm q)\diffe \bm q_{-i}\right)\diffe q_{i} \leq \rho_i, \quad \forall 1\leq i\leq n. \notag \\
    \alpha^{\rm e}_i \cdot \left(\rho_i - \int_0^1 \widetilde{v}_{i} (q_{i}) \cdot \left(\int_{\bm q_{-i}} \Phi_i(\bm \alpha^{\rm e}, \widetilde{\bm v}, \bm q)\diffe \bm q_{-i}\right)\diffe q_{i}\right) = 0, \quad \forall 1\leq i\leq n. \label{eq:budget-extracting+budget-feasibility-BDFPA}
\end{gather}

By \Cref{lem:optimality-conditions}, $\bm \tau^* = 1^n - \bm \alpha^{\rm e}$ is an optimal solution of $\min_{\bm \tau\in [0, 1]^n} \chi^{\rm BDF}(\bm \tau)$, and by strong duality, $\bm \alpha^{\rm e} = 1^n - \bm \tau^*$ is revenue-maximizing, which finishes the proof of \Cref{thm:budget-extracting-revenue-maximizing-BDFPA}. 

\section{Proofs in Section \ref{sec:equivalence}}\label{sec:proof:sec-equivalence}

\subsection{Proof of \Cref{lem:strong-se-ne}}

If $\widetilde{\bm v} = (\widetilde{v}_i)_{i \in [n]}$ is a Nash equilibrium strategy profile for $\mathcal{M}_1(\bm \theta)$, then we argue that $(g(\widetilde{v}_i))_{i \in [n]}$ forms a Nash equilibrium for $\mathcal{M}_2(\bm \theta)$. Or else, if a buyer $i$ can strictly increase her utility from $u_i$ to $u_i'$ by switching $g(\widetilde{v}_i)$ to $\widetilde{v}_i'$, then since $h = g^{-1}$ and the property of the mapping, we derive that her utility with $(g^{-1}(\widetilde{v}_i'), \widetilde{v}_{-i})$ under $\mathcal{M}_1(\bm \theta)$ is $u_i'$. Meanwhile, buyer $i$'s utility with $(\widetilde{v}_i', \widetilde{v}_{-i})$ under $\mathcal{M}_1(\bm \theta)$ is $u_i < u_i'$, contradicting that $(\widetilde{v}_i)_{i \in [n]}$ implies a Nash equilibrium. The reverse direction from $\mathcal{M}_2(\bm \theta)$ to $\mathcal{M}_1(\bm \theta)$ is similar. This finishes the proof of the lemma.

\subsection{Proof of \Cref{thm:equivalence-BROA-eBDFPA}}

We will make some preparations before we come to the main part of the proof. 

We first discuss on the optimal tuple $\bm \gamma^*$ in BROA. By \Cref{lem:optimality-conditions}, since the virtual bidding qf is strictly increasing, programming \eqref{eq:programming-BROA} is equivalent to the following conditions: 
\begin{gather}
    \int_0^1 \widetilde{\psi}_{i} (q_{i}) \cdot \left(\int_{\bm q_{-i}} \Phi_i(\bm \gamma, \widetilde{\bm \psi}, \bm q)\diffe \bm q_{-i}\right)\diffe q_{i} \leq \rho_i, \quad \forall 1\leq i\leq n. \notag \\
    \quad (1 - \gamma_i) \cdot \left(\rho_i - \int_0^1 \widetilde{\psi}_{i} (q_{i}) \cdot \left(\int_{\bm q_{-i}} \Phi_i(\bm \gamma, \widetilde{\bm \psi}, \bm q)\diffe \bm q_{-i}\right)\diffe q_{i}\right) = 0, \quad \forall 1\leq i\leq n. \label{eq:comple-slack+feasibility-BROA} 
\end{gather}
Note that the second multiplying term in the second set of constraints represents the expected remaining budget of each buyer. Therefore, \Cref{lem:optimality-conditions} shows that as long as buyer $i$'s budget is not binding, $\gamma_i^* = 0$ establishes.

Meanwhile, for a better look, we restate here the equivalence conditions of the multiplier tuple of the budget-extracting BDFPA according to \Cref{thm:budget-extracting-revenue-maximizing-BDFPA}: 
\begin{gather}
    \int_0^1 \widetilde{v}_{i} (q_{i}) \cdot \left(\int_{\bm q_{-i}} \Phi_i(\bm \alpha, \widetilde{\bm v}, \bm q)\diffe \bm q_{-i}\right)\diffe q_{i} \leq \rho_i, \quad \forall 1\leq i\leq n. \notag \\
    (1 - \alpha_i) \cdot \left(\rho_i - \int_0^1 \widetilde{v}_{i} (q_{i}) \cdot \left(\int_{\bm q_{-i}} \Phi_i(\bm \alpha, \widetilde{\bm v}, q)\diffe \bm q_{-i}\right)\diffe q_{i}\right) = 0, \quad \forall 1\leq i\leq n. \label{eq:budget-extracting+budget-feasibility-BDFPA-again}
\end{gather}

We are now ready to prove our main theorem. We suppose that the bidding qf profile in BROA is $\widetilde{\bm v}^{\rm (1)}$, while the counterpart in eBDFPA is $\widetilde{\bm v}^{\rm (2)}$. 
\citet{DBLP:journals/ior/BalseiroKMM21} characterizes any buyer's expected payment in BROA in the following proposition: 
\begin{proposition}[in~\citet{DBLP:journals/ior/BalseiroKMM21}]\label{prop:expected-payment-BROA}
    For BROA, any buyer $i$'s expected payment satisfies: 
    \[\mathbb{E}_{q_i}\left[p_i(\bm \gamma^*, q_i, \widetilde{\bm v})\right] = \mathbb{E}_{q_i}\left[\widetilde{\psi}_i(q_i)\cdot x_i(\bm \gamma^*, q_i, \widetilde{\bm v})\right]. \]
\end{proposition}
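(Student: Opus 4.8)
The plan is to recognise \Cref{prop:expected-payment-BROA} as the classical Myerson payment identity, specialised to the threshold-payment structure of BROA on the quantile space, and to reduce it to a one-line integration by parts. First I would fix buyer $i$ together with the other buyers' bidding qfs. Since we work under the ER hypothesis (buyer $i$'s virtual bidding qf $\widetilde{\psi}_i$ is strictly increasing and differentiable, so in particular $\widetilde{v}_i$ is strictly increasing and differentiable) and every $\widetilde{v}_{i'}$ is increasing, the quantity $\gamma_i^*\widetilde{\psi}_i(q_i)$ is nondecreasing in $q_i$; hence for every realisation of $\bm q_{-i}$ the set of quantiles at which buyer $i$ wins is, up to a Lebesgue-null tie set, an interval $[Q(\bm q_{-i}),1]$, where $Q(\bm q_{-i})$ is the critical quantile determined by $\gamma_i^*\widetilde{\psi}_i(Q)=\max_{i'\neq i}\{\gamma_{i'}^*\widetilde{\psi}_{i'}(q_{i'}),\lambda\}$ (with $Q(\bm q_{-i})=1$ when buyer $i$ never wins, e.g.\ when $\gamma_i^*=0$). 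By the payment rule of BROA in \Cref{tab:mechanisms}, whenever buyer $i$ wins she pays the critical bid $\widetilde{v}_i(Q(\bm q_{-i}))$, since $\widetilde{v}_i$ is increasing and the minimiser in the payment formula is attained at the threshold type.

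Next I would integrate over $q_i$ and interchange the order of integration. Writing $x_i(\bm\gamma^*,q_i,\widetilde{\bm v})=\Pr_{\bm q_{-i}}[q_i\geq Q(\bm q_{-i})]$ and applying Fubini's theorem, the two sides of the claimed identity become
\[
\mathbb{E}_{q_i}\!\left[p_i(\bm\gamma^*,q_i,\widetilde{\bm v})\right]=\mathbb{E}_{\bm q_{-i}}\!\left[(1-Q(\bm q_{-i}))\,\widetilde{v}_i(Q(\bm q_{-i}))\right],
\]
\[
\mathbb{E}_{q_i}\!\left[\widetilde{\psi}_i(q_i)\,x_i(\bm\gamma^*,q_i,\widetilde{\bm v})\right]=\mathbb{E}_{\bm q_{-i}}\!\left[\int_{Q(\bm q_{-i})}^{1}\widetilde{\psi}_i(q)\diffe q\right].
\]
So it suffices to prove the pointwise identity $\int_t^1\widetilde{\psi}_i(q)\diffe q=(1-t)\,\widetilde{v}_i(t)$ for every $t\in[0,1]$ and then take expectations with $t=Q(\bm q_{-i})$.

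For that last step I would substitute $\widetilde{\psi}_i(q)=\widetilde{v}_i(q)-(1-q)\widetilde{v}_i'(q)$ and integrate $(1-q)\widetilde{v}_i'(q)$ by parts. Using $\bigl[(1-q)\widetilde{v}_i(q)\bigr]_t^1=-(1-t)\widetilde{v}_i(t)$,
\[
\int_t^1\widetilde{\psi}_i(q)\diffe q=\int_t^1\widetilde{v}_i(q)\diffe q-\Bigl(-(1-t)\widetilde{v}_i(t)+\int_t^1\widetilde{v}_i(q)\diffe q\Bigr)=(1-t)\widetilde{v}_i(t),
\]
as required. Combining this with the two displays in the previous paragraph completes the proof.

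The computation itself is routine; the genuinely delicate part is the measure-theoretic bookkeeping that legitimises it. One must verify that the winning region is almost surely an interval --- the ties among the paced virtual bids $\{\gamma_{i'}^*\widetilde{\psi}_{i'}(q_{i'})\}$ form a null set precisely because of strict regularity --- that $\bm q_{-i}\mapsto Q(\bm q_{-i})$ is measurable so that Fubini's theorem applies, and that the boundary term $(1-q)\widetilde{v}_i(q)$ vanishes at $q=1$ while the integration by parts is valid (which needs $\widetilde{v}_i$ to be absolutely continuous). This is exactly where the ER assumption that $\widetilde{\psi}_i$, and hence $\widetilde{v}_i$, is strictly increasing and differentiable is consumed; in the inverse-Lipschitz regime one would first pass to a differentiable proxy via the ``lifting'' technique and then run the same argument.
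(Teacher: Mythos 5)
Your derivation is correct, but note that the paper never proves this statement itself: \Cref{prop:expected-payment-BROA} is imported verbatim from \citet{DBLP:journals/ior/BalseiroKMM21} as a black box, so there is no in-paper proof to match against. What you have written is the standard Myerson payment identity carried out on quantile space: the threshold structure of the winning set $[Q(\bm q_{-i}),1]$ (valid under ER since $\gamma_i^*\widetilde\psi_i$ is nondecreasing in $q_i$ and ties are null), the observation that the BROA payment rule charges the critical bid $\widetilde v_i(Q(\bm q_{-i}))$ independently of $q_i$, Fubini, and the pointwise identity $\int_t^1\widetilde\psi_i(q)\diffe q=(1-t)\widetilde v_i(t)$ by parts. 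That last identity is exactly the inverse relation the paper itself uses in \Cref{lem:true-virtual-true} (where $s(x)=\int_x^1 r(z)\diffe z/(1-x)$ recovers the bid from the virtual bid), so your argument is fully consistent with the paper's machinery; the measure-theoretic caveats you flag (measurability of $Q$, null tie sets, vanishing boundary term) are the right ones and are all discharged by the ER hypothesis under which the proposition is invoked.
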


With the help of \Cref{prop:expected-payment-BROA}, we give the expected utility of any buyer $1\leq i\leq n$ under both auction mechanisms in the following: 
\begin{align}
    u_i^{{\rm BRO}}(\bm \gamma^*, \widetilde{\bm v}^{\rm (1)}, v_i) = &  \int_0^1 \left(v_i(q_i) - \widetilde{\psi}^{\rm (1)}_{i} (q_{i})\right)\cdot \left(\int_{\bm q_{-i}} \Phi_i(\bm \gamma^*, \widetilde{\bm \psi}^{\rm (1)}, \bm q)\diffe \bm q_{-i}\right)\diffe q_{i}. \label{eq:buyer-utility-BROA} \\
    u_i^{{\rm eBDF}}(\bm \alpha^{\rm e}, \widetilde{\bm v}^{\rm (2)}, v_i) = & \int_0^1 \left(v_i(q_i) - \widetilde{v}_{i}^{\rm (2)}(q_{i})\right) \cdot \left(\int_{\bm q_{-i}} \Phi_i(\bm \alpha^{\rm e}, \widetilde{\bm v}^{\rm (2)}, \bm q)\diffe \bm q_{-i}\right)\diffe q_{i}. \label{eq:buyer-utility-eBDFPA}
\end{align}
Further, the seller's expected revenue in these two mechanisms are correspondingly the following: 
\begin{align}
    w^{{\rm BRO}}(\bm \gamma^*, \widetilde{\bm v}^{\rm (1)}) = &  \sum_{i = 1}^n \left(\int_0^1 \left(\widetilde{\psi}^{\rm (1)}_{i} (q_{i}) - \lambda\right)\cdot \left(\int_{\bm q_{-i}} \Phi_i(\bm \gamma^*, \widetilde{\bm \psi}^{\rm (1)}, \bm q)\diffe \bm q_{-i}\right)\diffe q_{i}\right). \label{eq:seller-revenue-BROA} \\
    w^{{\rm eBDF}}(\bm \alpha^{\rm e}, \widetilde{\bm v}^{\rm (2)}) = & \sum_{i = 1}^n \left(\int_0^1 \left(\widetilde{v}_{i}^{\rm (2)}(q_{i}) - \lambda\right) \cdot \left(\int_{\bm q_{-i}} \Phi_i(\bm \alpha^{\rm e}, \widetilde{\bm v}^{\rm (2)}, \bm q)\diffe \bm q_{-i}\right)\diffe q_{i}\right). \label{eq:seller-revenue-eBDFPA}
\end{align}

We now start our main part of the proof. 

\paragraph{From BROA to eBDFPA.} In this part, recall that the virtual bidding qf $\widetilde{\psi}^{\rm (1)}_i$ is strictly increasing, differentiable, and inverse Lipschitz continuous for any $1\leq i\leq n$. 
Without loss of generality, we suppose that $\widetilde{\psi}^{\rm (1)}_i(1) \geq \lambda$ holds for any $1\leq i\leq n$, or that the corresponding buyer has no chance to win any item at all. For the injective mapping, we let $\widetilde{v}_i^{\rm (2)}$ to be a modification of $\widetilde{\psi}_i^{\rm (1)}$ for any $1\leq i\leq n$. 
Specifically, if $\widetilde{\psi}_i^{\rm (1)}(0) \geq 0$, we easily let $\widetilde{v}_i^{\rm (2)} = \widetilde{\psi}_i^{\rm (1)}$. Otherwise, we will ``lift'' the negative part $\widetilde{\psi}_i^{\rm (1)}$ so that the resulting function is non-negative, with no loss on other required properties. 

In particular, we construct $\widetilde{v}_i^{(2)}$ by replacing the head part of $\widetilde{\psi}_i^{\rm (1)}$ with an exponential function or a trigonometric function, depending on whether $\widetilde{\psi}_i^{\rm (1)}$ is flat at the joint point. More specifically, let $q_i^0 \in (0, 1)$ be the point satisfying $\widetilde{\psi}_i^{\rm (1)}(q_i^0) = \lambda / 2$ (we will argue later that such $q_i^0$ exists in the proof of \Cref{lem:injective-mapping}), and $k := \left(\widetilde{\psi}_i^{\rm (1)}\right)'(q_i^0) \geq 0$. If $k > 0$, let $\widetilde{v}_i^{(2)}$ be as follows: 
\begin{align*}
    \widetilde{v}_i^{(2)}(q_i) \coloneqq 
    \begin{cases}
        a_1\cdot \exp\{a_2 q_i\} & 0 \leq q_i < q_i^0\\
        \widetilde{\psi}_i^{\rm (1)}(q_i) & q_i^0 \leq q_i \leq 1,
    \end{cases}
\end{align*}
where $a_1 = \lambda \cdot \exp\{-2k q_i^0/ \lambda\} / 2$ and $a_2 = 2k / \lambda$. Such ``lifting'' process is plotted in \Cref{fig:lifting}. Otherwise when $k = 0$, we use a trigonometric function instead to "lift" the negative part and define $\widetilde{v}_i^{(2)}$ as:
\begin{align*}
    \widetilde{v}_i^{(2)}(q_i) \coloneqq 
    \begin{cases}
        (\lambda / 4) \cdot \sin(a_3 q_i + \pi/4) + \lambda / 4 & 0 \leq q_i < q_i^0\\
        \widetilde{\psi}_i^{\rm (1)}(q_i) & q_i^0 \leq q_i \leq 1,
    \end{cases}
\end{align*}
where $a_3 = \pi / (4q_i^0)$. 

\begin{figure}[!t]
    \centering
    \begin{tikzpicture}[scale=0.8]
    \begin{axis}[
        xmin=0.25, xmax=3, 
        ymin=-0.55, ymax=3,
        axis lines=middle,
        ticks=none,
        enlargelimits={abs=0.75},
        axis line style={-{Stealth[angle'=45]}},
        samples=100,
        domain=-1:3
    ]
        
        \node[label={225:{$O$}}, circle, fill, inner sep=1pt] at (axis cs:0,0) {};
        \addplot[very thick, smooth, draw=blue, tension=0.5]
            coordinates{
                (3-0,3) (3-0.1,2.75) (3-0.3,1.75) (3-0.5,1.25) (3-1,1) (3-1.25,0.5) (3-2,0.25) (3-2.25,-0.25) (3-2.5,-0.4) (3-2.6,-0.75) (3-3,-1)
            };
        
        \node[circle, fill, inner sep=1pt] at (axis cs:3-1.25,0.5) {};
        \node[label={180:{$\lambda/2$}}, circle, fill, inner sep=1pt] (lambda) at (axis cs:0,0.5) {};
        \node[label={270:{$q_i^0$}}, circle, fill, inner sep=1pt] (q_i0) at (axis cs:3-1.25,0) {};
        \draw[dashed] (lambda) -- (3-1.25,0.5) -- (q_i0);
        \node at (3-2,2) {$\widetilde{\psi}_i^{(1)}(\cdot)$};

        \node[label={270:{1}}, circle, fill, inner sep=1pt] (1) at (axis cs:3,0) {};

        \node[label={180:{$\psi$}}] at (axis cs:0,3.5) {};
        \node[label={270:{$q_i$}}] at (axis cs:3.5,0) {};
    \end{axis}

    {\tikzset{decoration={snake, segment length=8, amplitude=0.9, post=lineto, post length=2pt}}
    \draw[decorate, -{Stealth[angle'=45]}] (6.5,3.25) -- node[above] {``Lifting''} (9,3.25);}

    \begin{scope}[shift={(9.5,0)}]
        \begin{axis}[
            xmin=0.25, xmax=3, 
            ymin=-0.55, ymax=3,
            axis lines=middle,
            ticks=none,
            enlargelimits={abs=0.75},
            axis line style={-{Stealth[angle'=45]}},
            samples=100,
            domain=-1:3
        ]
            
            \node[label={225:{$O$}}, circle, fill, inner sep=1pt] at (axis cs:0,0) {};
            \addplot[very thick, smooth, draw=\emphasizeColor, tension=0.5]
                coordinates{
                    (3-0,3) (3-0.1,2.75) (3-0.3,1.75) (3-0.5,1.25) (3-1,1) (3-1.25,0.5) (3-2,0.1) (3-3,0.05)
                };
            
            \node[circle, fill, inner sep=1pt] at (axis cs:3-1.25,0.5) {};
            \node[label={180:{$\lambda/2$}}, circle, fill, inner sep=1pt] (lambda) at (axis cs:0,0.5) {};
            \node[label={270:{$q_i^0$}}, circle, fill, inner sep=1pt] (q_i0) at (axis cs:3-1.25,0) {};
            \draw[dashed] (lambda) -- (3-1.25,0.5) -- (q_i0);
            \node at (3-2,2) {$\widetilde{v}_i^{(2)}(\cdot)$};
    
            \node[label={270:{1}}, circle, fill, inner sep=1pt] (1) at (axis cs:3,0) {};
    
            \node[label={180:{$v$}}] at (axis cs:0,3.5) {};
            \node[label={270:{$q_i$}}] at (axis cs:3.5,0) {};
        \end{axis}
    \end{scope}
\end{tikzpicture}
    \caption{The ``lifting'' process, which is adopted to construct $\widetilde{v}_i^{(2)}$ when $\widetilde{\psi}_i^{(1)}$ has negative parts. }
    \label{fig:lifting}
\end{figure}
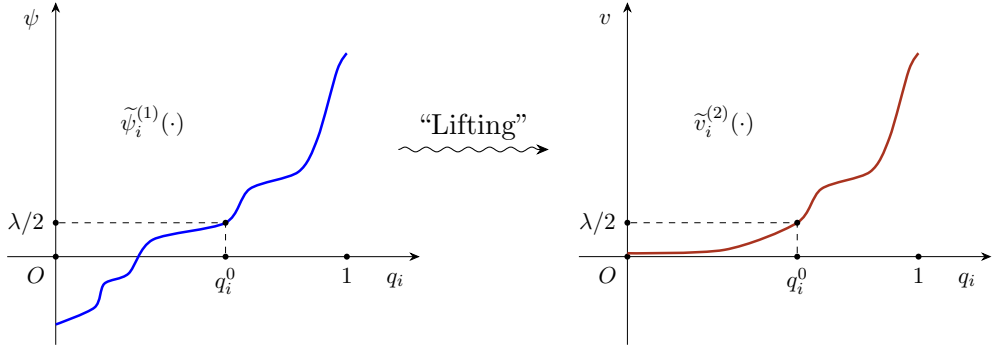

Apparently, the vale range of $\widetilde{v}_i^{\rm (2)}$ as constructed above lies within $[0, 1]$. In the following lemma, we show that the above injective mapping guarantees that $\widetilde{v}_i^{\rm (2)}$ is non-negative, strictly increasing, and differentiable. Inverse Lipschitz continuity also remains after the mapping for any $1\leq i\leq n$. 
\begin{lemma}\label{lem:injective-mapping}
    Suppose that $\widetilde{\psi}_i^{\rm (1)}$ is strictly increasing and differentiable for some $i$, then under the mapping given above, $\widetilde{v}_i^{\rm (2)}$ also satisfies these properties, and is further non-negative. Meanwhile, inverse Lipschitz continuity is also kept under the mapping. 
\end{lemma}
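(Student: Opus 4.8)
The statement is proved by checking each claimed property of $\widetilde v_i^{(2)}$ directly on the two pieces of the construction and at the joint point $q_i^0$. First dispose of the two degenerate situations: if $\widetilde\psi_i^{(1)}(0)\ge 0$ then $\widetilde v_i^{(2)}=\widetilde\psi_i^{(1)}$ and every property is inherited verbatim; if $\widetilde\psi_i^{(1)}(1)<\lambda$ the corresponding buyer never clears the reserve, so we are free to replace her bidding qf by any fixed strictly increasing, differentiable, inverse Lipschitz function with small enough values (e.g.\ $q\mapsto \tfrac\lambda2 q$) and there is nothing to check. So assume $\widetilde\psi_i^{(1)}(0)<0$ and $\widetilde\psi_i^{(1)}(1)\ge\lambda$. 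Then $\widetilde\psi_i^{(1)}$ is continuous (being differentiable) with $\widetilde\psi_i^{(1)}(0)<0<\lambda/2\le\lambda\le\widetilde\psi_i^{(1)}(1)$, so the intermediate value theorem produces $q_i^0$ with $\widetilde\psi_i^{(1)}(q_i^0)=\lambda/2$; strict monotonicity makes $q_i^0$ unique and forces $q_i^0\in(0,1)$, and differentiability gives $k:=(\widetilde\psi_i^{(1)})'(q_i^0)\ge 0$ since the difference quotients of an increasing function are nonnegative.

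\textbf{Gluing and basic properties.} The next step is to confirm that the head piece is glued to $\widetilde\psi_i^{(1)}$ in a $C^1$ fashion at $q_i^0$. A one-line substitution of the stated constants shows this: when $k>0$, $a_1 e^{a_2 q_i^0}=\tfrac\lambda2$ and $a_1 a_2 e^{a_2 q_i^0}=k$; when $k=0$, the argument $a_3 q_i^0+\pi/4$ equals $\pi/2$, so the head value is $\tfrac\lambda2$ and its derivative is $0=k$. Hence $\widetilde v_i^{(2)}$ is differentiable on all of $[0,1]$, with derivative equal to $(\widetilde\psi_i^{(1)})'$ on $[q_i^0,1]$ and to the head derivative on $[0,q_i^0)$. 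On $[0,q_i^0)$ the head is either $a_1 e^{a_2 q_i}$ with $a_1,a_2>0$, or $\tfrac\lambda4\sin(a_3 q_i+\pi/4)+\tfrac\lambda4$ with argument confined to $[\pi/4,\pi/2)$; in both cases the value is strictly positive and the derivative strictly positive. Combined with $\widetilde\psi_i^{(1)}\ge\lambda/2>0$ and its strict monotonicity on $[q_i^0,1]$, plus continuity at $q_i^0$, this yields that $\widetilde v_i^{(2)}$ is positive, strictly increasing, and differentiable on $[0,1]$ (monotonicity across $q_i^0$ follows because the head strictly increases up to the common value $\lambda/2$, which $\widetilde\psi_i^{(1)}$ then strictly exceeds for $q>q_i^0$).

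\textbf{Inverse Lipschitz preservation.} For a differentiable increasing $g$ on $[0,1]$, inverse Lipschitz continuity with constant $L$ is equivalent to $g'\ge L$ everywhere (one direction by taking one-sided difference quotients, the other by integrating $g'$). So if $\widetilde\psi_i^{(1)}$ is inverse Lipschitz with constant $L>0$, then in particular $k=(\widetilde\psi_i^{(1)})'(q_i^0)\ge L>0$, which means we are necessarily in the exponential branch (the trigonometric, ``flat'' branch cannot occur under the IL assumption). There $(\widetilde v_i^{(2)})'(q_i)=a_1 a_2 e^{a_2 q_i}$ is increasing in $q_i$, hence on $[0,q_i^0)$ bounded below by its value at $0$, namely $a_1 a_2 = k e^{-2kq_i^0/\lambda}>0$, while on $[q_i^0,1]$ we have $(\widetilde v_i^{(2)})'=(\widetilde\psi_i^{(1)})'\ge L$. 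Therefore $(\widetilde v_i^{(2)})'\ge \min\{L,\,k e^{-2kq_i^0/\lambda}\}>0$ on $[0,1]$, so $\widetilde v_i^{(2)}$ is inverse Lipschitz.

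\textbf{Expected main obstacle.} The construction is elementary, so the only delicate points are bookkeeping ones: (i) justifying that $q_i^0$ exists and lies strictly inside $(0,1)$ (this is where $\widetilde\psi_i^{(1)}(1)\ge\lambda$ is used, explaining the ``WLOG'' reduction), and (ii) verifying that both the zeroth and first derivatives of the two pieces agree at $q_i^0$, which is precisely what upgrades ``continuous at $q_i^0$'' to ``differentiable at $q_i^0$''. Noticing that inverse Lipschitz continuity of $\widetilde\psi_i^{(1)}$ automatically excludes $k=0$ is what keeps the last part short; without it one must also carry the trigonometric branch through the argument, but that branch is never needed for the IL-preservation claim.
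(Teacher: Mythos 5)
Your proposal is correct and follows essentially the same route as the paper's proof: existence of $q_i^0$ via continuity and the intermediate value theorem, direct verification that value and derivative of the head piece match $\widetilde{\psi}_i^{\rm (1)}$ at $q_i^0$, strict monotonicity and positivity of each branch, and the observation that inverse Lipschitz continuity forces $k>0$ so only the exponential branch matters there. Your write-up is merely a bit more explicit (e.g., the derivative lower bound $\min\{L,\,k e^{-2kq_i^0/\lambda}\}$ and the derivative characterization of inverse Lipschitz continuity) than the paper's terse argument.
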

\begin{proof}[Proof of \Cref{lem:injective-mapping}]
In the first case when $\widetilde{\psi}_i^{\rm (1)}(x)$ is non-negative on $[0, 1]$, the lemma trivially holds. For the second case, by continuity and that $\widetilde{\psi}_i^{\rm (1)}(0) < 0, \widetilde{\psi}_i^{\rm (1)}(1) \geq \lambda$, the point $q_i^0 \in (0, 1)$ such that $\widetilde{\psi}_i^{\rm (1)}(q_i^0) = \lambda / 2$ exists. By construction, $\widetilde{v}_i^{\rm (2)}(x)$ is certainly non-negative and continuous. 
We further show that $\widetilde{v}_i^{\rm (2)}(x)$ is differentiable, which clearly, reduces to demonstrate the function is differentiable at $q_i^0$. We can verify this by direct computation that $\left(\widetilde{v}_i^{\rm (2)}\right)'(q_i^0) = k$ whether $k > 0$ or $k = 0$. 
Strictly increasing monotonicity follows from that $\widetilde{\psi}_i^{\rm (1)}(x)$ is strictly increasing, and $a_1\exp\{a_2x\}$ (when $k > 0$) and $(\lambda / 4) \sin (a_3x + \pi/4) + \lambda/4$ (when $k = 0$) are both strictly increasing on $[0, q_i^0]$. 

Furthermore, when inverse Lipschitz continuity holds for $\widetilde{\psi}_i^{\rm (1)}$, then $\left(\widetilde{\psi}_i^{\rm (1)}\right)'$ has an upper bound strictly higher than 0 by definition, which leads to $k > 0$. In this case, $a_1\exp\{a_2x\}$ is also inverse Lipschitz continuous on $[0, q_i^0]$ as $a_1$, $a_2$ are both constants. As a result, $\widetilde{v}_i^{\rm (2)}(x)$ is inverse Lipschitz continuous as well. 
\end{proof}

By \Cref{lem:injective-mapping}, $\widetilde{\bm v}^{\rm (2)}$ is a valid qf profile. We now let $\bm \alpha\ = \bm \gamma^*$, and argue that $\bm \alpha$ is a budget-extracting tuple for BDFPA. 

In fact, noticing that under the injective mapping we give above, $\widetilde{v}^{\rm (2)}_i(q_i) = \widetilde{\psi}^{\rm (1)}_i(q_i)$ always holds when the value is above $\lambda / 2$. At the same time, due to the threshold effect brought by the opportunity cost $\lambda$ and that $\bm \alpha \leq 1^n$, we derive that for any quantile profile $\bm q$, we have 
\begin{gather*}
    \int_{\bm q_{-i}} \Phi_i(\bm \gamma^*, \widetilde{\bm \psi}^{\rm (1)}, \bm q)\diffe \bm q_{-i} 
    = \int_{\bm q_{-i}} \Phi_i(\bm \alpha, \widetilde{\bm v}^{\rm (2)}, \bm q)\diffe \bm q_{-i} 
\end{gather*}
holds for any $1\leq i\leq n$ and $q_i \in [0, 1]$. Meanwhile, the above equals 0 when $q_i \leq q_i^0$. 
Hence, $\bm \alpha\ = \bm \gamma^*$ leads to a budget-extracting BDFPA. Note that buyers' utility and the seller's revenue in BROA and budget-extracting BDFPA are given by \eqref{eq:buyer-utility-BROA}, \eqref{eq:buyer-utility-eBDFPA}, \eqref{eq:seller-revenue-BROA}, and \eqref{eq:seller-revenue-eBDFPA} respectively. Therefore, in any BROA with $\widetilde{\bm v}^{\rm (1)}$, there is some budget-extracting BDFPA with $\widetilde{\bm v}^{\rm (2)}$, such that for the any buyer's utilities and the seller's revenues are the same in the two auctions. 

\paragraph{From eBDFPA to BROA.} In this part, we show that for any budget-extracting BDFPA with $\widetilde{\bm v}^{\rm (2)}$, there is some BROA with $\widetilde{\bm v}^{\rm (1)}$, such that every buyer's revenue is identical in the two auctions. 

For the mapping from $\widetilde{\bm v}^{\rm (2)}$ to $\widetilde{\bm v}^{\rm (1)}$, we hope to have $\widetilde{\psi}^{\rm (1)}_i = \widetilde{v}^{\rm (2)}_i$ for any $1 \leq i \leq n$. In other words, we carefully pick $\widetilde{v}^{\rm (1)}_i$ such that for any $q_i \in [0, 1]$, $\widetilde{v}^{\rm (1)}_i(q_i) - (1 - q_i)\left(\widetilde{v}^{\rm (1)}_i\right)'(q_i) = \widetilde{v}^{\rm (2)}_i(q_i)$. The following lemma shows that such a function $\widetilde{v}^{\rm (1)}_i$ exists.

\begin{lemma}\label{lem:true-virtual-true}
    For each strictly increasing and differentiable function $r(x)$ on $[0, 1]$ that satisfies $0 < r(1) \leq 1$ and $\int_0^1 r(x)\diffe x \geq 0$, there exists a non-negative, strictly increasing and differentiable function $s(x)$ on $[0, 1]$ such that $r(x) = s(x) - (1 - x)s'(x)$ and $s(x) \leq 1$ hold for any $x \in [0, 1]$. Meanwhile, inverse Lipschitz continuity is also kept under the mapping.
\end{lemma}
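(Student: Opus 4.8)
The plan is to solve the first-order linear ODE $r(x) = s(x) - (1-x)s'(x)$ explicitly and then verify the resulting function has all the required properties. The key observation is that
\[
\frac{\diffe}{\diffe x}\bigl[(1-x)s(x)\bigr] = -\bigl(s(x) - (1-x)s'(x)\bigr) = -r(x),
\]
so integrating from $x$ to $1$ and using $(1-1)s(1)=0$ pins down the candidate
\[
s(x) \coloneqq \frac{1}{1-x}\int_x^1 r(t)\diffe t \quad (0 \le x < 1), \qquad s(1) \coloneqq r(1).
\]
Thus $s(x)$ is exactly the average of $r$ over $[x,1]$, and the remainder of the argument is verification.

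First I would check that $s$ is continuous on $[0,1]$ (the value at $1$ being the L'Hôpital limit of the quotient) and differentiable on $[0,1)$ with $r(x) = s(x) - (1-x)s'(x)$ by differentiating the quotient directly: writing $R(x) = \int_x^1 r(t)\diffe t$ gives $s'(x) = \frac{R(x) - (1-x)r(x)}{(1-x)^2} = \frac{\int_x^1 (r(t)-r(x))\diffe t}{(1-x)^2}$. Since $r$ is strictly increasing, this numerator is strictly positive for $x<1$, so $s'>0$ on $[0,1)$; and since $s(x)$ is an average of $r$ over $[x,1]$ it is strictly smaller than $r(1)=s(1)$ for $x<1$, which upgrades strict monotonicity to all of $[0,1]$. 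The same averaging bound gives $s(x) \le r(1) \le 1$. Differentiability at the endpoint $x=1$ I would obtain by L'Hôpital on $\frac{s(x)-s(1)}{x-1}$, getting $s'(1)=r'(1)/2 \ge 0$; there the ODE is trivial since both sides equal $r(1)$.

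For non-negativity I would argue by contradiction: if $\int_x^1 r(t)\diffe t < 0$ for some $x$, then as $r$ is increasing, $r(x) \le \frac{1}{1-x}\int_x^1 r(t)\diffe t < 0$, hence $\int_0^x r(t)\diffe t \le x\,r(x) \le 0$, and so $\int_0^1 r(t)\diffe t < 0$, contradicting $\int_0^1 r \ge 0$. Hence $s \ge 0$. Finally, to preserve inverse Lipschitz continuity: if $r$ is inverse Lipschitz with constant $L$, then $r'(t) \ge L$, so $r(t)-r(x) \ge L(t-x)$ and therefore $s'(x) \ge \frac{L}{(1-x)^2}\int_x^1 (t-x)\diffe t = \frac{L}{2}$, i.e., $s$ is inverse Lipschitz with constant $L/2$.

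The main obstacle I anticipate is the endpoint analysis at $x=1$: showing $s$ is genuinely differentiable there (not merely one-sided continuous), that no pathology arises when $r'(1)=0$, and that the ODE together with all the monotonicity and boundedness claims hold uniformly up to the boundary. The interior behavior, by contrast, is a routine verification of the explicit averaging formula.
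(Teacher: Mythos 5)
Your proposal is correct and follows essentially the same route as the paper: you define $s(x)$ as the average $\frac{1}{1-x}\int_x^1 r(t)\diffe t$ with $s(1)=r(1)$, verify the ODE by direct differentiation, and obtain strict monotonicity, non-negativity (via $\int_0^1 r \geq 0$), the bound $s \leq 1$, and inverse Lipschitz continuity from the identity $s'(x)=\int_x^1\bigl(r(t)-r(x)\bigr)\diffe t/(1-x)^2 \geq L/2$, exactly as in the paper's proof. Your additional endpoint analysis at $x=1$ (L'H\^opital giving $s'(1)=r'(1)/2$) only makes explicit a step the paper states as ``differentiability holds naturally.''
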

\begin{proof}[Proof of \Cref{lem:true-virtual-true}]
We define the function $s(x)$ as follows: $s(1) = r(1)$, and when $x < 1$, $s(x) = (\int_x^1 r(z)\diffe z)/(1 - x)$. Since $r(z) \leq 1$ for $t \in [0, 1]$, $s(x) \leq 1$ for each $x$. Further, differentiability of $s$ holds naturally. We now show that $s$ satisfies that $r(x) = s(x) - (1 - x)s'(x)$ for any $x \in [0, 1]$. In fact, the equality obviously holds when $x = 1$, and when $x < 1$, we have
\begin{align*}
  s(x) - (1 - x)s'(x) &= \frac{\int_x^1 r(z)\diffe z}{1 - x} - (1 - x)\left(\frac{\int_x^1 r(z)\diffe z}{1 - x}\right)' \\ 
  &= \frac{\int_x^1 r(z)\diffe z}{1 - x} - (1 - x)\left(\frac{-r(x)(1 - x) + \int_x^1 r(z)\diffe z}{(1 - x)^2}\right) = r(x). 
\end{align*}
Further, $s(x) = (\int_x^1 r(z) \diffe z)/(1 - x)$ is non-negative and strictly increasing on $[0, 1]$, which establishes as $r(z)$ is strictly increasing and $\int_0^1 r(x) \diffe x \geq 0$. 

It remains to show that $s$ is inverse Lipschitz continuous when $r$ is. To see this, we do a derivation on $s$, 
\begin{align*}
    s'(x) = \left(\frac{\int_x^1 r(z)\diffe z}{1 - x}\right)'
    = \frac{-r(x)(1 - x) + \int_x^1 r(z)\diffe z}{(1 - x)^2} 
    = \frac{\int_x^1 \left(r(z) - r(x)\right)\diffe z}{(1 - x)^2}. 
\end{align*}

By strict monotonicity and inverse Lipschitz continuity of $r$, there exists a positive constant $L$, such that $r(z) - r(x) \geq L(t - x)$ for any $0\leq z < x \leq 1$. Therefore, 
\begin{equation*}
    s'(x) = \frac{\int_x^1 \left(r(z) - r(x)\right)\diffe z}{(1 - x)^2} \geq L\cdot \frac{\int_x^1 \left(z - x\right)\diffe z}{(1 - x)^2} = L/2 > 0, 
\end{equation*}
which indicates that $s$ is inverse Lipschitz continuous as well. 
\end{proof}

\Cref{lem:true-virtual-true} demonstrates the feasibility of $\widetilde{v}^{\rm (1)}_i$. Now we let $\bm \gamma = \bm \alpha^{\rm e}$ with the latter a budget-extracting tuple for BDFPA. By the mapping, we have
\begin{gather*}
    \int_{\bm q_{-i}} \Phi_i(\bm \gamma, \widetilde{\bm \psi}^{\rm (1)}, \bm q)\diffe \bm q_{-i} 
    = \int_{\bm q_{-i}} \Phi_i(\bm \alpha^{\rm e}, \widetilde{\bm v}^{\rm (2)}, \bm q)\diffe \bm q_{-i}.  
\end{gather*}
Further, since $\widetilde{\psi}^{\rm (1)}_i = \widetilde{v}^{\rm (2)}_i$, $\bm \gamma$ is feasible for \eqref{eq:comple-slack+feasibility-BROA}. 
Therefore, by \Cref{lem:optimality-conditions}, $\bm \gamma$ is optimal for programming \eqref{eq:programming-BROA}. As a result, by \eqref{eq:buyer-utility-BROA}, \eqref{eq:buyer-utility-eBDFPA}, \eqref{eq:seller-revenue-BROA}, and \eqref{eq:seller-revenue-eBDFPA}, the proof of this direction is finished. 

Synthesizing the above two parts, we finish the proof of the essential theorem.

\subsection{Proof of \Cref{thm:same-NE-BROA-eBDFPA}}

To prove the theorem, we first show that for BROA, any buyer $i$'s bidding strategy is equivalent to one with non-negative virtual values on $[0, 1]$. Hence, we can reduce our discussion to ``well-behaved'' BROA instances and apply \Cref{lem:strong-se-ne} to finish the proof. On this side, we have the following lemma. 
\begin{lemma}\label{lem:BROA-non-negative}
    In BROA, given that each buyer's virtual bidding function is strictly increasing and  differentiable, then each buyer $i$'s bidding strategy is equivalent to a bidding strategy with non-negative virtual values on $[0, 1]$ regardless of other buyers' strategies. The same result also holds when the virtual bidding function is further constrained to be inverse Lipschitz continuous. 
\end{lemma}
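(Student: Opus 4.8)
The plan is to recycle, almost verbatim, the two constructions already built for \Cref{thm:equivalence-BROA-eBDFPA}. Fix a buyer $i$ with strictly increasing and differentiable virtual bidding qf $\widetilde{\psi}_i$ and fix every other buyer's strategy. First I would dispose of the degenerate case $\widetilde{\psi}_i(1) < \lambda$ (equivalently $\widetilde{v}_i(1) < \lambda$): then $\gamma_i^* \widetilde{\psi}_i(q_i) \le \widetilde{\psi}_i(q_i) \le \widetilde{\psi}_i(1) < \lambda$ for all $q_i$, so buyer $i$ never wins and pays nothing, and her effective virtual bid never enters the max that sets any other buyer's threshold; hence her strategy is outcome-equivalent to the constant bid $\widehat{v}_i \equiv \widetilde{v}_i(1)$, a (weakly) increasing bidding qf with constant, non-negative virtual value $\widetilde{v}_i(1)$, and the objective of \eqref{eq:programming-BROA} is literally unchanged (buyer $i$'s truncated term is identically $0$ in both cases), so $\bm{\gamma}^*$ is unchanged. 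So from now on assume $\widetilde{\psi}_i(1) \ge \lambda$.

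In the main case, if $\widetilde{\psi}_i$ is already non-negative on $[0,1]$ there is nothing to do. Otherwise I apply the exact ``lifting'' map used in the proof of \Cref{thm:equivalence-BROA-eBDFPA} (the map $\widetilde{\psi}_i^{(1)}\mapsto\widetilde{v}_i^{(2)}$): let $q_i^0\in(0,1)$ satisfy $\widetilde{\psi}_i(q_i^0)=\lambda/2$, and let $\widehat{\psi}_i$ equal $\widetilde{\psi}_i$ on $[q_i^0,1]$ and replace the negative head of $\widetilde{\psi}_i$ on $[0,q_i^0)$ by the exponential (or trigonometric, when the joint derivative vanishes) piece. By \Cref{lem:injective-mapping}, $\widehat{\psi}_i$ is non-negative, strictly increasing, differentiable, coincides with $\widetilde{\psi}_i$ wherever $\widetilde{\psi}_i\ge\lambda/2$, satisfies $0<\widehat{\psi}_i(q_i)<\lambda/2$ on $[0,q_i^0)$, and inherits inverse Lipschitz continuity from $\widetilde{\psi}_i$. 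Since $\widehat{\psi}_i$ is strictly increasing, differentiable, non-negative (so $\int_0^1\widehat{\psi}_i\ge 0$) with $0<\widehat{\psi}_i(1)=\widetilde{v}_i(1)\le 1$, \Cref{lem:true-virtual-true} produces a valid bidding qf $\widehat{v}_i\in[0,1]$ whose virtual bidding qf is exactly $\widehat{\psi}_i$, and $\widehat{v}_i$ is inverse Lipschitz when $\widehat{\psi}_i$ is. Thus $\widehat{v}_i$ is a strategy with non-negative virtual values; it remains to check it yields the same outcome.

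For the outcome equivalence, the driving observation is that the reserve $\lambda>0$ makes the lifted region invisible to the mechanism. For every $\bm{\gamma}\in[0,1]^n$ and every $q_i$ one has $\{\gamma_i\widehat{\psi}_i(q_i)-\lambda\}^+=\{\gamma_i\widetilde{\psi}_i(q_i)-\lambda\}^+$: on $[q_i^0,1]$ the two virtual qfs agree, while on $[0,q_i^0)$ both $\gamma_i\widehat{\psi}_i(q_i)$ and $\gamma_i\widetilde{\psi}_i(q_i)$ lie below $\lambda/2<\lambda$, so both truncations vanish. Hence the objective of \eqref{eq:programming-BROA} is the \emph{same} function of $\bm{\gamma}$ under $\widetilde{v}_i$ and under $\widehat{v}_i$, so the optimal multiplier $\bm{\gamma}^*$ is unchanged. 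The same pointwise identity shows buyer $i$'s effective virtual bid stays $<\lambda$ on $[0,q_i^0)$ and is unchanged elsewhere, so by the allocation rule of BROA buyer $i$'s interim allocation $x_i(\bm{\gamma}^*,\cdot,\widehat{\bm{v}})$ equals $x_i(\bm{\gamma}^*,\cdot,\widetilde{\bm{v}})$, and since on $[0,q_i^0)$ buyer $i$'s effective bid never reaches $\lambda$ it never affects any other buyer's threshold, so \emph{every} buyer's interim allocation is preserved. Finally, by \Cref{prop:expected-payment-BROA} buyer $i$'s expected payment is $\mathbb{E}_{q_i}[\widehat{\psi}_i(q_i)\,x_i(q_i)]$, and $x_i(q_i)=0$ whenever $\gamma_i^*\widehat{\psi}_i(q_i)<\lambda$ (in particular on $[0,q_i^0)$), so $\widehat{\psi}_i x_i=\widetilde{\psi}_i x_i$ pointwise and the expected payment — hence buyer $i$'s utility $\int_0^1(v_i(q_i)x_i(q_i)-p_i(q_i))\diffe q_i$, the other buyers' utilities, and the seller's revenue — is unchanged. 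The inverse Lipschitz refinement is immediate from the corresponding clauses of \Cref{lem:injective-mapping,lem:true-virtual-true}.

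The step I expect to be the main obstacle is making airtight the claim that $\bm{\gamma}^*$ is preserved: this must be argued at the level of the entire objective of \eqref{eq:programming-BROA} over all of $[0,1]^n$ (so that the two $\argmin$ sets literally coincide), not merely at a candidate optimum, and one has to ensure the pointwise identity $\{\gamma_i\widehat{\psi}_i-\lambda\}^+=\{\gamma_i\widetilde{\psi}_i-\lambda\}^+$ is not spoiled by buyers who never win — which is exactly why pinning the lifting threshold at $\lambda/2$ and peeling off the case $\widetilde{\psi}_i(1)<\lambda$ beforehand are needed.
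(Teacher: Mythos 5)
Your proposal is correct and follows essentially the same route as the paper: it invokes the ``lifting'' construction of \Cref{lem:injective-mapping} at the $\lambda/2$ threshold together with \Cref{lem:true-virtual-true} to obtain a valid non-negative virtual bidding qf, and then uses the fact that the original and lifted virtual qfs coincide wherever the virtual value exceeds $\lambda/2$ (so the reserve $\lambda$ renders the modification invisible to the allocation, payments, and the objective of \eqref{eq:programming-BROA}). Your explicit treatment of the degenerate case $\widetilde{\psi}_i(1)<\lambda$ and of the preservation of $\bm{\gamma}^*$ simply spells out details the paper leaves implicit.
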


\begin{proof}[Proof of \Cref{lem:BROA-non-negative}]
We only need to consider the case when a buyer's virtual bidding strategy is negative at some interval. In this case, we use the ``lifting'' technique as considered in \Cref{lem:injective-mapping} to map her virtual bidding function to another non-negative function, keeping the properties of strict monotonicity, differentiability, and inverse Lipschitz continuity. As shown in \Cref{lem:true-virtual-true}, the constructed function is a valid virtual bidding function. Clearly, these two functions are equivalent to the buyer in BROA since they coincide when the virtual value is above $\lambda / 2$.  
\end{proof}

With \Cref{lem:BROA-non-negative}, we can only consider BROA instances with positive virtual valuations. Consequently, we can construct the mappings for strong strategic equivalence as follows:
\begin{itemize}
    \item For a bidding strategy $\widetilde{v}$ in BROA, let $g(\widetilde{v})(x) = \widetilde{v}(x) - (1 - x)\widetilde{v}'(x)$. 
    \item For a bidding strategy $\widetilde{v}$ in eBDFPA, let $h(\widetilde{v})(x) = (\int_x^1 \widetilde{v}(z)\diffe z)/(1 - x)$. 
\end{itemize}

By the proof of \Cref{thm:equivalence-BROA-eBDFPA}, we can derive that $g$ and $h$ construct the strong strategic equivalence mappings under the assumptions. In addition, by \Cref{lem:true-virtual-true} and Newton-Leibniz theorem, we obtain that $g$ and $h$ are inverse functions of each other. Therefore, applying \Cref{lem:strong-se-ne}, we finish the proof of the theorem.

\subsection{Proof of \Cref{thm:symmetry-equivalence-FPA-SPA}}

To start the proof, we first show that in the symmetric case, each of the four auction mechanisms we consider admits a symmetric budget-extracting tuple. This result, given in \Cref{lem:symmetric-budget-extracting-tuple}, sets a preliminary for the theorem. We also notice the cases for BDSPA and PSPA complete the last part of \Cref{thm:existence-budget-extracting}. 

\begin{lemma}\label{lem:symmetric-budget-extracting-tuple}
    In the symmetric case, and when each buyer's bidding qf is inverse Lipschitz continuous, there is a corresponding symmetric budget-extracting multiplier tuple for BDFPA, PFPA, BDSPA, and PSPA. Further, for PSPA, committing to a budget-extracting mechanism maximizes the seller's revenue among all symmetric PSPA mechanisms. 
\end{lemma}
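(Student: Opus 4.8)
The plan is to handle all four mechanisms uniformly by reducing the existence part to the continuity of a single univariate function, and then to treat the PSPA revenue claim separately via an explicit monotonicity computation. Throughout we are in the symmetric regime, so all buyers share one bidding qf $\widetilde v$ (which, being inverse Lipschitz, is in particular strictly increasing) and one budget $\rho$, and I only search for a \emph{symmetric} multiplier tuple, written $(t,\dots,t)$ with $t\in[0,1]$; for each of the four mechanisms the parameter lives in $[0,1]^n$, so the coordinate upper bound $\bar{\bm\theta}_i$ of \Cref{def:budget-extracting} equals $1$. Let $q_1,\dots,q_n$ be i.i.d.\ uniform on $[0,1]$ and let $q_{(1)}\geq q_{(2)}$ be their two largest entries. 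Reading off the payment rules of \Cref{tab:mechanisms} at the tuple $(t,\dots,t)$, the representative buyer's expected payment $P(t)$ equals $\tfrac1n$ times $\mathbb E[\widetilde v(q_{(1)})\,I[\,t\widetilde v(q_{(1)})\geq\lambda\,]]$ for BDFPA, $\mathbb E[t\widetilde v(q_{(1)})\,I[\,t\widetilde v(q_{(1)})\geq\lambda\,]]$ for PFPA, $\mathbb E[\max\{\widetilde v(q_{(2)}),\lambda/t\}\,I[\,t\widetilde v(q_{(1)})\geq\lambda\,]]$ for BDSPA, and $\mathbb E[\max\{t\widetilde v(q_{(2)}),\lambda\}\,I[\,t\widetilde v(q_{(1)})\geq\lambda\,]]$ for PSPA. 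In every case $P(0)=0$, since $\lambda>0$ forces all effective bids below the reserve at $t=0$.

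The first step is to prove that each $P$ is continuous on $[0,1]$; this is where the inverse Lipschitz hypothesis is essential, because it makes $\widetilde v$ strictly increasing and hence renders the bidding distribution --- and the induced laws of $\widetilde v(q_{(1)})$ and $\widetilde v(q_{(2)})$ --- atomless. The only $t$-dependence inside these expectations that is not already continuous sits in indicators of the form $I[\widetilde v(q_{(j)})\geq\lambda/t]$, and for a fixed $t_0$ the set of quantile profiles on which such an indicator is discontinuous in $t$ is $\{\widetilde v(q_{(j)})=\lambda/t_0\}$, which is null; since the integrands are bounded, the bounded convergence theorem gives $P(t)\to P(t_0)$. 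Once continuity is available, existence follows from a two-case split: if $P(1)\leq\rho$, then $(1,\dots,1)$ is budget-feasible with every coordinate at its upper bound, hence budget-extracting in the sense of \Cref{def:budget-extracting}; and if $P(1)>\rho$, then from $P(0)=0\leq\rho<P(1)$ the intermediate value theorem produces $t^{\ast}\in(0,1)$ with $P(t^{\ast})=\rho$, so at $(t^{\ast},\dots,t^{\ast})$ the budget constraint binds for every buyer, and the tuple is again budget-extracting. Either way all four mechanisms admit a symmetric budget-extracting tuple. (For BDFPA and PFPA one could alternatively invoke \Cref{thm:properties-eBDFPA,thm:properties-ePFPA}, observing that their unique maximum feasible tuple must be permutation-invariant in the symmetric case; but the argument above treats the four cases uniformly and, notably, uses no monotonicity of $P$ in $t$ --- which genuinely fails for BDSPA, since raising $t$ \emph{lowers} the winner's payment $\max\{\widetilde v(q_{(2)}),\lambda/t\}$.)

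For the PSPA revenue claim I would first simplify the seller's revenue $w^{\mathrm{PSP}}(t)$ at the symmetric tuple $(t,\dots,t)$. Whenever the item is sold the winner pays $\max\{t\widetilde v(q_{(2)}),\lambda\}$, from which the opportunity cost $\lambda$ is subtracted; but $\max\{t\widetilde v(q_{(2)}),\lambda\}>\lambda$ already forces $t\widetilde v(q_{(1)})>\lambda$, and otherwise the net contribution is $0$, so
\[
w^{\mathrm{PSP}}(t)=\mathbb E_{\bm q}\!\left[\{\,t\widetilde v(q_{(2)})-\lambda\,\}^{+}\right],
\]
which is nondecreasing in $t$ pointwise in $\bm q$ because $\widetilde v\geq0$. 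Moreover $P^{\mathrm{PSP}}(t)=\tfrac1n\,\mathbb E[\max\{t\widetilde v(q_{(2)}),\lambda\}\,I[\,t\widetilde v(q_{(1)})\geq\lambda\,]]$ is nondecreasing in $t$, being the expectation of a product of two nonnegative nondecreasing functions of $t$. Consequently the budget-feasible set $\{t:P^{\mathrm{PSP}}(t)\leq\rho\}$ is an interval $[0,t^{\dagger}]$ whose right endpoint $t^{\dagger}$ is attained (by continuity) and is budget-extracting --- it is either $1$ or satisfies $P^{\mathrm{PSP}}(t^{\dagger})=\rho$. Since $w^{\mathrm{PSP}}$ is nondecreasing, $w^{\mathrm{PSP}}(t^{\dagger})\geq w^{\mathrm{PSP}}(t)$ for every budget-feasible $t$, so committing to the budget-extracting tuple $(t^{\dagger},\dots,t^{\dagger})$ maximizes the seller's revenue among all symmetric PSPA mechanisms. (One can also check that \emph{every} symmetric budget-extracting PSPA tuple yields the same revenue: on the binding sub-interval $nP^{\mathrm{PSP}}$ is constant, and so is $\Pr[t\widetilde v(q_{(1)})\geq\lambda]$ --- atomlessness forces $\Pr[\lambda/t'\leq\widetilde v(q_{(1)})<\lambda/t]=0$ whenever $P^{\mathrm{PSP}}$ is constant on $[t,t']$ --- whence $w^{\mathrm{PSP}}=nP^{\mathrm{PSP}}-\lambda\Pr[t\widetilde v(q_{(1)})\geq\lambda]$ is constant there.)

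The hard part will be the continuity of $P(\cdot)$: this is the single point at which the inverse Lipschitz assumption does genuine work, and it needs care, because as $t$ ranges over $[0,1]$ the effective reserve $\lambda/t$ sweeps the entire support of the bidding distribution, so any atom there would insert a jump in $P$ that could both kill the existence of a budget-extracting tuple and prevent the supremum $t^{\dagger}$ from being attained in the PSPA step. Everything downstream --- the intermediate value step and the monotonicity of $t\mapsto\mathbb E[\{t\widetilde v(q_{(2)})-\lambda\}^{+}]$ --- is routine.
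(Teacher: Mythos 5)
Your proof is correct, but it takes a partly different route from the paper's. The paper splits the four mechanisms: for BDFPA and PFPA it invokes \Cref{thm:properties-eBDFPA,thm:properties-ePFPA} (a maximum feasible tuple exists and is budget-extracting) and argues by contradiction that this maximum tuple must be symmetric, while for BDSPA and PSPA it writes the symmetric payment as a function of the common multiplier, asserts it is (increasing and) Lipschitz continuous there, and applies an intermediate-value argument; the PSPA revenue claim is then obtained from monotonicity of both the payment and the revenue in $\xi_0$. You instead treat all four mechanisms uniformly through the one-variable map $t\mapsto P(t)$, proving continuity via atomlessness of the bid distribution (guaranteed by inverse Lipschitz continuity) plus bounded convergence, and then using only the intermediate value theorem; this avoids the structural theorems for BDFPA/PFPA and, notably, avoids any monotonicity claim for BDSPA --- a sound precaution, since the paper's passing assertion that the symmetric BDSPA payment is increasing in $\mu_0$ is questionable (it is the revenue of a second-price auction with reserve $\lambda/\mu_0$, which is typically unimodal rather than monotone in the reserve), although the paper's existence argument never actually relies on that monotonicity. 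Your PSPA step matches the paper's monotonicity argument but is cleaner, via the identity $w^{\mathrm{PSP}}(t)=\mathbb{E}\bigl[\{t\widetilde v(q_{(2)})-\lambda\}^{+}\bigr]$, and goes slightly further by verifying that \emph{every} symmetric budget-extracting tuple, not only the largest feasible one, attains the maximal revenue --- a degenerate case (a flat stretch of the payment curve at level $\rho$) that the paper's proof silently ignores.
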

\begin{proof}[Proof of \Cref{lem:symmetric-budget-extracting-tuple}]
We prove the results for these four mechanisms one by one. For simplicity, we suppose that all buyers' common bidding qf is $\widetilde{v}_0$, and their common budget is $\bm \rho_0$. 
    
\paragraph{For BDFPA and PFPA.} We prove this by contradiction. By \Cref{thm:properties-eBDFPA}, there exists a maximum budget-extracting tuple $\bm \alpha^{\max}$ for BDFPA. Suppose otherwise $\bm \alpha^{\max}$ is not symmetric. Then again by \Cref{thm:properties-eBDFPA} and symmetry, $(\max (\bm \alpha^{\max}))^n$ is also budget-extracting, which contradicts the optimality of $\bm \alpha^{\max}$. The case for PFPA is similar concerning \Cref{thm:properties-ePFPA}. 

\paragraph{For BDSPA.} Under the given conditions, when the multiplier tuple is symmetric (suppose, to be $(\mu_0)^n$), the payment of each buyer is the following:
\begin{equation*}
    p^{\rm BDS}((\mu_0)^n, \widetilde{\bm v}_0) =  \int_{[0, 1]^n} \max_{i' \neq i}\left\{\widetilde{v}_0(q_{i'}), \frac{\lambda}{\mu_0}\right\} \cdot \Phi_i((\mu_0)^n, \widetilde{\bm v}_0, \bm q)\diffe \bm q, 
\end{equation*}
which certainly, is an increasing Lipschitz continuous function
of $\mu_0$ when $\mu_0 > \lambda / \widetilde{v}_0 (1) > 0$. 
Therefore, if $\sup\{p^{\rm BDS}((\mu_0)^n, \widetilde{\bm v}_0)\mid \mu_0 \in [0, 1]\} < \rho_0$, then by definition, $1^n$ is the unique symmetric budget-extracting tuple. Otherwise, there exists some $0 < \mu_0^* < 1$ such that $p^{\rm BDS}((\mu_0)^n, \widetilde{\bm v}_0) = \rho_0$ by continuity, 
from which we derive that $(\mu_0^*)^n$ is a symmetric budget-extracting tuple. 

\paragraph{For PSPA.} At last, for PSPA, we notice that with symmetric multipliers $(\xi_0)^n$, the payment of each buyer is 
\begin{equation*}
    p^{\rm PS}((\xi_0)^n, \widetilde{\bm v}_0) =  \int_{[0, 1]^n} \max_{i' \neq i}\left\{\xi_0\widetilde{v}_0(q_{i'}), \lambda\right\} \cdot \Phi_i((\xi_0)^n, \widetilde{\bm v}_0, \bm q)\diffe \bm q, 
\end{equation*}
and is increasingly Lipschitz continuous on $\xi_0$. Therefore, if $p^{\rm PS}(1) < \rho_0$, then by definition, $1^n$ is the unique symmetric budget-extracting tuple. Otherwise, there exists some $0 < \xi_0^* < 1$ such that $p^{\rm PS}(\xi_0^*, \widetilde{\bm v}_0) = \rho_0$ by continuity, which indicates that $(\xi_0^*)^n$ is a budget-extracting tuple. 

Further, the seller's expected revenue under PSPA is: 
\begin{align}
    w^{{\rm PS}}((\xi_0)^n, \widetilde{\bm v}_0) = & \sum_{i = 1}^n \left(\int_0^1 \left(\max_{i' \neq i}\left\{\xi_0\widetilde{v}_{0}(q_{i'}), \lambda\right\} - \lambda\right) \cdot \left(\int_{\bm q_{-i}} \Phi_i((\xi_0)^n, \widetilde{\bm v}_0, \bm q)\diffe \bm q_{-i}\right)\diffe q_{i}\right). \label{eq:symmetric-seller-revenue-ePSPA}
\end{align}
Clearly, this is an increasing function of $\xi_0$. Combining that the payment function is also increasing of $\xi_0$, we derive that a budget-extracting symmetric PSPA maximizes the seller's revenue among all symmetric PSPAs. 

Here, we should mention that the results for PFPA are also given in~\citet{DBLP:journals/ior/BalseiroKMM21}. 
\end{proof}

For the rest of the proof, we mainly focus on two parts, respectively, on the strategic equivalence between eBDFPA and ePFPA, and between eBDFPA and eBDSPA. The case between ePFPA and ePSPA is similar to the latter, which we omit. Therefore, we are reduced to the following two theorems. 
\begin{theorem}\label{thm:symmetry-equivalence-FPA}
     In the symmetric case, when all buyers' identical bidding qf is inverse Lipschitz continuous, under the symmetric budget-extracting parameter vector, eBDFPA and ePFPA are strategically equivalent. 
\end{theorem}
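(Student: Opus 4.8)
The plan is to exhibit the pair of mappings $G,H$ demanded by \Cref{def:strategic-equivalence}, exploiting the fact that in the symmetric case both mechanisms reduce to ``award the item to the highest-quantile bidder, provided a quantile threshold is cleared.'' First I would record, via \Cref{lem:symmetric-budget-extracting-tuple,thm:properties-eBDFPA,thm:properties-ePFPA}, that eBDFPA admits a symmetric budget-extracting multiplier $\alpha^{\max}$ (with all budget-extracting BDFPAs inducing the same per-buyer payment) and ePFPA admits the unique symmetric budget-extracting multiplier $\beta^{\max}$. Because the common bidding qf is inverse Lipschitz it is strictly increasing, so ties are null; hence in eBDFPA the interim allocation is $x_0(q)=q^{n-1}I[q\ge q_B]$ with threshold $q_B:=\inf\{q:\alpha^{\max}\widetilde v_0(q)\ge\lambda\}$, the payment-on-winning is $\widetilde v_0(q)$, and the expected payment is $\int_{q_B}^1\widetilde v_0(q)q^{n-1}\diffe q$, equal to $\rho_0$ when $\alpha^{\max}<1$ and at most $\rho_0$ when $\alpha^{\max}=1$; the analogous description holds for ePFPA with threshold $q_P:=\inf\{q:\beta^{\max}\bar v_0(q)\ge\lambda\}$ and payment-on-winning $\beta^{\max}\bar v_0(q)$. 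I would then note the reduction that makes the rest mechanical: writing $u_0=\int x_0v_0-\int p_0$ and $w=n(\int p_0-\lambda\int x_0)$ and testing against constant and indicator value functions, two symmetric profiles (one per mechanism) give the same utility--revenue profile for every value profile iff their interim allocations agree a.e.\ and their per-buyer expected payments agree.

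For $G$ (from eBDFPA to ePFPA): given $\widetilde v_0$, let $g(\widetilde v_0)$ coincide with $\widetilde v_0$ on $[q_B,1]$ and, on $[0,q_B)$, be a short increasing linear ramp with values in $[0,\lambda/2)$ (no modification when $q_B=0$). This $g(\widetilde v_0)$ lies in $[0,1]$, is increasing, and is inverse Lipschitz (any jump at $q_B$ is upward, which only helps), hence a legitimate bidding qf. The crux is that running ePFPA on $g(\widetilde v_0)$ with multiplier $1^n$ produces expected payment $\int_{q_B}^1\widetilde v_0(q)q^{n-1}\diffe q\le\rho_0$, so $1^n$ is budget-feasible; by maximality $\beta^{\max}=1^n$, which by \Cref{thm:properties-ePFPA} is the budget-extracting tuple. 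Moreover its threshold is $q_B$ (below $q_B$ the bid is $<\lambda$, above it the bid is $\widetilde v_0(q)\ge\lambda/\alpha^{\max}\ge\lambda$), so the interim allocation is $q^{n-1}I[q\ge q_B]$ and the payment-on-winning is $\widetilde v_0(q)$, matching eBDFPA; the expected payments also coincide, so the utility--revenue profiles coincide for every value profile. For $H$ (from ePFPA to eBDFPA): given $\bar v_0$ with budget-extracting $\beta^{\max}\in(0,1]$ (positivity because $\rho_0>0$), set $h(\bar v_0)=\beta^{\max}\bar v_0$; it is in $[0,1]$, increasing, and inverse Lipschitz. Running eBDFPA on it with multiplier $1^n$ gives payment-on-winning $\beta^{\max}\bar v_0(q)$, threshold $\inf\{q:\beta^{\max}\bar v_0(q)\ge\lambda\}=q_P$, and expected payment $\int_{q_P}^1\beta^{\max}\bar v_0(q)q^{n-1}\diffe q\le\rho_0$, so $\alpha^{\max}=1^n$ is budget-extracting for $h(\bar v_0)$ and the eBDFPA outcome equals the ePFPA outcome. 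Taking $G((\widetilde v_0)^n):=(g(\widetilde v_0))^n$ and $H((\bar v_0)^n):=(h(\bar v_0))^n$ then yields the strategic equivalence.

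The step I expect to be the main obstacle is the (non-obvious) realization that the target mechanism reproduces the source outcome \emph{with no shading}, i.e.\ at multiplier $1$: the naive attempt to match outcomes by rescaling the bid ($g(\widetilde v_0)=\widetilde v_0/\beta_0$) fails precisely because it moves the quantile threshold whenever $\alpha^{\max}<1$, so one must instead push the threshold by editing the bid \emph{below} it and let the reserve price do the cutting. Once committed to multiplier $1$ in the target, the remaining work is bookkeeping: that the constructed bids stay in $[0,1]$ and stay inverse Lipschitz across the possibly-discontinuous join at the threshold, that $q_B$ (resp.\ $q_P$) is neither $0$ nor $1$ in the budget-binding case, and --- most importantly --- that multiplier $1$ is genuinely budget-extracting for the constructed bid, which is where the maximality and uniqueness parts of \Cref{thm:properties-eBDFPA,thm:properties-ePFPA} are essential.
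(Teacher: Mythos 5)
Your construction is correct, but it takes a genuinely different route from the paper's. The paper maps eBDFPA to ePFPA by fixing a multiplier $\beta_0<1$ (close to $1$) and then building the image bid via an intermediate-value argument on a parametrized family (a power-law or linear segment above the threshold $q_0$, an exponential ramp below it) so that the \emph{paced expected payment exactly equals} $\rho_0$, i.e.\ budget-extracting is achieved through a binding budget; the reverse direction is analogous with $\alpha_0=\beta_0$. You instead land the target mechanism at multiplier exactly $1$, where budget-extracting holds automatically because the parameter is at its cap, and you reproduce the source outcome \emph{pointwise}: for $G$ you keep the bid unchanged above the eBDFPA threshold and push the bid below $\lambda$ underneath it so the reserve does the exclusion, and for $H$ you fold the pacing multiplier into the bid ($h(\bar v_0)=\beta^{\max}\bar v_0$). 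This buys a shorter argument — no case analysis, no intermediate value theorem, only budget \emph{feasibility} of the image (not exact budget binding) is needed, with \Cref{thm:properties-ePFPA} (uniqueness/maximality) and the "parameter at its upper limit" clause of \Cref{def:budget-extracting} certifying that the seller's rule indeed uses $1^n$; it also matches payments on every quantile rather than only in expectation. What the paper's heavier construction buys is smooth (differentiable, jump-free) image bids, which is needed in the BROA mapping but not for this symmetric theorem, where inverse Lipschitz qfs may be discontinuous. Two small points you should tighten: (i) define $g$ at the junction by putting the single point $q_B$ into the ramp (or capping the ramp below $\widetilde v_0(q_B)$), since $\widetilde v_0(q_B)$ itself could lie below $\lambda/2$ when $\widetilde v_0$ jumps at $q_B$, and monotonicity of the reported qf must hold everywhere, not just a.e.; this changes nothing interim. (ii) Like the paper, you implicitly use that the outcome is independent of \emph{which} symmetric budget-extracting BDFPA tuple the seller's rule selects (they need not be unique); this follows from statement 4 of \Cref{thm:properties-eBDFPA} plus the observation that in the symmetric case a different threshold would strictly change the payment — worth one sentence.
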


\begin{theorem}\label{thm:symmetry-equivalence-BD}
     In the symmetric case, when all buyers' identical bidding qf is inverse Lipschitz continuous, under the symmetric budget-extracting parameter vector, eBDFPA and eBDSPA are strategically equivalent. 
\end{theorem}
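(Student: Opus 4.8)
The plan is to build the two outcome-preserving maps demanded by \Cref{def:strategic-equivalence}, taking advantage of the fact that in the symmetric case with a symmetric multiplier both mechanisms collapse to a reserve-price auction. By \Cref{lem:symmetric-budget-extracting-tuple}, eBDFPA and eBDSPA each admit a symmetric budget-extracting multiplier, \((\alpha_0)^n\) and \((\mu_0)^n\) respectively; and since every buyer uses the same increasing bidding qf, in either format the item goes to the buyer with the largest quantile whenever her shaded bid reaches \(\lambda\). Hence an eBDFPA instance is captured by a common bidding qf \(\widetilde u\), a threshold quantile \(q^* = \inf\{q : \widetilde u(q) \ge \lambda/\alpha_0\}\), and the rule ``winner pays \(\widetilde u(q)\)'', while an eBDSPA instance is captured by \(\widetilde v\), a threshold \(q^* = \inf\{q:\widetilde v(q)\ge\lambda/\mu_0\}\), and ``winner pays \(\max\{\widetilde v(q_{(2)}),\lambda/\mu_0\}\)'', \(q_{(2)}\) being the second-largest quantile. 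In both cases the interim allocation is \(x(q)=q^{n-1}\,\mathbf 1[q\ge q^*]\), so, with the fixed value qf, the utility--revenue profile is a function of the pair \((q^*,\Pi)\), \(\Pi\) being the per-buyer expected payment. It then suffices to exhibit maps (both ways) that preserve the induced outcome.

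For the direction eBDSPA \(\to\) eBDFPA, I send \(\widetilde v\) (with its symmetric budget-extracting \(\mu_0\), threshold \(q^*\)) to the qf whose value at \(q\ge q^*\) is the eBDSPA payment conditioned on winning at quantile \(q\),
\[
  \widetilde u(q) \;=\; \frac{n-1}{q^{n-1}}\int_0^{q}\max\!\big\{\widetilde v(z),\,\lambda/\mu_0\big\}\,z^{n-2}\,\diffe z, \qquad q\ge q^*,
\]
extended below \(q^*\) by any increasing inverse-Lipschitz arc rising to \(\widetilde u(q^*)=\lambda/\mu_0\), and I run eBDFPA with multiplier \(\mu_0\). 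The verifications are: \(\widetilde u\) is a legitimate bidding qf --- increasing because \(\max\{\widetilde v(\cdot),\lambda/\mu_0\}\) is non-decreasing while the weight \(\diffe(z/q)^{n-1}\) shifts right as \(q\) grows, valued in \([0,1]\), and inverse-Lipschitz by an averaging estimate against the inverse-Lipschitz constant of \(\widetilde v\); the identity \(\widetilde u(q^*)=\lambda/\mu_0\) makes \(q^*\) the threshold of \(\widetilde u\) under multiplier \(\mu_0\); the interim payments coincide pointwise, hence so do the expected payments, and with \Cref{thm:properties-eBDFPA} (all budget-extracting BDFPAs yield the same payment) this shows \(\mu_0\) is budget-extracting for \(\widetilde u\). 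Since both the interim allocation and the expected payment are preserved, so are each buyer's utility and the seller's revenue.

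For the direction eBDFPA \(\to\) eBDSPA, I invert this transformation. Given \(\widetilde u\) with symmetric budget-extracting \(\alpha_0\) and threshold \(q^*\), I first apply a ``lifting''-style reduction (as in the proof of \Cref{thm:equivalence-BROA-eBDFPA}, cf.\ \Cref{lem:injective-mapping,lem:true-virtual-true}) to replace \(\widetilde u\) by an outcome-equivalent, strictly increasing, differentiable, inverse-Lipschitz qf with the same induced threshold and expected payment --- legitimate precisely because the outcome depends on \(\widetilde u\) only through \((q^*,\Pi)\). On \((q^*,1]\) I then solve the first-order relation \(\widetilde v(q) = \widetilde u(q) + \tfrac{q}{n-1}\,\widetilde u'(q)\) (obtained by differentiating the defining integral of the forward map), complete \(\widetilde v\) on \([0,q^*)\), and pick the eBDSPA pacing multiplier \(\mu_0\) so that the effective reserve \(\lambda/\mu_0\) matches \(\widetilde v(q^*)\) and the budget-extracting condition holds. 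One checks, by an analogue of \Cref{lem:true-virtual-true}, that \(\widetilde v\) is non-negative, increasing, bounded by \(1\), and inverse-Lipschitz, that \(\mu_0\) is indeed its budget-extracting multiplier, and that the induced threshold and expected payment are again \((q^*,\Pi)\), so the outcome matches.

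The hard part will be this reverse direction. Because the eBDSPA winner pays a smoothed, conditional-expectation quantity rather than her own bid, realizing a prescribed expected payment at a prescribed threshold forces one to tune two objects simultaneously --- the pacing multiplier \(\mu_0\), which moves the effective reserve \(\lambda/\mu_0\), and the bidding qf itself, which may have to be discontinuous --- and then to certify that the budget-extracting multiplier of the resulting \(\widetilde v\) lands exactly where intended; when the naive choice of threshold cannot be met, one must instead exploit that the outcome map \((q^*,\Pi)\mapsto(u,w)\) need not be injective and re-anchor at a different threshold. Keeping the inverse-Lipschitz constants under control throughout the ``lifting'' step --- which is what makes the inversion formula applicable --- is the technical linchpin; the remaining items (validity of the constructed qf's, continuity of the payment in the multiplier used for the intermediate-value argument, and the bookkeeping that utility and revenue depend only on \((q^*,\Pi)\)) are routine.
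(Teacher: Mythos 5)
There is a genuine gap, and it is precisely the one the paper flags in the remark following its proof of \Cref{thm:symmetry-equivalence-BD}. Your forward map is the pointwise-payment-matching transform $\widetilde u(q)\,q^{n-1} = \tfrac{\lambda}{\mu_0}(q^*)^{n-1} + (n-1)\int_{q^*}^{q}\widetilde v(z)z^{n-2}\diffe z$, and your claim that $\widetilde u$ is inverse Lipschitz ``by an averaging estimate'' is false at the threshold: differentiating gives $\widetilde u'(q^*) = \tfrac{n-1}{q^*}\bigl(\widetilde v(q^*) - \lambda/\mu_0\bigr)$, which is $0$ whenever $\widetilde v$ is continuous at $q^*$ (so that $\widetilde v(q^*)=\lambda/\mu_0$); then $\widetilde u(q)-\widetilde u(q^*)=O\bigl((q-q^*)^2\bigr)$, so $\widetilde u$ exits the inverse-Lipschitz class over which the theorem (and \Cref{coro:symmetry-equivalence-four}, \Cref{lem:symmetric-budget-extracting-tuple}, \Cref{thm:properties-eBDFPA}) is stated. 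The reverse direction is worse: the inversion $\widetilde v(q)=\widetilde u(q)+\tfrac{q}{n-1}\widetilde u'(q)$ need not be increasing at all (a sharp but positive drop in $\widetilde u'$ destroys monotonicity), and the ``lifting'' of \Cref{lem:injective-mapping} only replaces the irrelevant part below the threshold --- it neither smooths $\widetilde u$ above $q^*$ (where differentiability is not even given) nor controls $\widetilde u''$ there, so the appeal to an ``analogue of \Cref{lem:true-virtual-true}'' does not apply: that lemma integrates (which preserves monotonicity), whereas your formula differentiates (which does not). The hedge about ``re-anchoring at a different threshold'' does not repair either defect.

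Your reduction of the symmetric outcome to the pair $(q^*,\Pi)$ is correct and is implicitly what the paper uses, but the paper exploits it in the opposite way: rather than matching interim payments pointwise, it constructs the target qf from scratch inside a two-parameter family (a power- or linear-type segment above $q_0$ glued to an exponential head below it), keeps the multiplier fixed ($\mu_0=\alpha_0$ in the eBDFPA-to-eBDSPA direction), and matches only the threshold and the total expected payment via payment bounds $\lambda(1-q_0^n)/n < \Pi < (1-q_0^n)/n$ and the intermediate value theorem. That cruder matching is exactly what lets the constructed qf stay strictly increasing, differentiable and inverse Lipschitz. To salvage your proposal you would have to abandon pointwise payment matching and adopt a construction of this flexible-family type.
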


We prove the above two theorems in sequence. 

\begin{proof}[Proof of \Cref{thm:symmetry-equivalence-FPA}]

We let the common bidding qf in eBDFPA be $\widetilde{v}_0^{(2)}$, and the counterpart of PFPA be $\widetilde{v}_0^{(3)}$. Meanwhile, we let some budget-extracting tuple of BDFPA and PFPA be respectively $\bm \alpha^{\rm e} = (\alpha_0)^n$ and $\bm \beta^{\rm e} = (\beta_0)^n$. Further, let the common value qf be $v_0$. Similar to the proof of \Cref{thm:equivalence-BROA-eBDFPA}, we give the expected utility of any buyer under both auction mechanisms in the following: 
\begin{align}
    u^{{\rm eBDF}}((\alpha_0)^n, \widetilde{\bm v}^{\rm (2)}_0, v_0) = & \int_0^1 \left(v_0(q_i) - \widetilde{v}_{0}^{\rm (2)}(q_{i})\right) \cdot \left(\int_{\bm q_{-i}} \Phi_i((\alpha_0)^n, \widetilde{\bm v}^{\rm (2)}_0, \bm q)\diffe \bm q_{-i}\right)\diffe q_{i}. \label{eq:symmetric-buyer-utility-eBDFPA} \\
    u^{{\rm ePF}}((\beta_0)^n, \widetilde{\bm v}^{\rm (3)}_0, v_0) = &  \int_0^1 \left(v_0(q_i) - \beta_0 \widetilde{v}^{\rm (3)}_{0} (q_{i})\right)\cdot \left(\int_{\bm q_{-i}} \Phi_i((\beta_0)^n, \widetilde{\bm v}^{\rm (3)}_0, \bm q)\diffe \bm q_{-i}\right)\diffe q_{i}. \label{eq:symmetric-buyer-utility-ePFPA}
\end{align}
Further, the seller's expected revenue in these two mechanisms are correspondingly the following: 
\begin{align}
    w^{{\rm eBDF}}((\alpha_0)^n, \widetilde{\bm v}^{\rm (2)}_0) = & \sum_{i = 1}^n \left(\int_0^1 \left(\widetilde{v}_{0}^{\rm (2)}(q_{i}) - \lambda\right) \cdot \left(\int_{\bm q_{-i}} \Phi_i((\alpha_0)^n, \widetilde{\bm v}^{\rm (2)}_0, \bm q)\diffe \bm q_{-i}\right)\diffe q_{i}\right). \label{eq:symmetric-seller-revenue-eBDFPA} \\
    w^{{\rm ePF}}((\beta_0)^n, \widetilde{\bm v}^{\rm (3)}_0) = &  \sum_{i = 1}^n \left(\int_0^1 \left(\beta_0 \widetilde{v}^{\rm (3)}_{0} (q_{i}) - \lambda\right)\cdot \left(\int_{\bm q_{-i}} \Phi_i((\beta_0)^n, \widetilde{\bm v}^{\rm (3)}_0, \bm q)\diffe \bm q_{-i}\right)\diffe q_{i}\right). \label{eq:symmetric-seller-revenue-ePFPA}
\end{align}

\paragraph{From eBDFPA to ePFPA.}
For this part, if $\alpha_0 = 1$, then by the budget constraints of BDFPA and PFPA, $\bm \beta = 1^n$ is also feasible for PFPA with $\widetilde{\bm v}^{(3)}_0 = \widetilde{\bm v}^{(2)}_0$. As a result, under the identical mapping from $\widetilde{\bm v}^{(2)}_0$ to $\widetilde{\bm v}^{(3)}_0$, we have $\beta_0 = 1$, and the two mechanisms are essentially the same. Buyers' expected utility and the seller's expected revenue face no change naturally under the mapping. 

We now consider the more general case with $\alpha_0 < 1$. Our intuition is to ``raise'' the right part of $\widetilde{v}_0^{(2)}$ to construct $\widetilde{v}_0^{(3)}$. However, we have to maintain that $\widetilde{v}_0^{(3)}(1) \leq 1$, therefore, it could be impossible to keep $\beta_0 = \alpha_0$, but rather have $\beta_0$ close to 1.

By definition of budget-extracting, each buyer exhausts her budget under $(\alpha_0)^n$. Therefore, we can define $q_0 \in [0, 1]$ as $q_0 \coloneqq \inf\{x \in [0, 1]\mid \widetilde{v}_0^{(2)}(x) \geq \lambda / \alpha_0\}$. 

Thus, the expected payment of any buyer $i$ satisfies that 
\begin{align}
    p^{\rm eBDF}((\alpha_0)^n, \widetilde{\bm v}^{(2)}_0) &=  \int_0^1 \widetilde{v}_0^{(2)} (q_i) \cdot \left(\int_{\bm q_{-i}} \Phi_i((\alpha_0)^n, \widetilde{\bm v}^{(2)}_0, \bm q)\diffe \bm q_{-i}\right)\diffe q_{i} \notag \\
    &=  \int_{q_0}^1 \widetilde{v}_0^{(2)} (x) \cdot x^{n - 1}\diffe x = \rho_0. \label{eq:payment-symmetry-eBDFPA}
\end{align}
Since $\lambda < \lambda / \alpha_0 \leq \widetilde{v}_0(x) \leq 1$ on $[q_0, 1]$, we directly derive by strict monotonicity and inverse Lipschitz continuity that
\[\lambda \cdot \frac{1 - q_0^n}{n} < \rho_0 < \frac{1 - q_0^n}{n}. \]

Now, we take $\lambda < \beta_0 < 1$ such that $\rho_0 / \beta_0 < (1 - q_0^n)/{n}$, and we use different ways to construct the function basing on the value of $\rho_0$. Concretely, we let 
\[t_0(\beta_0, q_0) \coloneqq \beta_0\cdot \int_{q_0}^1 \left(\frac{1 - \lambda / \beta_0}{1 - q_0}\cdot x + \frac{\lambda / \beta_0 - q_0}{1 - q_0}\right)x^{n - 1} \diffe x. \]

\paragraph{Case 1: $t_0(\beta_0, q_0) < \rho_0 < \beta_0 (1 - q_0^n) / n$.} For this case, for $a \in [0, 1]$, we let 
\[y_0(a) \coloneqq \beta_0 \cdot \int_{q_0}^1 \left(\left(1 - \frac{\lambda}{\beta_0}\right)\left(\frac{x - q_0}{1 - q_0}\right)^a + \frac{\lambda}{\beta_0}\right)x^{n - 1} \diffe x, \]
which is a continuous and strictly decreasing function on $[0, 1]$. Notice that $y_0(0) = \beta_0 \cdot (1 - q_0^n) / n$ and $y_0(1) = t_0(\beta_0, q_0)$, then by intermediate value theorem, there exists $a^* \in (0, 1)$ such that $y_0(a^*) = \rho_0$. And we define 
\begin{align*}
    \widetilde{v}_0^{(3)}(x) = 
    \begin{cases}
        a_1 \cdot \exp\{a_2x\} & 0 \leq x < q_0, \\
        (1 - {\lambda}/{\beta_0})\cdot ((x - q_0) / (1 - q_0))^{a^*} + {\lambda}/{\beta_0} & q_0 \leq x \leq 1,
    \end{cases}
\end{align*}
where $a_1 = \lambda/\beta_0 \cdot \exp\{-k^*\beta_0q_0 / \lambda\}$, $a_2 = k^*\beta_0 / \lambda$, and $k^* > 0$ is the right derivative of $(1 - {\lambda}/{\beta_0})\cdot ((x - q_0) / (1 - q_0))^{a^*} + {\lambda}/{\beta_0}$ on $x = q_0$. Feasibility, strict monotonicity, differentiability, and inverse Lipschitz continuity naturally follow. Further, we note that $\widetilde{v}_0^{(3)}(q_0) = \lambda / \beta_0$, and 
\[\int_{q_0}^1 \beta_0 \widetilde{v}_0^{(3)}(x) x^{n - 1} \diffe x = y_0(a^*) = \rho_0. \]

\paragraph{Case 2: $\lambda (1 - q_0^n) / n < \rho_0 \leq t_0(\beta_0, q_0)$.} For this case, for $k_ \in [0, (1 - \lambda / \beta_0) / (1 - q_0)]$, we let
\[z_0(k) \coloneqq \beta_0 \int_{q_0}^1 \left(k(x - q_0) + \frac{\lambda}{\beta_0}\right)x^{n - 1} \diffe x, \]
which is continuous and strictly increasing on $[0, (1 - \lambda / \beta_0) / (1 - q_0)]$. Since $z_0(0) = \lambda (1 - q_0^n)$ and $z_0(1 - \lambda / \beta_0) / (1 - q_0) = t_0(\beta_0, q_0)$, there exists $k^* \in (0, (1 - \lambda / \beta_0) / (1 - q_0)]$ such that $z_0(k^*) = \rho_0$. We therefore let 
\begin{align*}
    \widetilde{v}_0^{(3)}(x) = 
    \begin{cases}
        a_1 \cdot \exp\{a_2x\} & 0 \leq x < q_0, \\
        k^*(x - q_0) + {\lambda}/{\beta_0} & q_0 \leq x \leq 1,
    \end{cases}
\end{align*}
where $a_1 = \lambda/\beta_0 \cdot \exp\{-k^*\beta_0q_0 / \lambda\}$, $a_2 = k^*\beta_0 / \lambda$. Similarly, feasibility, strict monotonicity, differentiability, and inverse Lipschitz continuity hold. Further, we still have $\widetilde{v}_0^{(3)}(q_0) = \lambda / \beta_0$, and 
\[\int_{q_0}^1 \beta_0 \widetilde{v}_0^{(3)}(x) x^{n - 1} \diffe x = y_0(a^*) = \rho_0. \]

For both two cases, we derive that $\beta_0$ is the budget-extracting multiplier for PFPA under $\widetilde{v}_0^{(3)}$, and each buyer exhausts her budget. 
Further, notice that for each $1 \leq i \leq n$, 
\[\Phi_i((\alpha_0)^n, \widetilde{\bm v}^{\rm (2)}_0, \bm q) = \Phi_i((\beta_0)^n, \widetilde{\bm v}^{\rm (3)}_0, \bm q), \]
since either of them equals 1 if and only if $q_i = \max \bm q$ and $q_i \geq q_0$. 
Combining with \eqref{eq:symmetric-buyer-utility-eBDFPA}, \eqref{eq:symmetric-buyer-utility-ePFPA}, \eqref{eq:symmetric-seller-revenue-eBDFPA}, and \eqref{eq:symmetric-seller-revenue-ePFPA}, the proof of this side is finished. 

\paragraph{From ePFPA to eBDFPA.}
The proof of this side is similar. To start with, the case of $\beta_0 = 1$ is almost the same to the other side we have already discussed, and we now suppose $\beta_0 < 1$. Let $q_0 \coloneqq \inf\{x \in [0, 1]\mid \widetilde{v}_0^{(3)}(x) \geq \lambda / \beta_0\} \in [0, 1]$, which should not be confused with the $q_0$ defined in the previous part. Since $\beta_0 < 1$ and every buyer's budget is binding, the expected payment of each buyer satisfies
\begin{align}
    p^{\rm ePF}((\beta_0)^n, \widetilde{\bm v}^{(3)}_0) &=  \int_0^1 \beta_0\widetilde{v}_0^{(3)} (q_i) \cdot \left(\int_{\bm q_{-i}} \Phi_i(\beta^{\rm e}, \widetilde{\bm v}^{(3)}_0, q)\diffe \bm q_{-i}\right)\diffe q_{i} \notag \\
    &=  \int_{q_0}^1 \beta_0\widetilde{v}_0^{(3)} (x) \cdot x^{n - 1}\diffe x = \rho_0. \label{eq:payment-symmetry-ePFPA}
\end{align}

By strict monotonicity and inverse Lipschitz continuity, we have 
\[\lambda\cdot \frac{1 - q_0^n}{n} = \lambda \int_{q_0}^1 x^{n - 1}\diffe x < \rho_0 < \int_{q_0}^1 x^{n - 1}\diffe x = \frac{1 - q_0^n}{n}. \]

We now let
\[t_1(\beta_0, q_0) \coloneqq \int_{q_0}^1 \left(\frac{1 - \lambda / \beta_0}{1 - q_0}\cdot x + \frac{\lambda / \beta_0 - q_0}{1 - q_0}\right)x^{n - 1} \diffe x, \]
which sets the threshold for $\rho_0$, and we similarly construct function $\widetilde{v}_0^{(2)}$ for two different cases like the previous part, with 
\[
    y_1(a) \coloneqq \int_{q_0}^1 \left(\left(1 - \frac{\lambda}{\beta_0}\right)\left(\frac{x - q_0}{1 - q_0}\right)^a + \frac{\lambda}{\beta_0}\right)x^{n - 1} \diffe x, 
\]
and
\[
    z_1(k) \coloneqq \int_{q_0}^1 \left(k(x - q_0) + \frac{\lambda}{\beta_0}\right)x^{n - 1} \diffe x. 
\]

Under a similar reasoning, we can derive that $\beta_0$ makes an eBDFPA, i.e., $\alpha_0 = \beta_0$, and the interim allocation function is the same for these two auctions. 
Again by \eqref{eq:symmetric-buyer-utility-eBDFPA}, \eqref{eq:symmetric-buyer-utility-ePFPA}, \eqref{eq:symmetric-seller-revenue-eBDFPA}, and \eqref{eq:symmetric-seller-revenue-ePFPA}, the proof of this part is also done. 

By combining the two directions, we finish the proof of the theorem. 
\end{proof}

\begin{proof}[Proof of \Cref{thm:symmetry-equivalence-BD}]
	
Following previous notations, we let the identical bidding qf of all buyers in BDFPA be $\widetilde{v}_0^{(2)}$, and the counterpart for BDSPA be $\widetilde{v}_0^{(4)}$. Further, for BDFPA, let the maximum symmetric budget-extracting parameter tuple be $\bm \alpha^{\rm e} = (\alpha_0)^n$; while for BDSPA, let the symmetric budget-extracting multiplier vector be $\bm \mu^{\rm e} = (\mu_0)^n$. Further, let the common value qf be $v_0$. We now present any buyer's expected utility and the seller's expected revenue under eBDSPA in the following. Note that \eqref{eq:symmetric-buyer-utility-eBDFPA} and \eqref{eq:symmetric-seller-revenue-eBDFPA} already gave these two values for eBDFPA. 
\begin{align}
    u^{{\rm eBDS}}((\mu_0)^n, \widetilde{\bm v}^{\rm (4)}_0, v_0) = & \int_0^1 \left(v_0(q_i) - \max_{i' \neq i}\left\{\widetilde{v}_{0}^{\rm (4)}(q_{i'}), \frac{\lambda}{\mu_0}\right\}\right) \cdot \left(\int_{\bm q_{-i}} \Phi_i((\mu_0)^n, \widetilde{\bm v}^{\rm (4)}_0, \bm q)\diffe \bm q_{-i}\right)\diffe q_{i}. \label{eq:symmetric-buyer-utility-eBDSPA} \\
    w^{{\rm eBDS}}((\mu_0)^n, \widetilde{\bm v}^{\rm (4)}_0) = & \sum_{i = 1}^n \left(\int_0^1 \left(\max_{i' \neq i}\left\{\widetilde{v}_{0}^{\rm (4)}(q_{i'}), \frac{\lambda}{\mu_0}\right\} - \lambda\right) \cdot \left(\int_{\bm q_{-i}} \Phi_i((\mu_0)^n, \widetilde{\bm v}^{\rm (4)}_0, \bm q)\diffe \bm q_{-i}\right)\diffe q_{i}\right). \label{eq:symmetric-seller-revenue-eBDSPA}
\end{align}

The rest of the proof, i.e., the construction part, largely simulates the proof of \Cref{thm:symmetry-equivalence-FPA}. 

\paragraph{From eBDSPA to eBDFPA.} For this part, we will give a mapping from $\widetilde{v}_0^{(4)}$ to $\widetilde{v}_0^{(2)}$ meanwhile guaranteeing that $\alpha_0 = \mu_0$. Nevertheless, we first deal with extreme cases when $\mu_0 = 1$ and $\widetilde{v}_0^{(4)}(1) \leq \lambda$, which means that the item is never allocated. In this scenario, let $\widetilde{v}_0^{(2)} = \widetilde{v}_0^{(4)}$ suffices, as any buyer's utility and the seller's revenue in both auctions are always zero. 

Now we let $q_0 := \inf\{x \in [0, 1]\mid \widetilde{v}_0^{(4)}(x) \geq \lambda / \mu_0\} \in [0, 1]$. Such $q_0$ exists since the payment of each buyer is non-zero by the definition of budget-extracting. We first write the payment of buyers in BDSPA, which is 
\begin{align}
    p^{\rm eBDS}((\mu_0)^n, \widetilde{\bm v}_0^{(4)}) &=  \int_{[0, 1]^n} \max_{i' \neq i}\left\{\widetilde{v}_0^{(4)}(q_{i'}), \frac{\lambda}{\mu_0}\right\} \cdot \Phi_i((\mu_0)^n, \widetilde{\bm v}^{(4)}, \bm q)\diffe \bm q \notag \\
    &=  \int_{q_0}^1 \left(\frac{\lambda}{\mu_0}q_0^{n - 1} + \int_{q_0}^x \widetilde{v}_0^{(4)}(z)(n - 1)z^{n - 2}\diffe z\right)\diffe x \\ \label{eq:payment-symmetry-eBDSPA}
    &>  \int_{q_0}^1 \left(\frac{\lambda}{\mu_0}q_0^{n - 1} + \int_{q_0}^x \frac{\lambda}{\mu_0}(n - 1)z^{n - 2}\diffe z\right)\diffe x \notag \\
    &=  \int_{q_0}^1 \frac{\lambda}{\mu_0} x^{n - 1}\diffe x =  \frac{\lambda}{\mu_0}\cdot \frac{1 - q_0^n}{n} > \lambda \cdot \frac{1 - q_0^n}{n}. \notag
\end{align}
Here, the second equality follows by considering the second-max discounted value when the max value is fixed. The inequality holds since $\widetilde{v}_0^{(4)}$ is strictly increasing and inverse Lipschitz continuous. Further, we have
\begin{align*}
    p^{\rm eBDS}((\mu_0)^n, \widetilde{\bm v}_0^{(4)}) &= \int_{q_0}^1 \left(\frac{\lambda}{\mu_0}q_0^{n - 1} + \int_{q_0}^x \widetilde{v}_0^{(4)}(z)(n - 1)z^{n - 2}\diffe z\right)\diffe x \\ 
    &<  \int_{q_0}^1 \left(q_0^{n - 1} + \int_{q_0}^x (n - 1)z^{n - 2}\diffe z\right)\diffe x \\
    &= \frac{1 - q_0^n}{n}.
\end{align*}
The inequality is due to $\mu_0 \geq \lambda$. Therefore, 
\[\lambda\cdot \frac{1 - q_0^n}{n} < p^{\rm eBDS}((\mu_0)^n, \widetilde{\bm v}_0^{(4)}) < \frac{1 - q_0^n}{n}. \]
With the above inequality, we can now construct $\widetilde{v}_0^{(2)}$ as the construction from ePFPA to eBDFPA in the proof of \Cref{thm:symmetry-equivalence-FPA} by replacing $\beta_0$ there with $\mu_0$ and $\rho_0$ with $p^{\rm eBDS}((\mu_0)^n, \widetilde{\bm v}_0^{(4)})$. We should notice here that $p^{\rm eBDS}((\mu_0)^n, \widetilde{\bm v}_0^{(4)}) = \rho_0$ may not establish as it is possible that each buyer does not exhaust her budget even with $\mu_0 = 1$. The reasoning part also inherits from the previous proof by showing $\alpha_0 = \mu_0$ and considering \eqref{eq:symmetric-buyer-utility-eBDFPA}, \eqref{eq:symmetric-buyer-utility-eBDSPA}, \eqref{eq:symmetric-seller-revenue-eBDFPA}, and \eqref{eq:symmetric-seller-revenue-eBDSPA}. 

\paragraph{From eBDFPA to eBDSPA.} 
For this part, we also first deal with the special case that $\alpha_0 = 1$ and $\widetilde{v}_0^{(2)}(1) \leq \lambda$. Under this case, we take $\widetilde{v}_0^{(4)} = \widetilde{v}_0^{(2)}$, therefore, the expected payment of any buyer in either BDSPA or BDFPA is zero. As a result, the revenue of the seller stays at zero as well. 

In the more general case, we let $q_0 := \inf\{x \in [0, 1]\mid \widetilde{v}_0^{(2)}(x) \geq \lambda / \alpha_0\} \in [0, 1]$, which exists by the definition of budget-extracting. Therefore, we have
\[p^{\rm eBDF}((\alpha_0)^n, \widetilde{\bm v}_0^{(2)}) = \int_{q_0}^1 \widetilde{v}_0^{(2)} (x) \cdot x^{n - 1}\diffe x, \]
and
\[\lambda\cdot \frac{1 - q_0^n}{n} < p^{\rm eBDF}((\alpha_0)^n, \widetilde{\bm v}_0^{(2)}) < \frac{1 - q_0^n}{n}. \]

Thus, we let the threshold be 
\[t_2(\alpha_0, q_0) \coloneqq \int_{q_0}^1 \left(\frac{\lambda}{\alpha_0}q_0^{n - 1} + \int_{q_0}^x \left(\frac{1 - \lambda / \alpha_0}{1 - q_0}\cdot z + \frac{\lambda / \alpha_0 - q_0}{1 - q_0}\right)(n - 1)z^{n - 1} \diffe z\right) \diffe x, \]

and for two cases, the functions used for construction become
\[
    y_2(a) \coloneqq \int_{q_0}^1 \left(\frac{\lambda}{\alpha_0}q_0^{n - 1} + \int_{q_0}^x \left(\left(1 - \frac{\lambda}{\alpha_0}\right)\left(\frac{z - q_0}{1 - q_0}\right)^a + \frac{\lambda}{\alpha_0}\right)(n - 1)z^{n - 1} \diffe z\right) \diffe x, 
\]
and
\[
    z_2(k) \coloneqq \int_{q_0}^1 \left(\frac{\lambda}{\alpha_0}q_0^{n - 1} + \int_{q_0}^x \left(k(x - q_0) + \frac{\lambda}{\alpha_0}\right)(n - 1)z^{n - 1} \diffe z\right) \diffe x. 
\]

Under similar constructions, we see that $(\alpha_0)^n$ is a budget-extracting multiplier for BDSPA with $\widetilde{\bm v}_0^{(4)}$. In fact, when $\alpha_0 = 1$, the budget-extracting condition naturally holds. When $\alpha_0 < 1$, $(\alpha_0)^n$ exhausts each buyer's budget. Therefore, the proof for this side is done, and we finish the proof of the theorem. 
\end{proof}

\begin{remark}
By comparing the expected payment of a buyer in BDFPA and BDSPA, an appealing approach to prove the theorem is to take $\alpha_0 = \mu_0$, the effective quantile of both auctions start at an identical $q_0$, and to have when $x \geq q_0$, 
\[\widetilde{v}_0^{(2)}(x) \cdot x^{n - 1} = \frac{\lambda}{\mu_0}q_0^{n - 1} + \int_{q_0}^x \widetilde{v}_0^{(4)}(z)(n - 1)z^{n - 2}\diffe z. \]

This seems to be an elegant solution, with $\widetilde{v}_0^{(2)}$ being a continuous weighted average of $\widetilde{v}_0^{(4)}$. However, this idea does not work. The reason is that the above mapping from $\widetilde{v}_0^{(4)}$ to $\widetilde{v}_0^{(2)}$ would lose the inverse Lipschitz continuity, as the derivative of $\widetilde{v}_0^{(2)}$ at $q_0$ would be zero. On the other side, the mapping from $\widetilde{v}_0^{(2)}$ to $\widetilde{v}_0^{(4)}$ would even lose the strict monotonicity. As a result, we have to adopt the methodology we use in the proof of \Cref{thm:symmetry-equivalence-BD}. 
\end{remark}

\section{Proofs in Section \ref{sec:properties-FPA}}\label{sec:proof:sec-properties-FPA}

\subsection{Proof of \Cref{thm:properties-eBDFPA}}

The theorem is proved in steps. First, we characterize some essential properties of bid-discount in the first-price auction. Then we prove the five statements in the theorem in order. 

We come to some basic features of the bid-discount method in first-price auctions. To start with, obviously, given buyers' bidding qf profile $\widetilde{\bm v} = (\widetilde{v}_i)_{1 \leq i \leq n}$ and budget profile $(\rho_i)_{1 \leq i \leq n}$, notice that $\mathcal{M}^{\rm BDF}({\rm 0})$ must be a feasible bid-discount mechanism, in which the item is never assigned and each buyer's payment is zero. Thus there exists a tuple of bid-discount multipliers $\bm \alpha$ such that $\mathcal{M}^{\rm BDF}(\bm \alpha)$ is a feasible bid-discount mechanism.

We now show that when a buyer's bid-discount multiplier slightly increases, her expected payment does not increase too much.

\begin{lemma}\label{lem:Lipschitz-continuity-BDFPA}
    There exists a constant $C$ which satisfies the following: For any $\bm \alpha=(\alpha_1,\ldots,\alpha_n)$ and $1\leq i\leq n$ such that $\alpha_i<1$, let $\bm \alpha'=\bm \alpha + \delta \bm e_i$ where $0<\delta\leq 1-\alpha_i$ and $\bm e_i$ is the vector with the $i$-th entry one and all other entries zero. Then the expected payment of buyer $i$ in $\mathcal{M}^{\rm BDF}(\bm \alpha')$ is at most the expected payment of buyer $i$ in $\mathcal{M}^{\rm BDF}(\bm \alpha)$ plus $C\delta$.
\end{lemma}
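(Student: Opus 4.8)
The plan is to track precisely which quantile profiles buyer $i$ newly wins when $\alpha_i$ is bumped up by $\delta$, and to bound the extra payment in two regimes. Let $L>0$ be a common inverse‑Lipschitz constant for all the bidding qfs $\widetilde v_j$ (e.g.\ the minimum of the individual constants); I will show that $C=n/(\lambda L)$ works. Fix $\bm\alpha$, an index $i$ with $\alpha_i<1$, and $\delta\in(0,1-\alpha_i]$, and set $\bm\alpha'=\bm\alpha+\delta\bm e_i$. Since $\mathcal M^{\rm BDF}$ is monotone in $\alpha_i$ (other multipliers fixed), buyer $i$'s winning set only grows, and on the old winning set her payment is unchanged (always $\widetilde v_i(q_i)$). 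Writing $a(q_i):=\alpha_i\widetilde v_i(q_i)$, $b(q_i):=(\alpha_i+\delta)\widetilde v_i(q_i)$, and $M_{-i}(\bm q_{-i}):=\max\{\lambda,\max_{j\ne i}\alpha_j\widetilde v_j(q_{j})\}$, the increase in buyer $i$'s expected payment is exactly
\[
\int_0^1 \widetilde v_i(q_i)\cdot \Pr_{\bm q_{-i}}\!\bigl[a(q_i)<M_{-i}\le b(q_i)\bigr]\diffe q_i
\;\le\; \int_0^1 \Pr_{\bm q_{-i}}\!\bigl[a(q_i)<M_{-i}\le b(q_i)\bigr]\diffe q_i,
\]
using $\widetilde v_i\le 1$; it remains to bound the right side by $C\delta$.

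I would split the $q_i$‑integral by the position of $\lambda$ relative to $(a(q_i),b(q_i)]$. If $b(q_i)<\lambda$ the inner probability vanishes, since $M_{-i}\ge\lambda$. If $a(q_i)\ge\lambda$, any $M_{-i}\in(a(q_i),b(q_i)]$ exceeds $\lambda$, hence equals $Z:=\max_{j\ne i}\alpha_j\widetilde v_j(q_{j})$; by independence, $\Pr[a<Z\le b]=\prod_{j\ne i}G_j(b)-\prod_{j\ne i}G_j(a)$ with $G_j(s):=\Pr[\alpha_j\widetilde v_j(q_{j})\le s]$, which telescopes to at most $\sum_{j\ne i}\bigl(G_j(b)-G_j(a)\bigr)$. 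Each term is the measure of the $\widetilde v_j$‑preimage of an interval of length $(b-a)/\alpha_j$, and vanishes unless $\alpha_j>a(q_i)\ge\lambda$; inverse Lipschitz continuity of $\widetilde v_j$ bounds it by $\tfrac1L\cdot\tfrac{b-a}{\alpha_j}=\tfrac{\delta\widetilde v_i(q_i)}{L\alpha_j}\le\tfrac{\delta}{L\lambda}$, where I used $\alpha_j>\alpha_i\widetilde v_i(q_i)$ and $\alpha_i\ge\lambda$ (the latter follows from $\alpha_i\widetilde v_i(q_i)\ge\lambda$ and $\widetilde v_i\le1$). Summing over the $n-1$ competitors and integrating over $q_i$ contributes at most $(n-1)\delta/(L\lambda)$.

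The delicate regime is $a(q_i)<\lambda\le b(q_i)$, where the inner probability can be of constant order (close to $1$ if no competitor clears the reserve), so a pointwise bound is hopeless; instead I would bound the \emph{measure} of the set $S$ of such $q_i$. For $q_i\in S$ one has $\widetilde v_i(q_i)\in\bigl[\lambda/(\alpha_i+\delta),\ \lambda/\alpha_i\bigr)$, the upper constraint being vacuous when $\alpha_i=0$; intersecting with $[0,1]$ and checking the cases $\alpha_i\ge\lambda$ and $\alpha_i<\lambda$ separately (using $\alpha_i+\delta\ge\lambda$ whenever $S\ne\varnothing$) shows this is a range of $\widetilde v_i$‑values of length at most $\delta/\lambda$. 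Inverse Lipschitz continuity of $\widetilde v_i$ then forces $S$ to have Lebesgue measure at most $\delta/(L\lambda)$, so this regime contributes at most $\delta/(L\lambda)$. Adding the three contributions yields an increase of at most $\bigl((n-1)+1\bigr)\delta/(L\lambda)=C\delta$.

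The main obstacle is this last regime: once $a(q_i)$ falls below $\lambda$, buyer $i$'s marginal win probability need not be small, and the only leverage left is that the set of bids newly clearing the reserve is thin — which is exactly where the strictly positive reserve $\lambda$ (a uniform lower bound on $M_{-i}$) and the inverse‑Lipschitz assumption on $\widetilde v_i$ are both indispensable. A secondary point to be careful about is the monotonicity/"only the winning set grows" observation, which is what lets the payment increase be written as a single integral of $\widetilde v_i$ against the marginal win probability, rather than requiring a term‑by‑term comparison of two payment integrals (and which also makes clear why the payment on the pre‑existing winning region does not contribute).
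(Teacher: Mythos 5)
Your proof is correct and follows essentially the same route as the paper's: write the payment increment as the integral over $q_i$ of the probability of the newly-won event $a(q_i)<M_{-i}\le b(q_i)$, split according to whether the interval $(a(q_i),b(q_i)]$ lies below, straddles, or sits above the reserve $\lambda$, and control the latter two regimes via inverse Lipschitz continuity (equivalently, Lipschitz continuity of the bid CDFs). Your bookkeeping is in fact slightly cleaner: you telescope the competitors' CDFs directly and use $\alpha_j>a(q_i)\ge\lambda$ on the relevant event in place of the paper's auxiliary thresholds $\delta_{i'}$ in \Cref{lem:G-function} and its separate $\alpha_i<\delta_i$ case, which yields the explicit constant $C=n/(\lambda L)$.
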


Before we prove the lemma, some preparations are required. We define $G_{\bm \alpha, i}$ be the cumulative distribution function of $\max_{i'\neq i}\left\{\alpha_{i'}\widetilde{v}_{i'}(q_{i'}), \lambda\right\}$ when $\bm q_{-i}$ is chosen uniformly in $[0, 1]^{n - 1}$. Then, 
\begin{lemma}\label{lem:G-function}
    For any $\bm \alpha \in [0, 1]^n$, we have: 
    \begin{itemize}
        \item $G_{\bm \alpha, i}(\cdot)$ is zero on $[0, \lambda)$. 
        \item $\lambda$ is the only possible discontinuous point of $G_{\bm \alpha, i}(\cdot)$. 
        \item If $G_{\bm \alpha, i}(\lambda) < 1$, then $G_{\bm \alpha, i}(\cdot)$ is Lipschitz continuous on  $[\lambda, +\infty)$. 
    \end{itemize}
\end{lemma}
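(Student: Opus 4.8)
The plan is to reduce everything to the factorized form of the CDF. Writing $M_i \coloneqq \max_{i' \neq i}\{\alpha_{i'}\widetilde v_{i'}(q_{i'}),\lambda\}$ and using that the coordinates of $\bm q_{-i}$ are independent and uniform, for every $t \ge \lambda$ we have
\[
G_{\bm \alpha, i}(t) \;=\; \Pr\!\left[M_i \le t\right] \;=\; \prod_{i' \neq i} \Pr\!\left[\alpha_{i'}\widetilde v_{i'}(q_{i'}) \le t\right],
\]
where the $i'$-th factor is identically $1$ when $\alpha_{i'}=0$ and equals $\widetilde F_{i'}(t/\alpha_{i'})$ when $\alpha_{i'}>0$, with $\widetilde F_{i'}$ the CDF of buyer $i'$'s bidding distribution (recall $\widetilde v_{i'}(q_{i'})\sim\widetilde F_{i'}$ for $q_{i'}\sim U[0,1]$). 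The first bullet is then immediate, since $M_i\ge\lambda$ always forces $G_{\bm\alpha,i}(t)=0$ for $t<\lambda$.

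For the second bullet I would use that a CDF is nondecreasing, so its only possible discontinuities are jumps, and a jump at $t$ is equivalent to the atom $\Pr[M_i=t]>0$. There is no atom at any $t<\lambda$ by the first bullet, and at any $t>\lambda$ the event $\{M_i=t\}$ forces $\widetilde v_{i'}(q_{i'})=t/\alpha_{i'}$ for some $i'$ with $\alpha_{i'}>0$; but inverse Lipschitz continuity makes $\widetilde v_{i'}$ strictly increasing, so $\{q_{i'}:\widetilde v_{i'}(q_{i'})=t/\alpha_{i'}\}$ is a single point, hence null. Thus the only candidate discontinuity is $t=\lambda$ (and it is a genuine jump exactly when $G_{\bm\alpha,i}(\lambda)>0$).

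For the third bullet I would first dispose of the degenerate case $G_{\bm\alpha,i}(\lambda)=1$: then $M_i=\lambda$ almost surely, so $G_{\bm\alpha,i}\equiv 1$ on $[\lambda,\infty)$, which is Lipschitz. Assuming $G_{\bm\alpha,i}(\lambda)<1$, I would show that each factor in the product above is $[0,1]$-valued and Lipschitz on $[\lambda,\infty)$, and then invoke that a product of finitely many $[0,1]$-valued, $C'$-Lipschitz functions is $((n-1)C')$-Lipschitz by telescoping. A factor with $\alpha_{i'}\le\lambda$ (in particular $\alpha_{i'}=0$) is identically $1$ on $[\lambda,\infty)$, because there $t/\alpha_{i'}\ge 1$ and $\widetilde F_{i'}\equiv 1$ on $[1,\infty)$ as bids lie in $[0,1]$. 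A factor with $\alpha_{i'}>\lambda$ equals $t\mapsto\widetilde F_{i'}(t/\alpha_{i'})$, and composing with the affine rescaling only multiplies a Lipschitz constant of $\widetilde F_{i'}$ by $1/\alpha_{i'}\le 1/\lambda$; so it remains to use that $\widetilde F_{i'}$ itself is Lipschitz, which is exactly the consequence of inverse Lipschitz continuity of $\widetilde v_{i'}$ already recorded in \Cref{sec:model}. Its one-line justification: from $|\widetilde v_{i'}(q_2)-\widetilde v_{i'}(q_1)|\ge L(q_2-q_1)$, the set $\{q: s_1<\widetilde v_{i'}(q)\le s_2\}$ is an interval any two of whose points differ by at most $(s_2-s_1)/L$, hence $\widetilde F_{i'}(s_2)-\widetilde F_{i'}(s_1)\le(s_2-s_1)/L$.

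The step I expect to require the most care is precisely this passage from inverse Lipschitz continuity of the left quantile function $\widetilde v_{i'}$ to Lipschitz continuity of the CDF $\widetilde F_{i'}$, since $\widetilde v_{i'}$ itself need not be continuous: one has to work with the Galois-type identity $\{q\in[0,1]:\widetilde v_{i'}(q)\le s\}=[0,\widetilde F_{i'}(s)]$ valid for a right-continuous CDF and its left quantile inverse, and to handle the boundary effect where $t/\alpha_{i'}$ exits $[0,1]$ — which is what keeps small multipliers harmless rather than causing the Lipschitz constant to blow up. With those two points settled, the three bullets follow by the bookkeeping above.
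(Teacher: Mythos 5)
Your proposal is correct and takes essentially the same route as the paper's proof: factorize $G_{\bm \alpha, i}$ on $[\lambda, +\infty)$ as a product of rescaled bidding CDFs, observe that factors with small multipliers are identically $1$ there, and combine the Lipschitz continuity of each $\widetilde{F}_{i'}$ (a consequence of inverse Lipschitz continuity of $\widetilde{v}_{i'}$) with a telescoping bound for products of $[0,1]$-valued Lipschitz functions. The only differences are cosmetic: you use the threshold $\lambda$ instead of the paper's constants $\delta_{i'}$ with $\delta_{i'}\bar{v}_{i'}<\lambda$, and you spell out the first two bullets and the quantile-to-CDF Lipschitz step, which the paper treats as immediate.
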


\begin{proof}[Proof of \Cref{lem:G-function}]
Let $\widehat{v} \coloneqq \max_{i'\neq i}\left\{\alpha_{i'}\widetilde{v}_{i'}(q_{i'}), \lambda\right\}$ be a random variable when $\bm q_{-i}$ is uniformly drawn from $[0, 1]^{n - 1}$. The only non-trivial part is the third part, which is to show the Lipschitz continuity when $\widehat{v} \geq \lambda$. For any buyer $i' \neq i$, since her bidding cdf $\widetilde{F}_{i'}(\cdot)$ is Lipschitz continuous, there exists a constant $C_{i'}$ such that for any $\widetilde{v}_{i'}^{(1)}$ and $\widetilde{v}_{i'}^{(2)}$, we have
\[
    \left|\widetilde{F}_{i'}(\widetilde{v}_{i'}^{(1)})-\widetilde{F}_{i'}(\widetilde{v}_{i'}^{(2)})\right| \leq C_{i'} \left|\widetilde{v}_{i'}^{(1)}-\widetilde{v}_{i'}^{(2)}\right|.
\]

Since $\widetilde{v}_{i'}(q_{i'})$ is upper bounded (say, by $\bar{v}_{i'}$) for any $1\leq i'\leq n$, there exists a constant $\delta_{i'} > 0$ for each $i'$ such that $\delta_{i'} \cdot \bar{v}_{i'} < \lambda$. Now let $\lambda \leq \widehat{v}^{(1)} < \widehat{v}^{(2)}$. Clearly, since $G_{\bm \alpha, i}(\lambda) < 1$, there exists at least one $i' \neq i$ such that $\alpha_{i'} > \delta_{i'}$. Then we have
\begin{align*}
    \left|G_{\bm \alpha, i}(\widehat{v}^{(1)})-G_{\bm \alpha, i}(\widehat{v}^{(2)})\right| & = \left|\prod_{i' \neq i, \alpha_{i'} > \delta_{i'}} \widetilde{F}_{i'}(\widehat{v}^{(1)} / \alpha_{i'}) - \prod_{i' \neq i, \alpha_{i'} > \delta_{i'}} \widetilde{F}_{i'}(\widehat{v}^{(2)} / \alpha_{i'})\right| \\
    & \leq \sum_{i' \neq i, \alpha_{i'} > \delta_{i'}} \left|\widetilde{F}_{i'}(\widehat{v}^{(1)} / \alpha_{i'}) - \widetilde{F}_{i'}(\widehat{v}^{(2)} / \alpha_{i'})\right| \\ & \leq \left(\sum_{i' \neq i, \alpha_{i'} > \delta_{i'}} C_{i'} / \alpha_{i'} \right)\cdot \left|\widehat{v}^{(1)} - \widehat{v}^{(2)}\right| \\
    &\leq \left(\sum_{i' \neq i} C_{i'} / \delta_{i'} \right)\cdot \left|\widehat{v}^{(1)} - \widehat{v}^{(2)}\right|
\end{align*}
Here, the first inequality is because of $\widetilde{F}_{i'}$ is no greater than $1$ for any $i'\neq i$. This shows that $G_{\bm \alpha, i}$ is continuous with Lipschitz constant $\widehat{C}_i \coloneqq \sum_{i' \neq i} C_{i'} / \delta_{i'}$ on the right side of $\lambda$, and the proof is finished. 
\end{proof}

Now, we come back to prove \Cref{lem:Lipschitz-continuity-BDFPA}. 
\begin{proof}[Proof of \Cref{lem:Lipschitz-continuity-BDFPA}]
Recall that by definition, the expected payment of buyer $i$ in $\mathcal{M}^{\rm BDF}(\bm \alpha)$ is
\[
    \int_0^1 \widetilde{v}_{i}(q_{i}) \cdot \left(\int_{\bm q_{-i}} I\left[\alpha_i\widetilde{v}_{i}(q_{i}) \geq \max_{i'\neq i}\left\{\alpha_{i'}\widetilde{v}_{i'}(q_{i'}), \lambda\right\}\right]\diffe \bm q_{-i}\right)\diffe q_{i}.
\]
Similarly, the expected payment of buyer $i$ in $\mathcal{M}^{\rm BDF}(\bm \alpha')$ is
\[
    \int_0^1 \widetilde{v}_{i}(q_{i}) \cdot \left(\int_{\bm q_{-i}} I\left[(\alpha_i+\delta)\widetilde{v}_{i}(q_{i}) \geq \max_{i'\neq i}\left\{\alpha_{i'}\widetilde{v}_{i'}(q_{i'}), \lambda\right\}\right]\diffe \bm q_{-i}\right)\diffe q_{i}.
\]
Note that
\begin{align}\label{eqn:decomp_indicator}
\begin{split}
    I\left[(\alpha_i+\delta)\widetilde{v}_{i}(q_{i}) \geq \max_{i'\neq i}\left\{\alpha_{i'}\widetilde{v}_{i'}(q_{i'}), \lambda\right\}\right] &= I\left[\alpha_i\widetilde{v}_{i}(q_{i}) \geq \max_{i'\neq i}\left\{\alpha_{i'}\widetilde{v}_{i'}(q_{i'}), \lambda\right\}\right]\\
    &+ I\left[\alpha_i\widetilde{v}_{i}(q_{i}) < \max_{i'\neq i}\left\{\alpha_{i'}\widetilde{v}_{i'}(q_{i'}), \lambda\right\} \leq (\alpha_i+\delta)\widetilde{v}_i(q_i)\right].
\end{split}
\end{align}
Thus the increment of buyer $i$'s expected payment after replacing $\bm \alpha$ with $\bm \alpha'$ is
\begin{align}
    &\int_0^1 \widetilde{v}_{i}(q_{i}) \cdot \left(\int_{\bm q_{-i}} I\left[\alpha_i\widetilde{v}_{i}(q_{i}) < \max_{i'\neq i}\left\{\alpha_{i'}\widetilde{v}_{i'}(q_{i'}), \lambda\right\}\leq (\alpha_i+\delta)\widetilde{v}_{i}(q_{i}) \right]\diffe \bm q_{-i}\right)\diffe q_{i}. \notag \\
    =\,& \int_0^1 \widetilde{v}_{i}(q_{i})\cdot \left(G_{\bm \alpha, i}\left((\alpha_i + \delta)\widetilde{v}_{i}(q_{i})\right)-G_{\bm \alpha, i}\left(\alpha_i\widetilde{v}_i(q_i)\right)\right)\diffe q_{i}.
    \label{eqn:increment}
\end{align} 

By \Cref{lem:G-function}, $G_{\bm \alpha, i}(\delta_i\widetilde{v}_{i}(q_{i})) = 0$ always holds for any $q_i \in [0, 1]$, where we recall that $\delta_i$ is defined as a constant such that $\delta_i\cdot \bar{v}_i < \lambda$. We use $\delta_i$ as a threshold to analyze the formula \eqref{eqn:increment}. 

When $\alpha_i \geq \delta_i$, there are three parts which we analyze correspondingly, depending on whether $\lambda$ lies in $(\alpha_i\widetilde{v}_i(q_i), (\alpha_i + \delta)\widetilde{v}_i(q_i))$. 
\begin{itemize}
    \item $(\alpha_i + \delta)\widetilde{v}_{i}(q_{i}) \leq \lambda$. Let $\bar{q}$ be the minimum $q_i \leq 1$ such that $(\alpha_i + \delta)\widetilde{v}_{i}(q_{i}) \leq \lambda$, if there exists, and $\bar{q} \coloneqq 1$ otherwise. By monotonicity, $(\alpha_i + \delta)\widetilde{v}_{i}(q_{i}) \leq \lambda$ holds for any $\bar{q} < q_i \leq 1$. By \Cref{lem:G-function}, we have $\int_{\bar{q}}^1 \widetilde{v}_{i}(q_{i})\cdot \left(G_{\bm \alpha, i}\left((\alpha_i + \delta)\widetilde{v}_{i}(q_{i})\right)-G_{\bm \alpha, i}\left(\alpha_i\widetilde{v}_i(q_i)\right)\right)\diffe q_{i} = 0$. 
    \item $\alpha_i\widetilde{v}_{i}(q_{i}) < \lambda < (\alpha_i + \delta)\widetilde{v}_{i}(q_{i})$. Let $\underline{q}$ be the maximum $q_i \geq 0$ such that $\alpha_i\widetilde{v}_{i}(q_{i}) \geq \lambda$, if there exists, and $\underline{q} \coloneqq 0$ otherwise. By monotonicity, $\alpha_i\widetilde{v}_{i}(q_{i}) \leq \lambda$ holds for any $\underline{q} \leq q_i < 1$. Now, since $\widetilde{v}_i$ is upper bounded by $\bar{v}_i$ and that $\widetilde{F}_i(\cdot)$ is continuous with Lipschitz constant $C_i$, we derive that
    \begin{align*}
        & \int_{\underline{q}}^{\bar{q}}\widetilde{v}_{i}(q_{i})\cdot \left(G_{\bm \alpha, i}\left((\alpha_i + \delta)\widetilde{v}_{i}(q_{i})\right)-G_{\bm \alpha, i}\left(\alpha_i\widetilde{v}_i(q_i)\right)\right)\diffe q_{i} \leq\bar{v}_i\cdot \left(\bar{q} - \underline{q}\right) \\
        \leq\,& \bar{v}_i\cdot \left(\widetilde{F}_i(\lambda / \alpha_i) - \widetilde{F}_i(\lambda / (\alpha_i + \delta))\right) \leq \bar{v}_i\cdot C_i\cdot (\lambda / \alpha_i - \lambda / (\alpha_i + \delta)) \\
        \leq\,& \bar{v}_i\cdot C_i\cdot \frac{\lambda}{\alpha_i^2}\cdot \delta. 
    \end{align*} 
    Here, the first inequality holds since $G_{\bm \alpha, i}$ is bounded by 1. 
    \item $\alpha_i\widetilde{v}_{i}(q_{i}) \geq \lambda$. By monotonicity, $\alpha_i\widetilde{v}_{i}(q_{i}) \geq \lambda$ holds for any $0 \leq q_i < \underline{q}$. By \Cref{lem:G-function}, we have
    \begin{align*}
        & \int_{0}^{\underline{q}}\widetilde{v}_{i}(q_{i})\cdot \left(G_{\bm \alpha, i}\left((\alpha_i + \delta)\widetilde{v}_{i}(q_{i})\right)-G_{\bm \alpha, i}\left(\alpha_i\widetilde{v}_i(q_i)\right)\right)\diffe q_{i} \\
        \leq\,& \int_{0}^{\underline{q}}\widetilde{v}^2_{i}(q_{i})\cdot \widehat{C}_i\cdot \delta\diffe q_{i} \leq \bar{v}^2_i\cdot \widehat{C}_i\cdot \delta. 
    \end{align*}
\end{itemize}

As a result, in this scenario, we have
\begin{align*}
    &\int_0^1 \widetilde{v}_{i}(q_{i})\cdot \left(G_{\bm \alpha, i}\left((\alpha_i + \delta)\widetilde{v}_{i}(q_{i})\right)-G_{\bm \alpha, i}\left(\alpha_i\widetilde{v}_i(q_i)\right)\right)\diffe q_{i} \\
    =\,& \left(\int_0^{\underline{q}} + \int_{\underline{q}}^{\bar{q}} + \int_{\bar{q}}^{1}\right) \left( \widetilde{v}_{i}(q_{i})\cdot \left(G_{\bm \alpha, i}\left((\alpha_i + \delta)\widetilde{v}_{i}(q_{i})\right)-G_{\bm \alpha, i}\left(\alpha_i\widetilde{v}_i(q_i)\right)\right)\right)\diffe q_{i} \\
    \leq\,& \max \left\{\bar{v}_i\cdot C_i\cdot \frac{\lambda}{\alpha_i^2}, \bar{v}^2_i\cdot \widehat{C}_i \right\} \cdot \delta \leq \max \left\{\bar{v}_i\cdot C_i\cdot \frac{\lambda}{\delta_i^2}, \bar{v}^2_i\cdot \widehat{C}_i \right\}\cdot \delta. 
\end{align*}

In the case that $\alpha_i < \delta_i$, since $\delta_i \cdot \bar{v}_i < \lambda$, by \Cref{lem:G-function}, $G_{\bm \alpha, i}(\delta_i\widetilde{v}_{i}(q_{i})) = G_{\bm \alpha, i}(\alpha_i\widetilde{v}_{i}(q_{i}))$ always holds for any $q_i \in [0, 1]$. Therefore, when $\delta \leq \delta_i$, then the proof is finished, otherwise, notice that 
\begin{align*}
    &\int_0^1 \widetilde{v}_{i}(q_{i})\cdot \left(G_{\bm \alpha, i}\left((\alpha_i + \delta)\widetilde{v}_{i}(q_{i})\right)-G_{\bm \alpha, i}\left(\alpha_i\widetilde{v}_{i}(q_{i})\right)\right)\diffe q_{i} \\
    =\,& \int_0^1 \widetilde{v}_{i}(q_{i})\cdot \left(G_{\bm \alpha, i}\left((\alpha_i + \delta)\widetilde{v}_{i}(q_{i}))-G_{\bm \alpha, i}(\delta_i\widetilde{v}_i(q_i)\right)\right)\diffe q_{i} \\
    \leq\,& \max \left\{\bar{v}_i\cdot C_i\cdot \frac{\lambda}{\delta_i^2}, \bar{v}^2_i\cdot \widehat{C}_i \right\}\cdot (\alpha_i + \delta - \delta_i)\leq \max \left\{\bar{v}_i\cdot C_i\cdot \frac{\lambda}{\delta_i^2}, \bar{v}^2_i\cdot \widehat{C}_i \right\}\cdot \delta. 
\end{align*}

Now, we conclude the proof of \Cref{lem:Lipschitz-continuity-BDFPA} by having $C \coloneqq \max_{1\leq i\leq n}\{\bar{v}_i\cdot C_i\cdot \lambda / \delta_i^2, \bar{v}^2_i\cdot \widehat{C}_i\}$. 
\end{proof}

From \Cref{lem:Lipschitz-continuity-BDFPA}, we derive an essential property that the set of all feasible tuples of bid-discount multipliers is compact, which is given in the following lemma. 

\begin{lemma}\label{lem:closed-set}
    Let $\mathcal{A}$ be the set of all $\bm \alpha$ such that $\mathcal{M}^{\rm BDF}(\bm \alpha)$ is a feasible bid-discount mechanism. Then $\mathcal{A}$ is compact.
\end{lemma}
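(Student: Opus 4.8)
The plan is to prove that $\mathcal A$ is closed; since $\mathcal A\subseteq[0,1]^n$ is automatically bounded, and a closed subset of the compact set $[0,1]^n$ is compact, this suffices. Write $p_i(\bm\alpha)$ for the expected payment of buyer $i$ in $\mathcal M^{\rm BDF}(\bm\alpha)$, i.e. $p_i(\bm\alpha)=\int_0^1\widetilde v_i(q_i)\bigl(\int_{\bm q_{-i}}\Phi_i(\bm\alpha,\widetilde{\bm v},\bm q)\diffe\bm q_{-i}\bigr)\diffe q_i$, so that $\mathcal A=\bigcap_{i=1}^n\{\bm\alpha\in[0,1]^n:p_i(\bm\alpha)\le\rho_i\}$. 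It is therefore enough to show that each $p_i$ is continuous on $[0,1]^n$ (lower semicontinuity would already do), for then $\mathcal A$ is a finite intersection of closed sets.

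For the continuity of $p_i$, I would fix a sequence $\bm\alpha^{(k)}\to\bm\alpha^*$ and apply dominated convergence to the integrands $\widetilde v_i(q_i)\,\Phi_i(\bm\alpha^{(k)},\widetilde{\bm v},\bm q)$, which are bounded by the integrable function $\widetilde v_i(q_i)\le 1$. The only real point is a.e.\ convergence of the allocation indicator. For a fixed $\bm q$ with $\alpha_i^*\widetilde v_i(q_i)\ne\max_{i'\ne i}\{\alpha_{i'}^*\widetilde v_{i'}(q_{i'}),\lambda\}$, the strict inequality persists for all large $k$ because $\bm\alpha\mapsto\alpha_i\widetilde v_i(q_i)$ and $\bm\alpha\mapsto\max_{i'\ne i}\{\alpha_{i'}\widetilde v_{i'}(q_{i'}),\lambda\}$ are continuous in $\bm\alpha$ for fixed $\bm q$, so $\Phi_i(\bm\alpha^{(k)},\widetilde{\bm v},\bm q)\to\Phi_i(\bm\alpha^*,\widetilde{\bm v},\bm q)$ there. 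Thus I would reduce to showing that the tie set $T_i=\{\bm q:\alpha_i^*\widetilde v_i(q_i)=\max_{i'\ne i}\{\alpha_{i'}^*\widetilde v_{i'}(q_{i'}),\lambda\}\}$ is Lebesgue-null. This is exactly where the inverse Lipschitz hypothesis of \Cref{lem:closed-set} is used: it makes every $\widetilde v_j$ strictly increasing, hence injective; $T_i$ is contained in $\{q_i:\alpha_i^*\widetilde v_i(q_i)=\lambda\}\times[0,1]^{n-1}$ together with $\bigcup_{i'\ne i}\{\bm q:\alpha_i^*\widetilde v_i(q_i)=\alpha_{i'}^*\widetilde v_{i'}(q_{i'})\}$, and in each of these sets, once $q_i$ is fixed the relevant quantile is pinned to at most one value (the degenerate cases $\alpha_i^*=0$ or $\alpha_{i'}^*=0$ contribute no tie since $\lambda>0$), so Fubini gives $|T_i|=0$. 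Dominated convergence then yields $p_i(\bm\alpha^{(k)})\to p_i(\bm\alpha^*)$. As an alternative that leans directly on what was just proved, one can obtain the continuity — indeed Lipschitzness — of $p_i$ by combining \Cref{lem:Lipschitz-continuity-BDFPA} (which bounds how much $p_i$ changes when $\alpha_i$ moves) with the product formula $G_{\bm\alpha,i}(x)=\prod_{i'\ne i}\widetilde F_{i'}(x/\alpha_{i'})$ and the Lipschitzness of the $\widetilde F_{i'}$ from \Cref{lem:G-function} to control the dependence on the competitors' multipliers.

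With $p_1,\dots,p_n$ continuous, $\mathcal A=\bigcap_i p_i^{-1}((-\infty,\rho_i])$ is closed and bounded, hence compact, which proves \Cref{lem:closed-set}. The step I expect to be the main obstacle is precisely the null-set claim for $T_i$ together with the accompanying a.e.\ convergence of the allocation indicator: the bidding qfs are only assumed inverse Lipschitz and so may have jumps, and one has to check carefully that jumps neither create a positive-measure set of quantile profiles on which the winner is undetermined in the limit, nor break the Fubini slicing. The atom that $G_{\bm\alpha,i}$ may carry at $\lambda$ (see \Cref{lem:G-function}) is the concrete shadow of this difficulty, and it is harmless only because $\alpha_i^*\widetilde v_i(q_i)=\lambda$ holds on a $q_i$-null set by strict monotonicity.
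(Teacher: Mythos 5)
Your proposal is correct, and it shares the paper's overall skeleton (payments are continuous in $\bm\alpha$, so $\mathcal{A}$ is an intersection of closed sublevel sets inside $[0,1]^n$, hence compact), but it establishes the key continuity step by a genuinely different route. The paper simply invokes \Cref{lem:Lipschitz-continuity-BDFPA} to declare the payment map $\varphi$ Lipschitz and takes $\mathcal{A}=\varphi^{-1}\bigl(\prod_i[0,\rho_i]\bigr)$; strictly speaking that lemma only controls the change in buyer $i$'s payment when her \emph{own} multiplier is raised, so the paper's one-line citation leaves the dependence on competitors' multipliers implicit — exactly the point your ``alternative'' route fills in via the product form of $G_{\bm\alpha,i}$ and the Lipschitz CDFs, and which your primary route avoids altogether. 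Your dominated-convergence argument is self-contained: boundedness of the integrand, pointwise convergence of the allocation indicator off the tie set, and the null-tie argument via injectivity of the (inverse Lipschitz, hence strictly increasing) bidding qfs are all sound, and your observation that lower semicontinuity of each $p_i$ would already suffice is a nice weakening the paper does not exploit. What each approach buys: the paper's route is shorter because it reuses machinery already proved for \Cref{lem:maximum-tuple-budget-extracting-BDFPA}, while yours is more careful at the two delicate spots (cross-coordinate dependence of payments, and possible jumps of the qfs) at the cost of a measure-theoretic detour. One cosmetic point: the sets $\{\bm q:\alpha_i^*\widetilde v_i(q_i)=\alpha_{i'}^*\widetilde v_{i'}(q_{i'})\}$ in your covering of $T_i$ can have positive measure when both multipliers vanish; as your parenthetical already indicates, it is cleaner to note that $T_i=\emptyset$ when $\alpha_i^*=0$ and to restrict the union to $i'$ with $\alpha_{i'}^*>0$ (since any profile in $T_i$ has effective bid at least $\lambda>0$), after which the Fubini slicing goes through exactly as you describe.
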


\begin{proof}[Proof of \Cref{lem:closed-set}] 
    It suffices to show that all $\bm \alpha$s satisfying the budget-feasible constraints form a closed set.
    
    Define $\varphi:[0,1]^n\to \mathbb{R}^n$ to be the mapping from the tuple of multipliers $\bm \alpha$ to the expected payment vector of all buyers when the quantile profile is uniformly distributed in $[0, 1]^n$. By \Cref{lem:Lipschitz-continuity-BDFPA} we know that $\varphi$ is Lipschitz continuous, which implies that the pre-image of every closed set under $\varphi$ is also closed. Since $\mathcal{A}=\{\bm \alpha:\varphi(\bm \alpha)\in \prod_i[0, \rho_i]\}$ is the pre-image of $\prod_i[0, \rho_i]$, which is apparently a closed set, $\mathcal{A}$ is closed as well. Moreover, $\mathcal{A} \subseteq [0, 1]^n$ is apparently bounded. Therefore, $\mathcal{A}$ is a compact set.
\end{proof}

Now, we are ready to show that the maximum tuple of bid-discount multipliers exists by reasoning that a buyer's payment decreases when other buyers' discount multiplier increases and then applying \Cref{lem:closed-set}.
    
\begin{lemma}\label{lem:maximum-multiplier-tuple-BDFPA}
    There exists a maximum tuple of bid-discount multipliers $\bm \alpha^{\max}$, i.e., for any feasible tuple of bid-discount multipliers $\bm \alpha$, $\alpha_i^{\max} \geq \alpha_i$ for any $1 \leq i \leq n$. 
\end{lemma}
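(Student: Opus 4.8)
The plan is to combine the compactness of the feasible set $\mathcal{A}$ established in \Cref{lem:closed-set} with a short monotonicity (lattice) argument. Write $\varphi_i(\bm\alpha)$ for buyer $i$'s expected payment in $\mathcal{M}^{\mathrm{BDF}}(\bm\alpha)$, i.e.\ the $i$-th component of the map $\varphi$ from the proof of \Cref{lem:closed-set}. Because a winning buyer in $\mathcal{M}^{\mathrm{BDF}}$ always pays her own bid $\widetilde v_i(q_i)$, independent of the opponents' multipliers, we have
\[
  \varphi_i(\bm\alpha) \;=\; \int_0^1 \widetilde v_i(q_i)\cdot \Pr_{\bm q_{-i}}\!\Big[\alpha_i\widetilde v_i(q_i)\ge \max_{i'\ne i}\{\alpha_{i'}\widetilde v_{i'}(q_{i'}),\lambda\}\Big]\diffe q_i,
\]
where ties among discounted bids may be ignored since each $\widetilde v_{i'}$ is inverse Lipschitz, hence strictly increasing, so the tie set has measure zero. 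The first step is to record the elementary \emph{cross-monotonicity} of $\varphi_i$: increasing any competitor's multiplier $\alpha_{i'}$ ($i'\neq i$) while fixing all other coordinates only shrinks buyer $i$'s winning set of quantiles pointwise, so $\varphi_i(\bm\alpha)$ is non-increasing in each $\alpha_{i'}$, $i'\neq i$.

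The second step uses cross-monotonicity to show $\mathcal{A}$ is closed under the componentwise maximum $\vee$. Given $\bm\alpha,\bm\beta\in\mathcal{A}$, fix a buyer $i$; without loss of generality $\alpha_i\ge\beta_i$, so $(\bm\alpha\vee\bm\beta)_i=\alpha_i$. Passing from $\bm\alpha$ to $\bm\alpha\vee\bm\beta$ leaves coordinate $i$ unchanged and can only raise the other coordinates, so by cross-monotonicity $\varphi_i(\bm\alpha\vee\bm\beta)\le\varphi_i(\bm\alpha)\le\rho_i$. As $i$ was arbitrary, $\bm\alpha\vee\bm\beta\in\mathcal{A}$.

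The final step extracts the maximum. For each $i$ set $\alpha_i^{\max}:=\sup\{\alpha_i:\bm\alpha\in\mathcal{A}\}$, which is finite since $\mathcal{A}\subseteq[0,1]^n$, and choose $\bm\alpha^{(i,k)}\in\mathcal{A}$ with $\alpha_i^{(i,k)}\to\alpha_i^{\max}$ as $k\to\infty$. Put $\bm\alpha^{(k)}:=\bigvee_{i=1}^n\bm\alpha^{(i,k)}$, which lies in $\mathcal{A}$ by the previous step. Then $\alpha_i^{(i,k)}\le\alpha_i^{(k)}\le\alpha_i^{\max}$ for every $i$, so by squeezing $\bm\alpha^{(k)}\to(\alpha_1^{\max},\dots,\alpha_n^{\max})=:\bm\alpha^{\max}$. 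Since $\mathcal{A}$ is compact by \Cref{lem:closed-set}, the limit $\bm\alpha^{\max}$ lies in $\mathcal{A}$, and by construction $\alpha_i^{\max}\ge\alpha_i$ for every $\bm\alpha\in\mathcal{A}$ and every $i$; this is exactly the claimed maximum tuple.

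The only genuinely delicate point is the cross-monotonicity claim of the first step, and even that is easy: one only has to note that in the bid-discount rule the payment conditional on winning does not depend on the opponents' multipliers (unlike in pacing, where the winner pays her discounted bid), and that ties contribute nothing. Everything after that is bookkeeping on top of \Cref{lem:closed-set}.
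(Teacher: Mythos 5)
Your proposal is correct and takes essentially the same route as the paper's proof: both hinge on the observation that a winner pays her own (undiscounted) bid so raising competitors' multipliers only shrinks a buyer's winning set and hence her expected payment, both use this to show the budget-feasible set is closed under componentwise maxima, and both invoke the compactness of $\mathcal{A}$ from \Cref{lem:closed-set} to conclude that the componentwise supremum tuple is itself feasible. The only cosmetic difference is your explicit sequence-and-squeeze limit versus the paper's $\epsilon$-parametrized limit-point argument, which is immaterial.
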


\begin{proof}[Proof of \Cref{lem:maximum-multiplier-tuple-BDFPA}]
    First, for any given $\mathcal{M}^{\rm BDF}(\bm \alpha^{(1)})$ and $\mathcal{M}^{\rm BDF}(\bm \alpha^{(2)})$, define $\bm \alpha^{\rm h}$ be the entry-wise maximum of $\bm \alpha^{(1)}$ and $\bm \alpha^{(2)}$. We will show that $\bm \alpha^{\rm h}$ is also a feasible tuple of bid-discount multipliers. 
    
    We only need to verify that the budget-feasible constraint is met for any buyer, and we prove this by showing that a buyer's expected payment in $\mathcal{M}^{\rm BDF}(\bm \alpha^{\rm h})$ is no more than her expected payment in the higher of $\mathcal{M}^{\rm BDF}(\bm \alpha^{(1)})$ and $\mathcal{M}^{\rm BDF}(\bm \alpha^{(2)})$. For some buyer $i$, assume that $\alpha^{\rm h}_i = \alpha^{(1)}_i$ without loss of generality. Note that the payment is irrelevant with the discount multipliers given the allocation. Thus it suffices to show that when $q_1,\ldots,q_n$ are fixed, if buyer $i$ does not win in $\mathcal{M}^{\rm BDF}(\bm \alpha^{(1)})$, she does not win in $\mathcal{M}^{\rm BDF}(\bm \alpha^{\rm h})$ as well. 
    Now that buyer $i$ does not win in $\mathcal{M}^{\rm BDF}(\bm \alpha^{(1)})$, 
    the highest discounted bid in $\mathcal{M}^{\rm BDF}(\bm \alpha^{(1)})$ must be higher than the discounted bid of buyer $i$. Since buyer $i$'s discounted bid in $\mathcal{M}^{\rm BDF}(\bm \alpha^{\rm h})$ is the same as in $\mathcal{M}^{\rm BDF}(\bm \alpha^{(1)})$, and the highest discounted bid in $\mathcal{M}^{\rm BDF}(\bm \alpha^{\rm h})$ is no less than the highest discounted bid in $\mathcal{M}^{\rm BDF}(\bm \alpha^{(1)})$, we conclude that buyer $i$ does not win in $\mathcal{M}^{\rm BDF}(\bm \alpha^{\rm h})$.
    
    We now complete the proof of the lemma. Let $\alpha_i^{\max}=\sup\{\alpha_i \mid \bm \alpha\text{ is budget-feasible}\}$. We will show that $\bm \alpha^{\max} = (\alpha_i^{\max})_{1\leq i\leq n}$ is also a feasible tuple of multipliers. In fact, for any $\epsilon > 0$ and any $1\leq i\leq n$, there exists a feasible $\bm \alpha$ such that $\alpha_i>\alpha_i^{\max} - \epsilon$. By repeatedly taking the component-wise maximum for all $i$, there is a feasible $\bm \alpha^{\epsilon}$ such that for every $i$, $\alpha_i^{\epsilon}>\alpha_i^{\max}-\epsilon$. Thus the sequence $\bm \alpha^\epsilon$ (as $\epsilon\to 0$) has a limit point $\bm \alpha^{\max}$. By \Cref{lem:closed-set}, this limit point is also a feasible tuple of multipliers.
\end{proof}

Next, We demonstrate that the bid-discount mechanism induced by the maximum tuple of multipliers is budget-extracting.

\begin{lemma}\label{lem:maximum-tuple-budget-extracting-BDFPA}
    $\mathcal{M}^{\rm BDF}(\bm \alpha^{\max})$ is budget-extracting.
\end{lemma}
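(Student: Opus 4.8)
The plan is to prove the statement by contradiction against the maximality of $\bm\alpha^{\max}$ provided by \Cref{lem:maximum-multiplier-tuple-BDFPA}. Since for BDFPA the upper limit of each multiplier is $1$, the budget-extracting requirement of \Cref{def:budget-extracting} for a buyer $i$ amounts to: either buyer $i$'s expected payment $p_i^{\max}$ in $\mathcal{M}^{\rm BDF}(\bm\alpha^{\max})$ equals $\rho_i$, or $\alpha_i^{\max} = 1$. So I would suppose toward a contradiction that there is some buyer $i$ with $p_i^{\max} < \rho_i$ and $\alpha_i^{\max} < 1$, and then build a strictly larger feasible tuple.

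First I would perturb only the $i$-th coordinate: set $\bm\alpha' \coloneqq \bm\alpha^{\max} + \delta \bm e_i$ for a small $\delta \in (0,\, 1 - \alpha_i^{\max}]$. By \Cref{lem:Lipschitz-continuity-BDFPA}, buyer $i$'s expected payment in $\mathcal{M}^{\rm BDF}(\bm\alpha')$ is at most $p_i^{\max} + C\delta$, so choosing $\delta$ small enough that $C\delta < \rho_i - p_i^{\max}$ keeps buyer $i$ budget-feasible under $\bm\alpha'$.

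The second step is to verify that the remaining buyers stay budget-feasible. For this I would fix an arbitrary quantile profile $\bm q$ and observe that increasing $\alpha_i$ raises buyer $i$'s discounted bid $\alpha_i\widetilde{v}_i(q_i)$ while leaving every other discounted bid and the reserve $\lambda$ unchanged; hence the maximum quantity that any buyer $i' \neq i$ must beat to win is weakly larger under $\bm\alpha'$, so buyer $i'$'s winning set (pointwise in $\bm q$, up to the tie set) only shrinks. Because in BDFPA a winner always pays her own bid $\widetilde{v}_{i'}(q_{i'})$, which is independent of the multipliers, buyer $i'$'s expected payment is weakly smaller under $\bm\alpha'$ than under $\bm\alpha^{\max}$, so her budget constraint still holds. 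Therefore $\bm\alpha'$ is a feasible tuple of bid-discount multipliers with $\alpha_i' > \alpha_i^{\max}$, contradicting \Cref{lem:maximum-multiplier-tuple-BDFPA}. This forces the budget-extracting condition to hold for every $i$, which is the claim.

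I expect the only mildly delicate point to be the monotonicity claim for the other buyers — phrasing it so that arbitrary tie-breaking and possible flatness of $\widetilde{v}_{i'}$ cause no trouble — but this is immediate once one notes that each buyer's payment is determined by the allocation alone and that the allocation region of every $i' \neq i$ is nonincreasing in $\alpha_i$ pointwise in $\bm q$. Everything else is a direct invocation of the Lipschitz-continuity lemma already established.
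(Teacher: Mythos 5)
Your proposal is correct and follows essentially the same argument as the paper: assume some buyer has a non-binding budget and multiplier below $1$, invoke \Cref{lem:Lipschitz-continuity-BDFPA} to raise her multiplier slightly without violating her budget, note that the other buyers' payments can only decrease, and contradict the maximality of $\bm\alpha^{\max}$ from \Cref{lem:maximum-multiplier-tuple-BDFPA}. The extra detail you supply on why the other buyers' payments are monotone (their winning regions shrink and their payments depend only on their own bids) is exactly the observation the paper uses implicitly.
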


\begin{proof}[Proof of \Cref{lem:maximum-tuple-budget-extracting-BDFPA}]
    Prove by contradiction. Now suppose $\mathcal{M}^{\rm BDF}(\bm \alpha^{\max})$ is not budget-extracting, which means there is a buyer $i$ such that $\alpha_i^{\max}<1$ and her budget is not binding, i.e., her expected payment is strictly less than her budget. By \Cref{lem:Lipschitz-continuity-BDFPA}, we can slightly increase $\alpha_i^{\max}$ while buyer $i$'s budget is still not binding. Note that when buyer $i$'s multiplier increases, the payment of any other buyer does not increase. Hence the budget-feasible constraint is still met for all buyers, and we obtain a feasible tuple with a strictly larger component, which contradicts the assumption that $\bm \alpha^{\max}$ is the maximum feasible tuple of multipliers.
\end{proof}

We have now already proved the first two statements, which claim that the maximum tuple of bid-discount multipliers $\bm \alpha^{\max}$ exists as well as its budget-extracting. Before proving the remaining statements, we present a critical observation in \Cref{lem:different-ratio-eliminate-budget-extracting}. That is, given a budget-extracting bid-discount multiplier tuple $\bm \alpha^{\rm e}$ and another feasible tuple $\bm \alpha$, if there is some buyer $l$ with minimum $\alpha_l / \alpha_l^{\rm e}$ that satisfies $\alpha_l < 1$, and another buyer $k$ with a larger $\alpha_k / \alpha_k^{\rm e}$ that has a positive payment in $\mathcal{M}^{\rm BDF}(\bm \alpha^{\rm e})$, then $\mathcal{M}^{\rm BDF}(\bm \alpha)$ is not budget-extracting. The insight of this observation is that for any quantile profile $\bm q$, buyer $l$ does not win in $\mathcal{M}^{\rm BDF}(\bm \alpha)$ as long as she does not get allocated in $\mathcal{M}^{
\rm BDFPA}(\bm \alpha^{\rm e})$. Moreover,
buyer $k$ overbids buyer $l$ in $\mathcal{M}^{\rm BDF}(\bm \alpha)$ on some quantile profiles with positive measure on which buyer $l$ wins in $\mathcal{M}^{\rm BDF}(\bm \alpha^{\rm e})$. Therefore, buyer $l$ strictly pays less in $\mathcal{M}^{\rm BDF}(\bm \alpha)$ than in $\mathcal{M}^{\rm BDF}(\bm \alpha^{\rm e})$ in expectation, rendering that $\mathcal{M}^{\rm BDF}(\bm \alpha)$ is not budget-extracting. 

\begin{lemma}\label{lem:different-ratio-eliminate-budget-extracting}
Let $\bm \alpha^{\rm e}$ be a budget-extracting feasible tuple of bid-discount multipliers, and $\bm \alpha' \leq \bm \alpha^{\rm e}$ be another feasible one. Let $\mathcal{I}=\argmin_i \alpha_i' / \alpha^{\rm e}_i$. If there is some $l \in \mathcal{I}$ such that $\alpha'_l < 1$, and $k \notin \mathcal{I}$ such that the payment of buyer $k$ in $\mathcal{M}^{\rm BDF}(\bm \alpha^{\rm e})$ is positive, then $\bm \alpha'$ is not budget-extracting.
\end{lemma}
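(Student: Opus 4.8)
The plan is to show that $\bm\alpha'$ violates the budget-extracting condition \emph{at buyer $l$}. By hypothesis $\alpha'_l<1$, so it suffices to prove that buyer $l$'s expected payment under $\bm\alpha'$ is strictly smaller than $\rho_l$. Since $\bm\alpha^{\rm e}$ is feasible, buyer $l$'s expected payment under $\bm\alpha^{\rm e}$ is already at most $\rho_l$, so it is enough to show that buyer $l$ pays strictly less, in expectation, under $\bm\alpha'$ than under $\bm\alpha^{\rm e}$. Write $c:=\alpha'_l/\alpha^{\rm e}_l=\min_i \alpha'_i/\alpha^{\rm e}_i$. Two degenerate cases are immediate: if $\alpha'_l=0$ (this covers $\alpha^{\rm e}_l=0$ and $c=0$), buyer $l$'s effective bid never reaches the reserve $\lambda>0$, so she never wins and pays $0<\rho_l$; and $c=1$ together with $\bm\alpha'\le\bm\alpha^{\rm e}$ forces $\bm\alpha'=\bm\alpha^{\rm e}$, contradicting the existence of $k\notin\mathcal I$. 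So I may assume $c\in(0,1)$.

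First I would record that buyer $l$'s winning region can only shrink. Let $W^{\rm e}_l$ and $W'_l$ be the (measurable) sets of quantile profiles on which buyer $l$ is allocated under $\bm\alpha^{\rm e}$ and $\bm\alpha'$. If buyer $l$ wins at $\bm q$ under $\bm\alpha'$, then $\alpha'_l\widetilde v_l(q_l)\ge\alpha'_{i'}\widetilde v_{i'}(q_{i'})\ge c\,\alpha^{\rm e}_{i'}\widetilde v_{i'}(q_{i'})$ for each $i'\ne l$ (because $l$ attains the minimum ratio, so $\alpha'_{i'}\ge c\,\alpha^{\rm e}_{i'}$) and $\alpha'_l\widetilde v_l(q_l)\ge\lambda$; dividing the first chain by $c$ and using $\alpha'_l=c\,\alpha^{\rm e}_l\le\alpha^{\rm e}_l$ gives $\alpha^{\rm e}_l\widetilde v_l(q_l)\ge\alpha^{\rm e}_{i'}\widetilde v_{i'}(q_{i'})$ and $\alpha^{\rm e}_l\widetilde v_l(q_l)\ge\lambda$, i.e.\ buyer $l$ also wins under $\bm\alpha^{\rm e}$; hence $W'_l\subseteq W^{\rm e}_l$ up to a null set. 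In a BDFPA the winner always pays her own (undiscounted) bid $\widetilde v_l(q_l)$, so buyer $l$'s expected payment under a tuple $\bm\alpha$ equals $\int_{W_l}\widetilde v_l(q_l)\diffe\bm q$, and on $W^{\rm e}_l$ we have $\widetilde v_l(q_l)\ge\lambda/\alpha^{\rm e}_l>0$. Therefore it suffices to show that $W^{\rm e}_l\setminus W'_l$ has positive Lebesgue measure. (If $W^{\rm e}_l$ is itself null, buyer $l$ pays $0<\rho_l$ and we are done; so I may assume $W^{\rm e}_l$ non-null, which by strict monotonicity of $\widetilde v_l$ forces $\alpha^{\rm e}_l\widetilde v_l(1)>\lambda$.)

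The heart of the argument is a ``buyer $k$ overtakes buyer $l$'' phenomenon. Since $l\in\mathcal I$ and $k\notin\mathcal I$, the ratio $R:=(\alpha'_k/\alpha^{\rm e}_k)/(\alpha'_l/\alpha^{\rm e}_l)$ is strictly greater than $1$, and hence on any profile with $\widetilde v_k(q_k)>0$,
\[
\frac{\alpha'_k\widetilde v_k(q_k)}{\alpha'_l\widetilde v_l(q_l)}\;=\;R\cdot\frac{\alpha^{\rm e}_k\widetilde v_k(q_k)}{\alpha^{\rm e}_l\widetilde v_l(q_l)}.
\]
Thus buyer $k$ strictly outbids buyer $l$ under $\bm\alpha'$ whenever, under $\bm\alpha^{\rm e}$, buyer $l$ beats buyer $k$ by a factor strictly less than $R$. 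I would then exhibit a positive-measure set of profiles on which buyer $l$ wins under $\bm\alpha^{\rm e}$ with buyer $k$ as the runner-up, buyer $k$'s effective bid lying in $(\lambda,\alpha^{\rm e}_l\widetilde v_l(1))$ and buyer $l$'s effective bid exceeding it by a factor in $[1,R)$; by the displayed identity such profiles lie in $W^{\rm e}_l\setminus W'_l$. To build them, fix a level $v^\star$ strictly between $\max\{\lambda,\ \max_{i'\ne l,k}\alpha^{\rm e}_{i'}\widetilde v_{i'}(0^{+})\}$ and $\alpha^{\rm e}_l\widetilde v_l(1)$ — such $v^\star$ exists because $W^{\rm e}_l$ being non-null forbids any competitor's minimum effective bid from reaching $\alpha^{\rm e}_l\widetilde v_l(1)$ — then take $q_k$ with $\alpha^{\rm e}_k\widetilde v_k(q_k)$ slightly above $v^\star$, $q_l$ with $\alpha^{\rm e}_l\widetilde v_l(q_l)\in[\alpha^{\rm e}_k\widetilde v_k(q_k),\,R\cdot\alpha^{\rm e}_k\widetilde v_k(q_k))$, and $q_{i'}$ with $\alpha^{\rm e}_{i'}\widetilde v_{i'}(q_{i'})<v^\star$ for every $i'\ne l,k$. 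The existence of a $q_k$-window just above $\lambda$ is where the hypothesis ``buyer $k$ has positive expected payment under $\bm\alpha^{\rm e}$'' enters, since it forces $\alpha^{\rm e}_k\widetilde v_k(1)>\lambda$.

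The step I expect to be the main obstacle is exactly this last positivity-of-measure claim: one must guarantee that buyer $k$'s effective bid, buyer $l$'s effective bid, and every other competitor's effective bid can be placed in their respective narrow windows on quantile sets of positive measure \emph{simultaneously}. This is precisely where the inverse Lipschitz continuity of the bidding quantile functions (together with $\widetilde v_i(0)=0$) is used: it is what ties the measure of a set of quantiles to the width of the corresponding window in bid space and prevents the effective-bid ranges from skipping over the relevant windows. The remaining ingredients — the reduction to buyer $l$, the containment $W'_l\subseteq W^{\rm e}_l$, and the ratio inequality $R>1$ — are routine, so essentially all the care in the proof goes into the measure bookkeeping for this witnessing region.
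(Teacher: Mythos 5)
Your overall strategy is the same as the paper's: reduce to showing that buyer $l$'s expected payment strictly drops when passing from $\bm\alpha^{\rm e}$ to $\bm\alpha'$ (your containment $W'_l\subseteq W^{\rm e}_l$ and the reduction ``$p'_l<\rho_l$ plus $\alpha'_l<1$ kills budget-extraction'' are exactly the paper's first step, just phrased directly instead of by contradiction), and then exhibit a positive-measure set of profiles on which buyer $k$ overtakes buyer $l$ under $\bm\alpha'$ while $l$ wins under $\bm\alpha^{\rm e}$. The gap is in the step you yourself flag as the main obstacle: your witnessing region is built by fixing a level $v^\star$ (chosen from $l$'s and the \emph{other} competitors' bid ranges only) and then asking for a positive-measure set of $q_k$ with $\alpha^{\rm e}_k\widetilde v_k(q_k)$ ``slightly above $v^\star$'' and a positive-measure set of $q_l$ with $\alpha^{\rm e}_l\widetilde v_l(q_l)$ in the factor-$R$ band above it. Nothing in the hypotheses delivers this: ``$k$ has positive payment under $\bm\alpha^{\rm e}$'' only says $k$'s effective bid beats $\lambda$ and the field on some positive-measure set, and says nothing about its range meeting a window pinned near your $v^\star$; likewise $l$'s range need not meet $[\alpha^{\rm e}_k\widetilde v_k(q_k),\,R\,\alpha^{\rm e}_k\widetilde v_k(q_k))$. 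The justification you offer is also the wrong tool: inverse Lipschitz continuity bounds the slope of the qf \emph{from below}, hence bounds the measure of quantiles mapping into a bid-window from \emph{above}, and explicitly allows jumps (the paper stresses that inverse Lipschitz functions need not be continuous), so it does not prevent the effective-bid ranges from skipping your windows; moreover $\widetilde v_i(0)=0$ is not an assumption of the model.

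The paper closes exactly this hole differently. Because both $p_l$ and $p_k$ are positive (for $p_l$ the paper gets this from the contradiction hypothesis; in your direct framing the non-null-$W^{\rm e}_l$ reduction plays the same role), there are quantile points where $l$ beats $k$ with positive winning probability and points where $k$ beats $l$ with positive winning probability. An intermediate-value/crossing argument along the two scaled bid curves between these anchored points produces $q_l^{(3)},q_k^{(3)}$ with $\alpha^{\rm e}_l\widetilde v_l(q_l^{(3)})=\alpha^{\rm e}_k\widetilde v_k(q_k^{(3)})$ and positive winning probability for $l$ at $q_l^{(3)}$; a small box $[q_l^{(3)},q_l^{(3)}+\delta]\times[q_k^{(3)}-\delta,q_k^{(3)}]$ then works because the strict ratio gap $\alpha'_l/\alpha^{\rm e}_l<\alpha'_k/\alpha^{\rm e}_k$ makes $k$ strictly overtake $l$ there, and $l$'s winning probability is monotone in $q_l$. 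Your proposal never locates such a crossing point tied to where the two buyers actually bid, so the positivity-of-measure claim is unproven and the proof as written does not go through; to repair it you would need to replace the ``fixed $v^\star$ window'' construction with a crossing-point argument of this kind (which itself uses pointwise continuity of the qfs at the crossing point, not inverse Lipschitz continuity).
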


\begin{proof}[Proof of \Cref{lem:different-ratio-eliminate-budget-extracting}]
Prove by contradiction. Suppose $\bm \alpha'$ is budget-extracting instead. Since buyer $l$'s bid-discount multiplier is cut the most fraction from $\bm \alpha^{\rm e}$ to $\bm \alpha'$, when she does not win in $\mathcal{M}^{\rm BDF}(\bm \alpha^{\rm e})$ with quantile profile $\bm q$, she does not win the item in $\mathcal{M}^{\rm BDF}(\bm \alpha')$ as well. Thereby $p_l^{\rm e} \geq p'_l$, where $p'_l$ and $p_l^{\rm e}$ denote buyer $l$'s expected payment in $\mathcal{M}^{\rm BDF}(\bm \alpha')$ and $\mathcal{M}^{\rm BDF}(\bm \alpha^{\rm e})$ respectively. Now it suffices to show that $p'_l < p_l^{\rm e}$, which, combining $\alpha_l < 1$, is inconsistent with the fact that $\bm \alpha'$ is budget-extracting.

By definition, we have 
\begin{align*}
    p_l^{\rm e} - p'_l &= \int_0^1 \widetilde{v}_{l}(q_{l}) \left(\int_{q_{-l}} \left(\Phi_l(\bm \alpha^{\rm e}, \widetilde{\bm v}, \bm q) - \Phi_l(\bm \alpha', \widetilde{\bm v}, \bm q)\right)\diffe q_{-l}\right)\diffe q_{l}\\
    &=\int_0^1 \widetilde{v}_l(q_l) \left( \int_0^1 \left( \Phi_{l > k}(\bm \alpha^{\rm e}, q_l, q_k) - \Phi_{l > k}(\bm \alpha', q_l, q_k) \right)\diffe q_k \right)\diffe  q_l.
\end{align*}
Here, $\Phi_{l > k}(\bm \alpha, q_l, q_k)$ is defined as $\int_{\bm q_{-\{l, k\}}} \Phi_l(\bm \alpha, \widetilde{\bm v}, \bm q)\diffe \bm q_{-\{l, k\}}$, which represents the probability that buyer $l$ wins the item given $q_l$ and $q_k$. We implicitly take $\widetilde{\bm v}$ as fixed. Further, define
\begin{align*}
    H(\bm \alpha, \underline{\eta}, \bar{\eta}, \underline{\theta}, \bar{\theta}) \coloneqq \int_{\underline{\eta}}^{\bar{\eta}} \widetilde{v}_l(q_l) \left( \int_{\underline{\theta}}^{\bar{\theta}} \Phi_{l > k}(\bm \alpha, q_l, q_k)\diffe  q_k \right)\diffe q_l
\end{align*}
as the expected payment of buyer $l$ in $\mathcal{M}^{\rm BDF}(\bm \alpha)$ when $q_l$ and $q_k$ range from $[\underline{\eta}, \bar{\eta}]$ and $[\underline{\theta}, \bar{\theta}]$ respectively. 
Notice that for any $\bm \alpha$, $0\leq \eta_1\leq \eta_3\leq \eta_2\leq 1$ and $0\leq \theta_1\leq \theta_3\leq \theta_2\leq 1$, 
\begin{align*}
    H(\bm \alpha, \eta_1,\eta_2,\theta_1,\theta_2) & = H(\bm \alpha, \eta_1,\eta_3,\theta_1,\theta_2) + H(\bm \alpha, \eta_3,\eta_2,\theta_1,\theta_2) \\
    & = H(\bm \alpha, \eta_1,\eta_2,\theta_1,\theta_3) + H(\bm \alpha, \eta_1,\eta_2,\theta_3,\theta_2).
\end{align*}

The remaining proof of \Cref{lem:different-ratio-eliminate-budget-extracting} is divided into two parts. We first demonstrate that for any $0\leq \eta_1\leq \eta_2\leq 1$ and $0\leq \theta_1\leq \theta_2\leq 1$, we have $H(\bm \alpha^{\rm e}, \eta_1,\eta_2,\theta_1,\theta_2)-H(\bm \alpha', \eta_1,\eta_2,\theta_1,\theta_2) \geq 0$. Then we find $\eta_1^0,\eta_2^0,\theta_1^0,\theta_2^0$ such that $H(\bm \alpha^{\rm e}, \eta_1^0,\eta_2^0,\theta_1^0,\theta_2^0) - H(\bm \alpha', \eta_1^0,\eta_2^0,\theta_1^0,\theta_2^0) > 0$. The above collaboratively implies that
\begin{equation*}
    p_l^{\rm e}-p'_l =H(\bm \alpha^{\rm e},0,1,0,1) - H(\bm \alpha',0,1,0,1) \geq H(\bm \alpha^{\rm e}, \eta_1^0, \eta_2^0, \theta_1^0, \theta_2^0)-H(\bm \alpha', \eta_1^0,\eta_2^0,\theta_1^0, \theta_2^0) > 0,
\end{equation*}
which concludes the proof of \Cref{lem:different-ratio-eliminate-budget-extracting}.

For the first part, with the observation that
\begin{align*}
    & H(\bm \alpha^{\rm e}, \eta_1,\eta_2,\theta_1,\theta_2)-H(\bm \alpha', \eta_1,\eta_2,\theta_1,\theta_2) \\
    =\,& \int_{\eta_1}^{\eta_2} \widetilde{v}_l(q_l) \left( \int_{\theta_1}^{\theta_2} \left( \Phi_{l > k}(\bm \alpha^{\rm e}, q_l, q_k) - \Phi_{l > k}(\bm \alpha', q_l, q_k) \right)\diffe q_k \right)\diffe  q_l,
\end{align*}
it suffices to prove for any $q_l\in [0,1],q_k\in [0,1]$, 
\begin{equation}\label{eq:inner-function-of-H}
    \Phi_{l > k}(\bm \alpha^{\rm e}, q_l, q_k) - \Phi_{l > k}(\bm \alpha', q_l, q_k) \geq 0.
\end{equation}
Recall that the two terms in \eqref{eq:inner-function-of-H} are the probability that buyer $l$ wins the item in $\mathcal{M}^{\rm BDF}(\bm \alpha^{\rm e})$ and $\mathcal{M}^{\rm BDF}(\bm \alpha)$ when $q_k$ and $q_l$ are fixed, respectively. Given quantile profile $\bm q$, since $l \in \mathcal{I} = \argmin_i \alpha'_i / \alpha^{\rm e}_i$ and $\bm \alpha' \leq \bm \alpha^{\rm e}$, then by the allocation rule, if $l$ wins in $\mathcal{M}^{\rm BDF}(\bm \alpha)$, she wins in $\mathcal{M}^{\rm BDF}(\bm \alpha^{\rm e})$ as well. As a result, when $q_k$ and $q_l$ are fixed, buyer $l$ certainly does not have less probability to win in $\mathcal{M}^{\rm BDF}(\bm \alpha^{\rm e})$ than in $\mathcal{M}^{\rm BDF}(\bm \alpha)$.

Now we establish the existence of $\eta_1^0 < \eta_2^0$ and $\theta_1^0 < \theta_2^0$ such that $H(\bm \alpha^{\rm e},\eta_1^0,\eta_2^0,\theta_1^0,\theta_2^0) > H(\bm \alpha',\eta_1^0,\eta_2^0,\theta_1^0,\theta_2^0)$.

The budget-extracting property of $\bm \alpha'$ and that $\alpha'_l < 1$ in together imply $p'_l = B_l > 0$. Since $p_l^{\rm e} \geq p'_l$, we have $p_l^{\rm e} = B_l > 0$. As a result, there are $q_l^{(1)} < 1$ and $q_k^{(1)} > 0$ such that $\Phi_{l > k}(\bm \alpha^{\rm e}, q_l^{(1)}, q_k^{(1)}) > 0$.\footnote{Otherwise, the Lebesgue measure of quantile profiles that $l$ wins is zero, contradicting that the expected payment of $l$ is positive. } Symmetrically, since $p_k^{\rm e}$ is positive as well, there are $q_k^{(2)} < 1$ and $q_l^{(2)} > 0$ such that $\Phi_{k > l}(\bm \alpha^{\rm e}, q_k^{(2)}, q_l^{(2)}) > 0$. 
We can further assume that $0 < q_l^{(2)}\leq q_l^{(1)} < 1$ and $0 < q_k^{(1)}\leq q_k^{(2)} < 1$, or else, we can swap $q_l^{(1)}$ and $q_l^{(2)}$ or $q_k^{(1)}$ and $q_k^{(2)}$ without breaking the above statements. We want to find $q_k^{(3)}$ and $q_l^{(3)}$ such that $\alpha_l^{\rm e}\widetilde{v}_l(q_l^{(3)}) = \alpha_k^{\rm e}\widetilde{v}_k(q_k^{(3)})$, and the probability that $l$ wins with $q_l^{(3)}$ under $\bm \alpha^{\rm e}$ is positive. We construct as follows:
\begin{itemize}
    \item If $\alpha_k^{\rm e}\widetilde{v}_k(q_k^{(2)}) \geq \alpha_l^{\rm e}\widetilde{v}_l(q_l^{(1)})$, let $q_l^{(3)}=q_l^{(1)}$, and there exists $q_k^{(3)}\in [q_k^{(1)},q_k^{(2)}]$ such that $\alpha_k^{\rm e}\widetilde{v}_k(q_k^{(3)})=\alpha_l^{\rm e}\widetilde{v}_l(q_l^{(3)})$ due to the continuity of $\widetilde{v}_k(q_k)$ and that $\alpha_l^{\rm e}\widetilde{v}_l(q_l^{(1)}) \geq \alpha_k^{\rm e}\widetilde{v}_k(q_k^{(1)})$. $l$ wins with positive probability with $q_l^{(3)}$ since $q_l^{(3)} = q_l^{(1)}$. 
    \item If $\alpha_k^{\rm e}\widetilde{v}_k(q_k^{(2)}) < \alpha_l^{\rm e}\widetilde{v}_l(q_l^{(1)})$, let $q_k^{(3)}=q_k^{(2)}$, and there exists $q_l^{(3)}\in [q_l^{(2)},q_l^{(1)}]$ such that $\alpha_l^{\rm e}\widetilde{v}_l(q_l^{(3)}) = \alpha_k^{\rm e}\widetilde{v}_k(q_k^{(3)})$ due to the continuity of $\widetilde{v}_l(q_l)$ and that $\alpha_k^{\rm e}\widetilde{v}_k(q_k^{(2)})\geq \alpha_l^{\rm e}\widetilde{v}_l(q_l^{(2)})$. $l$ wins with positive probability with $q_l^{(3)}$ since $k$ wins with positive probability with $q_k^{(3)}=q_k^{(2)}$. 
\end{itemize}
Moreover, as $\alpha'_l / \alpha_l^{\rm e} < \alpha'_k / \alpha_k^{\rm e}$, 
there exists a sufficiently small $\delta>0$ such that for any $q_k\in [q_k^{(3)} - \delta, q_k^{(3)}]$ and $q_l\in [q_l^{(3)},q_l^{(3)} + \delta]$ (note that $q_k^{(3)}<1$ and $q_l^{(3)} > 0$), the probability that $l$ wins with $q_l$ under $\bm \alpha^{\rm e}$ is no less than a positive constant, and
\[
     \alpha_k^{\rm e}\widetilde{v}_k(q_k)\cdot \frac{\alpha'_k}{\alpha_k^{\rm e}}> \alpha_l^{\rm e}\widetilde{v}_l(q_l) \cdot \frac{\alpha'_l}{\alpha_l^{\rm e}},
\]
that is, $\alpha'_k\widetilde{v}_k(q_k)> \alpha'_l\widetilde{v}_l(q_l)$. Moreover, for any $q_k\in [q_k^{(3)}-\delta, q_k^{(3)}],q_l\in [q_l^{(3)},q_l^{(3)}+\delta]$, we have $\alpha_l^{\rm e}\widetilde{v}_l(q_l)\geq \alpha_l^{\rm e}\widetilde{v}_l(q_l^{(3)})=\alpha_k^{\rm e}\widetilde{v}_k(q_k^{(3)})\geq \alpha_k^{\rm e}\widetilde{v}_k(q_k)$. Therefore,
\begin{gather*}
    H(\bm \alpha^{\rm e}, q_l^{(3)},q_l^{(3)}+\delta,q_k^{(3)}-\delta,q_k^{(3)}) 
    = \int_{q_l^{(3)}}^{q_l^{(3)}+\delta} \widetilde{v}_l(q_l) \left( \int_{q_k^{(3)}-\delta}^{q_k^{(3)}} \Phi_{l > k}(\bm \alpha^{\rm e}, q_l, q_k)\diffe  q_k \right)\diffe q_l > 0, \\
    H(\bm \alpha', q_l^{(3)},q_l^{(3)}+\delta,q_k^{(3)}-\delta,q_k^{(3)})
    = \int_{q_l^{(3)}}^{q_l^{(3)}+\delta} \widetilde{v}_l(q_l) \left( \int_{q_k^{(3)}-\delta}^{q_k^{(3)}} \Phi_{l > k}(\bm \alpha', q_l, q_k)\diffe  q_k \right)\diffe q_l
    =0. 
\end{gather*}
    
Taking $\eta_1^0=q_l^{(3)},\eta_2^0=q_l^{(3)}+\delta,\theta_1^0=q_k^{(3)}-\delta,\theta_2^0=q_k^{(3)}$ finishes this part, 
and the above collaboratively concludes the proof of \Cref{lem:different-ratio-eliminate-budget-extracting}.
\end{proof}

With the help of \Cref{lem:different-ratio-eliminate-budget-extracting}, we can characterize a budget-extracting bid-discount multiplier tuple by comparing it with the maximum tuple $\bm \alpha^{\max}$.

\begin{lemma}\label{lem:payment-and-multipliers-of-budget-extracting-tuples}
    For any budget-extracting bid-discount multiplier tuple $\bm \alpha^{\rm e}$, the following two conditions are satisfied: 
    \begin{itemize}
        \item There exists some $\nu \leq 1$, such that for any $1\leq i\leq n$ satisfying $p_i^{\max}$ (buyer $i$'s expected payment in $\mathcal{M}^{\rm BDF}(\bm \alpha^{\max})$) is positive, $\alpha_i^{\rm e} / \alpha_i^{\max} = \nu$; 
        \item For any $1\leq i\leq n$ satisfying $p_i^{\max} = 0$, $p_i^{\rm e} = 0$ (buyer $i$ never wins in $\mathcal{M}^{\rm BDF}(\bm \alpha^{\rm e})$) and $\alpha_i^{\rm e} = \alpha_i^{\max} = 1$. 
    \end{itemize}
\end{lemma}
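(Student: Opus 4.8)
The plan is to compare an arbitrary budget-extracting tuple $\bm\alpha^{\rm e}$ against the maximum tuple $\bm\alpha^{\max}$ produced by \Cref{lem:maximum-multiplier-tuple-BDFPA,lem:maximum-tuple-budget-extracting-BDFPA}. Since $\bm\alpha^{\rm e}$ is feasible, maximality gives $\bm\alpha^{\rm e}\le\bm\alpha^{\max}$ coordinatewise; and since $(\lambda/2)^n$ is trivially feasible (nobody clears the reserve), every coordinate of $\bm\alpha^{\max}$ is at least $\lambda/2>0$, so the ratios $\alpha_i^{\rm e}/\alpha_i^{\max}$ are well-defined. Write $\nu\coloneqq\min_i\alpha_i^{\rm e}/\alpha_i^{\max}\le1$, $\mathcal I\coloneqq\argmin_i\alpha_i^{\rm e}/\alpha_i^{\max}$, and (as in statement~5) $\mathcal I_1\coloneqq\{i:p_i^{\max}>0\}$, $\mathcal I_2\coloneqq[n]\setminus\mathcal I_1$. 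I would first record the easy half of statement~2: because $\bm\alpha^{\max}$ is budget-extracting and every $\rho_i>0$, any $i\in\mathcal I_2$ has $p_i^{\max}=0\neq\rho_i$, so its budget constraint is slack, forcing $\alpha_i^{\max}=1$.

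For the ratio claim I would invoke \Cref{lem:different-ratio-eliminate-budget-extracting} with $\bm\alpha^{\max}$ in the role of the budget-extracting tuple and $\bm\alpha^{\rm e}$ in the role of the smaller feasible tuple. Its conclusion — that $\bm\alpha^{\rm e}$ is not budget-extracting — is false by hypothesis, so its premises cannot both hold: either every $l\in\mathcal I$ has $\alpha_l^{\rm e}=1$, or no $k\notin\mathcal I$ has $p_k^{\max}>0$, i.e. $\mathcal I_1\subseteq\mathcal I$. In the first alternative, $l\in\mathcal I$ with $\alpha_l^{\rm e}=1$ forces $\nu=1/\alpha_l^{\max}\ge1$, hence $\nu=1$ and $\alpha_l^{\max}=1$; since $\nu=1$ is the minimal ratio and all ratios are $\le1$, every ratio equals $1$, i.e. $\bm\alpha^{\rm e}=\bm\alpha^{\max}$, and both statements follow immediately (statement~1 with $\nu=1$; statement~2 from $\bm\alpha^{\rm e}=\bm\alpha^{\max}$ together with the slackness observation). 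So it remains to treat $\bm\alpha^{\rm e}\neq\bm\alpha^{\max}$, where necessarily $\mathcal I_1\subseteq\mathcal I$ (this is statement~1, with the common ratio $\nu$) and $\nu<1$.

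The delicate part is statement~2 in this case: showing $\alpha_i^{\rm e}=1$ and $p_i^{\rm e}=0$ for every $i\in\mathcal I_2$. The key tool I would use is convexity of the set of budget-extracting tuples (for the fixed profile), which follows from \Cref{thm:budget-extracting-revenue-maximizing-BDFPA} and \Cref{lem:optimality-conditions}: a tuple $\bm\alpha$ is budget-feasible and budget-extracting iff $1^n-\bm\alpha$ minimizes the convex function $\chi^{\rm BDF}$ over $[0,1]^n$, and the minimizer set of a convex function is convex. Hence the whole segment $\bm\alpha(t)=(1-t)\bm\alpha^{\rm e}+t\bm\alpha^{\max}$ consists of budget-extracting tuples. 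For $i\in\mathcal I_2$ with $\alpha_i^{\rm e}<1$ one has $\alpha_i(t)<1$ for all $t<1$ (since $\alpha_i^{\max}=1$), so $\bm\alpha(t)$ budget-extracting forces $p_i(\bm\alpha(t))=\rho_i$; letting $t\to1$ and using continuity of payments in the tuple (\Cref{lem:Lipschitz-continuity-BDFPA}) gives $p_i^{\max}=\rho_i$, contradicting $i\in\mathcal I_2$. Thus $\alpha_i^{\rm e}=1=\alpha_i^{\max}$ for all $i\in\mathcal I_2$. Running the same segment argument on $k\in\mathcal I_1$, where $\alpha_k^{\rm e}=\nu\alpha_k^{\max}<1$ because $\nu<1$, also yields $p_k^{\max}=\rho_k$ for every $k\in\mathcal I_1$.

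To finish with $p_i^{\rm e}=0$ for $i\in\mathcal I_2$, I would compare the winning regions under $\bm\alpha^{\rm e}$ and $\bm\alpha^{\max}$. Since $\alpha_j^{\rm e}=\alpha_j^{\max}(=1)$ for $j\in\mathcal I_2$ while $\alpha_k^{\rm e}=\nu\alpha_k^{\max}$ for $k\in\mathcal I_1$, each $\mathcal I_1$-buyer is uniformly weakened by the factor $\nu<1$ relative to the $\mathcal I_2$-buyers and to the reserve; dividing its winning inequality by $\nu$ shows that for each $k\in\mathcal I_1$ the event ``$k$ wins under $\bm\alpha^{\rm e}$'' is contained in ``$k$ wins under $\bm\alpha^{\max}$''. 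Combined with $p_k^{\rm e}=\rho_k=p_k^{\max}$ and the fact that a winner necessarily bids at least $\lambda>0$, this containment is an equality up to a null set, so $\bigcup_{k\in\mathcal I_1}\{k\text{ wins under }\bm\alpha^{\rm e}\}=\bigcup_{k\in\mathcal I_1}\{k\text{ wins under }\bm\alpha^{\max}\}$. Because no $\mathcal I_2$-buyer ever wins under $\bm\alpha^{\max}$ (as $p_i^{\max}=0$, $\alpha_i^{\max}=1$, $\lambda>0$), the complement of this union equals $\{\max_i\alpha_i^{\max}\widetilde v_i(q_i)<\lambda\}$; hence any $\mathcal I_2$-buyer winning under $\bm\alpha^{\rm e}$ must do so inside that complement — but there $\alpha_j^{\rm e}\widetilde v_j(q_j)=\widetilde v_j(q_j)=\alpha_j^{\max}\widetilde v_j(q_j)<\lambda$ for every $j\in\mathcal I_2$, so it fails the reserve, a contradiction unless that region is null. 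Thus $p_i^{\rm e}=0$ for all $i\in\mathcal I_2$, completing statement~2. I expect this last step — the containment-of-winning-regions plus payment-equality argument, and the measure-theoretic bookkeeping around ties and boundary sets (where strict monotonicity from inverse Lipschitz continuity is used) — to be the main obstacle, since it is precisely what precludes the a priori scenario in which weakening the $\mathcal I_1$-buyers lets a previously dominated buyer start to win.
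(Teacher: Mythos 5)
Your proof is correct, and for the first bullet it coincides with the paper's: both take the contrapositive of \Cref{lem:different-ratio-eliminate-budget-extracting} with $\bm\alpha^{\max}$ as the budget-extracting tuple and $\bm\alpha^{\rm e}$ as the smaller feasible one (your explicit handling of the degenerate branch $\bm\alpha^{\rm e}=\bm\alpha^{\max}$ and of the well-definedness of the ratios is a nice tidying-up). Where you genuinely diverge is the second bullet. The paper argues it by a second, rather terse invocation of \Cref{lem:different-ratio-eliminate-budget-extracting} (taking $k\in\mathcal I_2$ with $p_k^{\rm e}>0$ and an arbitrary $l\in\mathcal I_1$ to contradict budget-extraction), deriving $p_i^{\rm e}=0$ first and then $\alpha_i^{\rm e}=1$ from the budget-extracting condition. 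You instead import the duality machinery of \Cref{thm:budget-extracting-revenue-maximizing-BDFPA} and \Cref{lem:optimality-conditions}: feasible budget-extracting tuples are exactly $1^n$ minus the minimizers of the convex function $\chi^{\rm BDF}$, so they form a convex set; the segment between $\bm\alpha^{\rm e}$ and $\bm\alpha^{\max}$ then forces $\alpha_i^{\rm e}=1$ on $\mathcal I_2$ and $p_k^{\max}=\rho_k$ on $\mathcal I_1$, after which a direct winning-region containment (dividing the winning inequality by $\nu$) plus the payment equality $p_k^{\rm e}=\rho_k=p_k^{\max}$ pins down the winning regions a.e.\ and kills any positive-measure win for an $\mathcal I_2$-buyer under $\bm\alpha^{\rm e}$. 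This route buys a cleaner, more mechanical justification of the delicate step (and as a by-product essentially re-derives the same-payment statement), at the cost of relying on the LP/duality characterization, which needs strict monotonicity of the bidding qfs (available here, since increasing plus inverse Lipschitz gives it) and is not circular, since \Cref{lem:optimality-conditions} does not depend on \Cref{thm:properties-eBDFPA}. Two small points to make explicit if you write this up: your appeal to \Cref{lem:Lipschitz-continuity-BDFPA} for ``continuity of payments in the tuple'' is the same slight overstatement the paper makes in \Cref{lem:closed-set} (that lemma only bounds the one-sided own-multiplier effect); for your limit $t\to 1$ it suffices to note that $\limsup_{t\to1}$ of the winning sets is contained in the winning set at $\bm\alpha^{\max}$, so reverse Fatou gives $\rho_i\le p_i^{\max}$, which is all you need; and, as you flag, ties and boundary events must be dismissed as null sets, which strict monotonicity of the qfs ensures.
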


\begin{proof}[Proof of \Cref{lem:payment-and-multipliers-of-budget-extracting-tuples}]
Let $\mathcal{I}_1 = \{1\leq i\leq n\mid p_i^{\max} > 0\}$ be the set of buyers whose payment in $\mathcal{M}^{\rm BDF}(\bm \alpha^{\max})$ are positive, and $\mathcal{I}_2 = [n] \setminus \mathcal{I}_1$ be the set of buyers whose payment in $\mathcal{M}^{\rm BDF}(\bm \alpha^{\max})$ are $0$. For a budget-extracting tuple of bid-discount multipliers $\bm \alpha^{\rm e}$ different from $\bm \alpha^{\max}$, we have $\alpha_i^{\rm e} \leq \alpha_i^{\max}$ for all $i$, with the inequality holds for at least one buyer. Define $\mathcal{I} \coloneqq \argmin_i \alpha_i^{\rm e} / \alpha_i^{\max}$ as the set of buyers whose bid-discount multipliers are cut the most from $\bm \alpha^{\max}$ to $\bm \alpha^{\rm e}$. Note that $\min_i \alpha_i^{\rm e} / \alpha_i^{\max} < 1$. Then we have $\mathcal{I}_2 \cap \mathcal{I} = \emptyset$, since otherwise every buyer in $\mathcal{I}_2 \cap \mathcal{I}$ has smaller bid-discount multiplier in $\mathcal{M}^{\rm BDF}(\bm \alpha^{\rm e})$ than in $\mathcal{M}^{\rm BDF}(\bm \alpha^{\max})$, whereas her payment remains $0$ in $\mathcal{M}^{\rm BDF}(\bm \alpha^{\rm e})$, contradicting that $\bm \alpha^{\rm e}$ is budget-extracting.

If $\mathcal{I}_1 \neq \mathcal{I}$, let $l \in \mathcal{I}$ and $k \in \mathcal{I}_1 \setminus \mathcal{I}$. Since $\alpha_l^{\rm e} < \alpha_l^{\max} \leq 1$, $p_k^{\max} > 0$ and $\alpha_l^{\rm e} / \alpha_l^{\max} < \alpha_k^{\rm e} / \alpha_k^{\max}$, by \Cref{lem:different-ratio-eliminate-budget-extracting} we derive a contradiction that $\bm \alpha^{\rm e}$ is not budget-extracting. Thus $\mathcal{I}_1 = \mathcal{I}$ must hold, which gives the first statement.

Moreover, if there exists $k \in \mathcal{I}_2$ such that $p_k^{\rm e} > 0$, let $l$ be an arbitrary buyer in $\mathcal{I}_1$. Applying \Cref{lem:different-ratio-eliminate-budget-extracting}, we conclude that $\bm \alpha^{\max}$ is not budget-extracting, which contradicts the assumption. Hence $p_i^{\rm e} = 0$ for every $i \in \mathcal{I}_2$. This implies the second statement.
\end{proof}

The properties of budget-extracting ${\rm BDFPA}$ presented in \Cref{lem:payment-and-multipliers-of-budget-extracting-tuples} are sufficient to show that all budget-extracting ${\rm BDFPA}$s bring the same payment for each buyer. 

\begin{lemma}\label{lem:same-payment-for-budget-extracting-tuples}
All budget-extracting BDFPAs bring the same payment for each buyer.
\end{lemma}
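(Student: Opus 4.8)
The plan is to reduce the statement to a comparison with the maximum tuple $\bm\alpha^{\max}$: since any budget-extracting BDFPA matches $\mathcal{M}^{\rm BDF}(\bm\alpha^{\max})$ buyer-by-buyer, two budget-extracting BDFPAs must match each other. So I would fix an arbitrary budget-extracting tuple $\bm\alpha^{\rm e}$ and show $p_i^{\rm e}=p_i^{\max}$ for every $1\leq i\leq n$. Invoking \Cref{lem:payment-and-multipliers-of-budget-extracting-tuples}, partition the buyers into $\mathcal{I}_1=\{i\mid p_i^{\max}>0\}$ and $\mathcal{I}_2=[n]\setminus\mathcal{I}_1$. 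For $i\in\mathcal{I}_2$ the conclusion is immediate, since that lemma already yields $p_i^{\rm e}=0=p_i^{\max}$.

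For $i\in\mathcal{I}_1$, the same lemma supplies a common ratio $\nu\leq 1$ with $\alpha_i^{\rm e}=\nu\,\alpha_i^{\max}$. I would first note $\nu\in(0,1]$: if $\nu=0$ then $\alpha_i^{\rm e}=0<1$ and $p_i^{\rm e}=0<\rho_i$, contradicting the budget-extracting condition. If $\nu=1$, combining with the $\mathcal{I}_2$ part of \Cref{lem:payment-and-multipliers-of-budget-extracting-tuples} gives $\bm\alpha^{\rm e}=\bm\alpha^{\max}$ and there is nothing to prove. The substantive case is $\nu<1$. Here $\alpha_i^{\rm e}=\nu\,\alpha_i^{\max}\leq\nu<1$, so the budget-extracting condition forces $p_i^{\rm e}=\rho_i$, while budget feasibility of $\bm\alpha^{\max}$ gives $p_i^{\max}\leq\rho_i$. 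Hence it suffices to establish the single inequality $p_i^{\rm e}\leq p_i^{\max}$, which then sandwiches $p_i^{\rm e}=p_i^{\max}=\rho_i$.

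To obtain $p_i^{\rm e}\leq p_i^{\max}$, I would argue that, up to a set of quantile profiles of Lebesgue measure zero, buyer $i$'s winning region under $\bm\alpha^{\rm e}$ is contained in her winning region under $\bm\alpha^{\max}$; since her payment conditional on winning is $\widetilde{v}_i(q_i)$ in both mechanisms, integrating over $\bm q$ yields the inequality. The containment is a pointwise check on the allocation rule: if $i$ wins under $\bm\alpha^{\rm e}$ at profile $\bm q$, then $\nu\,\alpha_i^{\max}\widetilde{v}_i(q_i)\geq\alpha_{i'}^{\rm e}\widetilde{v}_{i'}(q_{i'})$ for all $i'\neq i$ and $\nu\,\alpha_i^{\max}\widetilde{v}_i(q_i)\geq\lambda$; for $i'\in\mathcal{I}_1$ one divides through by $\nu>0$ and the comparison becomes exactly the $\bm\alpha^{\max}$-comparison, whereas for $i'\in\mathcal{I}_2$ (where $\alpha_{i'}^{\rm e}=\alpha_{i'}^{\max}=1$) and for the reserve $\lambda$ the inequalities only strengthen, since $\nu<1$. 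Ties---profiles where two shaded bids coincide or a shaded bid equals $\lambda$---form a null set by strict monotonicity of each $\widetilde{v}_i$, which is implied by inverse Lipschitz continuity, so they do not affect the integrals. I expect the only delicate point to be this bookkeeping with the $\mathcal{I}_2$ buyers and the reserve under the rescaling, together with checking that the tie sets are null; everything else follows mechanically from the structure already isolated in \Cref{lem:payment-and-multipliers-of-budget-extracting-tuples}.
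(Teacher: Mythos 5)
Your proposal is correct and follows essentially the same route as the paper's proof: both invoke \Cref{lem:payment-and-multipliers-of-budget-extracting-tuples} to dispose of the zero-payment buyers and obtain the common ratio $\nu<1$, then use that $\alpha_i^{\rm e}<1$ forces binding budgets under $\bm\alpha^{\rm e}$, and conclude via the winning-region containment (payments conditional on winning being the bid in both) that $\rho_i = p_i^{\rm e}\leq p_i^{\max}\leq\rho_i$. Your explicit handling of $\nu\in\{0,1\}$ and of the measure-zero tie sets only makes the same argument slightly more careful.
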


\begin{proof}[Proof of \Cref{lem:same-payment-for-budget-extracting-tuples}]
Suppose $\bm \alpha^{\rm e}$ is a budget-extracting bid-discount tuple different from $\bm \alpha^{\max}$. Now by \Cref{lem:payment-and-multipliers-of-budget-extracting-tuples}, we know that the buyers with payment $0$ in $\mathcal{M}^{\rm BDF}(\bm \alpha^{\max})$ have payment $0$ in $\mathcal{M}^{\rm BDF}(\bm \alpha^{\rm e})$ as well. As for those buyers with positive payment in $\mathcal{M}^{\rm BDF}(\bm \alpha^{\max})$ (i.e., in $\mathcal{I}_1$), the corresponding ratios $\alpha_i^{\rm e} / \alpha^{\max}_i$ are identical, which are strictly less than $1$. This indicates that these buyers' budgets are all binding in $\mathcal{M}^{\rm BDF}(\bm \alpha^{\rm e})$ since $\bm \alpha^{\rm e}$ is budget-extracting. Meanwhile, we claim that for any buyer in $\mathcal{I}_1$, her payment in $\mathcal{M}^{\rm BDF}(\bm \alpha^{\rm e})$ is no more than her payment in $\mathcal{M}^{\rm BDF}(\bm \alpha^{\max})$. In fact, any buyer in $\mathcal{I}_1$ cannot win on more quantile profiles in $\mathcal{M}^{\rm BDF}(\bm \alpha^{\rm e})$ than in $\mathcal{M}^{\rm BDF}(\bm \alpha^{\max})$. Therefore, buyers in $\mathcal{I}_1$ also exhaust their budgets in $\mathcal{M}^{\rm BDF}(\bm \alpha^{\max})$.
\end{proof}

Finally, we present the proof of the last statement, which gives the necessary and sufficient conditions for the uniqueness of a budget-extracting bid-discount multiplier tuple. 

\begin{lemma}\label{lem:necessary-and-sufficient-condition-for-uniqueness-of-budget-extracting-tuple}
    $\bm \alpha^{\max}$ is the unique budget-extracting tuple of bid-discount multipliers if and only if either one of the following two conditions is satisfied: 
    \begin{enumerate}
        \item $\max_{i \in \mathcal{I}_1}\alpha_i^{\max} \widetilde{v}_i(0) \leq \max_{i \in \mathcal{I}_2} \{\alpha_i^{\max} \widetilde{v}_i(1), \lambda\}$, where $\mathcal{I}_1 = \{i\mid p_i^{\max} > 0\}$ and $\mathcal{I}_2 = [n] \setminus \mathcal{I}_1$, or 
        \item there exists $i \in \mathcal{I}_1$ such that $p_i^{\max} < \rho_i$.
    \end{enumerate}
\end{lemma}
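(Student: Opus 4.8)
The plan is to reduce the uniqueness question to a one‑parameter statement, using the structural description of budget‑extracting tuples already established. By \Cref{lem:payment-and-multipliers-of-budget-extracting-tuples}, every budget‑extracting tuple $\bm\alpha^{\rm e}$ satisfies $\alpha_i^{\rm e}=\alpha_i^{\max}=1$ for $i\in\mathcal{I}_2$ and $\alpha_i^{\rm e}=\nu\,\alpha_i^{\max}$ for $i\in\mathcal{I}_1$, for a single $\nu\le 1$, and $\bm\alpha^{\rm e}=\bm\alpha^{\max}$ exactly when $\nu=1$. Hence $\bm\alpha^{\max}$ is the unique budget‑extracting tuple iff no ``uniformly scaled'' tuple $\bm\alpha^{\rm e}(\nu)$ with $\nu<1$ is budget‑extracting. (If $\mathcal{I}_1=\emptyset$ then $\bm\alpha^{\max}=\mathbf 1^n$ is trivially the only budget‑extracting tuple and the first condition holds vacuously, so I would assume $\mathcal{I}_1\neq\emptyset$.)

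Next I would characterize when $\bm\alpha^{\rm e}(\nu)$ with $\nu<1$ is budget‑extracting. Uniform scaling preserves the order of effective bids inside $\mathcal{I}_1$ and only lowers them relative to the reserve $\lambda$ and to the (unchanged) $\mathcal{I}_2$ bids; so each $\mathcal{I}_1$ buyer's winning region can only shrink (up to null sets) as $\nu$ drops below $1$, and no $\mathcal{I}_2$ buyer can begin to win unless some $\mathcal{I}_1$ buyer's region does shrink. Combining this with \Cref{lem:same-payment-for-budget-extracting-tuples} (all budget‑extracting tuples give the same per‑buyer payment) and with the fact that an $\mathcal{I}_1$ buyer pays at least $\lambda/\alpha_i^{\max}>0$ whenever she wins, I get: $\bm\alpha^{\rm e}(\nu)$ with $\nu<1$ is budget‑extracting if and only if (i) $p_i^{\max}=\rho_i$ for every $i\in\mathcal{I}_1$ — i.e. the second condition of the lemma fails — and (ii) the allocation under $\bm\alpha^{\rm e}(\nu)$ agrees almost everywhere with that under $\bm\alpha^{\max}$.

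It then remains to show that some $\nu<1$ satisfying (ii) exists if and only if the first condition of the lemma fails, i.e. iff some $k\in\mathcal{I}_1$ has $\alpha_k^{\max}\widetilde v_k(0)>M$, where $M\coloneqq\max_{i\in\mathcal{I}_2}\{\alpha_i^{\max}\widetilde v_i(1),\lambda\}$. Writing $S^{+}$ for the set of profiles on which some $\mathcal{I}_1$ buyer wins under $\bm\alpha^{\max}$, condition (ii) for a given $\nu$ is precisely that the ``band'' $\{\bm q\in S^{+}:\ \max_{j\in\mathcal{I}_1}\alpha_j^{\max}\widetilde v_j(q_j)<\tfrac1\nu\max\{\lambda,\max_{i\in\mathcal{I}_2}\widetilde v_i(q_i)\}\}$ is null. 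For the easy direction, if $c\coloneqq\alpha_k^{\max}\widetilde v_k(0)>M$ then everywhere $\max_{j\in\mathcal{I}_1}\alpha_j^{\max}\widetilde v_j(q_j)\ge c>M\ge\max\{\lambda,\max_{i\in\mathcal{I}_2}\widetilde v_i(q_i)\}$, so for any $\nu\in(M/c,1)$ the band is empty and (ii) holds. For the converse I would assume the first condition ($\max_{j\in\mathcal{I}_1}\alpha_j^{\max}\widetilde v_j(0)\le M$) and show that for every $\nu<1$ the band has positive measure: since the $\mathcal{I}_1$ effective bids descend to values no larger than the floor $M$, near the ``lower boundary'' of $S^{+}$ the $\mathcal{I}_1$‑maximum effective bid exceeds $\max\{\lambda,\max_{i\in\mathcal{I}_2}\widetilde v_i(q_i)\}$ by an arbitrarily small margin, and inverse Lipschitz continuity of each $\widetilde v_i$ upgrades ``arbitrarily small margin'' to ``a positive‑measure set of profiles''. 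Assembling the two equivalences: $\bm\alpha^{\max}$ fails to be unique iff the second condition fails and some $k\in\mathcal{I}_1$ has $\alpha_k^{\max}\widetilde v_k(0)>M$, i.e. iff both conditions fail; the contrapositive is the lemma.

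The main obstacle is the converse in the last step: quantifying, under the first condition, that shrinking $\nu$ below $1$ genuinely destroys a positive‑measure part of $S^{+}$. This requires locating a sub‑box of quantiles on which the $\mathcal{I}_1$‑maximum effective bid lies just above the competing floor, while controlling the interaction of the several ``$\max$'' expressions (over $\mathcal{I}_1$, over $\mathcal{I}_2$, and the constant $\lambda$), each built from merely inverse‑Lipschitz — hence possibly discontinuous — quantile functions, and then invoking the inverse Lipschitz lower bounds to certify that this sub‑box has positive Lebesgue measure. Everything else is bookkeeping on top of \Cref{lem:payment-and-multipliers-of-budget-extracting-tuples} and \Cref{lem:same-payment-for-budget-extracting-tuples}.
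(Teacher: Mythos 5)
Your skeleton is the same as the paper's: by \Cref{lem:payment-and-multipliers-of-budget-extracting-tuples} and \Cref{lem:same-payment-for-budget-extracting-tuples} any budget-extracting tuple other than $\bm\alpha^{\max}$ is a uniform scaling $\bm\alpha^{\rm e}(\nu)$, $\nu<1$, of the $\mathcal{I}_1$-coordinates; non-uniqueness when both conditions fail is obtained by exactly the scaling construction the paper uses (choose $\nu$ with $\nu\max_{i\in\mathcal{I}_1}\alpha_i^{\max}\widetilde v_i(0)>M$, where $M\coloneqq\max_{i\in\mathcal{I}_2}\{\alpha_i^{\max}\widetilde v_i(1),\lambda\}$, so the allocation and hence all payments are unchanged); and uniqueness under condition 2 follows from the binding-budget observation, as in the paper. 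Your step-2 equivalence (budget-extracting for $\nu<1$ iff all $\mathcal{I}_1$ budgets bind and the allocation is a.e.\ unchanged) is also a correct repackaging of the same facts.

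The genuine gap is the converse you yourself flag as ``the main obstacle'': showing that when condition 1 holds, \emph{every} $\nu<1$ changes the $\mathcal{I}_1$ allocation on a set of positive measure (equivalently strictly cuts $\mathcal{I}_1$'s total payment, contradicting the binding budgets forced by budget-extraction). Your sketch — that near the lower boundary of $S^{+}$ the $\mathcal{I}_1$-maximum effective bid exceeds the floor by an arbitrarily small margin, and that inverse Lipschitz continuity upgrades this to positive measure — does not go through as stated: inverse Lipschitz continuity bounds increments of $\widetilde v_i$ from \emph{below}, not above, so it cannot rule out that every $\alpha_i^{\max}\widetilde v_i$ jumps over the critical band between the floor and $\nu^{-1}$ times the floor; ``arbitrarily small margin'' is exactly what a merely inverse-Lipschitz, possibly discontinuous, qf may fail to provide. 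The paper closes this step with a different (and continuity-based) argument: since each $i\in\mathcal{I}_1$ has positive payment, $\max_{i\in\mathcal{I}_1}\alpha_i^{\max}\widetilde v_i(1)>M$, while condition 1 gives $\max_{i\in\mathcal{I}_1}\alpha_i^{\max}\widetilde v_i(0)\le M$; invoking continuity and strict monotonicity of the qfs it finds $r\in(0,1]$ with $\max_{i\in\mathcal{I}_1}\alpha_i^{\rm e}\widetilde v_i(r)<M<\max_{i\in\mathcal{I}_1}\alpha_i^{\max}\widetilde v_i(r)$, from which monotonicity yields a positive-measure set of quantile profiles where some $\mathcal{I}_1$ buyer wins under $\bm\alpha^{\max}$ but none does under $\bm\alpha^{\rm e}$ (and the reverse never happens), so $\mathcal{I}_1$'s total payment strictly drops and some budget is slack — contradiction. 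So the ingredient your proposal is missing is precisely this intermediate-value/positive-measure argument; note that the paper supplies it by appealing to continuity of the quantile functions at this point, so your discontinuity worry cannot be absorbed silently — it must either be argued away or assumed. Until that converse is actually proved, the ``condition 1 $\Rightarrow$ uniqueness'' half of the lemma is not established.
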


\begin{proof}[Proof of \Cref{lem:necessary-and-sufficient-condition-for-uniqueness-of-budget-extracting-tuple}]
We prove the two sides respectively. 

\paragraph{``If'' side.}
The proof of \Cref{lem:same-payment-for-budget-extracting-tuples} implies that if there is a budget-extracting tuple other than $\bm \alpha^{\max}$, then the budgets of the buyers with positive payments in $\mathcal{M}^{\rm BDF}(\bm \alpha^{\max})$ are binding. In other words, if there exists $i \in \mathcal{I}_1$ such that $p_i^{\max} < \rho_i$ (which is the second condition), then $\bm \alpha^{\max}$ must be the unique budget-extracting tuple. 

Furthermore, if $\max_{i \in \mathcal{I}_1}\alpha_i^{\max} \widetilde{v}_i(0) \leq \max_{i \in \mathcal{I}_2} \{\alpha_i^{\max} \widetilde{v}_i(1), \lambda\}$, suppose there is another budget-extracting tuple $\bm \alpha^{\rm e}$ different from $\bm \alpha^{\max}$. By \Cref{lem:payment-and-multipliers-of-budget-extracting-tuples}, for any $i \in \mathcal{I}_2$, we have $p_i^{\max} = p_i^{\rm e} = 0$ and $\alpha_i^{\max} = \alpha_i^{\rm e} = 1$. Also, there exists $0 < \nu < 1$ such that for any $i \in \mathcal{I}_1$, we have $\alpha_i^{\rm e} = \nu \alpha_i^{\max}$. 
Note that since the payment of each buyer in $\mathcal{I}_1$ is non-zero, we have $\max_{i \in \mathcal{I}_1}\alpha_i^{\max} \widetilde{v}_i(1) > \max_{i \in \mathcal{I}_2} \{\alpha_i^{\max} \widetilde{v}_i(1), \lambda\}$. Therefore, by the continuity and strict monotonicity of quantile functions, as well as noticing that $\max_{i \in \mathcal{I}_2} \{\alpha_i^{\max} \widetilde{v}_i(1), \lambda\} \geq \lambda > 0$, we derive that for some $r \in (0, 1]$, 
\[\max_{i \in \mathcal{I}_1}\alpha^{\rm e}_i\widetilde{v}_i(r) < \max_{i \in \mathcal{I}_2} \{\alpha_i^{\rm e} \widetilde{v}_i(1), \lambda\} = \max_{i \in \mathcal{I}_2} \{\alpha_i^{\max} \widetilde{v}_i(1), \lambda\} < \max_{i \in \mathcal{I}_1}\alpha^{\max}_i\widetilde{v}_i(r). 
\]

Therefore, we state that under some quantile profiles with positive measure, no buyer in $\mathcal{I}_1$ wins when the multiplier tuple is $\bm \alpha^{\rm e}$ but some buyer in $\mathcal{I}_1$ wins with $\bm \alpha^{\max}$. Meanwhile, the reverse case never happens since $\alpha_i^{\rm e} = \alpha_i^{\max}$ when $i \in \mathcal{I}_2$ while $\alpha_i^{\rm e} < \alpha_i^{\max}$ when $i \in \mathcal{I}_1$. This indicates that the total payment of buyers in $\mathcal{I}_1$ is strictly cut from $\mathcal{M}^{\rm BDF}(\bm \alpha^{\max})$ to $\mathcal{M}^{\rm BDF}(\bm \alpha^{\rm e})$, contradicting that $\bm \alpha^{\rm e}$ is budget-extracting, as the budget of at least one buyer in $\mathcal{I}_1$ is not binding. 

\paragraph{``Only if'' side.}
We prove by contradiction for this part. We suppose that $\max_{i \in \mathcal{I}_1}\alpha_i^{\max} \widetilde{v}_i(0) > \max_{i \in \mathcal{I}_2} \{\alpha_i^{\max} \widetilde{v}_i(1), \lambda\}$, $p_i^{\max} = \rho_i$ for any $i \in \mathcal{I}_1$, and $p_i^{\max} = 0$ for any $i \in \mathcal{I}_2$ reversely. As a result, there exists some $0 < \nu < 1$ such that $\nu \cdot \max_{i \in \mathcal{I}_1}\alpha_i^{\max} \widetilde{v}_i(0) > \max_{i \in \mathcal{I}_2} \{\alpha_i^{\max} \widetilde{v}_i(1), \lambda\}$. Define $\bm \alpha = (\alpha_i)_{1\leq i\leq n}$ as
\begin{align*}
    \alpha_i =
    \begin{cases}
        \nu \alpha_i^{\max} & i \in \mathcal{I}_1 \\
        0 & i \in \mathcal{I}_2.
    \end{cases}
\end{align*}
Now we show that $\bm \alpha$ is budget-extracting. On the one hand, for $\mathcal{M}^{\rm BDF}(\bm \alpha)$, the maximum discounted bid of buyers in $\mathcal{I}_2$ is less than the minimum discounted bid of buyers in $\mathcal{I}_1$, therefore, any buyer in $\mathcal{I}_2$ does not win at all in all quantile profiles. Meanwhile, the item is always allocated as $\nu\cdot \max_{i \in \mathcal{I}_1}\alpha_i^{\max} \widetilde{v}_i(0) > \lambda$. On the other hand, the bid-discount multipliers of buyers in $\mathcal{I}_1$ are scaled by the same constant from $\bm \alpha^{\max}$ to $\bm \alpha$, thus the ordering of buyers in $\mathcal{I}_1$ remains unchanged in all quantile profiles from $\mathcal{M}^{\rm BDF}(\bm \alpha^{\max})$ to $\mathcal{M}^{\rm BDF}(\bm \alpha)$. This reasoning implies that payments of all buyers stay the same, and budget-extracting still holds for $\bm \alpha$. Therefore, $\bm \alpha^{\max}$ is not the unique budget-extracting tuple. 
\end{proof}

The proof of \Cref{thm:properties-eBDFPA} is finished by putting \Cref{lem:maximum-multiplier-tuple-BDFPA,lem:maximum-tuple-budget-extracting-BDFPA,lem:payment-and-multipliers-of-budget-extracting-tuples,,lem:same-payment-for-budget-extracting-tuples,,lem:necessary-and-sufficient-condition-for-uniqueness-of-budget-extracting-tuple} together. 

\subsection{Proof of \Cref{thm:computation-efficiency-eBDFPA}}

Notice that $\chi^{\rm BDF}(\bm \tau)$, which defined in the proof of \Cref{thm:budget-extracting-revenue-maximizing-BDFPA} is a convex function by Theorem 7.46 from~\citet{shapiro2021lectures}. Now consider the optimal solution of $\min_{\bm \tau \in[0, 1]^n} \chi^{\rm BD}(\bm \tau)$, $\bm \tau^*$. By \Cref{lem:optimality-conditions}, $\bm \alpha = 1^n - \bm \tau^*$ is a budget-extracting tuple of multipliers for BDFPA. This concludes the proof. 

\subsection{Proof of \Cref{thm:properties-ePFPA}}

We will adopt a similar methodology as in the proof of \Cref{thm:properties-eBDFPA}. Let $\bm \beta =(\beta_1, \ldots, \beta_n)$ be pacing multipliers and define $G_{\bm \beta, i}$ as the cumulative distribution function of $\max_{i' \neq i} \{\beta_{i'} \widetilde v _{i'}(q_i'), \lambda\}$ where $q_{-i}$ is chosen uniformly in $[0, 1]^{n-1}$. Then $G_{\bm \beta, i}(\cdot)$ has the properties stated in \Cref{lem:G-function}. Next, we prove an analog of \Cref{lem:Lipschitz-continuity-BDFPA}.
\begin{lemma}\label{lem:Lipschitz-continuity-PFPA}
    There exists a constant $C$ which satisfies the following: For any $\bm \beta =(\beta_1,\ldots,\beta_n)$ and $1\leq i\leq n$ such that $\beta_i<1$, let $\beta'=\beta + \delta e_i$ where $0<\delta\leq 1-\beta_i$ and $e_i$ is the vector with the $i$-th entry one and all other entries zero. Then the expected payment of buyer $i$ in $\mathcal{M}^{\rm PF}(\bm \beta')$ is at most the expected payment of buyer $i$ in $\mathcal{M}^{\rm PF}(\bm \beta)$ plus $C\delta$.
\end{lemma}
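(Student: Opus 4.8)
The plan is to mirror the proof of \Cref{lem:Lipschitz-continuity-BDFPA}, but to track the extra contribution coming from the fact that in PFPA a winner pays her \emph{paced} bid $\beta_i\widetilde v_i(q_i)$ rather than her raw bid $\widetilde v_i(q_i)$. Concretely, write buyer $i$'s expected payment in $\mathcal M^{\rm PF}(\bm\beta)$ as
\[
    \int_0^1 \beta_i\widetilde v_i(q_i)\cdot G_{\bm\beta,i}\bigl(\beta_i\widetilde v_i(q_i)\bigr)\diffe q_i,
\]
and similarly for $\bm\beta'=\bm\beta+\delta\bm e_i$ with $\beta_i$ replaced by $\beta_i+\delta$ in \emph{both} the factor in front and inside $G$. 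First I would decompose the increment as a sum of two pieces: (i) the ``allocation-change'' piece, $\int_0^1 (\beta_i+\delta)\widetilde v_i(q_i)\bigl(G_{\bm\beta,i}((\beta_i+\delta)\widetilde v_i(q_i))-G_{\bm\beta,i}(\beta_i\widetilde v_i(q_i))\bigr)\diffe q_i$, which is exactly the object controlled in the BDFPA proof (up to the harmless constant factor $\beta_i+\delta\le 1$), and (ii) the ``price-change'' piece, $\delta\int_0^1 \widetilde v_i(q_i)\,G_{\bm\beta,i}(\beta_i\widetilde v_i(q_i))\diffe q_i$, coming from replacing $\beta_i$ by $\beta_i+\delta$ in the leading factor while holding the allocation fixed at the $\bm\beta$-allocation. (Grouping the cross term with (ii) is fine since $G_{\bm\beta,i}\le 1$ and $\widetilde v_i\le \bar v_i$.)

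For piece (ii) the bound is immediate: it is at most $\delta\int_0^1\widetilde v_i(q_i)\diffe q_i\le \bar v_i\,\delta$. For piece (i) I would reuse, essentially verbatim, the three-way case analysis of the BDFPA proof: introduce the same threshold $\delta_i$ with $\delta_i\bar v_i<\lambda$, split the integral at the quantiles $\underline q,\bar q$ where $\beta_i\widetilde v_i(q_i)$ and $(\beta_i+\delta)\widetilde v_i(q_i)$ cross $\lambda$, use \Cref{lem:G-function}(1) to kill the region where the larger paced bid is below $\lambda$, use Lipschitz continuity of $\widetilde F_i$ together with the bound $G\le 1$ to control the ``crossing'' region by $\bar v_i C_i(\lambda/\beta_i^2)\delta$ (hence $\bar v_i C_i(\lambda/\delta_i^2)\delta$ when $\beta_i\ge\delta_i$), and use \Cref{lem:G-function}(3) (Lipschitz continuity of $G_{\bm\beta,i}$ with constant $\widehat C_i=\sum_{i'\ne i}C_{i'}/\delta_{i'}$ on $[\lambda,\infty)$) to bound the region where the smaller paced bid already exceeds $\lambda$ by $\bar v_i^2\widehat C_i\delta$. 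The case $\beta_i<\delta_i$ is handled exactly as before by comparing against $G_{\bm\beta,i}(\delta_i\widetilde v_i(q_i))$. Collecting everything, the increment is at most $C\delta$ with $C\coloneqq \max_{1\le i\le n}\{\bar v_i + \bar v_i C_i\lambda/\delta_i^2 + \bar v_i^2\widehat C_i\}$ (or any similar explicit constant).

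I do not expect a genuine obstacle here — the lemma is a direct adaptation. The one point requiring care is the clean separation of the ``payment-rule'' effect from the ``allocation'' effect: because the price in PFPA depends on $\beta_i$ directly, one must be careful not to double-count, and the decomposition above (freeze allocation at $\bm\beta$ for the price-change term, then account for the allocation change with the new price factor $\beta_i+\delta\le 1$) is the cleanest bookkeeping. After establishing \Cref{lem:Lipschitz-continuity-PFPA}, the downstream arguments of \Cref{thm:properties-ePFPA} (compactness of the feasible set of pacing multipliers, existence of a maximum tuple $\beta^{\max}$, and its budget-extracting property) proceed by the same template as \Cref{lem:closed-set,lem:maximum-multiplier-tuple-BDFPA,lem:maximum-tuple-budget-extracting-BDFPA}; the uniqueness of the budget-extracting tuple then follows from the additional monotone dependence of the payment on $\beta_i$ noted after the statement of \Cref{thm:properties-ePFPA}.
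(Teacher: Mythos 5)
Your proposal is correct and follows essentially the same route as the paper: reduce to the BDFPA argument by expressing the payment through $G_{\bm\beta,i}$, use the factor $\beta_i+\delta\le 1$ to drop the pacing coefficient, and then rerun verbatim the three-region case analysis of \Cref{lem:Lipschitz-continuity-BDFPA} via \Cref{lem:G-function}. If anything your bookkeeping is more careful than the paper's, since the ``price-change'' term $\delta\int_0^1\widetilde v_i(q_i)\,G_{\bm\beta,i}(\beta_i\widetilde v_i(q_i))\diffe q_i$ that you bound by $\bar v_i\delta$ is silently dropped in the paper's displayed inequality, a harmless omission that your constant $C$ simply absorbs.
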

\begin{proof}[Proof of \Cref{lem:Lipschitz-continuity-PFPA}]
By definition, the expected payment of buyer $i$ in $\mathcal M^{\rm PF}(\bm \beta)$ is
\[
    \int_0^1 \beta_i\widetilde{v}_{i}(q_{i}) \cdot \left(\int_{q_{-i}} I\left[\beta_i\widetilde{v}_{i}(q_{i}) \geq \max_{i'\neq i}\left\{\beta_{i'}\widetilde{v}_{i'}(q_{i'}), \lambda\right\}\right]\diffe q_{-i}\right)\diffe q_{i}.
\]
Using the technique in \eqref{eqn:decomp_indicator} and \eqref{eqn:increment}, we upper bound the increment of buyer $i$'s expected payment after replacing $\bm \beta$ with $\bm \beta' = \bm \beta + \delta \bm e_i$ as
\begin{align*}
    &\int_0^1 (\beta_i + \delta)\widetilde{v}_{i}(q_{i})\cdot \left(G_{\bm \beta, i}\left((\beta_i + \delta)\widetilde{v}_{i}(q_{i})\right)-G_{\bm \beta, i}\left(\beta_i\widetilde{v}_i(q_i)\right)\right)\diffe q_{i} \\
    \leq\,& \int_0^1 \widetilde{v}_{i}(q_{i})\cdot \left(G_{\bm \beta, i}\left((\beta_i + \delta)\widetilde{v}_{i}(q_{i})\right)-G_{\bm \beta, i}\left(\beta_i\widetilde{v}_i(q_i)\right)\right)\diffe q_{i}. 
\end{align*}
The conclusion now follows directly from the same case analysis as in the proof of \Cref{lem:Lipschitz-continuity-BDFPA}.
\end{proof}
We then show that the set of budget feasible pacing multipliers is compact.
\begin{lemma}\label{lem:compact-PFPA}
    Let $\mathcal{B}$ be the set of all $\bm \beta$ such that $\mathcal{M}^{\rm PF}(\bm \beta)$ is a feasible pacing mechanism. Then $\mathcal{B}$ is compact.
\end{lemma}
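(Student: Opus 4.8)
The plan is to mirror the proof of \Cref{lem:closed-set} (the BDFPA analogue) almost verbatim, with the pacing tuple $\bm\beta$, the mechanism $\mathcal M^{\rm PF}$, and \Cref{lem:Lipschitz-continuity-PFPA} playing the roles of $\bm\alpha$, $\mathcal M^{\rm BDF}$, and \Cref{lem:Lipschitz-continuity-BDFPA}. Since $\mathcal B \subseteq [0,1]^n$ is automatically bounded, it suffices to prove that $\mathcal B$ is closed. To that end, I would introduce the payment map $\psi\colon [0,1]^n \to \mathbb R^n$ sending a feasible-or-not tuple $\bm\beta$ to the vector of expected payments $\bigl(\int_0^1 p_i(\bm\beta, q_i, \widetilde{\bm v})\diffe q_i\bigr)_{i \in [n]}$, so that $\mathcal B = \psi^{-1}\bigl(\prod_i [0,\rho_i]\bigr)$. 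Closedness of $\mathcal B$ then follows from continuity of $\psi$, as the preimage of the closed box $\prod_i [0,\rho_i]$ under a continuous map; compactness follows by adding boundedness.

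The core of the argument is therefore to show $\psi$ is continuous, which I would do coordinatewise. For a fixed buyer $i$, her expected payment is monotone nondecreasing in her own multiplier $\beta_i$ (raising $\beta_i$ only enlarges her winning region and only increases the per-win payment $\beta_i\widetilde v_i(q_i)$, both nonnegative effects), and \Cref{lem:Lipschitz-continuity-PFPA} caps this increase by $C\delta$ when $\beta_i$ rises by $\delta$; these two facts together give Lipschitz continuity of $p_i$ in $\beta_i$. For a competitor's multiplier $\beta_{i'}$ with $i'\neq i$, buyer $i$'s expected payment is monotone nonincreasing (raising $\beta_{i'}$ can only shrink $i$'s winning set while leaving her per-win payment untouched), and the size of the change is governed by the measure of quantiles $q_{i'}$ at which the $i$-versus-$i'$ ordering flips; using that $\widetilde F_{i'}$ is Lipschitz (the inverse Lipschitz hypothesis on $\widetilde v_{i'}$) and that multipliers below the threshold $\delta_{i'}$ with $\delta_{i'}\bar v_{i'} < \lambda$ are irrelevant — exactly the threshold device already used for $G_{\bm\beta, i}$ in \Cref{lem:G-function} — this change is $O(\delta)$ uniformly. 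Assembling the coordinatewise bounds yields (Lipschitz) continuity of $\psi$, hence the compactness of $\mathcal B$.

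The only genuine subtlety — just as in the BDFPA case — is upgrading the one-sided, single-coordinate increment bound of \Cref{lem:Lipschitz-continuity-PFPA} to true joint continuity of $\psi$; this is handled by pairing the increment bound with the monotonicity of payments in each coordinate and by isolating the degenerate small-$\beta_{i'}$ regime via the $\delta_{i'}$ threshold argument. I expect no essentially new difficulty beyond transcribing the BDFPA proof, because \Cref{lem:Lipschitz-continuity-PFPA} already absorbs the one structural difference between the mechanisms (in PFPA the winner pays her shaded bid $\beta_i\widetilde v_i(q_i)$ rather than $\widetilde v_i(q_i)$), so the remaining steps are routine.
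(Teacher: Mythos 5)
Your proposal is correct and takes essentially the same route as the paper: both define the expected-payment map on $[0,1]^n$, deduce its (Lipschitz) continuity from \Cref{lem:Lipschitz-continuity-PFPA}, realize $\mathcal{B}$ as the preimage of the closed box $\prod_i [0,\rho_i]$, and add boundedness to get compactness, exactly as in the BDFPA analogue \Cref{lem:closed-set}. The only difference is that you spell out the cross-coordinate continuity (monotonicity in each coordinate plus the $\delta_{i'}$ threshold device) that the paper leaves implicit when citing the single-coordinate increment bound, which is a harmless and in fact slightly more careful elaboration.
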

\begin{proof}[Proof of \Cref{lem:compact-PFPA}]
Let $\varphi: [0,1]^n \to \mathbb{R}^n$ be defined as the map from the tuple of multipliers $\bm \beta$ to the expected payment vector of all buyers. Then by \Cref{lem:Lipschitz-continuity-PFPA}, $\varphi$ is Lipschitz continuous. Since $\mathcal{B} = \{\bm \beta: \varphi(\bm \beta) \in \prod_i [0,\rho_i]\}$ is the pre-image of $\prod_i[0,\rho_i]$ (which is certainly closed) under $\varphi$, $\mathcal{B}$ is closed. $\mathcal{B}$ is also bounded for $\mathcal{B} \subseteq [0,1]^n$. This concludes the proof of the Lemma.
\end{proof}
We are now able to establish the existence of a maximum tuple of pacing multipliers $\bm \beta^{\max}$.
\begin{lemma}\label{lem:maximum-tuple-PFPA}
    There exists a maximum tuple of pacing multipliers $\bm \beta^{\max}$, i.e., for any feasible tuple of pacing multipliers $\bm \beta$, $\beta_i^{\max} \geq \beta_i$, for any $1 \leq i \leq n$.
\end{lemma}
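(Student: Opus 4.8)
The plan is to mimic the proof of \Cref{lem:maximum-multiplier-tuple-BDFPA}, exploiting a lattice structure on the set $\mathcal{B}$ of budget-feasible pacing multiplier tuples together with the compactness of $\mathcal{B}$ established in \Cref{lem:compact-PFPA}. First I would show that $\mathcal{B}$ is closed under taking the entry-wise maximum: given two feasible tuples $\bm\beta^{(1)},\bm\beta^{(2)}\in\mathcal{B}$, let $\bm\beta^{\rm h}$ be their coordinate-wise maximum, and verify the budget constraint \eqref{eq:BF} for every buyer $i$. Fixing $i$, without loss of generality $\beta^{\rm h}_i=\beta^{(1)}_i$. For any quantile profile $\bm q$, buyer $i$'s paced bid $\beta^{\rm h}_i\widetilde{v}_i(q_i)=\beta^{(1)}_i\widetilde{v}_i(q_i)$ is unchanged, while $\max_{i'\neq i}\{\beta^{\rm h}_{i'}\widetilde{v}_{i'}(q_{i'}),\lambda\}\ge\max_{i'\neq i}\{\beta^{(1)}_{i'}\widetilde{v}_{i'}(q_{i'}),\lambda\}$; hence whenever $i$ wins under $\bm\beta^{\rm h}$ she also wins under $\bm\beta^{(1)}$, and on that event her payment equals $\beta^{(1)}_i\widetilde{v}_i(q_i)$ in both mechanisms. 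Integrating over $\bm q$ shows that buyer $i$'s expected payment under $\bm\beta^{\rm h}$ is at most that under $\bm\beta^{(1)}$, which is $\le\rho_i$ by feasibility. Therefore $\bm\beta^{\rm h}\in\mathcal{B}$.

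The one point that needs care — and the main obstacle relative to the BDFPA argument — is precisely this payment comparison: unlike in BDFPA, a winner's payment in PFPA depends on her own multiplier, so the ``payment is determined by the allocation alone'' shortcut used in \Cref{lem:maximum-multiplier-tuple-BDFPA} is unavailable. The resolution is to observe that for the coordinate \emph{retained} from $\bm\beta^{(1)}$ the multiplier, hence the pointwise payment on the winning event, does not change, while the winning event itself only shrinks; the same holds symmetrically for coordinates retained from $\bm\beta^{(2)}$. Thus the monotone comparison still goes through buyer by buyer.

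Then I would define $\beta_i^{\max}\coloneqq\sup\{\beta_i:\bm\beta\in\mathcal{B}\}$ for each $i$ and argue $\bm\beta^{\max}=(\beta_i^{\max})_{1\leq i\leq n}\in\mathcal{B}$. For any $\epsilon>0$ and each coordinate $i$ there is a feasible tuple whose $i$-th entry exceeds $\beta_i^{\max}-\epsilon$; applying the entry-wise-maximum closure to these $n$ tuples yields a single $\bm\beta^{\epsilon}\in\mathcal{B}$ with $\beta^{\epsilon}_i>\beta_i^{\max}-\epsilon$ for all $i$. Letting $\epsilon\to0$, the family $\{\bm\beta^{\epsilon}\}$ has $\bm\beta^{\max}$ as a limit point, and since $\mathcal{B}$ is compact by \Cref{lem:compact-PFPA}, $\bm\beta^{\max}\in\mathcal{B}$. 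By construction $\beta_i^{\max}\ge\beta_i$ for every feasible $\bm\beta$ and every $i$, so $\bm\beta^{\max}$ is the desired maximum feasible tuple of pacing multipliers, which completes the proof.
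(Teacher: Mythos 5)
Your proposal is correct and follows essentially the same route as the paper: closure of the feasible set under entry-wise maxima (using that the retained coordinate's multiplier, and hence her payment conditional on winning, is unchanged while her winning event only shrinks), followed by taking coordinate-wise suprema and invoking the compactness of the feasible set from \Cref{lem:compact-PFPA} to conclude $\bm\beta^{\max}$ is feasible. Your explicit remark on why the BDFPA ``payment depends only on the allocation'' shortcut is unavailable, and how retaining the coordinate resolves it, is exactly the point the paper's proof relies on as well.
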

\begin{proof}[Proof of \Cref{lem:maximum-tuple-PFPA}]
We first show that given any feasible multipliers $\bm \beta^{(1)}, \bm \beta^{(2)}$, the element-wise maximum $\bm \beta^{\rm h} = \max(\bm \beta^{(1)}, \bm \beta^{(2)})$ is still feasible.

We need to show that budget feasibility is met for each buyer. For any buyer $1\leq i\leq n$, without loss generality, assume $\beta^{\rm h}_i = \beta^{(1)}_i$, and we claim that buyer $i$'s payment in $\mathcal{M}^{\rm PF}(\bm \beta^{\rm h})$ is no more than her payment in $\mathcal{M}^{\rm PF}(\bm \beta^{(1)})$. Note that for any quantile profile $q$, if buyer $i$ wins in $\mathcal{M}^{\rm PF}(\bm \beta^{\rm h})$, she definitely wins in $\mathcal{M}^{\rm PF}(\bm \beta^{(1)})$ since $\bm \beta^{\rm h} \geq \bm \beta^{(1)}$, and her payment would be identical in these two auctions as her multipliers are the same in these two tuples. Therefore the claim is shown.

Now let $\beta_i^{\max}=\sup\{\beta_i\mid \bm \beta\text{ is budget-feasible}\}$ for each $1\leq i\leq n$. Resembling the argument in the proof of \Cref{lem:maximum-tuple-budget-extracting-BDFPA}, $\bm \beta^{\max} = (\beta_i^{\max})_{1\leq i\leq n}$ is also a feasible tuple of multipliers. This concludes the proof of the lemma. 
\end{proof}
Now that we have established the existence of maximum multipliers, we show that it is the unique budget-extracting tuple. We first show it is budget-extracting in the proceeding lemma.
\begin{lemma}\label{lem:budget-extracting-tuple-PFPA}
    $\mathcal{M}^{\rm PF}(\bm \beta^{\max})$ is budget-extracting.
\end{lemma}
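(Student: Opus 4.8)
The plan is to prove this by contradiction, closely following the argument used for \Cref{lem:maximum-tuple-budget-extracting-BDFPA}. Suppose $\mathcal{M}^{\rm PF}(\bm \beta^{\max})$ is not budget-extracting. By \Cref{def:budget-extracting}, the upper limit of each pacing multiplier is $1$, so there must exist some buyer $i$ with $\beta_i^{\max} < 1$ whose budget constraint is not binding, i.e., $\int_0^1 p_i(\bm \beta^{\max}, q_i, \widetilde{\bm v})\diffe q_i < \rho_i$. The goal is to show that we can then slightly increase $\beta_i^{\max}$ and stay feasible, contradicting the maximality established in \Cref{lem:maximum-tuple-PFPA}.

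First I would invoke \Cref{lem:Lipschitz-continuity-PFPA}: setting $\bm \beta' = \bm \beta^{\max} + \delta \bm e_i$ for a small $\delta > 0$, buyer $i$'s expected payment in $\mathcal{M}^{\rm PF}(\bm \beta')$ exceeds her payment in $\mathcal{M}^{\rm PF}(\bm \beta^{\max})$ by at most $C\delta$. Since her budget is strictly slack at $\bm \beta^{\max}$, choosing $\delta$ small enough keeps her expected payment below $\rho_i$, so buyer $i$ remains budget-feasible. Second, I would check that every other buyer $k \neq i$ stays budget-feasible as well. Here the key observation specific to PFPA is that buyer $k$'s payment, conditioned on winning, equals $\beta_k \widetilde{v}_k(q_k)$, which depends only on her own multiplier and is therefore unchanged when only $\beta_i$ increases. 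Moreover, for any fixed quantile profile $\bm q$, raising $\beta_i$ can only enlarge the set of profiles on which buyer $i$ wins, hence only shrink the set on which buyer $k$ wins. Combining these two facts, buyer $k$'s expected payment weakly decreases from $\mathcal{M}^{\rm PF}(\bm \beta^{\max})$ to $\mathcal{M}^{\rm PF}(\bm \beta')$, so her budget constraint remains satisfied.

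Putting these together, $\bm \beta' = \bm \beta^{\max} + \delta \bm e_i$ is a feasible tuple of pacing multipliers with $\beta'_i = \beta_i^{\max} + \delta > \beta_i^{\max}$, contradicting the fact that $\bm \beta^{\max}$ is the entry-wise maximum feasible tuple. This contradiction establishes that $\mathcal{M}^{\rm PF}(\bm \beta^{\max})$ is budget-extracting.

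I expect the only delicate point to be the second step — the monotonicity argument for the other buyers. One must be careful that in PFPA (unlike BDFPA) a buyer's payment is scaled by her own multiplier, so the argument relies on the fact that increasing $\beta_i$ does not affect $\beta_k$ for $k \neq i$ and only contracts $k$'s winning region; the boundedness and strict positivity of $\lambda$ as a reserve, together with the continuity properties recorded in \Cref{lem:G-function}, ensure there are no measure-theoretic pathologies when comparing the allocation regions. The first step is routine once \Cref{lem:Lipschitz-continuity-PFPA} is in hand.
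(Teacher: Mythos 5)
Your proposal is correct and follows essentially the same argument as the paper: a contradiction via \Cref{lem:Lipschitz-continuity-PFPA} to slightly raise the slack buyer's multiplier, combined with the observation that other buyers' payments cannot increase, contradicting the maximality of $\bm \beta^{\max}$. The extra detail you give on why other buyers' payments weakly decrease (unchanged per-win payment, shrinking winning region) is exactly the justification the paper leaves implicit.
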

\begin{proof}[Proof of \Cref{lem:budget-extracting-tuple-PFPA}]
We prove the lemma by contradiction. Suppose $\mathcal{M}^{\rm PF}(\bm \beta^{\max})$ is not budget-extracting, then there is some buyer $i$ such that $\beta_i^{\max}< 1$ and her budget is not binding. By \Cref{lem:Lipschitz-continuity-PFPA}, we can increase $\beta_i^{\max}$ slightly so that buyer $i$'s budget is still not binding. Note that other buyers' payments will not increase when only buyer $i$'s multiplier increases. Therefore, this new tuple of multipliers is still feasible, which contradicts our definition of $\bm \beta^{\max}$ that it is the entry-wise supremum over all feasible $\bm \beta$.
\end{proof}
It remains to show that $\bm \beta^{\max}$ is the unique budget-extracting tuple of multipliers. 
\begin{lemma}\label{lem:unique-tuple-PFPA}
    $\mathcal{M}^{\rm PF}(\bm \beta^{\max})$ is the unique budget-extracting tuple of multipliers.
\end{lemma}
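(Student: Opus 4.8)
The plan is to argue by contradiction, leveraging the structural difference from BDFPA: in PFPA a winner pays $\beta_i\widetilde v_i(q_i)$, so this payment scales down together with the multiplier, and hence shrinking any multiplier strictly below $\beta_i^{\max}$ strictly decreases that buyer's expected payment as soon as she wins with positive probability. By \Cref{lem:budget-extracting-tuple-PFPA}, $\bm\beta^{\max}$ is budget-extracting, so it remains to rule out any other budget-extracting tuple. Suppose $\bm\beta^{\mathrm e}\neq\bm\beta^{\max}$ is also budget-extracting. A budget-extracting tuple is in particular budget-feasible, and $\bm\beta^{\max}$ is the entrywise-maximum feasible tuple by \Cref{lem:maximum-tuple-PFPA}, so $\bm\beta^{\mathrm e}\le\bm\beta^{\max}$ with strict inequality in at least one coordinate; in that coordinate $\beta_i^{\max}>0$ necessarily. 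Let $l\in\argmin\{\beta_i^{\mathrm e}/\beta_i^{\max}:\beta_i^{\max}>0\}$ and write $\nu_l:=\beta_l^{\mathrm e}/\beta_l^{\max}$, so $\nu_l<1$ and hence $\beta_l^{\mathrm e}<\beta_l^{\max}\le 1$. Since $\beta_l^{\mathrm e}$ does not reach its upper limit $1$, the budget-extracting condition forces buyer $l$'s budget constraint to bind under $\bm\beta^{\mathrm e}$, i.e.\ $p_l^{\mathrm e}=\rho_l>0$.

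The key step is a pointwise domination of winning events: for every quantile profile $\bm q$, if buyer $l$ wins in $\mathcal M^{\mathrm{PF}}(\bm\beta^{\mathrm e})$ then she also wins in $\mathcal M^{\mathrm{PF}}(\bm\beta^{\max})$. Indeed, winning in $\mathcal M^{\mathrm{PF}}(\bm\beta^{\mathrm e})$ means $\nu_l\beta_l^{\max}\widetilde v_l(q_l)\ge\max_{i'\neq l}\{\beta_{i'}^{\mathrm e}\widetilde v_{i'}(q_{i'}),\lambda\}$. For $i'$ with $\beta_{i'}^{\max}>0$ we have $\beta_{i'}^{\mathrm e}\widetilde v_{i'}(q_{i'})=\tfrac{\beta_{i'}^{\mathrm e}}{\beta_{i'}^{\max}}\beta_{i'}^{\max}\widetilde v_{i'}(q_{i'})\ge\nu_l\,\beta_{i'}^{\max}\widetilde v_{i'}(q_{i'})$, using $\beta_{i'}^{\mathrm e}/\beta_{i'}^{\max}\ge\nu_l$; for $i'$ with $\beta_{i'}^{\max}=0$ both shaded bids vanish. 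Dividing through by $\nu_l>0$ (and using $\nu_l\le 1$ for the $\lambda$ term) gives $\beta_l^{\max}\widetilde v_l(q_l)\ge\max_{i'\neq l}\{\beta_{i'}^{\max}\widetilde v_{i'}(q_{i'}),\lambda\}$, i.e.\ $l$ wins in $\mathcal M^{\mathrm{PF}}(\bm\beta^{\max})$. Integrating over $\bm q_{-l}$ yields $\Pr[l\text{ wins in }\bm\beta^{\mathrm e}\mid q_l]\le\Pr[l\text{ wins in }\bm\beta^{\max}\mid q_l]$ for every $q_l$.

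Finally, compare expected payments. Writing $p_l^{\mathrm e}=\int_0^1\beta_l^{\mathrm e}\widetilde v_l(q_l)\Pr[l\text{ wins in }\bm\beta^{\mathrm e}\mid q_l]\diffe q_l$ and similarly $p_l^{\max}$ with $\bm\beta^{\max}$, the inequalities $\beta_l^{\mathrm e}<\beta_l^{\max}$, the winning-probability monotonicity above, and $\widetilde v_l\ge 0$ give $p_l^{\mathrm e}\le p_l^{\max}$. Since $p_l^{\mathrm e}=\rho_l>0$, the set of $q_l$ with $\widetilde v_l(q_l)>0$ and $\Pr[l\text{ wins in }\bm\beta^{\mathrm e}\mid q_l]>0$ has positive Lebesgue measure, and on that set the $p_l^{\max}$-integrand strictly exceeds the $p_l^{\mathrm e}$-integrand because $\beta_l^{\mathrm e}\widetilde v_l(q_l)<\beta_l^{\max}\widetilde v_l(q_l)$ there. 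Hence $p_l^{\mathrm e}<p_l^{\max}$. But $\bm\beta^{\max}$ is budget-feasible, so $p_l^{\max}\le\rho_l$, and we obtain $\rho_l=p_l^{\mathrm e}<p_l^{\max}\le\rho_l$, a contradiction. Therefore $\bm\beta^{\max}$ is the unique budget-extracting tuple of pacing multipliers.

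\textbf{Main obstacle.} The delicate part is the bookkeeping in the winning-event domination — correctly handling the unscaled reserve $\lambda$ and the coordinates with $\beta_i^{\max}=0$ when dividing by $\nu_l$ — together with the measure-theoretic step that $p_l^{\mathrm e}>0$ forces a positive-measure set on which the payment integrand strictly drops. These are precisely the (here much simpler) analogues of the argument behind \Cref{lem:different-ratio-eliminate-budget-extracting}, and the whole point is that in PFPA the payment itself carries the multiplier, which is why uniqueness holds here but not for BDFPA.
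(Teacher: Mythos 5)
Your proof is correct, but it takes a somewhat different route from the paper's. The paper argues at the level of the whole set $\mathcal{I}=\{i:\beta_i<\beta_i^{\max}\}$: since buyers outside $\mathcal{I}$ have identical paced bids under $\bm\beta$ and $\bm\beta^{\max}$, every quantile profile won by some buyer in $\mathcal{I}$ under $\bm\beta$ is still won by a buyer in $\mathcal{I}$ under $\bm\beta^{\max}$ (possibly a different one), and the winner's payment on such a profile weakly rises because the payment carries the multiplier; hence the \emph{aggregate} payment of $\mathcal{I}$ strictly increases beyond their aggregate (binding) budgets, contradicting feasibility of $\bm\beta^{\max}$. You instead isolate a single buyer $l$ chosen by the minimal ratio $\beta_l^{\mathrm e}/\beta_l^{\max}$ and prove a pointwise winning-event domination for that one buyer (the inequality $\beta_{i'}^{\mathrm e}\ge \nu_l\beta_{i'}^{\max}$ for all $i'$, then divide by $\nu_l$ and handle $\lambda$ via $\nu_l\le 1$), concluding that $l$'s \emph{individual} payment strictly increases past her own budget. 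This is essentially the BDFPA ratio technique of \Cref{lem:different-ratio-eliminate-budget-extracting} transplanted to PFPA, where it becomes much simpler because payments scale with the multiplier. Your individual-level argument avoids the paper's aggregate bookkeeping and the (mildly delicate) fact that the winner may switch within $\mathcal{I}$; the paper's set-level argument avoids the ratio selection and the measure-theoretic step you need to upgrade $p_l^{\mathrm e}\le p_l^{\max}$ to a strict inequality, and is correspondingly shorter. Both yield the same contradiction with budget feasibility of $\bm\beta^{\max}$ from \Cref{lem:maximum-tuple-PFPA,lem:budget-extracting-tuple-PFPA}.
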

\begin{proof}[Proof of \Cref{lem:unique-tuple-PFPA}]
Let $\bm \beta\neq \bm \beta^{\max}$ be a feasible tuple of multipliers. We show that $\bm \beta$ is not budget-extracting by contradiction.

Suppose $\bm \beta$ is budget-extracting otherwise. Let $\mathcal{I}$ be the set of buyers such that for any $i\in \mathcal{I}$, $\beta_i < \beta_i^{\max}$, i.e., any buyer in $\mathcal{I}$ has a strictly smaller pacing multiplier in $\bm \beta$ than in $\bm \beta^{\max}$. Since for $i\in \mathcal{I}$, $\beta_i < \beta_i^{\max} \leq 1$, the expected payment of buyer $i$ equals to her budget in $\mathcal{M}^{\rm PF}(\bm \beta)$ by the definition of budget-extracting. Hence, the Lebesgue measure of quantile profiles won by buyers in $\mathcal{I}$ is positive. Now consider buyers in $\mathcal{I}$ in $\mathcal{M}^{\rm PF}(\bm \beta^{\max})$. For any quantile profile won by some buyer in $\mathcal{I}$ in $\mathcal{M}^{\rm PF}(\bm \beta)$, the quantile profile is also won by $\mathcal{I}$ in $\mathcal{M}^{\rm PF}(\bm \beta^{\max})$, as buyers outside $\mathcal{I}$ see no change in the paced bid from $\mathcal{M}^{\rm PF}(\bm \beta)$ to $\mathcal{M}^{\rm PF}(\bm \beta^{\max})$. However, on these quantile profiles, buyers in $\mathcal{I}$ pay more in $\mathcal{M}^{\rm PF}(\bm \beta^{\max})$ than in $\mathcal{M}^{\rm PF}(\bm \beta)$ with strictly higher pacing multipliers. Therefore the total payment of buyers in $\mathcal{I}$ strictly increases from $\mathcal{M}^{\rm PF}(\bm \beta)$ to $\mathcal{M}^{\rm PF}(\bm \beta^{\max})$, and as a result, $\mathcal{M}^{\rm PF}(\bm \beta^{\max})$ is not budget-feasible. A contradiction. Hence $\bm \beta$ is not budget-extracting, and $\bm \beta^{\max}$ is the unique budget-extracting pacing multiplier tuple.
\end{proof}
Finally, we establish that $\bm \beta^{\max}$ maximizes the seller's revenue among all feasible tuples of pacing multipliers.
\begin{lemma}\label{lem:revenue-PFPA}
    $\bm \beta^{\max}$ maximizes the seller's revenue among all feasible tuples of pacing multipliers.
\end{lemma}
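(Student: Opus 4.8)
The plan is to reduce the statement to a pointwise monotonicity property of the seller's revenue in the pacing multipliers, and then invoke the component-wise maximal feasible tuple furnished by \Cref{lem:maximum-tuple-PFPA}.

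First I would put the seller's expected revenue in $\mathcal{M}^{\rm PF}(\bm \beta)$ into closed form. Starting from \eqref{eq:seller-revenue} with $p_i(\bm \beta, q_i, \widetilde{\bm v}) = \mathbb{E}_{\bm q_{-i}}[\beta_i\widetilde{v}_i(q_i)\Phi_i(\bm \beta,\widetilde{\bm v},\bm q)]$ and $x_i(\bm \beta, q_i, \widetilde{\bm v}) = \mathbb{E}_{\bm q_{-i}}[\Phi_i(\bm \beta,\widetilde{\bm v},\bm q)]$, and using the phantom-buyer / choice-function argument already employed in the proof of \Cref{thm:budget-extracting-revenue-maximizing-BDFPA} — the Lebesgue measure of quantile profiles on which two buyers tie for the highest paced bid is zero, since each $\widetilde{v}_i$ is strictly increasing by inverse Lipschitz continuity — one obtains
\[ w^{\rm PF}(\bm \beta, \widetilde{\bm v}) = \mathbb{E}_{\bm q}\left[\max_{1 \leq i \leq n}\left\{\beta_i\widetilde{v}_i(q_i) - \lambda\right\}^+\right], \]
which is precisely the objective of programming \eqref{eq:programming-PFPA}.

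Next I would observe that this expression is non-decreasing in each coordinate of $\bm \beta$: if $\bm \beta \leq \bm \beta'$ entrywise, then for every fixed $\bm q$ and every $i$ we have $\beta_i\widetilde{v}_i(q_i) \leq \beta'_i\widetilde{v}_i(q_i)$, hence $\max_i\{\beta_i\widetilde{v}_i(q_i)-\lambda\}^+ \leq \max_i\{\beta'_i\widetilde{v}_i(q_i)-\lambda\}^+$ pointwise, and integrating over $\bm q$ gives $w^{\rm PF}(\bm \beta,\widetilde{\bm v}) \leq w^{\rm PF}(\bm \beta',\widetilde{\bm v})$. By \Cref{lem:maximum-tuple-PFPA}, $\bm \beta^{\max}$ is itself feasible and satisfies $\bm \beta^{\max} \geq \bm \beta$ entrywise for every feasible $\bm \beta$; combining this with the monotonicity just established yields $w^{\rm PF}(\bm \beta^{\max},\widetilde{\bm v}) \geq w^{\rm PF}(\bm \beta,\widetilde{\bm v})$ for all feasible $\bm \beta$, which is the claim. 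The same chain of inequalities simultaneously identifies $\bm \beta^{\max}$ as the optimal solution of programming \eqref{eq:programming-PFPA}, completing the third statement of \Cref{thm:properties-ePFPA}.

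I do not expect a genuine obstacle here. The only step requiring care is the closed-form revenue reduction — specifically, justifying that the indicators $\Phi_i$ jointly act as a choice function selecting the buyer with the largest paced bid and that the contribution of ties vanishes, which rests on strict monotonicity of the bidding qfs (guaranteed by inverse Lipschitz continuity). Everything afterwards is the two-line pointwise-monotonicity comparison above.
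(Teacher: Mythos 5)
Your proposal is correct and follows essentially the same route as the paper: write the seller's revenue in $\mathcal{M}^{\rm PF}(\bm \beta)$ as $\mathbb{E}_{\bm q}\bigl[\max_i\{\beta_i\widetilde{v}_i(q_i)-\lambda\}^+\bigr]$, note that this is entrywise non-decreasing in $\bm \beta$, and conclude by the maximality of $\bm \beta^{\max}$ from \Cref{lem:maximum-tuple-PFPA}. Your extra care about the measure-zero tie set is fine but not needed for this monotonicity argument, which the paper states directly.
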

\begin{proof}[Proof of \Cref{lem:revenue-PFPA}]
Note that for PFPA, the seller's revenue in $\mathcal{M}^{\rm PF}(\bm \beta)$ equals to 
\[\int_q \max_i\left\{\beta_i\widetilde{v}_i(q_i) - \lambda\right\}^+ \diffe q,\] 
by definition, which, increases with any entry of $\bm \beta$. Now that $\bm \beta^{\max}$ is defined as the supremum over all feasible tuples, it extracts no lower revenue for the seller than any other feasible tuple.
\end{proof}

Synthesizing \Cref{lem:maximum-tuple-PFPA,lem:budget-extracting-tuple-PFPA,,lem:unique-tuple-PFPA,,lem:revenue-PFPA}, we conclude the proof of \Cref{thm:properties-ePFPA}. 

\section{Proof of Theorem \ref{thm:dominance-relationships-seller-revenue}}\label{sec:proof:sec-dominance}

We prove the three statements in the theorem in order. 

\paragraph{eBDPFA $\succeq$ BROA.} 
Note that when $(\widetilde{v}_i)_{1 \leq i \leq n}$ is strictly regular, by~\citet{DBLP:journals/ior/BalseiroKMM21}, the seller's expected revenue in BROA is the value of programming \eqref{eq:programming-BROA}, or 
\[\min_{\bm \gamma \in [0, 1]^n} \chi^{\rm BRO}(\bm \gamma) \coloneqq \left\{\mathbb{E}_{\bm q}\left[\max_i \left\{\gamma_i\widetilde{\psi}_i(q_i) - \lambda\right\}^+\right] + \sum_{i = 1}^n (1 - \gamma_i) \rho_i\right\}. \]

On the other side, by the strong duality result that we give in the proof of \Cref{thm:budget-extracting-revenue-maximizing-BDFPA}, when $(\widetilde{v}_i)_{1 \leq i \leq n}$ is strictly increasing, the seller's revenue in eBDFPA equals to 
\[\min_{\bm \tau \in [0, 1]^n} \chi^{\rm BDF}(\bm \tau) = \left\{\mathbb{E}_{\bm q}\left[\max_i \left\{(1 - \tau_i)\widetilde{v}_i(q_i) - \lambda\right\}^+\right] + \sum_{i = 1}^n \tau_i \rho_i\right\}. \]

Notice that $\widetilde{v}_i(q_i) \geq \widetilde{\psi}_i(q_i)$ for any $1\leq i\leq n$ and $q_i \in [0, 1]$ by definition. Then for any $\bm \tau \in [0, 1]^n$, $\chi^{\rm BDF}(\bm \tau) \geq \chi^{\rm BRO}(1^n - \bm \tau)$. As a result, $\min_{\bm \tau \in [0, 1]^n} \chi^{\rm BDF}(\bm \tau) \geq \min_{\bm \gamma \in [0, 1]^n} \chi^{\rm BRO}(\bm \gamma)$, which finishes the proof. 

\paragraph{eBDPFA $\succeq$ ePFPA.}
The theorem follows a duality argument. To start with, as we provide in the proof of \Cref{thm:budget-extracting-revenue-maximizing-BDFPA}, we have a strong duality result for BDFPA when each buyer's bidding qf is strictly increasing. In other words, given $(\widetilde{v}_i)_{1 \leq i \leq n}$ and $(\rho_i)_{1 \leq i \leq n}$, the seller's revenue in eBDFPA, or ${\rm OPT}^{\rm BDF}$, equals to $\min_{\bm \tau \in [0, 1]^n} \chi^{\rm BDF}(\bm \tau)$, with $\chi^{\rm BDF}(\bm \tau)$ defined as: 
\[\chi^{\rm BDF}(\bm \tau) \coloneqq  \mathbb{E}_{\bm q} \left[\max_{1\leq i\leq n}\left\{(1 - \tau_{i})\widetilde{v}_{i}(q_{i}) - \lambda\right\}^+\right] + \sum_{i = 1}^n \tau_i \rho_i. \]

Now, the revenue of any budget-extracting pacing mechanism is no larger than the value of the following programming, which represents the optimal revenue of any feasible (no requirement for budget-extracting) pacing first-price auction. Note that such reasoning does not depend on \Cref{thm:properties-ePFPA}, which demands that the bidding qfs are inverse Lipschitz continuous.  
\begin{gather*}
    \max_{\bm \beta \in [0, 1]^n}  \int_{\bm q} \max_i\left\{\beta_i\widetilde{v}_i(q_i) - \lambda\right\}^+ \diffe \bm q, \notag \\
    {\rm s.t.}\quad  \int_0^1\beta_i \widetilde{v}_i(q_i) \cdot \left(\int_{\bm q_{-i}} \Phi_i(\bm \beta, \widetilde{\bm v}, \bm q)\diffe \bm  q_{-i}\right)\diffe q_{i} \leq \rho_i, \quad \forall 1\leq i\leq n. 
\end{gather*}

Denote the optimal value of the above programming by ${\rm OPT}^{\rm PF}$. We consider the Lagrangian dual $\chi^{\rm PF}$ of the above programming, with dual variables $\{\kappa_i\}_{1\leq i\leq n}$: 
\begin{equation}
    \chi^{\rm PF}(\bm \kappa)\coloneqq \max_{\bm \beta \in [0, 1]^n}  \sum_{i = 1}^n \int_0^1\left((1 - \kappa_i)\beta_i\widetilde{v}_{i} (q_{i}) - \lambda\right) \cdot \left(\int_{\bm q_{-i}} \Phi_i(\bm \beta, \widetilde{\bm v}, \bm q)\diffe \bm q_{-i}\right)\diffe q_{i} + \sum_{i = 1}^n \kappa_i \rho_i. 
\end{equation} 

By weak duality, we have 
\begin{align*}
   {\rm OPT}^{\rm PF} &\leq \min_{\bm \kappa \geq 0}\chi^{\rm PF}(\bm \kappa) \leq \min_{\bm \kappa \in [0, 1]^n}\chi^{\rm PF}(\bm \kappa) \\
   &= \min_{\bm \kappa \in [0, 1]^n}\max_{\bm \beta \in [0, 1]^n}  \sum_{i = 1}^n \int_0^1\left((1 - \kappa_i)\beta_i\widetilde{v}_{i} (q_{i}) - \lambda\right) \cdot \left(\int_{\bm q_{-i}} \Phi_i(\bm \beta, \widetilde{\bm v}, \bm q)\diffe \bm q_{-i}\right)\diffe q_{i} + \sum_{i = 1}^n \kappa_i \rho_i \\
   &\leq \min_{\bm \kappa \in [0, 1]^n}\max_{\bm \beta \in [0, 1]^n}  \sum_{i = 1}^n \int_0^1\left((1 - \kappa_i)\widetilde{v}_{i} (q_{i}) - \lambda\right) \cdot \left(\int_{\bm q_{-i}} \Phi_i(\bm \beta, \widetilde{\bm v}, \bm q)\diffe \bm  q_{-i}\right)\diffe q_{i} + \sum_{i = 1}^n \kappa_i \rho_i \\
   &= \min_{\bm \kappa \in [0, 1]^n}  \mathbb{E}_{\bm q} \left[\max_{1\leq i\leq n}\left\{(1 - \kappa_{i})\widetilde{v}_{i}(q_{i}) - \lambda\right\}^+\right] + \sum_{i = 1}^n \kappa_i \rho_i \\
   &= \min_{\bm \kappa \in [0, 1]^n} \chi^{\rm BDF}(\bm \kappa) = {\rm OPT}^{\rm BDF}. 
\end{align*}

Here the third line is due to $\beta_i \leq 1$. The fourth line follows a similar argument used when we prove \Cref{thm:budget-extracting-revenue-maximizing-BDFPA}, specifically \eqref{eq:alter-chi} and \eqref{eq:weak-duality-BDFPA}. As a result, we have ${\rm OPT}^{\rm PF} \leq {\rm OPT}^{\rm BDF}$. Since ${\rm OPT}^{\rm PF}$ is no less than the revenue of any budget-extracting PFPA, the theorem is proved. 

\paragraph{BROA $\succeq$ eBDSPA, BROA $\succeq$ ePSPA.} 
At last, for this result, by the terminology in~\citet{DBLP:journals/ior/BalseiroKMM21}, we only need to show that eBDSPA is budget-constrained incentive-compatible (BCIC). 
In fact, as already proved by~\citet{DBLP:journals/ior/BalseiroKMM21}, ePSPA is BCIC, and when each buyer's bidding qf is strictly regular, BROA dominates all other BCIC mechanisms. 

In fact, we can show that BDSPA is BCIC in general. To see this, we fix some bid-discount multiplier tuple $\bm \tau$ and quantile profile $\bm q$. We consider two cases for any buyer $1 \leq i \leq n$. 
\begin{itemize}
    \item If $\tau_i\widetilde{v}_i(q_i) \geq \max_{i' \geq i}\{\tau_{i'}\widetilde{v}_{i'}(q_{i'}), \lambda\}$, then $\max_{i' \geq i}\{\tau_{i'}\widetilde{v}_{i'}(q_{i'}), \lambda\} / \tau_i \leq \widetilde{v}_i(q_i)$. For $i$, as long as she wins, her actual bid does not affect her payment, and her revenue remains unchanged at a non-negative value. If $i$ cuts her bid to lose, then her revenue becomes zero, which is no better than winning the item. 
    \item If $\tau_i\widetilde{v}_i(q_i) < \max_{i' \geq i}\{\tau_{i'}\widetilde{v}_{i'}(q_{i'}), \lambda\}$, then $\max_{i' \geq i}\{\tau_{i'}\widetilde{v}_{i'}(q_{i'}), \lambda\} / \tau_i > \widetilde{v}_i(q_i)$. For $i$, as long as she loses, her revenue remains zero. On the other hand, if she raises her bid to win the item, then her payment becomes strictly larger than the value she receives, and $i$ will get a negative revenue, which is worse than losing. 
\end{itemize}

As a result, BDSPA is unconditionally BCIC. 

Combining all three parts together, the proof of \Cref{thm:dominance-relationships-seller-revenue} is finished.

\end{document}